\documentclass[11pt]{article}
\usepackage{CSTheoryToolkitCMUStyle}
\usepackage{Custom}
\usepackage{cleveref}
\usepackage{multirow}
\usepackage{tikz-cd}
\usetikzlibrary{calc, arrows.meta, positioning, decorations.markings, decorations.pathreplacing, fit, backgrounds, patterns, shapes.geometric}

\makeatletter
\renewcommand{\@seccntformat}[1]{%
  \ifcsname prefix@#1\endcsname
    \csname prefix@#1\endcsname
  \else
    \csname the#1\endcsname\quad
  \fi}
\makeatother

\begin{document}

\title{Generalized matching decoders for 2D topological translationally-invariant codes}
\author[1,2,$\ast,\dagger$]{Shi Jie Samuel Tan}
\author[3,$\ast,\ddagger$]{Ian Gill}
\author[1,2,$\ast$,$\S$]{Eric Huang}
\author[2,4]{Pengyu Liu}
\author[2]{Chen Zhao}
\author[2]{Hossein Dehghani}
\author[3,$\P$]{Aleksander Kubica}
\author[2,5,$\parallel$]{Hengyun Zhou}
\author[2,6,$**$]{Arpit Dua}

\affil[1]{\footnotesize Joint Center for Quantum Information and Computer Science,
University of Maryland, College Park, MD, USA}
\affil[2]{\footnotesize QuEra Computing Inc., 1284 Soldiers Field Road, Boston, MA, USA}
\affil[3]{\footnotesize Yale Quantum Institute \& Department of Applied Physics, Yale University, New Haven, CT, USA}
\affil[4]{\footnotesize Department of Computer Science, Carnegie Mellon University, Pittsburgh, PA, USA}
\affil[5]{\footnotesize Department of Electrical Engineering and Computer Science, Massachusetts Institute of Technology,
Cambridge, MA, USA}
\affil[6]{\footnotesize Department of Physics, Virginia Tech, Blacksburg, VA, USA}
\maketitle

\newcommand{\col}{\mathrm{col}}
\newcommand{\row}{\mathrm{row}}
\newcommand{\BBc}{\mathbb{B}\mathbb{B}}
\newcommand{\syz}{\mathrm{syz}}
\newcommand{\coker}{\mathrm{coker}}
\newcommand{\ann}{\mathrm{ann}}
\newcommand{\Com}{\mathrm{Com}}
\newcommand{\TwoDTLTI}{2D TTI }

\newcommand{\SAM}[1]{\textcolor{red}{[SAM: #1]}}
\newcommand{\ak}[1]{\textcolor{blue}{[AK: #1]}}

\renewcommand{\thefootnote}{\fnsymbol{footnote}}
\setcounter{footnote}{1}
\footnotetext{{These authors contributed equally to this work.}}
\setcounter{footnote}{2}
\footnotetext{\href{mailto:stan97@umd.edu}{stan97@umd.edu}}
\setcounter{footnote}{3}
\footnotetext{\href{mailto:ian.gill@yale.edu}{ian.gill@yale.edu}}
\setcounter{footnote}{4}
\footnotetext{\href{mailto:ehuang18@umd.edu}{ehuang18@umd.edu}}
\setcounter{footnote}{5}
\footnotetext{\href{mailto:a.kubica@yale.edu}{a.kubica@yale.edu}}
\setcounter{footnote}{6}
\footnotetext{\href{mailto:hyzhou@mit.edu}{hyzhou@mit.edu}}
\setcounter{footnote}{7}
\footnotetext{\href{mailto:adua@quera.com}{adua@quera.com}}
\setcounter{footnote}{1}
\renewcommand{\thefootnote}{\arabic{footnote}}
\begin{abstract}
Two-dimensional topological translationally-invariant (TTI) quantum codes, such as the toric code (TC) and bivariate bicycle (BB) codes, are promising candidates for fault-tolerant quantum computation.
For such codes to be practically relevant, their decoders must successfully correct the most likely errors while remaining computationally efficient.
For the TC, graph-matching decoders satisfy both requirements and, additionally, admit provable performance guarantees.
Given the equivalence between TTI codes and (multiple copies of) the TC, one may then ask whether TTI codes also admit analogous graph-matching decoders.
In this work, we develop a graph-matching approach to decoding general TTI codes.
Intuitively, our approach coarse-grains the TTI code to obtain an effective description of the syndrome in terms of TC excitations, which can then be removed using graph-matching techniques.
We prove that our decoders correct errors of weight up to a constant fraction of the code distance and achieve non-zero code-capacity thresholds.
We further numerically study a variant optimized for practically relevant BB codes and observe performance comparable to that of the belief propagation with ordered statistics decoder.
Our results indicate that graph-matching decoders are a viable approach to decoding BB codes and other TTI codes.
\end{abstract}
\newpage
\tableofcontents
\section{Introduction}
\label{sec:intro}
Quantum error correction (QEC)~\cite{shor1995scheme,steane1996error} is a key ingredient in building reliable quantum computers~\cite{ShorFT, aharonov1997fault,preskill1998}, as physical qubits are inherently noisy, and without protection, errors quickly accumulate and corrupt any long computation.
QEC addresses this challenge by encoding logical information into many physical qubits and detecting errors via parity-check measurements.
Among the large landscape of QEC codes, two-dimensional (2D) topological quantum codes~\cite{Kitaev2003,bombin2013introductiontopologicalquantumcodes} are especially attractive.
By definition, a 2D topological quantum code is realized by placing qubits on a lattice and introducing geometrically-local parity checks.
For concreteness, we restrict our attention to topological translationally-invariant (TTI) codes defined on a square lattice with periodic boundary conditions.
For TTI codes, it suffices to specify a local generating set of parity checks and then translate it across the lattice.
Importantly, as we increase the lattice size, the code distance increases, while the range of parity checks does not.
Canonical examples of TTI codes include the toric code (TC)~\cite{Kitaev2003,dennis2002topological} and the color code~\cite{bombin2006topological}.
Recently, there has been significant interest in other TTI codes, such as bivariate bicycle (BB) codes~\cite{Kovalev2013,bravyi2024high} and tile codes~\cite{steffan2025tile}, as they provide better encoding rates and code distance than the TC.

To perform QEC in practice, one relies on specialized algorithms, called decoders, which solve the decoding problem---given the measurement outcomes of parity checks (the syndrome), return an appropriate correction.
Decoders are carefully designed to handle the most likely errors for a given noise model; they also have to be computationally efficient to avoid the so-called backlog problem~\cite{Terhal2015}, where classical processing cannot keep up with the syndrome-extraction rate.
In general, the decoding problem can be viewed as a matching problem in a hypergraph, where vertices correspond to (violated) parity checks and hyperdeges correspond to independent errors; as such, decoding can be computationally hard~\cite{karp2009reducibility}.
However, for the 2D TC and independent bit-flip and phase-flip noise, the decoding problem reduces to the graph matching problem~\cite{dennis2002topological}, which admits an efficient solution via the minimum-weight perfect matching (MWPM) algorithm~\cite{edmonds1965paths}.
Importantly, its runtime is compatible with the syndrome extraction rate of current quantum hardware~\cite{Wu2023,higgott2025sparse}.
Furthermore, although the MWPM decoder does not find the optimal correction, it yields high QEC thresholds~\cite{dennis2002topological, fowler2012surface, higgott2023}, making it a standard benchmark for QEC protocols.

The idea of using graph-matching techniques to solve the decoding problem has also been explored in the context of the color code~\cite{wang2009graphicalalgorithmsthresholderror,bombin2012universal, delfosse2014decoding, kubica2023efficient, sahay2022decoder, gidney2023new,lee2025color}.
In fact, these previous results can be understood from the perspective of unitary equivalence of the color code and the TC~\cite{bombin2012universal, kubica2015unfolding}---using a constant-depth circuit comprising geometrically-local Clifford gates, one can map the decoding problem for the color code to the TC setting, apply graph-matching techniques there, and then map back the identified TC correction to the color code.
In a similar way, one may ask whether other 2D TTI codes can admit analogous graph-matching decoders, since 2D TTI codes are known to be equivalent to (multiple copies of) the TC~\cite{yoshida2011classification,bombin2012universal,bombin2014structure,haah2013commuting,haah2016algebraic}.
Understanding how this equivalence translates into efficient decoders is therefore both a natural and practically relevant question for developing QEC protocols based on 2D TTI codes.

In our work, we develop graph-matching decoders for 2D TTI codes that rely on the polynomial representation of 2D TTI~\cite{bombin2014structure,haah2016algebraic} and their equivalence to (multiple copies of) the 2D TC.
The basic idea is to coarse-grain the TTI code to obtain an effective description of the syndrome in terms of the TC excitations, which can then be removed using graph-matching techniques, yielding an appropriate correction for the original TTI code. 
We prove that our decoders:
(i) correct errors of weight up to a constant fraction (determined by the locality of parity checks) of the code distance,
(ii) have non-zero QEC thresholds in the code-capacity setting with independent and identically distributed (i.i.d.) bit-flip and phase-flip noise.
We also numerically study a variant of our graph-matching decoder optimized for the i.i.d. phase-flip noise and practically-relevant instances of the BB codes.
In particular, we consider the distance-12 gross code on the $12 \times 6$  lattice and the distance-24 code (with the same polynomial) on the $ 24 \times 24$ lattice, respectively.
On the $12\times 6$  lattice, our graph-matching decoder outperforms the standard belief propagation (BP) decoder at a high logical error rate and performs slightly worse than the belief propagation with ordered statistics decoder (BP-OSD)~\cite{poulin2008iterative,panteleev2021degenerate,roffe2020decoding,Roffe_LDPC_Python_tools_2022}; see Fig. ~\ref{fig:d12decodercomparison}.
Although our decoder does not outperform the BP-OSD, its competitive performance and favorable computational complexity indicate that graph-matching decoders are a viable approach to decoding BB codes and other TTI codes.
As such, our results open the door to future optimizations of graph-matching decoders for 2D TTI codes which may ultimately outperform and be more feasible for larger codes than the BP-OSD.

\begin{figure}[h]
    \centering
\includegraphics[width=0.65\linewidth]{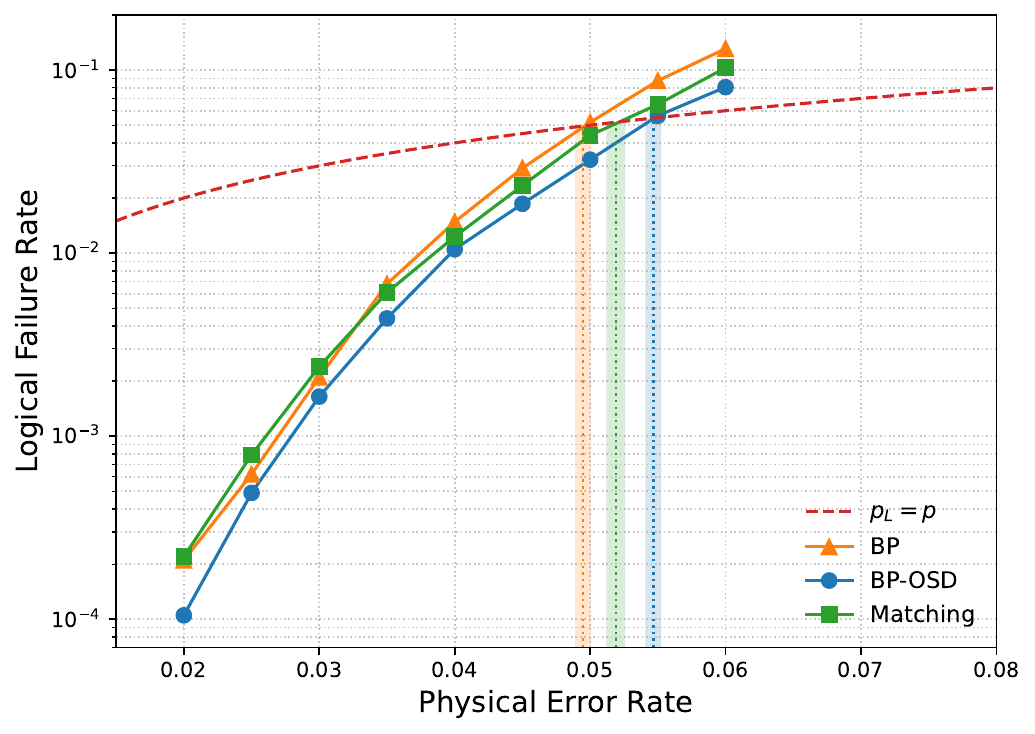}
    \caption{Decoding the 144-qubit BB code from  \cite{bravyi2024high} under i.i.d bit-flip noise using BP, BP-OSD, and the small-code-size adaptation of the cell-matching decoder. Logical error rates are determined through Monte Carlo sampling of error configurations at different physical error rates followed by decoding. For a data point with logical error rate of order $10^{-k}$,  $10^{k+2}$ trials were run. Statistical uncertainties are contained within marker size. Vertical lines indicate psuedothresholds for each decoding method, and shaded regions indicate psuedothreshold uncertainty estimated from logical error rate uncertainty.}
    \label{fig:d12decodercomparison}
\end{figure}

The rest of this paper is organized as follows.
Section~\ref{sec:high_level_summary} provides an informal, high-level summary of the decoding problem and our two decoders.
Section~\ref{sec:preliminaries} reviews background on CSS codes, chain complexes, and the Laurent polynomial formalism.
Section~\ref{sec:decoding} presents the layer-decoupling decoder and explains how the decoupling outputs induce matching-decodable sector problems together with a lifting map.
Section~\ref{sec:cellular_decoding} presents the cell-matching decoder and analyzes its correctness and complexity.
Section~\ref{sec:gross} adapts the cell-matching decoder for small BB codes and provides numerical simulations of its performance for the code-capacity scenario with independent bit-flip noise.
In Section~\ref{sec:conclusions}, we conclude with a discussion of open questions and future directions.
In the appendices, we provide a self-contained review of BB codes, the decoupling theorem and its algorithmic implementation, and the algebraic tools used in the decoupling process.
We also provide additional details on the decoding algorithms and their analysis.

\section{Overview}
\label{sec:high_level_summary}
A CSS code lets us correct $X$-type and $Z$-type Pauli errors separately~\cite{calderbank1996good, steane1996error}.
Throughout this summary, we will be focused on the $Z$-error decoding problem: a physical $Z$ error pattern anticommutes with some $X$ checks which is the measured syndrome.
A decoder is a classical algorithm that takes the syndrome as input and returns a $Z$-type Pauli correction; decoding succeeds if the product of the unknown error and the returned correction is a $Z$ stabilizer, implying that it acts trivially on logical qubits.
The problem of decoding $X$ errors is analogous.
While our exposition focuses on CSS codes, the same principles we use to produce the presented decoding procedure apply to general \TwoDTLTI stabilizer codes because the results that we leverage are not specific to the CSS setting~\cite{bombin2014structure, haah2013commuting, haah2016algebraic}. Going forward, all codes discussed are assumed to be CSS for simplicity.

The TC is the canonical example of a quantum code that is both physically local and efficiently decodable. A $Z$ error creates a pair of point-like excitations at the endpoints of error strings; since each string has two endpoints, the total number of excitations will have even parity. One can then model the observed excitations as vertices of a complete graph where edges are weighted as the number of qubits associated to the minimum length string connecting the endpoints of the edge. The matching strategy is to select a set of edge with minimum total length such that each vertex is incident to exactly one edge from this set. The proposed correction is then the minimum length string of $Z$ errors corresponding to the select edges. This approach is near-optimal for the induced graph model in the i.i.d. code capacity noise model, and can be implemented efficiently by minimum weight perfect matching (MWPM) as shown in Fig.~\ref{fig:toric_code_mwpm_intuition}.

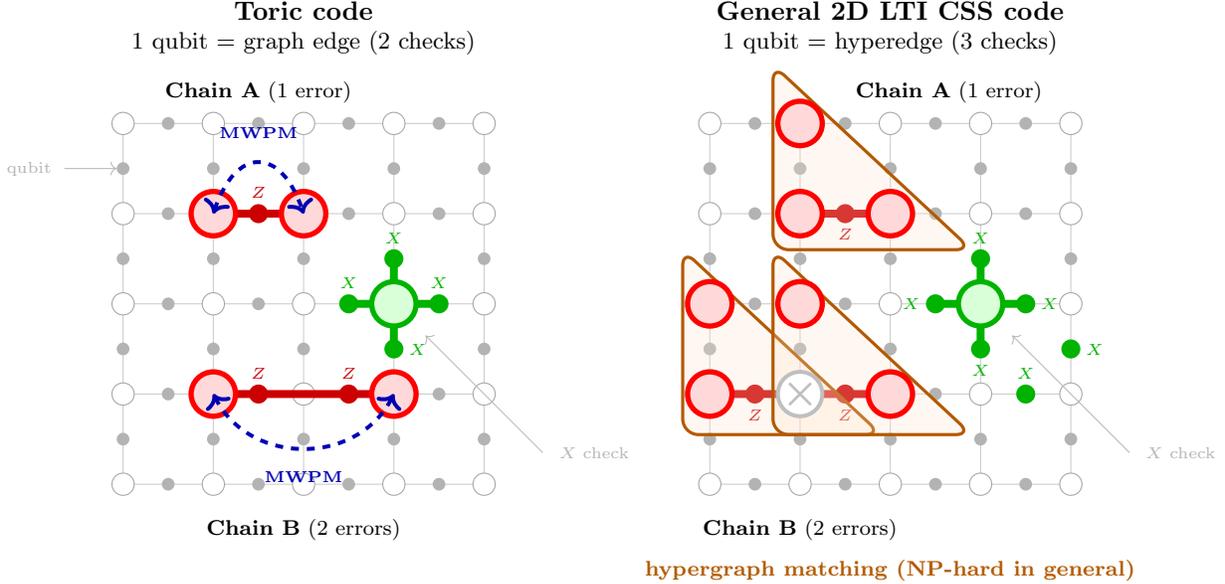
\begin{figure}[h]
\centering
\begin{tikzpicture}[scale=1.2]


\foreach \i in {0,...,4} {
  \foreach \j in {0,...,3} { \draw[gray!40, thin] (\i,\j) -- (\i,\j+1); }
}
\foreach \i in {0,...,3} {  
  \foreach \j in {0,...,4} { \draw[gray!40, thin] (\i,\j) -- (\i+1,\j); }
}
\foreach \i in {0,...,4} {
  \foreach \j in {0,...,3} { \fill[gray!60] (\i, \j+0.5) circle (2pt); }
}
\foreach \i in {0,...,3} {
  \foreach \j in {0,...,4} { \fill[gray!60] (\i+0.5, \j) circle (2pt); }
}
\foreach \i in {0,...,4} {
  \foreach \j in {0,...,4} {
    \fill[white] (\i,\j) circle (3.5pt);
    \draw[gray!70, thin] (\i,\j) circle (3.5pt);
  }
}

\draw[red!80!black, line width=3pt, line cap=round] (1,3) -- (2,3);
\fill[red!80!black] (1.5,3) circle (3pt);
\node[red!80!black, font=\tiny\bfseries, above=2pt] at (1.5,3) {$Z$};
\fill[red!15] (1,3) circle (7pt); \draw[red, line width=2pt] (1,3) circle (7pt);
\fill[red!15] (2,3) circle (7pt); \draw[red, line width=2pt] (2,3) circle (7pt);
\draw[blue!70!black, dashed, line width=1.5pt, <->]
  (1,3) .. controls (1.25,3.75) and (1.75,3.75) .. (2,3);
\node[blue!70!black, font=\tiny\bfseries] at (1.5,3.9) {MWPM};
\node[font=\scriptsize, align=center] at (1.5,4.35) {\textbf{Chain A} (1 error)};

\draw[red!80!black, line width=3pt, line cap=round] (1,1) -- (2,1) -- (3,1);
\fill[red!80!black] (1.5,1) circle (3pt);
\fill[red!80!black] (2.5,1) circle (3pt);
\node[red!80!black, font=\tiny\bfseries, above=2pt] at (1.5,1) {$Z$};
\node[red!80!black, font=\tiny\bfseries, above=2pt] at (2.5,1) {$Z$};
\fill[red!15] (1,1) circle (7pt); \draw[red, line width=2pt] (1,1) circle (7pt);
\fill[red!15] (3,1) circle (7pt); \draw[red, line width=2pt] (3,1) circle (7pt);
\draw[blue!70!black, dashed, line width=1.5pt, <->]
  (1,1) .. controls (1.4,0.2) and (2.6,0.2) .. (3,1);
\node[blue!70!black, font=\tiny\bfseries] at (2,0.08) {MWPM};
\node[font=\scriptsize, align=center] at (2,-0.5) {\textbf{Chain B} (2 errors)};


\draw[green!70!black, line width=3pt, line cap=round] (3,2) -- (2.5,2);
\draw[green!70!black, line width=3pt, line cap=round] (3,2) -- (3.5,2);
\draw[green!70!black, line width=3pt, line cap=round] (3,2) -- (3,1.5);
\draw[green!70!black, line width=3pt, line cap=round] (3,2) -- (3,2.5);
\fill[green!70!black] (2.5,2) circle (3pt);
\node[green!70!black, font=\tiny\bfseries, above=2pt]  at (2.5,2) {$X$};
\fill[green!70!black] (3.5,2) circle (3pt);
\node[green!70!black, font=\tiny\bfseries, above=2pt] at (3.5,2) {$X$};
\fill[green!70!black] (3,1.5) circle (3pt);
\node[green!70!black, font=\tiny\bfseries, right=2pt] at (3,1.5) {$X$};
\fill[green!70!black] (3,2.5) circle (3pt);
\node[green!70!black, font=\tiny\bfseries, above=2pt] at (3,2.5) {$X$};
\fill[green!15] (3,2) circle (7pt); \draw[green!70!black, line width=2pt] (3,2) circle (7pt);

\draw[->, gray!55, thin] (-0.65,3.5) -- (-0.07,3.5);
\node[gray!60, font=\tiny, left=1pt] at (-0.65,3.5) {qubit};
\draw[->, gray!55, thin] (4.65,0.35) -- (3.35,1.65);
\node[gray!60, font=\tiny, anchor=west] at (4.72,0.35) {$X$ check};

\node[font=\small\bfseries] at (2, 5.25) {Toric code};
\node[font=\footnotesize] at (2, 4.9) {1 qubit $=$ graph edge (2 checks)};


\def\dx{6.5}

\foreach \i in {0,...,4} {
  \foreach \j in {0,...,3} { \draw[gray!40, thin] (\i+\dx,\j) -- (\i+\dx,\j+1); }
}
\foreach \i in {0,...,3} {
  \foreach \j in {0,...,4} { \draw[gray!40, thin] (\i+\dx,\j) -- (\i+1+\dx,\j); }
}
\foreach \i in {0,...,4} {
  \foreach \j in {0,...,3} { \fill[gray!60] (\i+\dx, \j+0.5) circle (2pt); }
}
\foreach \i in {0,...,3} {
  \foreach \j in {0,...,4} { \fill[gray!60] (\i+0.5+\dx, \j) circle (2pt); }
}
\foreach \i in {0,...,4} {
  \foreach \j in {0,...,4} {
    \fill[white] (\i+\dx,\j) circle (3.5pt);
    \draw[gray!70, thin] (\i+\dx,\j) circle (3.5pt);
  }
}


\draw[red!80!black, line width=3pt, line cap=round] (\dx+1,3) -- (\dx+2,3);
\fill[red!80!black] (\dx+1.5, 3) circle (3pt);
\node[red!80!black, font=\tiny\bfseries, below=2pt] at (\dx+1.5,3) {$Z$};
\fill[orange!20, fill opacity=0.25, rounded corners=6pt]
  (\dx+0.70,2.60) -- (\dx+2.90,2.60) -- (\dx+0.70,4.65) -- cycle;
\draw[orange!70!black, line width=1.2pt, rounded corners=6pt]
  (\dx+0.70,2.60) -- (\dx+2.90,2.60) -- (\dx+0.70,4.65) -- cycle;
\fill[red!15] (\dx+1,3) circle (7pt); \draw[red, line width=2pt] (\dx+1,3) circle (7pt);
\fill[red!15] (\dx+2,3) circle (7pt); \draw[red, line width=2pt] (\dx+2,3) circle (7pt);
\fill[red!15] (\dx+1,4) circle (7pt); \draw[red, line width=2pt] (\dx+1,4) circle (7pt);
\node[font=\scriptsize, align=center] at (\dx+2.65,4.35) {\textbf{Chain A} (1 error)};

\begin{scope}[shift={(-1,0)}]
\draw[red!80!black, line width=3pt, line cap=round] (\dx+1,1) -- (\dx+2,1) -- (\dx+3,1);
\fill[red!80!black] (\dx+1.5,1) circle (3pt);
\fill[red!80!black] (\dx+2.5,1) circle (3pt);
\node[red!80!black, font=\tiny\bfseries, below=2pt] at (\dx+1.5,1) {$Z$};
\node[red!80!black, font=\tiny\bfseries, below=2pt] at (\dx+2.5,1) {$Z$};
\fill[orange!20, fill opacity=0.25, rounded corners=6pt]
  (\dx+0.70,0.55) -- (\dx+2.90,0.55) -- (\dx+0.70,2.60) -- cycle;
\draw[orange!70!black, line width=1.2pt, rounded corners=6pt]
  (\dx+0.70,0.55) -- (\dx+2.90,0.55) -- (\dx+0.70,2.60) -- cycle;
\fill[orange!20, fill opacity=0.25, rounded corners=6pt]
  (\dx+1.70,0.55) -- (\dx+3.90,0.55) -- (\dx+1.70,2.60) -- cycle;
\draw[orange!70!black, line width=1.2pt, rounded corners=6pt]
  (\dx+1.70,0.55) -- (\dx+3.90,0.55) -- (\dx+1.70,2.60) -- cycle;
\fill[red!15] (\dx+1,1) circle (7pt); \draw[red, line width=2pt] (\dx+1,1) circle (7pt);
\fill[red!15] (\dx+3,1) circle (7pt); \draw[red, line width=2pt] (\dx+3,1) circle (7pt);
\fill[red!15] (\dx+1,2) circle (7pt); \draw[red, line width=2pt] (\dx+1,2) circle (7pt);
\fill[red!15] (\dx+2,2) circle (7pt); \draw[red, line width=2pt] (\dx+2,2) circle (7pt);

\fill[white] (\dx+2,1) circle (7pt);
\draw[gray!50, line width=1.5pt] (\dx+2,1) circle (7pt);
\draw[gray!50, line width=1pt] (\dx+1.88,0.88) -- (\dx+2.12,1.12);
\draw[gray!50, line width=1pt] (\dx+1.88,1.12) -- (\dx+2.12,0.88);

\node[font=\scriptsize, align=center] at (\dx+2,-0.5) {\textbf{Chain B} (2 errors)};
\node[orange!70!black, font=\scriptsize\bfseries, align=center] at (\dx+3, -0.95)
  {hypergraph matching (NP-hard in general)};
\end{scope}

\draw[green!70!black, line width=3pt, line cap=round] (\dx+3,2) -- (\dx+2.5,2);
\draw[green!70!black, line width=3pt, line cap=round] (\dx+3,2) -- (\dx+3.5,2);
\draw[green!70!black, line width=3pt, line cap=round] (\dx+3,2) -- (\dx+3,1.5);
\draw[green!70!black, line width=3pt, line cap=round] (\dx+3,2) -- (\dx+3,2.5);
\fill[green!70!black] (\dx+2.5,2) circle (3pt);
\node[green!70!black, font=\tiny\bfseries, left=2pt]  at (\dx+2.5,2) {$X$};
\fill[green!70!black] (\dx+3.5,2) circle (3pt);
\node[green!70!black, font=\tiny\bfseries, right=2pt] at (\dx+3.5,2) {$X$};
\fill[green!70!black] (\dx+3,1.5) circle (3pt);
\node[green!70!black, font=\tiny\bfseries, below=2pt] at (\dx+3,1.5) {$X$};
\fill[green!70!black] (\dx+3,2.5) circle (3pt);
\node[green!70!black, font=\tiny\bfseries, above=2pt] at (\dx+3,2.5) {$X$};
\fill[green!15] (\dx+3,2) circle (7pt); \draw[green!70!black, line width=2pt] (\dx+3,2) circle (7pt);
\draw[->, gray!55, thin] (\dx+4.65,0.35) -- (\dx+3.35,1.65);
\node[gray!60, font=\tiny, anchor=west] at (\dx+4.72,0.35) {$X$ check};

\fill[green!70!black] (\dx+4,1.5) circle (3pt);
\node[green!70!black, font=\tiny\bfseries, right=2pt] at (\dx+4,1.5) {$X$};
\fill[green!70!black] (\dx+3.5,1) circle (3pt);
\node[green!70!black, font=\tiny\bfseries, above=2pt] at (\dx+3.5,1) {$X$};

\node[font=\small\bfseries] at (\dx+2, 5.25) {General 2D LTI CSS code};
\node[font=\footnotesize] at (\dx+2, 4.9) {1 qubit $=$ hyperedge (3 checks)};

\end{tikzpicture}
\caption{%
  \textbf{Left:} Two $Z$-error chains (red) on a $5{\times}5$ TC patch, each
  producing exactly two excitations at the string endpoints regardless of chain length.
  Excitations always come in \emph{pairs}, reducing decoding to a graph matching
  problem. MWPM (blue arcs) finds the minimum-weight pairing.
  \textbf{Right:} In a general \TwoDTLTI code, one check (green) can act on more than the four nearest-neighbor qubits and one qubit can participate in
  three or more $X$ checks, so a single-qubit error can create three or more excitations simultaneously which is equivalent to a \emph{hyperedge}.  Minimum-weight hypergraph
  matching is NP-hard in general, breaking the naive MWPM approach.
}
\label{fig:toric_code_mwpm_intuition}
\end{figure}

Many other \TwoDTLTI codes, such as the BB code, look more complicated than the TC at the level of syndrome patterns. A qubit can participate in more than two checks of the same type, and a single-qubit error can create more than two excitations. That breaks the simplest graph matching picture, since the correction we seek to obtain is no longer represented as an edge joining two excitations, but a \emph{hyperedge} joining more than two defects as depicted in Fig.~\ref{fig:toric_code_mwpm_intuition}. The resulting decoding problem is a minimum-weight hypergraph matching problem, which is generally NP-hard~\cite{karp2009reducibility}.

For the color code, the way around this obstruction is to project the syndrome into simpler pieces by restricting to subsets of check types before decoding those pieces using matching decoders, and then lift the answer back~\cite{delfosse2014decoding,kubica2023efficient,sahay2022decoder}.
That approach is very successful, but it is also tailored to the specific structure of the color code.
The main objective of this work to recover a graph matching based decoding procedure for general \TwoDTLTI codes. To achieve this, we use the decoupling theorem~\cite{haah2013commuting,bombin2014structure} that can be summarized informally as follows.

\begin{quote}
\emph{If a 2D translationally-invariant commuting-Pauli Hamiltonian has topological order (no non-trivial finite-support logical operators on the infinite lattice), then after coarse-graining, it can be transformed by a constant-depth, locality-preserving Clifford circuit into a tensor product of finitely many copies of the TC and qubits in a trivial state.}
\end{quote}

\begin{figure}[h]
\centering
\begin{tikzpicture}[
  x={(0.62cm,0cm)},
  y={(0.30cm,0.19cm)},
  z={(0cm,0.80cm)},
  qubit/.style={circle, fill=gray!65, minimum size=3pt, inner sep=0pt},
  vtx/.style  ={circle, fill=white, draw=gray!55, line width=0.5pt,
                minimum size=5pt, inner sep=0pt},
  ktcq/.style ={circle, fill=gray!60, minimum size=2.5pt, inner sep=0pt},
  ktcv/.style ={circle, fill=white, draw=gray!50, line width=0.4pt,
                minimum size=4.5pt, inner sep=0pt}
]



\fill[gray!10] (0,0,0) -- (4,0,0) -- (4,4,0) -- (0,4,0) -- cycle;

\fill[orange!28, opacity=0.85]
  (0.08,2.08,0) -- (2.92,2.08,0) -- (2.92,3.92,0) -- (0.08,3.92,0) -- cycle;
\fill[blue!22, opacity=0.85]
  (1.08,0.08,0) -- (3.92,0.08,0) -- (3.92,3.92,0) -- (1.08,3.92,0) -- cycle;
\fill[violet!22, opacity=0.85]
  (0.08,0.08,0) -- (2.92,0.08,0) -- (2.92,2.92,0) -- (0.08,2.92,0) -- cycle;

\foreach \i in {0,...,4} {
  \draw[gray!45, very thin] (\i,0,0) -- (\i,4,0);
  \draw[gray!45, very thin] (0,\i,0) -- (4,\i,0);
}
\draw[gray!60] (0,0,0) -- (4,0,0) -- (4,4,0) -- (0,4,0) -- cycle;

\foreach \i in {0,...,3} {
  \foreach \j in {0,...,4} { \node[qubit] at (\i+0.5,\j,0) {}; }
}
\foreach \i in {0,...,4} {
  \foreach \j in {0,...,3} { \node[qubit] at (\i,\j+0.5,0) {}; }
}
\foreach \i in {0,...,4} {
  \foreach \j in {0,...,4} { \node[vtx] at (\i,\j,0) {}; }
}
\node[font=\small\bfseries, anchor=west] at (5.05,0,0) {2D LTI code};


\def\zcnotbot{2.75}
\def\zcnotmid{3.00}
\def\zcnottop{3.25}

\fill[yellow!12, opacity=0.45]
  (0,0,\zcnotbot) -- (4,0,\zcnotbot) -- (4,4,\zcnotbot) -- (0,4,\zcnotbot) -- cycle;
\fill[yellow!12, opacity=0.70]
  (0,0,\zcnottop) -- (4,0,\zcnottop) -- (4,4,\zcnottop) -- (0,4,\zcnottop) -- cycle;
\fill[yellow!10, opacity=0.22]
  (0,0,\zcnotbot) -- (4,0,\zcnotbot) -- (4,0,\zcnottop) -- (0,0,\zcnottop) -- cycle;
\fill[yellow!10, opacity=0.18]
  (4,0,\zcnotbot) -- (4,4,\zcnotbot) -- (4,4,\zcnottop) -- (4,0,\zcnottop) -- cycle;

\draw[gray!55] (0,0,\zcnotbot) -- (4,0,\zcnotbot) -- (4,4,\zcnotbot) -- (0,4,\zcnotbot) -- cycle;
\draw[gray!55] (0,0,\zcnottop) -- (4,0,\zcnottop) -- (4,4,\zcnottop) -- (0,4,\zcnottop) -- cycle;
\draw[gray!45, very thin] (0,0,\zcnotbot) -- (0,0,\zcnottop);
\draw[gray!45, very thin] (4,0,\zcnotbot) -- (4,0,\zcnottop);
\draw[gray!45, very thin] (4,4,\zcnotbot) -- (4,4,\zcnottop);
\draw[gray!45, very thin] (0,4,\zcnotbot) -- (0,4,\zcnottop);

\foreach \k/\z/\op/\xshift in {
  0/\zcnotbot/0.55/0.00,
  1/\zcnotmid/0.75/0.10,
  2/\zcnottop/0.95/0.20
} {
  \foreach \row in {0.5, 1.0, 1.5, 2.0, 2.5, 3.0, 3.5} {
    \draw[gray!50, very thin, opacity=\op] (0,\row,\z) -- (4,\row,\z);
  }

  \ifnum\k=0\relax
    \foreach \xpos/\pa/\pb in {
      0.5/0.5/1.0, 0.5/1.5/2.0, 0.5/2.5/3.0,
      1.3/1.0/1.5, 1.3/2.0/2.5, 1.3/3.0/3.5,
      2.1/0.5/1.0, 2.1/1.5/2.0, 2.1/2.5/3.0,
      2.9/1.0/1.5, 2.9/2.0/2.5, 2.9/3.0/3.5,
      3.7/0.5/1.0, 3.7/1.5/2.0, 3.7/2.5/3.0
    } {
      \fill[gray!70, opacity=\op] (\xpos+\xshift,\pa,\z) circle (2pt);
      \draw[gray!60, thin, opacity=\op] (\xpos+\xshift,\pa,\z) -- (\xpos+\xshift,\pb,\z);
      \draw[gray!60, opacity=\op] (\xpos+\xshift,\pb,\z) circle (3pt);
    }
  \else\ifnum\k=1\relax
    \foreach \xpos/\pa/\pb in {
      0.5/1.0/1.5, 0.5/2.0/2.5, 0.5/3.0/3.5,
      1.3/0.5/1.0, 1.3/1.5/2.0, 1.3/2.5/3.0,
      2.1/1.0/1.5, 2.1/2.0/2.5, 2.1/3.0/3.5,
      2.9/0.5/1.0, 2.9/1.5/2.0, 2.9/2.5/3.0,
      3.7/1.0/1.5, 3.7/2.0/2.5, 3.7/3.0/3.5
    } {
      \fill[gray!70, opacity=\op] (\xpos+\xshift,\pa,\z) circle (2pt);
      \draw[gray!60, thin, opacity=\op] (\xpos+\xshift,\pa,\z) -- (\xpos+\xshift,\pb,\z);
      \draw[gray!60, opacity=\op] (\xpos+\xshift,\pb,\z) circle (3pt);
    }
  \else
    \foreach \xpos/\pa/\pb in {
      0.9/0.5/1.0, 0.9/2.0/2.5, 0.9/3.0/3.5,
      1.7/1.0/1.5, 1.7/2.5/3.0,
      2.5/0.5/1.0, 2.5/1.5/2.0, 2.5/2.5/3.0,
      3.3/1.0/1.5, 3.3/2.0/2.5, 3.3/3.0/3.5
    } {
      \fill[gray!70, opacity=\op] (\xpos,\pa,\z) circle (2pt);
      \draw[gray!60, thin, opacity=\op] (\xpos,\pa,\z) -- (\xpos,\pb,\z);
      \draw[gray!60, opacity=\op] (\xpos,\pb,\z) circle (3pt);
    }
  \fi\fi
}

\node[font=\small\bfseries, anchor=west, align=left] at (5.05,0,\zcnotmid)
  {constant-depth\\CNOT circuit};


\fill[blue!12] (0,0,5.0) -- (4,0,5.0) -- (4,4,5.0) -- (0,4,5.0) -- cycle;

\foreach \i/\j in {0/0, 0/2, 1/1, 1/3, 2/0, 2/2, 3/1, 3/3} {
  \fill[blue!32]
    (\i,\j,5.0) -- (\i+1,\j,5.0) -- (\i+1,\j+1,5.0) -- (\i,\j+1,5.0) -- cycle;
}
\foreach \i in {0,...,4} {
  \draw[blue!40, very thin] (\i,0,5.0) -- (\i,4,5.0);
  \draw[blue!40, very thin] (0,\i,5.0) -- (4,\i,5.0);
}
\draw[blue!65] (0,0,5.0) -- (4,0,5.0) -- (4,4,5.0) -- (0,4,5.0) -- cycle;
\foreach \i in {0,...,3} {
  \foreach \j in {0,...,4} { \node[ktcq] at (\i+0.5,\j,5.0) {}; }
  \foreach \j in {0,...,3} { \node[ktcq] at (\i,\j+0.5,5.0) {}; }
}
\foreach \j in {0,...,3} { \node[ktcq] at (4,\j+0.5,5.0) {}; }
\foreach \i in {0,...,4} {
  \foreach \j in {0,...,4} { \node[ktcv] at (\i,\j,5.0) {}; }
}
\node[font=\small\bfseries, blue!72!black, anchor=west] at (5.05,0,5.0)
  {KTC copy 1};


\fill[orange!12] (0,0,6.5) -- (4,0,6.5) -- (4,4,6.5) -- (0,4,6.5) -- cycle;
\foreach \i/\j in {0/0, 0/2, 1/1, 1/3, 2/0, 2/2, 3/1, 3/3} {
  \fill[orange!32]
    (\i,\j,6.5) -- (\i+1,\j,6.5) -- (\i+1,\j+1,6.5) -- (\i,\j+1,6.5) -- cycle;
}
\foreach \i in {0,...,4} {
  \draw[orange!50, very thin] (\i,0,6.5) -- (\i,4,6.5);
  \draw[orange!50, very thin] (0,\i,6.5) -- (4,\i,6.5);
}
\draw[orange!75!black] (0,0,6.5) -- (4,0,6.5) -- (4,4,6.5) -- (0,4,6.5) -- cycle;
\foreach \i in {0,...,3} {
  \foreach \j in {0,...,4} { \node[ktcq] at (\i+0.5,\j,6.5) {}; }
  \foreach \j in {0,...,3} { \node[ktcq] at (\i,\j+0.5,6.5) {}; }
}
\foreach \j in {0,...,3} { \node[ktcq] at (4,\j+0.5,6.5) {}; }
\foreach \i in {0,...,4} {
  \foreach \j in {0,...,4} { \node[ktcv] at (\i,\j,6.5) {}; }
}
\node[font=\small\bfseries, orange!80!black, anchor=west] at (5.05,0,6.5)
  {KTC copy 2};

\node[font=\normalsize, gray!55, anchor=west] at (5.05,0,7.4) {$\vdots$};

\draw[gray!25, dashed, very thin] (0,0,0) -- (0,0,6.5);
\draw[gray!25, dashed, very thin] (4,0,0) -- (4,0,6.5);

\end{tikzpicture}
\caption{%
  Schematic illustration of the decoupling theorem,
  shown as a stack of 2D planes in 3D perspective.
  \textbf{Bottom plane:} A patch of a general \TwoDTLTI code.
  Overlapping colored regions represent the entangled TC copies in the \TwoDTLTI code.
  \textbf{Middle plane:} A constant-depth CNOT circuit that decouples the \TwoDTLTI code into independent TC copies.
  \textbf{Top planes:} The code decouples into independent TC copies.
}
\label{fig:decoupling_theorem}
\end{figure}
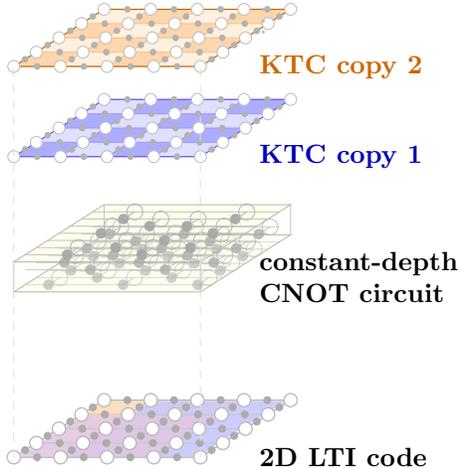

One way to interpret this statement is that every \TwoDTLTI code can be viewed as many copies of the TC entangled together; see Fig.~\ref{fig:decoupling_theorem}. This was made explicit for the color codes in Ref.~\cite{kubica2015unfolding}.
Here is another picture that is useful for decoding intuition: while the different \TwoDTLTI codes can have different microscopic stabilizer generators, we can always identify a finite set of Pauli $X$ operators that create two single $Z$ stabilizer excitations.
We refer to these Pauli $X$ operators as \emph{short strings}.
In other words, these short strings create pairs of excitations in a TC-like way. 
Recall that increasing the length of an error chain in one direction simply moves one of the $Z$ stabilizer violations away from the other violation. Eventually, the two $Z$ stabilizer violations will meet and annihilate each other when the error chain is long enough and now forms a logical operator that wraps around the torus. 
The same can be understood for the \TwoDTLTI codes---by composing the aforementioned short strings, we can extend these Pauli operators to form a loop around the torus, giving us a representative for the logical operators of the \TwoDTLTI code.
By translating these short strings along the 2D lattice, we can generate all other short strings that create pairs of defects.
In fact, all logical operators can be generated by these short strings and their number is related to the number of TC copies.
Thus, if we can \emph{algorithmically} extract those hidden toric-code blocks, then we can reduce TTI code decoding to decoding several TC-like instances.

Note that the decoupling theorem is an equivalence statement.
To build a decoder, we need to turn it into an explicit pipeline which accepts the observed syndrome and outputs a correction. The natural thing to do is to first map the measured syndrome of the original code into several ``sector syndromes'' that live on different TC-like structures. Subsequently, we decode each sector using a standard TC decoder, e.g., MWPM or Union-Find (UF)~\cite{delfosse2021almost}. Lastly, we map the sector corrections back into a correction on the original physical qubits. We present two ways to implement this idea.

\begin{enumerate}
  \item The \emph{layer-decoupling decoder} that uses an explicit decoupling-induced homomorphism between the original code and several TC copies.
  \item The \emph{cell-matching decoder} that cellulates the code and works with the violated checks in each intrinsic unit cells and applies local transport relations to perform matching on the underlying TC check violations.
\end{enumerate}
Even though these two decoders differ in how they produce the matching instances, the basic idea remains the same. Let $r$ be the number of TC sectors we decouple into and $\Lambda_b$ be the coarse-grained lattice that exposes the $r$ sectors. The two decoders both compute sets of violated checks
\begin{equation}
D_1,\dots,D_r \subseteq V(\Lambda_b)
\end{equation}
where $V(\Lambda_b)$ is the vertex set of $\Lambda_b$ and each $D_i$ has an even number of stabilizer violations (since we assume periodic boundary conditions).
Then they run MWPM/UF on each $D_i$ to get a coarse correction path $\widehat\gamma^{(i)}$.
Finally, they lift $\{\widehat\gamma^{(i)}\}$ back to the original code to obtain a physical Pauli correction. In the layer-decoupling decoder, the map $s\mapsto (D_1,\dots,D_r)$ is induced by a global algebraic change-of-basis.
In the cell-matching decoder, the map is implemented by local transport moves inside each unit cell.  

The color code takes a similar approach, identifying subsets of checks on which matching is possible and lifting the matching results to a qubit level correction. The difference is in how these subsets of checks are identified.
For the color code, the check types are easily identifiable by the 3-coloring.
For BB codes and general TTI codes, the relevant check types live naturally in an algebraic object, the torsion of a cokernel, and the projection is most naturally described as a module homomorphism.
We refer to check configurations that can be created by finite-support errors as \emph{physical} or \emph{locally trivial} excitations.
The quotient $\coker(H)$, where $H$ is the parity-check matrix of the code, describes exactly the check configurations modulo locally creatable patterns.
Two check configurations are the same element of the cokernel if their difference can be created by a local Pauli operator. We also refer to these configurations as ``locally equivalent''.
In the families we target, $r$ is determined by the dimension of the cokernel. 
For example, the TC has two stabilizer violation patterns so its cokernel is isomorphic to $\Z_2$. The trivial (non-trivial) group element would be every check configuration involving an even (odd) number of violated checks.
For a code that satisfies the topological order condition (i.e., the distance grows with the system size), the dimension of the cokernel is always finite, so $r$ is always a constant~\cite{bravyi2010tradeoffs}.
Thus, it remains an efficient decoding primitive even if it involves running matching $r$ times.

\subsection{Decoder 1: the layer-decoupling decoder}

The key idea behind the layer-decoupling decoder is to use the guarantee that the decoupling theorem provides of the existence of a pair of invertible transformations to map between the 2D TTI code and the copies of the TC. These two transformations decouple the stabilizer generators and qubits of the 2D TTI code into those of the TC copies.
Once those maps are known, we can project the measured syndrome into different $r$ sector syndromes. Subsequently, we decode each sector using matching on the corresponding TC decoding graph before lifting the resulting sector corrections back to the original qubits by applying the inverse of the decoupling map
on the correction operators. We provide a diagram to depict the decoding pipeline in Fig.~\ref{fig:decoupling_pipeline}.

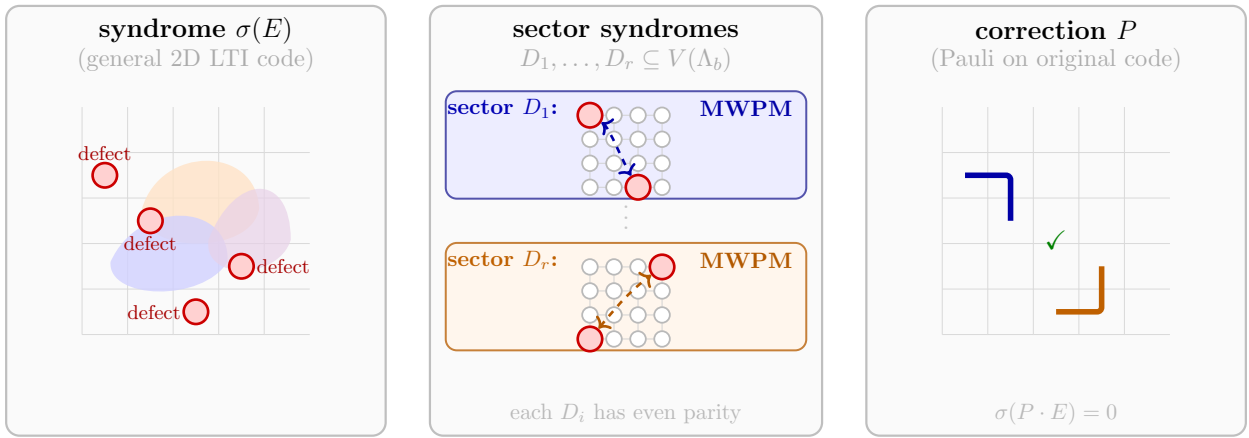
\begin{figure}[h]
\centering
\resizebox{\linewidth}{!}{%
\begin{tikzpicture}[
  defect/.style={circle, draw=red!80!black, fill=red!18, line width=1.2pt,
                 minimum size=11pt, inner sep=0pt},
  gvtx/.style={circle, fill=white, draw=gray!55, line width=0.8pt,
               minimum size=9pt, inner sep=0pt},
  box/.style={rounded corners=6pt, draw=gray!50, fill=gray!4, line width=1pt}
]

\begin{scope}[shift={(0,0)}]
  \draw[box] (-3.0,-3.4) rectangle (3.0,3.4);
  \node[font=\large\bfseries] at (0, 3.02) {syndrome $\sigma(E)$};
  \node[font=\normalsize, gray!65] at (0, 2.55) {(general 2D LTI code)};

  \foreach \i in {0,...,4} {
    \foreach \j in {0,...,4} {
      \draw[gray!30, thin]
        (\i*0.72-1.44-0.36, \j*0.72-1.44-0.36) -- (\i*0.72-1.44+0.72-0.36, \j*0.72-1.44-0.36);
      \draw[gray!30, thin]
        (\i*0.72-1.44-0.36, \j*0.72-1.44-0.36) -- (\i*0.72-1.44-0.36, \j*0.72-1.44+0.72-0.36);
    }
  }
  \fill[orange!22, opacity=0.80]
    (-0.85, 0.20) .. controls (-0.55, 1.20) and (0.95, 1.20) ..
    (1.00, 0.20) .. controls (0.90,-0.55) and (-0.90,-0.55) .. cycle;
  \fill[blue!18, opacity=0.80]
    (-1.35,-0.65) .. controls (-0.90, 0.35) and (0.55, 0.30) ..
    (0.50,-0.65) .. controls (0.10,-1.30) and (-1.40,-1.25) .. cycle;
  \fill[violet!18, opacity=0.70]
    ( 0.20,-0.30) .. controls ( 0.60, 0.75) and (1.50, 0.80) ..
    (1.50,-0.30) .. controls (1.30,-0.95) and ( 0.10,-0.95) .. cycle;
  \node[defect] (p1d1) at (-1.44,  0.72) {};
  \node[defect] (p1d2) at ( 0.72,  -0.72) {};
  \node[defect] (p1d3) at (-0.72, 0) {};
  \node[defect] (p1d4) at ( 0, -1.44) {};
  \node[font=\footnotesize, red!65!black, above=3pt] at (p1d1) {defect};
  \node[font=\footnotesize, red!65!black, right=3pt] at (p1d2) {defect};
  \node[font=\footnotesize, red!65!black, below=3pt] at (p1d3) {defect};
  \node[font=\footnotesize, red!65!black, left=3pt] at (p1d4) {defect};
\end{scope}

\begin{scope}[shift={(6.8,0)}]
  \draw[box] (-3.1,-3.4) rectangle (3.1,3.4);
  \node[font=\large\bfseries] at (0, 3.02) {sector syndromes};
  \node[font=\normalsize, gray!65] at (0, 2.55)
    {$D_1,\ldots,D_r\subseteq V(\Lambda_b)$};

  \draw[blue!35!gray, rounded corners=5pt, fill=blue!7, line width=0.9pt]
    (-2.85, 0.35) rectangle (2.85, 2.05);
  \node[font=\small\bfseries, blue!70!black] at (-1.95, 1.78) {sector $D_1$:};
  \begin{scope}[shift={(-0.57,0.53)}]
    \def\s{0.38}
    \foreach \i in {0,1,2,3} {
      \draw[gray!35, thin] (\i*\s,0) -- (\i*\s,3*\s);
      \draw[gray!35, thin] (0,\i*\s) -- (3*\s,\i*\s);
    }
    \foreach \i in {0,1,2,3} { \foreach \j in {0,1,2,3} {
      \node[gvtx, minimum size=7pt] at (\i*\s,\j*\s) {};
    } }
    \node[defect, minimum size=11pt] (d1a) at (0,3*\s) {};
    \node[defect, minimum size=11pt] (d1b) at (2*\s,0) {};
    \draw[blue!65!black, dashed, line width=1.2pt, <->]
      (d1a) .. controls (0.30,0.95) and (0.55,0.35) .. (d1b);
  \end{scope}
  \node[font=\small\bfseries, blue!65!black] at (1.90, 1.78) {MWPM};

  \node[font=\large, gray!45] at (0, 0.18) {$\vdots$};

  \draw[orange!55!gray, rounded corners=5pt, fill=orange!7, line width=0.9pt]
    (-2.85,-2.05) rectangle (2.85,-0.35);
  \node[font=\small\bfseries, orange!75!black] at (-1.95,-0.62) {sector $D_r$:};
  \begin{scope}[shift={(-0.57,-1.87)}]
    \def\s{0.38}
    \foreach \i in {0,1,2,3} {
      \draw[gray!35, thin] (\i*\s,0) -- (\i*\s,3*\s);
      \draw[gray!35, thin] (0,\i*\s) -- (3*\s,\i*\s);
    }
    \foreach \i in {0,1,2,3} { \foreach \j in {0,1,2,3} {
      \node[gvtx, minimum size=7pt] at (\i*\s,\j*\s) {};
    } }
    \node[defect, minimum size=11pt] (dr1) at (0,0) {};
    \node[defect, minimum size=11pt] (dr2) at (3*\s,3*\s) {};
    \draw[orange!70!black, dashed, line width=1.2pt, <->]
      (dr1) .. controls (0.25,0.30) and (0.55,0.70) .. (dr2);
  \end{scope}
  \node[font=\small\bfseries, orange!70!black] at (1.90,-0.62) {MWPM};

  \node[font=\small, gray!55] at (0,-3.05) {each $D_i$ has even parity};
\end{scope}

\begin{scope}[shift={(13.6,0)}]
  \draw[box] (-3.0,-3.4) rectangle (3.0,3.4);
  \node[font=\large\bfseries] at (0, 3.02) {correction $P$};
  \node[font=\normalsize, gray!65] at (0, 2.55) {(Pauli on original code)};

  \foreach \i in {0,...,4} {
    \foreach \j in {0,...,4} {
      \draw[gray!30, thin]
        (\i*0.72-1.44-0.36, \j*0.72-1.44-0.36) -- (\i*0.72-1.44+0.72-0.36, \j*0.72-1.44-0.36);
      \draw[gray!30, thin]
        (\i*0.72-1.44-0.36, \j*0.72-1.44-0.36) -- (\i*0.72-1.44-0.36, \j*0.72-1.44+0.72-0.36);
    }
  }
  \draw[blue!65!black, line width=2.5pt, rounded corners=2pt]
    (-1.08-0.36, 1.08-0.36) -- (-0.36-0.36, 1.08-0.36) -- (-0.36-0.36, 0.36-0.36);
  \draw[orange!75!black, line width=2.5pt, rounded corners=2pt]
    ( 0,-1.44) -- ( 0.72,-1.44) -- ( 0.72,-0.72);
  \node[font=\large, green!50!black] at (0, -0.30) {$\checkmark$};
  \node[font=\small, gray!55] at (0,-3.05) {$\sigma(P \cdot E) = 0$};
\end{scope}

\end{tikzpicture}%
}
\caption{%
  Pipeline of the layer-decoupling decoder.
  \textbf{Project:} An algebraic change of basis maps the
  measured syndrome $\sigma$ of the \TwoDTLTI code into $r$ sector syndromes
  $D_1,\ldots,D_r$, each living on a TC-like coarse lattice whose
  check violations come in pairs.
  \textbf{Decode:} Each sector syndrome $D_i$ is decoded
  independently by matching.
  \textbf{Lift:} The sector-correction chains are mapped back to physical qubits via the inverse of the decoupling map, yielding a Pauli correction
  $P$ that clears the original syndrome.%
}
\label{fig:decoupling_pipeline}
\end{figure}
At an abstract level, the syndrome projection step is exactly how we turn a hypergraph matching problem where check violations can be created in groups into a collection of graph matching problems where new check violations now come in pairs.
For the color code, this kind of projection can be described geometrically by restricting to sublattices.

However, the induced noise on the decoupled TC sectors can be correlated. In the worst case, a single-qubit error in the original code can map (under the decoupling transformation) to a \emph{pattern} of errors spread across multiple TC copies. If one hands MWPM an independent decoding graph but the true induced noise is highly correlated, the decoder can still work, but its performance can be noticeably reduced. The premise of matching is that edge weights in the decoding graph are supposed to correspond to error likelihoods. For the layer-decoupling decoder, we assume that the physical noise is i.i.d. phase flips and the induced noise for a qubit in any one of the $r$ TC decoding graphs is the sum of the i.i.d. phase-flip probabilities from all physical qubits that are connected to the qubit via constant-depth CNOT circuit used for decoupling.  

\subsection{Decoder 2: the cell-matching decoder}

The cell-matching decoder does not use the decoupling theorem as a prescription to globally rewrite the code by decoupling it into TC copies. Instead, it moves violated checks using local transport relations inside unit cells to expose the TC-like pairing structure at the level of unit cells.

Concretely, we first coarse-grain the lattice into $b\times b$ unit cells for a constant $b$ that depends only on the polynomials that define the code family and not on system size.
Inside each constant-size $b \times b$ unit cell, we perform a fixed local flushing procedure---we move the violated checks within the unit cell into a $c \times c$ basis subcell (for some $c < b$) at the top-left corner of the unit cell. Note that the basis subcell need not be a square, but making this choice is convenient. For each possible set of violated checks $s_u$ inside a unit cell $u$, a local Pauli operator $\Psi(s_u)$ contained within $u$ can be applied to move the violated checks within $u$ into the $c\times c$ fixed corner subcell. Note that a single violated check outside of the basis subcell can be transformed into several violated checks inside the basis subcell by a local operator. This is shown in the middle panel in Fig.~\ref{fig:cell-matching_decoder_pipeline}.
If $s_u$ happens to be a local violation (it can be created by an operator supported entirely in $u$), then flushing simply annihilates it.
We can keep $\Psi(s_u)$ as the local correction whose support is completely contained in $u$ that removes those violated checks.
If $s_u$ is nonlocal (it is the restriction of a global physical syndrome that crosses unit-cell boundaries), then flushing cannot annihilate it locally.
Instead, it produces a canonical residual check-violation type label in the basis subcell. 

The crucial point is that this residual class can be represented by a $r$-bit label, where $r$ is the number of independent check-violation types.
From a high-level perspective, the violated checks within the basis subcell can be identified with violated checks on $r$ different TC copies. 
For each of the $r$ TC copies, the set of unit cells where that bit is $1$ must have even parity globally (this would not be the case if we had measurement errors), allowing us to run MWPM. 

At that point, we run matching separately for each check-violation type on the coarse lattice. By pairing these violated checks within the basis subcell with their corresponding violated checks in adjacent basis subcells, we correct the errors on each TC copy. 
A matching edge corresponds to applying a fixed local ``edge generator'' Pauli short string supported near the boundary between the two unit cells, which creates or annihilates a pair of violated checks across the shared boundary.
Finally, we multiply all such edge generators together, and also include the flushing operators used during the first stage to get a physical Pauli correction that clears the original syndrome.

The cell-matching decoder therefore has two steps. The first step acts locally within each unit cell and finds the different check-violation types within the unit cell via flushing. The second step involves solving TC-like pairing problems for each check violation type on the coarse lattice of unit cells using matching.

\begin{figure}[h]
\centering
\input{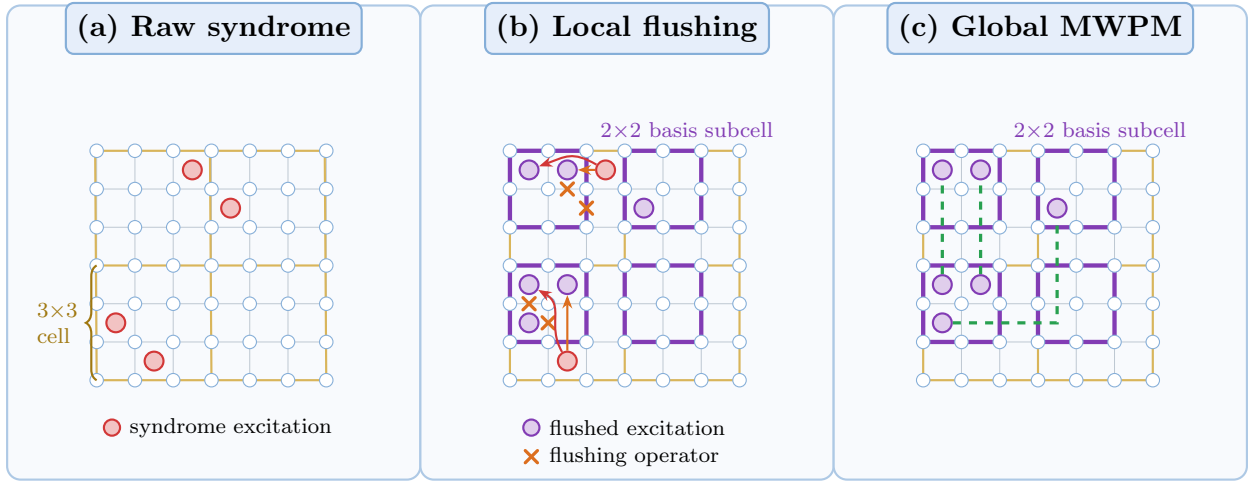}
\caption{%
  Pipeline of the cell-matching decoder.
  \textbf{(a)} The 2D lattice with $3\times 3$ unit cells. The qubits are edges, the $X$ checks are vertices, and the $Z$ checks are plaquettes. Gold boundaries are drawn around sets of 9 $Z$ checks that belong in the same unit cell.
  A Pauli $X$ error configuration has resulted in some violated $Z$ checks shown in red.
  \textbf{(b)} After local flushing: local Pauli $X$ operators (orange arrows)
  move each violated check into the $2\times 2$ basis subcell (purple) within the unit cell that the violated check is in. If the violated check is already in the basis subcell, no flushing is done for the check.
  \textbf{(c)} Global MWPM on the coarse lattice: there are four possible check-violation types in the $2 \times 2$ basis subcells and three of the four types (top-left, top-right, and bottom-left) are currently present and require matching on independent toric graphs.  
  Each check-violation type is paired by matching edges (green dashed) that corresponds to short Pauli strings that generate pairs of excitations in the coarse lattice. 
}
\label{fig:cell-matching_decoder_pipeline}
\end{figure}

At this point, it should be clear that both decoders rely on a constant-size coarse-graining.
The correct choice of $b$ depends on the polynomials used to define the \TwoDTLTI code.
At a high level, $b$ should be large enough that the unit-cell translation symmetries capture all single violated check mobility.
In other words, every single violated check in the $b\times b$ unit cell should be transportable to the same coordinate in a neighboring unit cell by some finite-support Pauli short string operator.
Once $b$ is fixed, the cell-matching decoder chooses a smaller basis subcell of size $c\times c$ inside each $b\times b$ unit cell.
This basis subcell is where we store canonical representatives of the different check-violation types.
The choice of $c$ is mostly a convenience---it should be large enough to hold a $r$-dimensional basis for the check-violation types.
In practice, for a set of polynomials used to define a \TwoDTLTI code family with growing distance, these parameters can be precomputed once and reused for all system sizes.

\subsection{Main Results}
\label{subsec:main_results_informal}

We now informally state our main results.
Throughout, $r$ denotes the number of tC sectors (equal to half the number of encoded logical qubits). Recall that $b$ is the coarse-graining parameter. 

\begin{theorem}[Performance guarantees for the layer-decoupling and cell-matching decoders; see Theorems~\ref{thm:decoupling-decoder-finite-size}, \ref{thm:time-complexity-decoupling-decoder-finite-size}, \ref{thm:cellular-decoder-finite-size}, and \ref{thm:time-complexity-cellular-decoder-finite-size}]
Consider an arbitrary set of polynomials that define a 2D TTI code on a torus.
There is a precomputable, polynomial-dependent constant $b$ that divides the length $L_x$ and width $L_y$ of the torus. 
Let the \emph{coarse distance} $d_{\mathrm{TC}} = \min\{L_x/b, L_y/b\}$ be the side-length of the coarse lattice used by both decoders.
Both the layer-decoupling decoder and the cell-matching decoder correct any error pattern acting on at most $\left\lfloor \tfrac{d_{\mathrm{TC}} - 1}{2} \right\rfloor$ qubits. 
In addition, a family of TTI codes defined via the same polynomials on lattices of size that are multiples of $b$ has a non-zero threshold under i.i.d. bit-flip (or phase-flip) noise when decoded by either of our two decoders.
In particular, if the error rate is below some constant fraction of the code-capacity bit-flip (or phase-flip) error threshold of the TC, then the logical error rate for the TTI code decays exponentially with $d_{\mathrm{TC}}$.
The time complexity of both decoders scales asymptotically with the cost of running the MWPM decoder on the lattice with dimensions $L_x \times L_y$.
\end{theorem}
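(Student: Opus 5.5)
The argument reduces each claim to the corresponding property of MWPM on the toric code. It rests on one structural fact: both decoders are built from a fixed, $L$-independent set of local maps (flushing operators $\Psi(s_u)$, edge generators, or the constant-depth decoupling circuit). Each map has support of radius $O(b)$ in the physical lattice, which is $O(1)$ in coarse units. First we fix $b$ via the decoupling theorem~\cite{haah2013commuting,bombin2014structure}. After blocking into $b\times b$ cells, the code is equivalent, by a locality-preserving Clifford circuit of bounded range, to $r$ copies of the TC on the coarse lattice $\Lambda_b$ plus trivial qubits. For the cell-matching decoder we also need the cokernel torsion to be $\Z_2^r$ as a module over the coarse translation ring. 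This guarantees that the flushed residual in each basis subcell is an $r$-bit label, and that each label type is transported between adjacent cells by a fixed short string.

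\textbf{Step 1: error-weight guarantee.} Let $E$ have weight $w\le\lfloor (d_{\mathrm{TC}}-1)/2\rfloor$. Under the projection, each physical qubit error lands in a bounded neighbourhood of one coarse cell, or of a bounded number of adjacent cells. The sector syndromes $D_i$ are therefore the boundary of a coarse chain $\gamma^{(i)}_E$ of coarse weight at most $w$. Flushing is cell-internal, so a qubit error in cell $u$ contributes only a residual label there, plus edge-generator boundaries to neighbours. MWPM returns $\widehat\gamma^{(i)}$ with $|\widehat\gamma^{(i)}|\le|\gamma^{(i)}_E|\le w$, so $\gamma^{(i)}_E+\widehat\gamma^{(i)}$ is a coarse cycle of weight below $d_{\mathrm{TC}}$. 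It is therefore homologically trivial on the coarse torus. The lifted correction $P$ satisfies $\sigma(PE)=0$ by construction, since flushing and edge generators exactly reproduce the syndrome. The remaining task is to show that $PE$ is a stabilizer rather than a nontrivial logical. A nontrivial logical of the TTI code maps, under the decoupling isomorphism, to a noncontractible cycle in some sector. Since all sector cycles are trivial, $PE$ lies in the stabilizer group modulo locally trivial pieces. Here we use topological order: a syndrome-free operator of bounded, topologically trivial support is a stabilizer. This step has to be done carefully, because ``coarse weight at most $w$'' needs a constant factor. I would absorb that factor into the definition of $b$; otherwise the statement holds with $d_{\mathrm{TC}}$ replaced by $d_{\mathrm{TC}}/\kappa$. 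I expect this bookkeeping to be the main obstacle. In particular, one must show that a single physical qubit contributes at most one coarse edge per sector, which may require choosing $b$ large relative to the check range.

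\textbf{Step 2: threshold.} Failure requires some sector to contain a noncontractible coarse cycle in $\gamma^{(i)}_E+\widehat\gamma^{(i)}$. By the standard Dennis et al.\ self-avoiding-walk counting, such a cycle of length $\ell\ge d_{\mathrm{TC}}$ requires at least $\ell/2$ of its coarse edges to be ``faulty''. A coarse edge is faulty only if one of the $K=O(b^2)$ physical qubits in its lifted neighbourhood is flipped, which happens with probability at most $Kp$. Faulty events are only finite-range correlated, so we pass to an independent dilution. The probability of a given set of $m$ faulty edges is then at most $(Kp)^{m/\chi}$ for a constant $\chi$. Summing over walks with at most $\mu^\ell$ choices gives
\[
P_{\mathrm{fail}}\le r\cdot L_xL_y\sum_{\ell\ge d_{\mathrm{TC}}}\bigl(2\mu (Kp)^{1/(2\chi)}\bigr)^{\ell},
\]
which decays exponentially in $d_{\mathrm{TC}}$ for $p$ below a constant fraction of the TC threshold bound.

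\textbf{Step 3: complexity.} The projection, flushing and lifting steps are lookup-table or constant-depth operations per cell, so together they cost $O(L_xL_y)$. Matching runs $r=O(1)$ times on graphs with $O(L_xL_y/b^2)$ vertices. The total runtime is therefore dominated by, and asymptotically equal to, MWPM on an $L_x\times L_y$ lattice.
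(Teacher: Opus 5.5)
Your Step~1 does not establish the stated radius $\lfloor (d_{\mathrm{TC}}-1)/2\rfloor$, and the repair you propose cannot work. The argument needs a per-qubit bound of one of two kinds:
(a) for the layer-decoupling decoder, every physical qubit is sent by $\phi_1$ to at most one qubit of sector $i$;
(b) for the cell-matching decoder, every physical qubit adds at most one edge of $\Lambda_b$ to the sector chain.
Without such a bound you prove only correction up to $\lfloor (d_{\mathrm{TC}}-1)/(2\kappa)\rfloor$, where $\kappa$ is the maximal per-qubit spread. For example, $\kappa=2$ follows from $|\gamma^{(i)}|\le 2|E|$.

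Enlarging $b$ does not remove $\kappa$, for three reasons:
\begin{itemize}
\item Replacing $b$ by $\kappa b$ divides $d_{\mathrm{TC}}$ by $\kappa$, so you only recover the weakened statement.
\item The excess spread is not a range effect. Translation invariance always places some qubit at a cell corner, where its syndrome is shared by up to four cells. Its flushed type-$i$ residual can then sit on two diagonally adjacent cells, which costs two edges of the square coarse graph. Making the cells larger does not change this.
\item The number of sector qubits that $\phi_1$ assigns to one physical qubit is fixed by the polynomial entries of the decoupling circuit, not by $b$.
\end{itemize}

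\textbf{How the paper handles this.} The paper gets the stated constant by \emph{asserting} such per-qubit bounds.
\begin{itemize}
\item For the layer-decoupling decoder, it pushes the error itself through the isomorphism, $\tilde z_1^{(i)}=\pi_1^{(i)}\phi_1(z_1)$. It then claims $|\pi_1^{(i)}\phi_1(e_j)|\le 1$ for each single-qubit error $e_j$ and applies the toric-code bound. The stabilizer conclusion follows immediately because $\phi$ is a chain isomorphism. That part is cleaner than your appeal to topological order, and worth adopting.
\item For the cell-matching decoder, it bounds the horizontal and vertical coarse edges separately by $|E|$.
\end{itemize}

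Neither step closes your gap. The $\le 1$ claim is stated, not derived from $\phi_1$, whose entries are general polynomials. The per-direction count does not give the total-weight hypothesis of Lemma~\ref{lem:cellular-ktc-matching-correctness}, because MWPM can trade vertical edges for horizontal ones.

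Here is a concrete failure on a $7\times 7$ coarse torus, where $t_{\max}=3$:
\begin{itemize}
\item Take three qubits whose type-$i$ residuals are the diagonal pairs $\{(0,0),(1,1)\}$, $\{(1,1),(2,2)\}$ and $\{(4,1),(5,0)\}$. Each pair is allowed by Lemma~\ref{lem:constant-spread-cellular}.
\item They give $D_i=\{(0,0),(2,2),(4,1),(5,0)\}$. The true chain has three horizontal and three vertical edges, so weight $6$.
\item The pairing $\{(2,2),(4,1)\}$, $\{(5,0),(0,0)\}$ has weight $5<6$, so MWPM selects it.
\item The sum of this pairing with the true chain winds once horizontally.
\end{itemize}

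So your suspicion is right: the stated constant needs an additional per-qubit spread hypothesis, which neither you nor the paper supplies.

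Your Steps~2 and~3 are fine. Your Peierls count, with a dilution to handle finite-range correlations, is more careful than the paper, which applies i.i.d.\ toric-code threshold results to correlated induced noise.
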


Both decoders ultimately rely on reducing the decoding problem to matching instances that live on a torus.
On a torus, a defect-pairing decoder can fail only if it chooses a correction whose difference from the error contains a non-contractible loop (a nontrivial homology class).
In our setting, there are $r$ such TC-like sectors.
The decoders are designed so that any  logical operator in the original code induces a non-trivial cycle in at least one sector, and boundaries in a sector lift to stabilizers.
Therefore, if every sector decoder returns a correction that differs from the induced sector error by a boundary, the combined lifted correction differs from the error by a stabilizer. 

The practical performance of the layer-decoupling decoder is highly dependent on our choice of the decoupling map which is, in general, not unique. One can consider a possibly inefficient way to decouple the TC copies using a CNOT circuit that involves certain physical qubits a large number of times. This translates into an induced noise model with a higher error probability on the qubits in certain TC copies, resulting in poor practical error correction performance. Because of the potential challenges associated with identifying a sparse decoupling map out of all possible decoupling maps, we choose to perform our numerical analysis on the cell-matching decoder which has a reasonably controlled amount of correlated errors in each TC copy.

Another issue with these decoding procedures is that if the coarse-graining parameter is large, $d_{TC}$ may be  small, reducing the number of errors the decoder is capable of correcting. 
To improve the practical performance of our matching decoder, it is pertinent to identify a basis for our violated checks where the edge generators for as many pairs of such check-violation types as possible are short so that their corresponding $d_{TC}$s are relatively larger.
Using the cell-matching decoder as a baseline framework, we develop specialized matching decoding procedures for small and intermediate code sizes which overcome this problem. The general idea behind these procedures is that we can use the  information about how individual violated checks project onto the various TC sectors of the cell-matching decoder to produce a set of matching graphs and associated edge weights which incorporates more information about the original syndrome than is captured in the coarse grained point of view. \\
\indent We evaluate these decoding procedures on BB codes, a practically relevant family of \TwoDTLTI codes that can be specified compactly by two bivariate polynomials. This family includes familiar examples, such as the TC and the color code. We focus on two well-studied BB instances, the gross and two-gross codes~\cite{yoder2025tour}, for which off-the-shelf matching constructions are not directly applicable. To probe larger lattice sizes, we also consider the 1152-qubit BB code (defined via the same polynomials as the gross code, but on the $24\times 24$ lattice instead of the $12\times 6$ gross code lattice).
We find that the adapted matching decoder improves over BP and is competitive with BP-OSD. For the gross code under code-capacity noise, we estimate a pseudothreshold of $\approx 5.2\%$, compared to $\approx 5.0\%$ for BP and $\approx 5.5\%$ for BP-OSD. 
Overall, these results indicate that graph-matching decoders can be competitive for TTI codes.
The key point is not only that the adapted decoder improves over BP, but that it approaches BP-OSD performance using a combinatorial procedure built from local translation structure of the excitations. This suggests that the basis decomposition of excitations into check-violation types captures actionable geometric information about likely error clusters, and that MWPM-style primitives remain useful beyond the TC setting. While our implementation is a proof of concept and evaluated under code-capacity noise, the underlying subroutines (shortest paths, MWPM, and parallel shifts) are implementation-friendly and provide a concrete starting point for more optimized and more realistic noise models.

Furthermore, there are many existing studies on the thresholds of the TC under more exotic noise models that include coherent errors, correlated errors, burst errors, etc.~\cite{wang2010quantum,stephens2014fault,auger2017fault,darmawan2018linear,tan2024resilience,chadwick2024averting,pataki2024coherent}.
Our work suggests that \TwoDTLTI codes can exhibit good error-correction performance with respect to these noise models by reducing their fault tolerance to the fault tolerance of the underlying TC copies.
\section{Preliminaries}
\label{sec:preliminaries}
This section provides the necessary notation and preliminaries for the paper.

\subsection{Notation}
\label{subsec:notation}
For a positive integer $n$, let $[n] = \{1, 2, \ldots, n\}$.
Given a vector $v \in \F_2^n$, let $|v| = \left|\set{i \in [n]: v_i \neq 0}\right|$ denote the Hamming weight of $v$, i.e., the number of nonzero coordinates in the standard basis.
We also assume that all $\F_2$-vector spaces in our note are finite-dimensional which implies that all Hamming weights are finite.

\subsection{Quantum codes}
\label{subsec:quantum-codes}
This section states the basic definitions in quantum coding theory.
In particular, we pay close attention to quantum CSS codes.

\begin{definition}[Quantum CSS Codes]
  \label{def:quantum-css-codes}
  For a finite field $\F_2$, a quantum CSS code of length $n$ over $\F_2$ is a pair $Q = \left(Q_X, Q_Z\right)$ of subspaces (i.e. classical codes) $Q_X, Q_Z \subseteq \F_2^n$ such that $Q_X^\perp \subseteq Q_Z$.
  The dimension of $Q$ is $k\coloneqq \dim\left(Q_Z\right) - \dim\left(Q_X^\perp\right)$, and the rate if $R \coloneqq k/n$.
  The distance of $Q$ is 
  \begin{equation}d \coloneqq \min_{c \in \left(Q_X \setminus Q_Z^\perp\right) \cup \left(Q_Z\setminus Q_X^\perp\right)} |c|.\end{equation}
  We refer to $Q$ as an $\llbracket n, k, d\rrbracket$ CSS code.
  Sometimes, we differentiate between the $X$ and $Z$ distance of the code and define them as follows:
  \begin{equation}d_X \coloneqq \min_{c \in Q_X \setminus Q_Z^\perp}|c|,\quad\quad d_Z \coloneqq \min_{c \in Q_Z \setminus Q_X^\perp}|c|.\end{equation}
  The locality $w$ of $Q$ is the maximum number of nonzero entries in any row or column of the parity-check matrices of $Q_X$ and $Q_Z$.
\end{definition}

For qLDPC codes, the locality $w$ is a constant that is independent of the code length $n$. This implies that each stabilizer generator checks at most a constant number of qubits and each qubit is checked by at most a constant number of stabilizer generators. 

It is common for us to associate the single qubit Pauli operators $X, Z, Y$ with the following expressions:
\begin{equation}X = \left(\begin{array}{c|c}0 & 1\end{array}\right),\quad\quad Z = \left(\begin{array}{c|c}1 & 0\end{array}\right),\quad\quad Y = \left(\begin{array}{c|c}1 & 1\end{array}\right).\end{equation}
Using the above expressions and extending it to the $n$ qubits case, we can write the parity-check matrix for a quantum CSS code $Q$ in a single compact matrix in $\F_2^{(n_X + n_Z) \times 2n}$ as follows:
\begin{equation}\left(\begin{array}{c|c}
  H_X & 0\\
  0 & H_Z
\end{array}\right)\end{equation}
where $H_X$ and $H_Z$ are the parity-check matrices for the $X$ and $Z$ stabilizers respectively and $n_X$ (corr. $n_Z$) corresponds to the number of $X$ (corr. $Z$) stabilizer generators. This expression is known as the binary symplectic matrix of the CSS code.

\subsection{Chain complexes}
\label{subsec:chain-complexes}
This section states the basic definitions in homological algebra.

\begin{definition}[Chain Complexes]\label{def:chain-complexes}
  A chain complex $\calC_\ast$ over a field $\F_2$ consists of a sequence of $\F_2$-vector spaces $\left(C_i\right)_{i \in \Z}$ and linear boundary maps $\left(\partial_i^\calC: C_i \to C_{i-1}\right)_{i \in \Z}$ satisfying $\partial_{i-1}^\calC \circ \partial_i^\calC = 0$ for all $i \in \Z$.
  When clear from context, we omit the superscript and subscript and write $\partial = \partial_i = \partial^\calC = \partial_i^\calC$. 
  Assuming that each $C_i$ has a fixed basis, then the locality $w^\calC$ of $\calC$ is the maximum number of nonzero entries in any row or column of any matrix $\partial_i$ in this fixed basis.
  If there exists bounds $\ell < m \in \Z$ such that for all $i < \ell$ and $i > m$ have $C_i = 0$, then we may truncate the sequence and say that $\calC$ is the $(m-\ell + 1)$-term chain complex
  \begin{equation}\calC_\ast = \left(C_m \xrightarrow[]{\partial_m} C_{m-1} \xrightarrow[]{\partial_{m-1}} \ldots \xrightarrow[]{\partial_{\ell + 1}} C_\ell\right).\end{equation}
  We furthermore define the following (standard) vector spaces for $i \in \Z$:
  \begin{align}
    \text{the space of } i\text{-cycles: } Z_i\left(\calC\right) &\coloneqq \ker\left(\partial_i\right) \subseteq C_i,\\
    \text{the space of } i\text{-boundaries: } B_i\left(\calC\right) &\coloneqq \mathrm{im}\left(\partial_{i+1}\right) \subseteq C_i,\\
    \text{the space of } i\text{-homology: } H_i\left(\calC\right) &\coloneqq Z_i\left(\calC\right)/B_i\left(\calC\right).
  \end{align}

  The cochain complex $\calC^\ast$ associated to $\calC_\ast$ is the chain complex with vector spaces $\left(C^i \coloneqq C_i\right)_{i \in \Z}$ and boundary maps given by the coboundary maps $\left(\delta_i = \partial_{i+1}^\perp : C^i \to C^{i+1}\right)_{i \in \Z}$ obtained by transposing all the boundary maps of $\calC_\ast$. 
  Thus, the cochain complex is defined as such:
  \begin{equation}\calC^\ast = \left(C^m \xleftarrow[]{\delta_{m-1}} C^{m - 1} \xleftarrow[]{\delta_{m-2}} \ldots \xleftarrow[]{\delta_{\ell}} C^\ell\right).\end{equation}
  We can analogously define the spaces of cohomology $H^i\left(\calC\right) = Z^i(\calC)/B^i(\calC)$, cocycles $Z^i\left(\calC\right) = \ker\left(\delta_i\right)$, and coboundaries $B^i\left(\calC\right) = \mathrm{im}\left(\delta_{i-1}\right)$.
\end{definition}

\begin{definition}
  For a chain complex $\calC$, the $i$-systolic distance $d_i(\calC)$ and the $i$-cosystolic distance $d^i(\calC)$ are defined as
  \begin{equation}d_i(\calC) = \min_{c \in Z_i(\calC)\setminus B_i(\calC)}|c|,\quad\quad d^i(\calC) = \min_{c \in Z^i(\calC)\setminus B^i(\calC)}|c|.\end{equation}
\end{definition}

It is well-known that classical linear codes can be described by 2-term chain complexes where the two vector spaces are the spaces of bits and checks respectively. These two vector spaces are connected by a linear boundary map that can be written as the check matrix $H$.
Quantum CSS codes can be described with a 3-term chain complex by associating the $X$ stabilizers, qubits, and $Z$ stabilizers with the three vector spaces in a 3-term chain complex. 
The condition ``the boundary of a boundary is trivial'' is compatible with the CSS orthogonality condition i.e.,  $\partial_{i-1} \circ \partial_i = 0$ and $H_X H_Z^\top = 0$.
To see how the Pauli logical operators fit in the chain complex picture, we associate the $X$ stabilizers, qubits, and $Z$ stabilizers to the $\mathbb{F}_2$-vector spaces $C_0, C_1$, and $C_2$, then the $X$ and $Z$ logical operator representatives are given by the basis elements of the $1$-cohomology space and $1$-homology space respectively. The $X$ and $Z$ distances of the quantum code are then $d^1(\mathcal{C})$ and $d_1(\mathcal{C})$.

\subsection{Polynomial representation of \TwoDTLTI codes}
\label{subsec:polynomial-representation-2d-topological-codes}

In this section, we explore the Laurent polynomial representation of \TwoDTLTI codes. This representation is particularly useful for analyzing the properties of these codes and their performance in quantum error correction.

\subsubsection{Laurent polynomials}
\label{subsubsec:laurent-polynomials}
A Laurent polynomial is a polynomial that allows for negative powers of its variables. Formally, a Laurent polynomial in two variables $x$ and $y$ over the finite field $\F_2$ is an expression of the form:
\begin{equation}f(x,y) = \sum_{i=-\infty}^{\infty} \sum_{j=-\infty}^{\infty} a_{ij} x^i y^j,\end{equation}
where $a_{ij} \in \F_2$ and only finitely many $a_{ij}$ are nonzero. The degree of a Laurent polynomial is defined as the maximum of the degrees of its terms, i.e., $\deg(f) = \max\{i + j : a_{ij} \neq 0\}$.
For the purpose of this work, we only consider the case where the exponents are finite. The bivariate Laurent polynomials form a ring $R = \F_2[x^{\pm 1}, y^{\pm 1}]$ with the usual addition and multiplication of polynomials. The ring $R$ is a commutative ring with unity, and it is also a $\F_2$-vector space with a basis given by the monomials $x^i y^j$ for $i, j \in \Z$.
\subsubsection{\TwoDTLTI codes}
\label{subsubsec:2d-topological-translationally-invariant-css-codes}
A \TwoDTLTI code is a quantum error-correcting code that is defined on a two-dimensional lattice with periodic boundary conditions. These codes are characterized by their stabilizer checks, which are local and translationally invariant (LTI), meaning that they can be translated across the lattice without changing their structure. We define local checks as stabilizer checks that act not only on a constant number of qubits but also only on qubits that are constant distance away. In other words, these qubits are effectively adjacent to the check when we scale the lattice to infinity. The stabilizer checks are typically defined using Laurent polynomials in two variables, which allows for a compact representation of the code's structure.

The key idea is to associate each site in the lattice with $q$ qubits. For most topological quantum codes, $q = 2$ and each site in the two-dimensional lattice is associated with the pair of qubits that lie on the edges to the north and east of the site. One can view this as the act of distinguishing between the qubits that reside on the horizontal and vertical edges of the lattice. In the more general case, we can associate $q$ qubits to a site in the lattice, where $q$ is a constant that is independent of the code length $n$. Each of these sites is then associated with a bivariate Laurent monomial $x^i y^j,$
where $i$ and $j$ are the horizontal and vertical coordinates of the site in the lattice. Similarly, the $X$ and $Z$ stabilizer checks of the code can also be labeled by these bivariate Laurent monomials. Multiplying by $x$ means translating one step in the horizontal direction, and multiplying by $y$ means translating one step vertically.
The antipode $f(x,y)^*$ (replace $x$ by $x^{-1}$ and $y$ by $y^{-1}$) corresponds to reversing the translation direction. A polynomial can be understood as a finite stencil. If a stabilizer generator is described by a polynomial like $1 + x + x^{-1}y,$
one should interpret it as it having to act on the qubit at the current site, the qubit one step to the right, and the qubit one step left-and-up. This is illustrated in Fig. \ref{fig:informal_dictionary_tikz}.
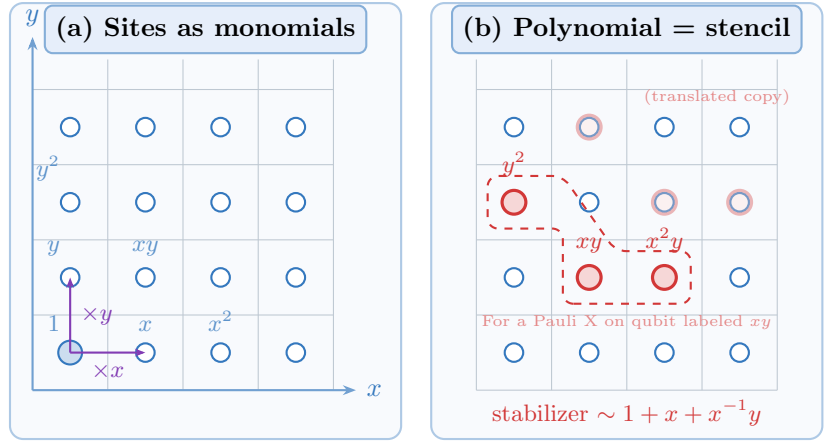
\begin{figure}[h]
\centering
\definecolor{siteblue}{RGB}{52,120,190}
\definecolor{stencilred}{RGB}{210,55,55}
\definecolor{stencilgreen}{RGB}{40,160,80}
\definecolor{stencilpurple}{RGB}{130,60,180}
\definecolor{gridgray}{RGB}{190,200,210}
\definecolor{panelbg}{RGB}{248,250,253}
\definecolor{headerbg}{RGB}{230,238,250}

\tikzset{
  qubit/.style={circle, draw=siteblue, fill=white, thick, minimum size=7pt, inner sep=0pt},
  qubitfill/.style={circle, draw=siteblue, fill=siteblue!25, thick, minimum size=7pt, inner sep=0pt},
  activequbit/.style={circle, draw=stencilred, fill=stencilred!20, very thick, minimum size=9pt, inner sep=0pt},
  stabilizer/.style={rectangle, rounded corners=3pt, draw=stencilgreen!70!black, fill=stencilgreen!12, thick},
  arrow/.style={-{Stealth[length=5pt]}, thick},
  panelbox/.style={rectangle, rounded corners=5pt, draw=siteblue!40, fill=panelbg, thick},
  headerbox/.style={rectangle, rounded corners=3pt, fill=headerbg, draw=siteblue!60, thick, inner sep=4pt},
}

\begin{tikzpicture}[font=\small]

\begin{scope}[shift={(0,0)}]

  \node[panelbox, minimum width=5.2cm, minimum height=5.8cm] at (2.3,2.55) {};
  \node[headerbox] at (2.3,5.1) {\textbf{(a) Sites as monomials}};

  \foreach \i in {0,...,4} {
    \draw[gridgray, thin] (\i*1, 0.3) -- (\i*1, 4.7);
    \draw[gridgray, thin] (0.0, 0.3+\i*1) -- (4.0, 0.3+\i*1);
  }

  \draw[arrow, siteblue!70] (0.0,0.3) -- (4.3,0.3) node[right, siteblue] {$x$};
  \draw[arrow, siteblue!70] (0.0,0.3) -- (0.0,5.0) node[above, siteblue] {$y$};

  \foreach \i in {0,...,3} {
    \foreach \j in {0,...,3} {
      \node[qubit] (q\i\j) at (\i*1+0.5, \j*1+0.8) {};
    }
  }

  \node[qubitfill, minimum size=9pt] at (0.5, 0.8) {};
  \node[above left, siteblue!80, font=\footnotesize] at (0.5,0.95) {$1$};
  \node[above, siteblue!80, font=\footnotesize] at (1.5,0.95) {$x$};
  \node[above, siteblue!80, font=\footnotesize] at (2.5,0.95) {$x^2$};
  \node[above left, siteblue!80, font=\footnotesize] at (0.5,1.95) {$y$};
  \node[above left, siteblue!80, font=\footnotesize] at (0.5,2.95) {$y^2$};
  \node[above, siteblue!80, font=\footnotesize] at (1.5,1.95) {$xy$};

  \draw[-{Stealth[length=4pt]}, stencilpurple, thick]
    (0.5,0.8) -- (1.5,0.8)
    node[midway, below, font=\footnotesize, stencilpurple] {$\times x$};
  \draw[-{Stealth[length=4pt]}, stencilpurple, thick]
    (0.5,0.8) -- (0.5,1.8)
    node[midway, right, font=\footnotesize, stencilpurple] {$\times y$};
\end{scope}

\begin{scope}[shift={(5.6,0)}]

  \node[panelbox, minimum width=5.2cm, minimum height=5.8cm] at (2.3,2.55) {};
  \node[headerbox] at (2.3,5.1) {\textbf{(b) Polynomial = stencil}};

  \foreach \i in {0,...,4} {
    \draw[gridgray, thin] (\i*1 + 0.3, 0.3) -- (\i*1 + 0.3, 4.7);
    \draw[gridgray, thin] (0.3, 0.3+\i*1) -- (4.3, 0.3+\i*1);
  }

  \foreach \i in {0,...,3} {
    \foreach \j in {0,...,3} {
      \node[qubit] at (\i*1+0.8, \j*1+0.8) {};
    }
  }


  \node[activequbit] (s0) at (1.8,1.8) {};
  \node[activequbit] (s1) at (2.8,1.8) {};
  \node[activequbit] (s2) at (0.8,2.8) {};

  \draw[stencilred, thick, dashed, rounded corners=4pt]
    (1.45,1.45) -- (3.15,1.45) -- (3.15,2.15) -- (2.15,2.15) --
    (1.45,3.15) -- (0.45,3.15) -- (0.45,2.45) -- (1.45,2.45) -- cycle;
  \node[font=\tiny, stencilred!60] at (2.3,1.2) {For a Pauli X on qubit labeled $xy$};

  \node[above, stencilred, font=\footnotesize] at (1.8,2.0) {$xy$};
  \node[above, stencilred, font=\footnotesize] at (2.8,2.0) {$x^2 y$};
  \node[above, stencilred, font=\footnotesize] at (0.8,3.0) {$y^2$};

  \node[stencilred, font=\footnotesize, align=center] at (2.3,0)
    {stabilizer $\sim 1+x+x^{-1}y$};

  \begin{scope}[opacity=0.35]
    \node[activequbit] at (2.8,2.8) {};
    \node[activequbit] at (3.8,2.8) {};
    \node[activequbit] at (1.8,3.8) {};
  \end{scope}
  \node[font=\tiny, stencilred!60] at (3.5,4.2) {(translated copy)};

\end{scope}

\end{tikzpicture}
\caption{%
  \textbf{(a)} Lattice sites are labelled by monomials $x^i y^j$;
  multiplying by $x$ (resp.\ $y$) translates one step right (resp.\ up).
  The filled blue site is the origin monomial~$1$.
  \textbf{(b)} A polynomial such as $1+x+x^{-1}y$ is a repeating local
  stencil. A Pauli $X$ on qubit labeled $xy$ acts on the highlighted site and on its translates one step
  right and one step left-and-up.
  A faded copy of the same stencil anchored at $x^2 y^2$ illustrates translation
  invariance.%
}
\label{fig:informal_dictionary_tikz}
\end{figure}

Finally, the parity-check matrix of a \TwoDTLTI code can be expressed as a collection of Laurent polynomials in two variables. Suppose we have a 2D lattice with $n$ qubits that reside in sites that can accommodate 2 qubits each, then a possible parity-check matrix can be expressed as:
\begin{align}
  H = \left(\begin{array}{c|c}
  H_X & 0 \\ 0 & H_Z
\end{array}\right) &= \left(\begin{array}{cc|cc}
  1 + x & 1+y & 0 & 0 \\
  0 & 0 & (1 + y)^\ast & (1 + x)^\ast
\end{array}\right) \\
&= \left(\begin{array}{cc|cc}
  1 + x & 1+y & 0 & 0 \\
  0 & 0 & 1 + \bar{y} & 1 + \bar{x}
\end{array}\right) \\
&\coloneqq \left(\begin{array}{cc|cc}
  1 + x & 1+y & 0 & 0 \\
  0 & 0 & 1 + y^{-1} & 1 + x^{-1}
\end{array}\right)
\end{align}
where $(1 + x)^\ast = 1 + x^{-1}$ and $(1 + y)^\ast = 1 + y^{-1}$. Note that the first column on each side of the partition corresponds to the qubits that corresponds to the horizontal edges of the lattice and the second column on each side of the partition corresponds to the qubits that corresponds to the vertical edges of the lattice. The asterisk denotes the antipode map, which is an involutive $\F_2$-linear map from the ring $R$ to itself that sends a Laurent polynomial $f(x,y) = \sum_{i,j} a_{ij} x^i y^j$ to $f^\ast(x,y) = \sum_{i,j} a_{ij} x^{-i} y^{-j}$. 

When we consider the dagger operation on the parity-check matrix, we are effectively taking the transpose of the matrix and applying the antipode map to each entry. An example of taking the dagger operation on the parity-check matrix $H$ is given by:
\begin{align}
  H^\dagger &= \left(\begin{array}{cc}
  H_X^\dagger & 0 \\ 0 & H_Z^\dagger
\end{array}\right) \\
&= \left(\begin{array}{cc}
  (1 + x)^\ast & 0 \\ 
  (1 + y)^\ast & 0 \\
  0 & 1 + y \\
  0 & 1 + x
\end{array}\right) \\
&= \left(\begin{array}{cc}
  1 + x^{-1} & 0 \\ 
  1 + y^{-1} & 0 \\
  0 & 1 + y \\
  0 & 1 + x
\end{array}\right) 
\end{align}

From the above example involving $H$, we see that each horizontal qubit associated to the site $(i,j)$ in the lattice is acted upon by the $X$ stabilizer checks $x^{i}y^{j} \left(1 + x\right)$, i.e. the $X$ stabilizer checks associated with the monomials $x^{i}y^{j}$ and $x^{i+1}y^{j}$. Similarly, the same associated horizontal qubit is acted upon by the $Z$ stabilizer checks $x^{i}y^{j}$ and $x^{i}y^{j-1}$. The above observation can be obtained by performing the following matrix operation:
\begin{equation}H\left(\begin{array}{cc|cc}x^iy^j & 0 & x^iy^j & 0\end{array}\right)^\top = \left(\begin{array}{cc}x^iy^j + x^{i+1}y^j &  x^iy^j + x^iy^{j+1} \end{array}\right)^\top.\end{equation}
The same observation can be made for the vertical qubits associated with the site $(i,j)$ in the lattice. By considering $x$ and $y$ as the horizontal and vertical directions in the lattice, we see that each qubit in this CSS code is acted upon by its two nearest neighbor $X$ and $Z$ stabilizer checks.
Alternatively, for some $X$ stabilizer check labeled by the bivariate Laurent monomial $x^i y^j$, we can identify the qubits that lie in its support by considering the following matrix operation:
\begin{equation}\left(\begin{array}{cc}x^iy^j & 0\end{array}\right) H = \left(\begin{array}{cc}x^iy^j + x^{i+1}y^j & x^iy^j + x^iy^{j+1}\end{array}\right).\end{equation}
This shows that an arbitrary $X$ stabilizer check labeled by the bivariate Laurent monomial $x^i y^j$ acts on the two nearest neighbor horizontal and vertical qubits associated with the monomials $x^{i}y^{j}$ and $x^{i+1}y^{j}$, and $x^{i}y^{j+1}$ respectively. Multiplying by $H_X$ computes, from a $Z$-error pattern, which $X$ checks anticommute with it. The same observation can be made for the $Z$ stabilizer checks. It may perhaps be obvious to some of the readers that the above parity-check matrix $H$ is, in fact, the parity-check matrix of the TC.  

While we have mainly focused on the case where $q = 2$, the above formulation is highly general and can be used to describe any \TwoDTLTI code with a constant number of qubits per site. The polynomial representation is very powerful as it captures the periodic structure of the code and provides a succinct description of the interactions for the code Hamiltonian. A notable member of the \TwoDTLTI code family are the bivariate bicycle (BB) codes. We include a useful summary of BB codes in Appendix~\ref{app:summary-bb-codes} for the readers' reference.

\section{The layer-decoupling decoder for \TwoDTLTI codes}
\label{sec:decoding}
In this section, we present the \emph{layer-decoupling decoder}, a matching-based decoding algorithm for \TwoDTLTI codes.
We first review the decoding pipeline at a high level, then formalize how the algebraic outputs of the decoupling process induce matching-decodable graph chain complexes. For readers who would like a beginner-friendly introduction to the decoupling theorem and its algorithmic implementation, we recommend reading Appendix~\ref{app:decoupling} before proceeding with the rest of this section. In particular, we expect the reader to be comfortable with terms like defects, charge groups, as well as the cokernel of the parity-check matrix, etc.
Finally, we show how to lift the resulting graph corrections back to a physical Pauli correction and we state finite-size correctness and time complexity guarantees.

Throughout, we assume periodic boundary conditions on a torus of linear dimensions $L_x\times L_y$ and write
\begin{equation}\Lambda \coloneqq \Z_{L_x}\times \Z_{L_y}.\end{equation}
We fix a coarse-graining parameter $b\in\N$ and assume $b\mid L_x,L_y$ when discussing finite-size decoding.
The corresponding coarse lattice is
\begin{equation}\Lambda_b \coloneqq \Z_{L_x/b}\times \Z_{L_y/b}.\end{equation}

For concreteness, we focus on decoding $Z$-type errors; the $X$-type decoder is analogous.

\subsection{Matching decoder framework}
\label{sec:decoding-framework}
The decoding algorithm is designed to be a three-step process: 
\begin{itemize}
    \item We virtually decouple the code into independent copies of TCs and trivial product states.
    \item We construct a decoding graph for each of these copies based on the syndrome information obtained from the measurements and obtain their corresponding error correction operators using a suitable matching decoder such as the MWPM algorithm or the Union-Find algorithm.
    \item We lift the error correction operators from the TCs back to the original \TwoDTLTI code.
\end{itemize}

The key step is the decoupling process because it reduces the relevant decoding problem to a graph matching problem which is akin to the approach taken by Delfosse for the color code~\cite{delfosse2014decoding}.
The decoupling algorithm in Appendix~\ref{app:decoupling} provides two algebraic objects $U$ and $V$, which are matrices over the bivariate Laurent polynomial ring $R' = \F_2[x^{\pm b}, y^{\pm b}]$.
The matrix $U$ records the valid symplectic row operations applied to the coarse-grained parity-check matrix $H'$, while $V$ records the valid symplectic column operations.
Let the following ordered sets be the sequence of row and column operations that we have performed on $H'$:
\begin{align}
    T_{\row} &= \left\{r_1, r_2, \ldots, r_{m_{\row}}\right\},\quad \forall r_i \in R^{\prime\,n_r \times n_r}, i \in [m_{\row}]\\
    T_{\col} &= \left\{c_1, c_2, \ldots, c_{m_{\col}}\right\},\quad \forall c_j \in R^{\prime\,n_c \times n_c}, j \in [m_{\col}]
\end{align}
where $n_r$ and $n_c$ are the number of rows and columns in the coarse-grained parity-check matrix $H'$, respectively. Suppose we denote the final parity-check matrix after the decoupling process 
\begin{equation}\tilde{H} = \begin{pmatrix}
    \tilde{H}_X & 0\\
    0 & \tilde{H}_Z
\end{pmatrix}\end{equation}
where $\tilde{H}_X$ and $\tilde{H}_Z$ are the parity-check matrices for the X-type and Z-type stabilizer checks, respectively. Then, from the associativity of matrix multiplication, we can express the final parity-check matrix as follows:
\begin{equation}\tilde{H} = \left(\prod_{i=1}^{m_{\row}} r_i\right) H' \left(\prod_{j=1}^{m_{\col}} c_j\right) \eqqcolon U H' V.\end{equation}
Because the parity-check matrix is block-diagonal, we can express $U$ and $V$ as follows:
\begin{equation}U = \begin{pmatrix}
    U_X & 0\\
    0 & U_Z
\end{pmatrix},\quad V = \begin{pmatrix}
    V_X & 0\\
    0 & V_Z
\end{pmatrix}\end{equation}
such that $U_X H'_X V_X = \tilde{H}_X$ and $U_Z H'_Z V_Z = \tilde{H}_Z$. The matrices $U_X$, $U_Z$, $V_X$, and $V_Z$ are all square matrices over $R'$ of size $\frac{n_r}{2} \times \frac{n_r}{2}$ and $\frac{n_c}{2} \times \frac{n_c}{2}$, respectively.
Of course, there could be some additional steps of coarse-graining that we have performed in between row and column operations during the decoupling process, but we can always express all the intermediate row and column operations in the final coarse-grained base ring $\tilde{R}$ by replacing the ring elements in $R'$ with their corresponding elements in $\tilde{R}$ using the generator matrices formulated in Appendix~\ref{subsec:coarse_graining}. Lastly, we note that $U$ and $V$ are invertible matrices over $\tilde{R}$ because they are obtained from a sequence of valid symplectic row and column operations on the coarse-grained parity-check matrix $H'$.

We summarize the finite-size decoding procedure as an explicit algorithm.

\begin{algorithm}[H]
\caption{Layer-decoupling decoder for $Z$-type errors (finite size)}
\label{alg:decoupling-decoder}
\begin{algorithmic}[1]
    \Require{Measured $Z$-syndrome $s_Z$; matching decoder $\textsc{Match}(\cdot)$.}
    \Ensure{$Z$-type Pauli correction $\widehat{E}$.}
    \State Compute TC sector and product state syndromes $\tilde{s}^{(i)}$ and $\tilde{s}^A$ via the syndrome mapping (Definition~\ref{def:syndrome-mapping-decoupling}).
    \State Compute TC sector error weights via the noise model mapping (Definition~\ref{def:noise-model-mapping-decoupling}). 
    \ForAll{TC sectors $i\in[r]$}
        \State Run $\textsc{Match}$ on the $i$th TC decoding graph induced by $\tilde{s}^{(i)}$ and the updated sector error weights to obtain TC sector correction $\tilde{e}^{(i)}_1\in\tilde{C}^{(i)}_1$.
    \EndFor
    \State Compute a local correction $\tilde{e}^A_1$ on the trivial product-state sector.
    \State Lift and combine: $\tilde{e}_1 \coloneqq \bigoplus_{i=1}^{r} \tilde{e}_1^{(i)} \oplus \tilde{e}^A_1$.
    \State Compute final correction $\widehat{E}\coloneqq \phi_1^{-1}(\tilde{e}_1)$ where $\phi_1^{-1}$ is the inverse of the decoupling map.
    \State \Return $\widehat{E}$.
\end{algorithmic}
\end{algorithm}

\subsection{Morphism of chain complexes}
\label{sec:morphism-chain-complexes}
In this subsection, we first discuss how we can obtain several chain complexes from the decoupling process. Subsequently, we show how to construct a bijective morphism from the direct sum of these chain complexes to a graph chain complex where we can perform matching decoding.

Recall that we can express our \TwoDTLTI code as a chain complex:
\begin{equation}\calC = \left(C_2 = R^t \xrightarrow[]{\partial_2 = \left(\begin{array}{c|c}0 & H_Z\end{array}\right)^\dagger} C_{1} = R^{2q} \xrightarrow[]{\partial_{1} = \left(\begin{array}{c|c}H_X & 0\end{array}\right)}  C_0 = R^t\right)\end{equation}

Let $r$ be the number of TC sectors obtained from the decoupling of the code with parity-check matrix $H$.
Concretely, $r = \dim_{\F_2}(\calT(\coker\,H))$.
For the subsequent discussion, we let $\tilde{R}$ be the final coarse-grained base ring obtained from the decoupling process and indicate all associated quantities with a tilde.
For each of these $r$ TC sectors, we can construct a chain complex $\tilde{\calC}^{(i)}$ for $i \in [r]$ as follows:
\begin{equation}\tilde{\calC}^{(i)} = \left(\tilde{C}^{(i)}_2 = \tilde{R} \xrightarrow[]{\tilde{\partial}^{TC}_2 = \left(\begin{array}{c|c}0 & H_{TC, Z}\end{array}\right)^\dagger} \tilde{C}^{(i)}_{1} = \tilde{R}^{4} \xrightarrow[]{\tilde{\partial}_{1}^{TC} = \left(\begin{array}{c|c}H_{TC,X} & 0\end{array}\right)}  \tilde{C}^{(i)}_0 = \tilde{R}\right)\end{equation}
where $H_{TC, X/Z}$ is the $X/Z$-type parity-check matrix of the TC. In addition, we can also construct a chain complex $\tilde{\calA}$ for the trivial product state where each of the qubits is either stabilized by an $X$-type or a $Z$-type stabilizer check. Let $\phi = (\phi_2, \phi_1, \phi_0)$ be a vector-space isomorphism between the chain complexes $\calC$ and the direct sum of the chain complexes $\tilde{\calC}^{(i)}$ and $\tilde{\calA}$. Note that because $\phi$ is bijective, there exists $\phi^{-1}$ maps from the direct sum of the chain complexes $\tilde{\calC}^{(i)}$ and $\tilde{\calA}$ to the chain complex $\calC$.  

Below is a diagram showing the vector-space isomorphism $\phi$ between the chain complex $\calC$ and the direct sum of the chain complexes $\tilde{\calC}^{(i)}$ and $\tilde{\calA}$:
\begin{center}
\begin{tikzcd}
    C_2 \arrow[r, "\partial_2"] \arrow[d, "\phi_2"] & C_1 \arrow[r, "\partial_1"] \arrow[d, "\phi_1"] & C_0 \arrow[d, "\phi_0"]\\
    \tilde{C}^{(1)}_2 \arrow[r, "\tilde{\partial}^{TC}_2"] & \tilde{C}^{(1)}_1 \arrow[r, "\tilde{\partial}^{TC}_1"] & \tilde{C}^{(1)}_0 \\[-2em]
    \oplus & \oplus & \oplus  \\[-2em]
    \tilde{C}^{(2)}_2 \arrow[r, "\tilde{\partial}^{TC}_2"] & \tilde{C}^{(2)}_1 \arrow[r, "\tilde{\partial}^{TC}_1"] & \tilde{C}^{(2)}_0 \\[-2em]
    \oplus & \oplus & \oplus \\[-2.5em]
    \vdots & \vdots & \vdots \\[-2em]
    \oplus & \oplus & \oplus \\[-2em]
    \tilde{C}^{(r)}_2 \arrow[r, "\tilde{\partial}^{TC}_2"] & \tilde{C}^{(r)}_1 \arrow[r, "\tilde{\partial}^{TC}_1"] & \tilde{C}^{(r)}_0 \\[-2em]
    \oplus & \oplus & \oplus \\[-2em]
    \tilde{A}_2 \arrow[r, "\tilde{\partial}^A_2"] & \tilde{A}_1 \arrow[r, "\tilde{\partial}^A_1"] & \tilde{A}_0
\end{tikzcd}
\end{center}
For the sake of brevity, we denote the direct sum of the chain complexes $\tilde{\calC}^{(i)}$ and $\tilde{\calA}$ as $\tilde{\calC}$.
In addition, we denote the boundary operators of the direct sum of chain complexes by
\begin{equation}\tilde{\partial}_2 = \left(\bigoplus_{i=1}^r \tilde{\partial}_2^{TC}\right) \oplus \tilde{\partial}_2^A,\quad\quad \tilde{\partial}_1 = \left(\bigoplus_{i=1}^r \tilde{\partial}_1^{TC}\right) \oplus \tilde{\partial}_1^A.\end{equation}
where $\tilde{\partial}_2^{TC}$ and $\tilde{\partial}_1^{TC}$ are the boundary operators of the chain complex $\tilde{\calC}^{(i)}$ and $\tilde{\partial}_2^A$ and $\tilde{\partial}_1^A$ are the boundary operators of the chain complex $\tilde{\calA}$.

At this point, it might already be clear how the vector-space isomorphisms $\phi$ are related to the algebraic objects $U$ and $V$ that we have obtained from the decoupling process in Appendix~\ref{app:decoupling}. To be concrete, we have the following relations:
\begin{align}
    \phi_2 &= U_Z^{-\dagger}\,:\,C_2 \to \bigoplus_{i=1}^r \tilde{C}^{(i)}_2 \oplus \tilde{A}_2, \\
    \phi_1 &= \begin{pmatrix}V_X^{-1} & 0 \\ 0 & V_Z^{\dagger}\end{pmatrix}\,:\, C_1 \to \bigoplus_{i=1}^r \tilde{C}^{(i)}_1 \oplus \tilde{A}_1, \\
    \phi_0 &= U_X\,:\,C_0 \to \bigoplus_{i=1}^r \tilde{C}^{(i)}_0 \oplus \tilde{A}_0,
\end{align}
where $A^{-\dagger}$ denotes the inverse of the matrix $A$ in the vector space sense, i.e., $A^{-\dagger} = (A^\dagger)^{-1}$.

The vector-space isomorphisms $\phi_0$ and $\phi_2$ allow us to map the syndromes obtained from the measurements of the stabilizer checks of the original code to the syndromes of the TCs and the trivial product state. Similarly, the vector-space isomorphism $\phi_1$ allows us to map an arbitrary Pauli error configuration supported on the qubits in the original code to a Pauli error configuration supported on the qubits in the TCs and the trivial product state.

In fact, the vector-space isomorphism $\phi$ can be shown to be a chain isomorphism between the chain complex $\calC$ and the direct sum of the chain complexes $\tilde{\calC}^{(i)}$ and $\tilde{\calA}$ which we show formally in Appendix~\ref{app:layer-decoupling-decoder-lemmas} as Lemma~\ref{lem:chain-isomorphism}. 
Now, we define a set of projections $\pi^{(i)} = (\pi^{(i)}_2, \pi^{(i)}_1, \pi^{(i)}_0)$ for $i \in [r]$ and $\pi^A = (\pi^A_2, \pi^A_1, \pi^A_0)$ as follows:
\[
\begin{tikzcd}[baseline=(current bounding box.center)]
    \tilde{C}_2 \arrow[r, "\tilde{\partial}_2"] \arrow[d, "\pi_2^{(i)}"] & \tilde{C}_1 \arrow[r, "\tilde{\partial}_1"] \arrow[d, "\pi_1^{(i)}"] & \tilde{C}_0 \arrow[d, "\pi_0^{(i)}"]\\
    \tilde{C}^{(i)}_2 \arrow[r, "\tilde{\partial}^{TC}_2"] & \tilde{C}^{(i)}_1 \arrow[r, "\tilde{\partial}^{TC}_1"] & \tilde{C}^{(i)}_0
\end{tikzcd}
\qquad
\begin{tikzcd}[baseline=(current bounding box.center)]
    \tilde{C}_2 \arrow[r, "\tilde{\partial}_2"] \arrow[d, "\pi_2^{A}"] & \tilde{C}_1 \arrow[r, "\tilde{\partial}_1"] \arrow[d, "\pi_1^{A}"] & \tilde{C}_0 \arrow[d, "\pi_0^{A}"]\\
    \tilde{A}_2 \arrow[r, "\tilde{\partial}^A_2"] & \tilde{A}_1 \arrow[r, "\tilde{\partial}^A_1"] & \tilde{A}_0
\end{tikzcd}\]

These projection maps effectively project the direct sum of chain complexes into one of the summands. It is straightforward to see that the projections $\pi^{(i)}$ and $\pi^A$ are chain maps (i.e., morphisms of 3-term chain complexes). Again, we formally state this as Lemma~\ref{lem:projection-chain-maps} in Appendix~\ref{app:layer-decoupling-decoder-lemmas}.

By projecting onto one of the TC summands $\tilde{\calC}^{(i)}$, we now obtain a graph chain complex \begin{equation}\tilde{\calC}^{(i)}\,:\,\left(\tilde{C}^{(i)}_2 \xrightarrow{\tilde{\partial}^{(i)}_2} \tilde{C}^{(i)}_1 \xrightarrow{\tilde{\partial}^{(i)}_1} \tilde{C}^{(i)}_0\right)\end{equation}
that is matching decodable using the MWPM algorithm or the Union-Find algorithm. 

\subsection{Mapping the noise model for the TCs}
\label{sec:noise-model-mapping}
In this subsection, we show how we can map the noise model of the original \TwoDTLTI code to the noise model of the TCs. The noise model for the original code is typically defined in terms of independent and identically distributed (i.i.d.) Pauli errors on the qubits. In our case, we assume that the noise model can be adequately represented by a noise model vector $p \in \R^{n_c}$ where $n_c$ is the number of columns in the parity-check matrix $\tilde{H}$. Note that $n_c$ is also twice the number of qubits in a unit cell of the coarse-grained lattice. In other words, we represent the Pauli noise by decomposing it and expressing it in the symplectic format. By representing our noise model with a vector $p$, we are also implicitly assuming that the noise model is translationally-invariant across the unit cells of the coarse-grained lattice. In the event where the coarse-graining parameter $\tilde{b}$ is not sufficiently large to capture the translational invariance of the noise model, we can always increase the coarse-graining parameter to ensure that both the noise model and the parity-check matrix are translationally-invariant across the unit cells of the coarse-grained lattice.

We now state the definition of the noise model mapping for each TC sector $i \in [r]$.
The noise model mapping is meant to convert the bit-flip (or phase-flip) error probabilities on the \TwoDTLTI code into the error probabilities on the TC copies. It provides a conservative estimate for the elevated error probability for the qubits on the TC copies induced by the entangling gates in the decoupling circuit because it makes the simplifying assumption that the induced noise acts on each of the TC copies independently instead of in a correlated way. The mapping simply adds up the error probabilities of all physical qubits in the \TwoDTLTI code that are disentangled with the physical qubit (via CNOTs) that now resides in a TC copy. 
This mapping allows us to weight the edges in the TC decoding graph more optimally to improve error correction performance.

\begin{definition}[Noise model mapping]
\label{def:noise-model-mapping-decoupling}
Let $p \in \R^{n_c}$ be the noise model vector for the original \TwoDTLTI code. Let $\phi[i,j]$ be the element in the $i$-th row and $j$-th column of the matrix representation of the chain isomorphism $\phi_1$ between the \TwoDTLTI code and the TC copies where the column and row indices have the natural correspondence to the qubit indices in the \TwoDTLTI code, TC copies and the product states. 
In addition, let $\chi_{\tilde{R} \setminus\{0\}}$ be the characteristic function such that
\begin{align}
\chi_{\tilde{R} \setminus \{0\}}\,:\, \tilde{R}^{n_c \times n_c} &\to \F_2^{n_c \times n_c} \\
\forall \phi_1 \in \tilde{R}^{n_c \times n_c},\quad\phi_1[i,j] &\mapsto \begin{cases}
    1 & \text{if } \phi_1[i,j] \in \tilde{R} \setminus \{0\}, \\
    0 & \text{otherwise}.
\end{cases}
\end{align}
We define the noise model mapping $\tilde{p}^{(i)}$ for each TC sector $i \in [r]$ and the ancilla product state as follows:
\begin{align}
\tilde{p}^{(i)} &= \left(\pi^{(i)}_1 \circ \chi_{\tilde{R} \setminus\{0\}}(\phi_1)\right)p \in \R^{4}, \\
\tilde{p}^A &= \left(\pi^A_1 \circ \chi_{\tilde{R} \setminus\{0\}}(\phi_1)\right)p \in \R^{n_c - 4r},
\end{align}
where $\pi^{(i)}_1$ is the chain map defined in Lemma~\ref{lem:projection-chain-maps}. Note that the addition done in the above map is done in the field $\R$, i.e., the addition is done in the real numbers. 
\end{definition}

For now, we assume that the values in the noise model vector $p$ are sufficiently small such that the mapped noise model vectors $\tilde{p}^{(i)}$ and $\tilde{p}^A$ are vectors of valid probabilities. In the event that the probabilities are too small, we can use the appropriate logarithmic form of the probabilities for weighting the edges in the decoding graph.

\subsection{Mapping the syndrome configuration for the TCs}
\label{sec:syndrome-mapping}
In this subsection, we show how to map the syndrome configuration of the original \TwoDTLTI code to the syndrome configurations of the decoupled TC sectors.
The syndrome configuration is a vector $s\in\F_2^{n_r}$, where $n_r$ is the number of rows in the parity-check matrix $\tilde{H}$.
Because we can decompose the syndrome configuration into $X$ and $Z$ components, we can write
\begin{equation}
s \coloneqq (s_X, s_Z) \in \F_2^{n_r^X} \oplus \F_2^{n_r^Z},
\end{equation}
where $n_r^X$ and $n_r^Z$ are the number of $X$ and $Z$ stabilizer checks, respectively. Note that the actual syndrome configuration will not be translationally-invariant since the noise acts randomly on each qubit. We avoid introducing additional notation for the global syndrome map to keep the exposition concise since it would just be a parallel application of the local syndrome map across all translationally-invariant patches of the code. We now state the definition of the syndrome mapping for each TC sector $i \in [r]$.
\begin{definition}[Syndrome mapping]
\label{def:syndrome-mapping-decoupling}
Let $s \in \F_2^{n_r}$ be the syndrome configuration vector for the original \TwoDTLTI code. Then, we define the syndrome configuration $\tilde{s}^{(i)}$ for each TC sector $i \in [r]$ and the ancilla product state as follows:
\begin{align}
\tilde{s}^{(i)} &= \left(\tilde{s}^{(i)}_X, \tilde{s}^{(i)}_Z\right) = \left((\pi^{(i)}_0 \oplus \pi_2^{(i)}) \circ (\phi_0 \oplus \phi_2)\right)s \in \F_2^{2}, \\
\tilde{s}^A &= \left((\pi^A_0 \oplus \pi_2^A) \circ (\phi_0 \oplus \phi_2)\right)s \in \F_2^{n_r - 2r},
\end{align}
where $\phi_0$ and $\phi_2$ are the chain isomorphisms defined in Section~\ref{sec:morphism-chain-complexes} and $\pi^{(i)}_0$ and $\pi^{(i)}_2$ are the chain maps defined in Lemma~\ref{lem:projection-chain-maps}. Note that the additions done in the above maps are done in the field $\F_2$, i.e., the addition is done modulo 2.
\end{definition}

\subsection{Error correction operators and lifting}
\label{sec:lifting-error-correction-operators}
In this subsection, we show how we can obtain the error correction operators for each of the TCs and lift them back to the original \TwoDTLTI code. The lifting process is done by using the chain isomorphisms $\phi$ defined in Section~\ref{sec:morphism-chain-complexes} and the projections $\pi^{(i)}$ defined in Section~\ref{sec:noise-model-mapping}.

Given the syndrome configuration $\tilde{s}^{(i)}$ for each TC sector $i \in [r]$ defined in Definition~\ref{def:syndrome-mapping-decoupling}, we can construct a decoding graph for each TC using the syndrome information.
The decoding graph is constructed by associating vertices with the syndrome bits and edges with the possible error correction operators that can be applied to correct the errors.
The edges are weighted based on the noise model mapping $\tilde{p}^{(i)}$ defined in Definition~\ref{def:noise-model-mapping-decoupling}.
We then run a matching decoder (MWPM or Union-Find) on each TC decoding graph to obtain error correction operators $\tilde{e}^{(i)}_1$ for $i \in [r]$.
For the trivial product state, we obtain a local error correction operator $\tilde{e}^A_1$ based on the associated syndromes and noise model mapping without performing matching.

Given the error correction operators $\tilde{e}^{(i)}_1$ obtained for each TC from the matching decoder (such as MWPM or Union-Find), we can lift these operators back to the original code space using the inverse of the chain isomorphism $\phi_1$.

Let us denote the following:
\begin{equation}\tilde{e}_1 = \bigoplus_{i=1}^r \tilde{e}_1^{(i)} \oplus \tilde{e}_1^A \in \bigoplus_{i=1}^r \tilde{C}_1^{(i)} \oplus \tilde{A}_1\end{equation}
as the direct sum of the error correction operators for all TCs and the trivial product state. The lifted error correction operator $e_1$ in the original code is then given by:
\begin{equation}
e_1 = \phi_1^{-1}(\tilde{e}_1) \in C_1
\end{equation}
where $\phi_1^{-1}$ is the inverse of the isomorphism defined in Section~\ref{sec:morphism-chain-complexes}.

This process ensures that the error correction operators, which are valid in the decoupled TCs and trivial product state, are mapped back to valid Pauli operators acting on the qubits of the original code. The correctness of this lifting follows from the fact that $\phi_1$ is an isomorphism, and thus $\phi_1^{-1}$ preserves the structure of the error operators.

\begin{lemma}[Correctness of lifting]
The lifted operator $e_1 = \phi_1^{-1}(\tilde{e}_1)$ produces a valid correction for the original code if and only if each $\tilde{e}^{(i)}_1$ is a valid correction for its respective TC and $\tilde{e}^A_1$ is a valid correction for the trivial product state.
\end{lemma}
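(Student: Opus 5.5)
We will prove the lemma by reading ``valid correction'' as \emph{syndrome-clearing}: $e_1$ is valid for the original code when $\partial_1(e_1)=s$, where $s$ is the measured syndrome viewed as an element of $C_0$ (for $Z$-type errors). Likewise $\tilde{e}^{(i)}_1$ is valid for sector $i$ when $\tilde{\partial}^{TC}_1(\tilde{e}^{(i)}_1)=\pi^{(i)}_0\phi_0(s)$, and similarly for the ancilla sector $\tilde{\calA}$. If one also wants the stronger notion ``differs from the true error by a boundary,'' the same argument applies one degree up, using $\partial_2$ in place of $\partial_1$.

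The plan has three steps.

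\textbf{Step 1: $\phi$ commutes with the boundary maps.} By Lemma~\ref{lem:chain-isomorphism}, $\phi$ is a chain isomorphism, so $\phi_0\circ\partial_1=\tilde{\partial}_1\circ\phi_1$. Since $\phi$ consists of invertible chain maps, the inverse also commutes: $\phi_0^{-1}\tilde{\partial}_1=\partial_1\phi_1^{-1}$. For $e_1=\phi_1^{-1}(\tilde{e}_1)$ this gives
\begin{equation}
\partial_1 e_1 = s \iff \phi_0\partial_1 e_1=\phi_0 s \iff \tilde{\partial}_1\tilde{e}_1=\phi_0 s .
\end{equation}
The first equivalence uses only the injectivity of $\phi_0=U_X$, which holds because $U_X$ is invertible over $\tilde{R}$.

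\textbf{Step 2: split the equation into sectors.} The boundary map $\tilde{\partial}_1$ is block diagonal with respect to the direct sum $\bigoplus_i\tilde{\calC}^{(i)}\oplus\tilde{\calA}$. An element of $\tilde{C}_0$ is zero exactly when all of its projections $\pi^{(i)}_0$ and $\pi^A_0$ vanish. Lemma~\ref{lem:projection-chain-maps} gives $\pi^{(i)}_0\tilde{\partial}_1=\tilde{\partial}^{TC}_1\pi^{(i)}_1$, and the analogous identity for $\pi^A$. Applying these projections, the single equation $\tilde{\partial}_1\tilde{e}_1=\phi_0 s$ is equivalent to the family
\begin{equation}
\tilde{\partial}^{TC}_1\tilde{e}^{(i)}_1=\pi^{(i)}_0\phi_0 s\quad (i\in[r]),\qquad \tilde{\partial}^A_1\tilde{e}^A_1=\pi^A_0\phi_0 s .
\end{equation}
The right-hand sides are exactly the sector syndromes $\tilde{s}^{(i)}$ and $\tilde{s}^A$ of Definition~\ref{def:syndrome-mapping-decoupling}. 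So the family says that each sector correction is valid, and chaining the equivalences proves both directions of the lemma.

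\textbf{Step 3: the boundary-level version.} Suppose $E$ is the true error and $\tilde{E}=\phi_1(E)$. The claim is that $e_1+E\in\mathrm{im}\,\partial_2$ if and only if $\tilde{e}_1+\tilde{E}\in\mathrm{im}\,\tilde{\partial}_2$. This holds because $\phi_1(\mathrm{im}\,\partial_2)=\mathrm{im}\,\tilde{\partial}_2$, which follows from $\phi_1\partial_2=\tilde{\partial}_2\phi_2$ together with the surjectivity of $\phi_2$. Block diagonality then decomposes $\mathrm{im}\,\tilde{\partial}_2$ into a direct sum of the sector images, so the condition again splits sector by sector.

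The main obstacle is not the algebra but making the statement precise. Two points need care. First, ``valid'' must be read consistently in both directions, either as syndrome-clearing or as homologically trivial. Second, the finite-size maps $\phi_i$ must be genuinely invertible over $\tilde{R}/(x^{L_x}-1,y^{L_y}-1)$. I would check this second point first by noting that $U$ and $V$ are products of elementary symplectic operations with polynomial inverses. These inverses remain inverses after reducing modulo the periodic relations, because $b\mid L_x,L_y$ makes the coarse-grained ring quotient well defined.
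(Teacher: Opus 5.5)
Your proposal is correct, and your Step 3 is exactly the paper's argument: the paper reads ``valid'' as $z_1+e_1\in\mathrm{im}\,\partial_2$ and proves both directions via $\phi_1\partial_2=\tilde{\partial}_2\phi_2$ (Lemma~\ref{lem:chain-isomorphism}), using surjectivity of $\phi_2$ and invertibility of $\phi_1$ and splitting $\mathrm{im}\,\tilde{\partial}_2$ over the direct sum (only implicitly in its forward direction).

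Your Steps 1--2 add the analogous syndrome-clearing equivalence via $\phi_0\partial_1=\tilde{\partial}_1\phi_1$ and injectivity of $\phi_0$. The paper does not state this, but it is a sound complement that directly justifies the sector syndromes of Definition~\ref{def:syndrome-mapping-decoupling} as the matching inputs.
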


\begin{proof}
    Let us begin by proving the forward direction. Assume without loss of generality that we are correcting for phase-flip errors. Suppose $e_1 = \phi_1^{-1}(\tilde{e}_1)$ for some $\tilde{e}_1$ is a valid correction for the original code, i.e., for any error $z_1 \in C_1$, there exists some $s_Z$ such that $\partial_2(s_Z) = z_1 + e_1$. Suppose each of the $r$ TC sectors indexed by $i \in [r]$ and the ancilla product state have phase flip errors $\tilde{z}^{(i)}_1 \in \tilde{C}_1^{(i)}$ and $\tilde{z}^A_1 \in \tilde{C}_1^A$, respectively. We denote $\tilde{z}_1$ as the direct sum of the phase flip errors, i.e., $\tilde{z}_1 = \bigoplus_{i=1}^r \tilde{z}_1^{(i)} \oplus \tilde{z}_1^A = \phi_1(z_1)$. Applying the isomorphism $\phi_1$, we get
    \begin{equation}
    \phi_1(z_1 + e_1) = \phi_1(z_1) + \phi_1(e_1) = \tilde{z}_1 + \tilde{e}_1.
    \end{equation} 
    Our goal is to show that $\tilde{z}_1 + \tilde{e}_1 \in \im \tilde{\partial}_2$, i.e., there exists some $\tilde{s}_Z = \bigoplus_{i = 1}^r \tilde{s}_Z^{(i)} \oplus \tilde{s}_Z^A$ such that $\tilde{\partial}_2(\tilde{s}_Z) = \tilde{z}_1 + \tilde{e}_1$. Letting $s_Z = \phi_2^{-1}(\tilde{s}_Z)$, 
    \begin{align}
    \tilde{z}_1 + \tilde{e}_1 &= \phi_1(z_1 + e_1) \\
    &= \phi_1\left(\partial_2(s_Z)\right) \\
    &= \phi_1\left(\partial_2\left(\phi_2^{-1}(\tilde{s}_Z)\right)\right) \\
    &= \tilde{\partial}_2(\tilde{s}_Z)
    \end{align}
    where the last equality follows from Lemma~\ref{lem:chain-isomorphism} which gives us $\tilde{\partial}_2 \phi_2 = \phi_1\partial_2$. Thus, we have shown that $\tilde{z}_1 + \tilde{e}_1 \in \im \tilde{\partial}_2$, which implies that $\tilde{e}^{(i)}_1$ is a valid correction for the $i$\textsuperscript{th} TC and $\tilde{e}^A_1$ is a valid correction for the trivial product state.

    Now, let us prove the backward direction. The proof is essentially the same but we supply it for the sake of being concrete. Again, assume without loss of generality that we are correcting for phase-flip errors. Suppose each of the $r$ TC sectors indexed by $i \in [r]$ and the ancilla product state have phase flip errors $\tilde{z}^{(i)}_1 \in \tilde{C}_1^{(i)}$ and $\tilde{z}^A_1 \in \tilde{C}_1^A$, respectively. We denote $\tilde{z}_1$ as the direct sum of the phase flip errors, i.e., $\tilde{z}_1 = \bigoplus_{i=1}^r \tilde{z}_1^{(i)} \oplus \tilde{z}_1^A$. In other words, the original $Z$ error configuration is given by $\phi^{-1}(\tilde{z}_1)$.   
   Let $\tilde{e}^{(i)}_1 \in \tilde{C}_1^{(i)}$ be a valid $Z$ correction for the $i$\textsuperscript{th} TC and $\tilde{e}^A_1 \in \tilde{C}_1^A$ be a valid $Z$ correction for the trivial product state for all $i \in [r]$. Our goal is to show that the lifted operator $e_1 = \phi_1^{-1}(\tilde{e}_1)$ is a valid correction for the original code, i.e., $\phi^{-1}(\tilde{e}_1 + \tilde{z}_1) \in \im \partial_2$.
    From the validity of the TC and product state corrections, we have $\tilde{z}^{(i)}_1 + \tilde{e}^{(i)}_1 \in \im \tilde{\partial}_2^{TC}$ and $\tilde{z}^A_1 + \tilde{e}^A_1 \in \im \tilde{\partial}_2^A$. Thus, there exists some $\tilde{s}_Z^{(i)} \in \tilde{C}_2^{(i)}$ and $\tilde{s}_Z^A \in \tilde{C}_2^A$ such that $\tilde{\partial}_2^{TC}(\tilde{s}_Z^{(i)}) = \tilde{z}^{(i)}_1 + \tilde{e}^{(i)}_1$ and $\tilde{\partial}_2^A(\tilde{s}_Z^A) = \tilde{z}^A_1 + \tilde{e}^A_1$. From the invertibility of the isomorphism $\phi_2$, there exists some $s_Z \in C_2$ such that $\phi_2(s_Z) = \left(\bigoplus_{i=1}^r \tilde{s}_Z^{(i)}\right) \oplus \tilde{s}_Z^A$. We now claim that $\phi^{-1}(\tilde{e}_1 + \tilde{z}_1) = \partial_2(s_Z)$. To see this, we compute the following:
\begin{align}
    \phi_1^{-1}(\tilde{e}_1 + \tilde{z}_1) &= \phi_1^{-1}\left(\left(\bigoplus_{i=1}^r \tilde{e}^{(i)}_1 + \tilde{z}^{(i)}_1\right) \oplus \left(\tilde{e}^A_1 + \tilde{z}^A_1\right)\right)\\
    &= \phi^{-1}_1\left(\bigoplus_{i = 1}^r \tilde{\partial}_2^{TC} \tilde{s}_Z^{(i)} \oplus \tilde{\partial}_2^A \tilde{s}_Z^A\right) \\
    &= \phi^{-1}_1\left(\tilde{\partial}_2\left(\bigoplus_{i=1}^r \tilde{s}_Z^{(i)} \oplus \tilde{s}_Z^A\right)\right)\\
    &= \phi_1^{-1}\left(\tilde{\partial}_2\left(\phi_2(s_Z)\right)\right)\\
    &= \partial_2(s_Z),
\end{align}    
where the last equality follows from Lemma~\ref{lem:chain-isomorphism} which gives us $\tilde{\partial}_2 \phi_2 = \phi_1\partial_2$.
\end{proof}

\subsection{Layer-decoupling decoding algorithm for finite-size \TwoDTLTI codes}
\label{sec:decoupling-decoding-algorithm}
In this subsection, we detail how we can generalize the decoding algorithm presented for the 2D code in the asymptotic limit to finite-size \TwoDTLTI codes. In addition, there are some edge cases that have to be taken into account when considering the time complexity of the algorithm for a finite-size code.

Before we discuss the subtleties associated with finite-size codes, we summarize the error correction performance of the layer-decoupling decoder for finite-size \TwoDTLTI codes in the following theorems.

\begin{theorem}[Layer-decoupling decoder for \TwoDTLTI codes]
\label{thm:decoupling-decoder-finite-size}
Let $Q$ be a \TwoDTLTI code defined on a lattice of size $L_x \times L_y$ with periodic boundary conditions and coarse-graining parameter $b$. Assume that $L_x$ and $L_y$ are integer multiples of $b$ that are at least $3b$ each. In addition, let $w_{max}$ be the maximum number of qubits that are disentangled with any single qubit in the original code either via direct CNOTs or via some CNOT ladder during the decoupling process.
In the adversarial error model, the layer-decoupling decoder presented in this section can correct any error pattern with up to $\left\lfloor \frac{d_{min} - 1}{2} \right\rfloor$ errors where $d_{min} = d_{TC} = \min\{L_x / b, L_y / b\}$ and $d_{TC}$ is the distance of the TC obtained from the decoupling process.
In the local stochastic error model, let $p$ be an i.i.d. noise model vector defined on a unit cell in the finite-size \TwoDTLTI code and $p_{max}$ be the largest real value in the vector. If $w_{max} \cdot p_{max}$ is below the threshold for the TC under the i.i.d. bit-flip (or phase-flip) noise with respect to a matching decoder, the layer-decoupling decoder presented in this section constructed with the same matching decoder can decode errors on $Q$ with success probability at least
\[\Pr[\text{successful decoding}] \geq 1 - \exp(-\Omega(d_{min})).\]
\end{theorem}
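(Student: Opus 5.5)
The plan is to reduce both claims to known properties of the toric code through the chain isomorphism $\phi$ of Lemma~\ref{lem:chain-isomorphism} together with the correctness-of-lifting lemma. By that lemma, $e_1=\phi_1^{-1}(\tilde e_1)$ is a valid correction exactly when every sector correction $\tilde e_1^{(i)}$ differs from the induced sector error $\pi^{(i)}_1\phi_1(z_1)$ by a boundary, and the product-state correction $\tilde e^A_1$ differs from $\pi^A_1\phi_1(z_1)$ by a boundary. So it suffices to check two things. First, the induced sector errors are light enough that matching succeeds on each sector. Second, the product-state sector is always decoded correctly. The second point is easy: $\tilde{\calA}$ has trivial homology, since each qubit is fixed by a single-qubit stabilizer. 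Hence any $\tilde e^A_1$ reproducing the syndrome $\tilde s^A$ differs from the true error by a boundary. Such a $\tilde e^A_1$ exists and is local, because each ancilla syndrome bit flags exactly one qubit.

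For the adversarial statement, the key step is a weight bound. I would argue that a physical error $z_1$ of weight $t$ induces, in each TC sector $i$, an error $\tilde z^{(i)}_1$ whose support meets at most $t$ coarse unit cells. Here I measure weight on the coarse lattice $\Lambda_b$, on which the sector toric codes have distance $d_{TC}=\min\{L_x/b,L_y/b\}$. The justification is that $\phi_1$ is a matrix over $\tilde R$ with polynomial entries of bounded degree. A single physical qubit therefore maps to a bounded-support pattern in each sector, and by the definition of $w_{max}$ this pattern has at most $w_{max}$ qubits.

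This is where I expect the main obstacle. A naive count gives sector weight $\le w_{max}t$, and that would only yield correctability up to $\lfloor (d_{TC}-1)/(2w_{max})\rfloor$. To get the claimed $\lfloor (d_{TC}-1)/2\rfloor$, I would instead use the fact that matching fails only if $\tilde z^{(i)}_1+\tilde e^{(i)}_1$ contains a non-contractible cycle. Such a cycle must cross at least $d_{TC}$ distinct coarse columns or rows. The union of the supports of $\tilde z^{(i)}_1$ and the minimum-weight $\tilde e^{(i)}_1$ is a set of coarse cells, and I would charge each such cell to a physical error qubit within a constant neighbourhood. The assumption $L_x,L_y\ge 3b$ is what keeps these neighbourhoods from wrapping around. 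If this sharper counting fails, I would fall back on the weaker bound with the $w_{max}$ factor and note that it still gives a constant fraction of the distance.

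For the stochastic statement, I would use the noise model mapping of Definition~\ref{def:noise-model-mapping-decoupling}. Each sector qubit is flipped only if one of at most $w_{max}$ physical qubits is flipped, so a union bound gives marginal flip probability at most $w_{max}p_{max}$. The induced noise is correlated, but it is local stochastic with bounded-range correlations: any set $S$ of sector qubits is fully flipped with probability at most $(w_{max}p_{max})^{|S|/c}$ for a constant $c$. I would then rerun the standard Dennis et al. self-avoiding-walk counting argument for MWPM on the toric code. A failure requires a walk of length at least $d_{TC}$, at least half of which lies on error qubits. Summing over walks gives a bound of order $\sum_{\ell\ge d_{TC}}(\mathrm{const}\cdot w_{max}p_{max})^{\Omega(\ell)}=\exp(-\Omega(d_{min}))$ below threshold. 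A union bound over the constant number $r$ of sectors then completes the argument.
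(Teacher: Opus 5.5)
Your reduction through the chain isomorphism and the correctness-of-lifting lemma is fine, and so is your handling of the product-state sector. The gap is the adversarial weight bound, which is the whole content of the first claim.

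The paper gets $\lfloor (d_{TC}-1)/2\rfloor$ from a much stronger input than bounded degree of $\phi_1$. It asserts that, by construction of the decoupling map, $\pi^{(i)}_1\phi_1(e_j)$ has Hamming weight at most $1$ for every single-qubit error $e_j$ and every sector $i$. Linearity and subadditivity then give $|\tilde z^{(i)}_1|\le t$. Hence $|\tilde z^{(i)}_1+\tilde e^{(i)}_1|\le 2t<d_{TC}$ and the residual is contractible, with no charging needed.

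Your substitute for this step does not work, for three reasons.
\begin{itemize}
\item Bounded degree only says each single-qubit image spans $O(1)$ coarse cells. So your claim that $\tilde z^{(i)}_1$ meets at most $t$ cells is unjustified.
\item Charging cells to error qubits within a radius $\rho$ only bounds the coarse columns met by $\tilde z^{(i)}_1$ by $O(\rho\,t)$. That is exactly the constant-factor loss you wanted to avoid.
\item The minimum-weight correction $\tilde e^{(i)}_1$ is not localized near the physical errors, since MWPM may pair defects from different clusters through error-free regions. Its cells cannot be charged to nearby error qubits at all; the only control on it is $|\tilde e^{(i)}_1|\le|\tilde z^{(i)}_1|$.
\end{itemize}
Also, the hypothesis $L_x,L_y\ge 3b$ is not there to stop neighbourhoods from wrapping. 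It just makes every sector a toric code of distance at least $3$.

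You were right to distrust this step, but the conclusion runs the other way: the weight-one property is a genuine structural requirement on $\phi_1$, and no bounded-spread argument can replace it. The requirement is that each column, restricted to a sector, has at most one nonzero entry, and that entry is a monomial. For example, suppose some single-qubit $Z$ error is sent into sector $i$ as $1+x'$ on horizontal edges and nothing else there. By $\tilde R$-linearity, $t$ such errors at coarse positions $0,2,\dots,2(t-1)$ map to the string $1+x'+\dots+(x')^{2t-1}$, with defects $2t$ apart. Take $x$ to be the short direction. Then MWPM closes this string into a non-contractible loop whenever $d_{TC}/2<2t<d_{TC}$; for moderately large $d_{TC}$ this happens at $t$ just above $d_{TC}/4$, well inside the claimed range. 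This is a logical failure, because $\phi$ induces an isomorphism on homology. So you must either prove per-sector fan-out one for your decoupling map, which is what the paper invokes, or accept your fallback $\lfloor (d_{TC}-1)/(2w)\rfloor$ with $w$ the per-sector fan-out. The fallback is not the theorem as stated.

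On the stochastic part, your local-stochastic treatment is more careful than the paper's. The paper feeds the marginal bound $w_{max}p_{max}$ into the i.i.d.\ toric-code threshold and ignores the induced correlations. Two corrections to your version, though. First, the union bound over a choice of flipped parent for each element of $S$ gives $w_{max}^{|S|}p_{max}^{|S|/c}$, not $(w_{max}p_{max})^{|S|/c}$. Second, walk counting gives suppression only below some smaller constant, not below the actual i.i.d.\ matching threshold named in the hypothesis.
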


\begin{proof}[Proof of Theorem~\ref{thm:decoupling-decoder-finite-size}]
Let $z_1\in C_1$ be an adversarial $Z$-error configuration with Hamming weight $|z_1|= E$.
Define its decoupled image $\tilde{z}_1\coloneqq \phi_1(z_1)\in \tilde{C}_1$ and its sectorwise components
\begin{equation}\tilde{z}_1^{(i)}\coloneqq \pi_1^{(i)}(\tilde{z}_1)\in \tilde{C}_1^{(i)},\qquad \tilde{z}_1^A\coloneqq \pi_1^{A}(\tilde{z}_1)\in \tilde{A}_1.
\end{equation}
We claim that for every TC sector $i\in[r], |\tilde{z}_1^{(i)}|\le |z_1|=E$.
To see this, let $S$ be the support of $z_1$ (so $|S|=E$) and write $z_1=\sum_{j\in S} e_j$ as a sum of standard basis 1-chains (single-qubit $Z$ errors).
By linearity of $\phi_1$ and $\pi_1^{(i)}$ over $\F_2$,
\begin{equation}\tilde{z}_1^{(i)}=\pi_1^{(i)}\phi_1(z_1)=\sum_{j\in S} \pi_1^{(i)}\phi_1(e_j).\end{equation}
By construction of the decoupling map, each physical qubit error $e_j$ can induce at most one single-qubit error in a fixed TC sector $i$.
Equivalently, $\pi_1^{(i)}\phi_1(e_j)$ has Hamming weight at most $1$.
Therefore the sum above contains at most $E$ nonzero single-qubit terms in sector $i$, proving $|\tilde{z}_1^{(i)}|\leq E$.

Now assume $E\leq \left\lfloor\frac{d_{\min}-1}{2}\right\rfloor$.
Then for each TC sector $i$, we have $|\tilde{z}_1^{(i)}|\le \left\lfloor\frac{d_{\min}-1}{2}\right\rfloor$.
By correctness of the matching decoder $\textsc{Match}$ on a TC of distance $d_{\min}$ (e.g., MWPM), the decoder returns a correction $\tilde{e}_1^{(i)}\in\tilde{C}_1^{(i)}$ such that
\begin{equation}\tilde{z}_1^{(i)}+\tilde{e}_1^{(i)}\in \im\,\tilde{\partial}_2^{TC}.
\end{equation}
Independently, the product-state sector admits a local correction $\tilde{e}_1^A$ such that $\tilde{z}_1^A+\tilde{e}_1^A\in \im\,\tilde{\partial}_2^A$.
Let $\tilde{e}_1\coloneqq \bigoplus_{i=1}^r \tilde{e}_1^{(i)}\oplus \tilde{e}_1^A$.
Then $\tilde{z}_1+\tilde{e}_1\in \im\,\tilde{\partial}_2$ in the direct-sum complex.
Applying the chain isomorphism (Lemma~\ref{lem:chain-isomorphism}) implies
\begin{equation}z_1+\phi_1^{-1}(\tilde{e}_1)\in \im\,\partial_2,
\end{equation}
so the lifted correction $\widehat{E}=\phi_1^{-1}(\tilde{e}_1)$ is a valid correction for the original code.
This proves the adversarial claim.

We now proceed to prove the second part of the theorem statement. By definition, at most $w_{max}$ qubits are disentangled from a single qubit either via direct CNOTs or via some CNOT ladder during the decoupling process. Therefore, the probability that any single qubit suffers from a bit-flip (or phase-flip) error is at most $w_{max}\,p_{max}$.
If $w_{max}\,p_{max}$ is below the matching threshold for the TC under i.i.d. bit-flip/phase-flip noise, then standard threshold results for matching decoders on the TC imply
\begin{equation}\Pr[\text{sector }i\text{ fails}] \le \exp(-\Omega(d_{TC})).\end{equation}
Since the number of TC sectors is a constant $r$, a union bound gives
\begin{equation}\Pr[\text{any TC sector fails}] \le r\,\exp(-\Omega(d_{TC})) = \exp(-\Omega(d_{\min})).\end{equation}
\end{proof}

For a finite-size \TwoDTLTI code defined on a lattice of size $L_x \times L_y$ with periodic boundary conditions, we can directly apply the decoupling decoding algorithm presented without any modifications if $L_x$ and $L_y$ are integer multiples of $b$. On the other hand, if $L_x$ and/or $L_y$ are not integer multiples of $b$, the finite-size code can have very different numbers of logical qubits. In Ref.~\cite{chen2025anyon}, it was shown how one can derive the number of logical qubits for such finite-size BB codes based on their polynomials as well as the lattice dimensions. The gross code is an example of a code that does not satisfy this condition. We show in Section~\ref{sec:gross} how we can work with a different coarse-graining parameter that divides the lattice size and use different short strings to perform matching.

Note that Theorem~\ref{thm:decoupling-decoder-finite-size} implies that the effective distance of the layer-decoupling decoder for finite-size \TwoDTLTI codes results in a distance reduction relative to the original code that is given by a factor of $b$. Recall that $b$ is the length of the unit cell of the coarse-grained lattice which can be, in principle, a sizable constant depending on the exponents of the polynomials that define the code. 
Another important thing to note is that we need each TC sector to have at least distance $3$ in order to have any error correcting capability.
In addition, in order for the decoder to work effectively, we need $w_{max} \cdot p$ to be below the threshold error rate for the TC which is approximately $10.3\%$ for phase-flip errors and bit-flip errors when using the MWPM decoder~\cite{dennis2002topological, wang2003confinement}. We note that there is only an upper bound on the effective error probability since we are adding all the probabilities.
Because there is a constant number of qubits in each unit cell of the coarse-grained lattice, $w_{max}$ is ultimately upper bounded by a constant that depends only on the unit cell dimensions. Because the CNOT circuit used to decouple the code is not unique, $w_{max}$ could depend on our choice of the CNOT circuit (or the symplectic transformations). To improve the practical performance of the layer-decoupling decoder, one can search for a set of transformations that minimizes $w_{max}$ to reduce the amount of correlated errors induced in the virtual decoupling step.

We now proceed to state the time complexity of the layer-decoupling decoder for finite-size \TwoDTLTI codes. 

\begin{theorem}[Time complexity of layer-decoupling decoder for finite-size \TwoDTLTI codes]
\label{thm:time-complexity-decoupling-decoder-finite-size}
Let $Q$ be a \TwoDTLTI code defined on a lattice of size $L_x \times L_y$ with periodic boundary conditions and coarse-graining parameter $b$. Assume that $L_x$ and $L_y$ are integer multiples of $b$ that are at least $3b$ each. Then, the time complexity of the layer-decoupling decoder is given by
\begin{equation}O\left(\text{MATCHING}\left(L_x L_y\right)\right),\end{equation}
where the $\text{MATCHING}(\cdot)$ term comes from the time complexity of the MWPM or Union-Find decoder used for each TC.
\end{theorem}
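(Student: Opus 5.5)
The plan is to go through Algorithm~\ref{alg:decoupling-decoder} line by line and bound the cost of each step. Write $N \coloneqq (L_x/b)(L_y/b)$ for the number of coarse unit cells. Every object produced by the decoupling procedure of Appendix~\ref{app:decoupling} is precomputed once and depends only on the defining polynomials, not on $L_x,L_y$. This includes $U$, $V$, the maps $\phi_0,\phi_1,\phi_2$ and their inverses, the projections $\pi^{(i)}$ and $\pi^A$, the number of sectors $r$, and the unit-cell sizes $n_r,n_c$. The key structural fact is that all of these are matrices over $\tilde{R}$ of constant size, and each entry is a Laurent polynomial of constant degree and support. On the finite torus, each such map therefore acts as a translationally-invariant convolution with a constant-size stencil on the coarse lattice $\Lambda_b$. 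Because $L_x,L_y\ge 3b$, these stencils wrap around the torus without creating ill-defined self-overlaps, so the finite-size quotient maps are well defined.

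\textbf{Syndrome mapping.} The syndrome mapping of Definition~\ref{def:syndrome-mapping-decoupling} applies $\phi_0\oplus\phi_2$, which means each output bit in each coarse cell is an $\F_2$-sum of a constant number of input bits from neighbouring cells. This step costs $O(N)=O(L_xL_y)$. Applying the projections only selects coordinates, which is also $O(L_xL_y)$.

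\textbf{Noise model mapping.} The noise mapping of Definition~\ref{def:noise-model-mapping-decoupling} is computed on a single unit cell, so it costs $O(1)$, or at most $O(L_xL_y)$ if the edge weights are written out explicitly across the whole lattice.

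\textbf{Matching.} Each of the $r$ TC decoding graphs lives on $\Lambda_b$ and has $O(N)$ vertices and edges. Running $\textsc{Match}$ on it costs $\text{MATCHING}(O(L_xL_y))$. Since $r$ is a constant (it is the finite dimension of $\calT(\coker H)$, as noted in Section~\ref{sec:high_level_summary}), the total matching cost is $r\cdot\text{MATCHING}(L_xL_y)$.

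\textbf{Product-state correction.} The correction $\tilde e^A_1$ on the trivial sector is obtained cell by cell from a constant-size lookup, because each ancilla check detects a single-qubit error. This costs $O(N)$.

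\textbf{Lifting.} Computing $\phi_1^{-1}(\tilde e_1)$ is again a constant-stencil convolution. For each nonzero entry of $\tilde e_1$ we add a constant-size pattern, for a total of $O(L_xL_y)$.

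\textbf{Summing.} The total cost is $O(L_xL_y)+r\,\text{MATCHING}(L_xL_y)$. The final step is to absorb the linear term. Any matching decoder must at least read its input graph, so $\text{MATCHING}(n)=\Omega(n)$, and the whole expression is $O(\text{MATCHING}(L_xL_y))$. A minor point is that $\text{MATCHING}(cn)=O(\text{MATCHING}(n))$ for constant $c$. This holds for the known polynomial bounds on MWPM and for the almost-linear bound on Union-Find, and I will state it as a mild regularity assumption on $\text{MATCHING}$.

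\textbf{Main obstacle.} The step that needs the most care is justifying that $\phi_1^{-1}$, $\phi_0$ and $\phi_2$ are local with constant support. This is not just a formality. The inverse $U_Z^{-\dagger}$ and the inverse $\phi_1^{-1}$ are Laurent-polynomial matrices only because $U$ and $V$ are products of elementary symplectic operations over $\tilde R$, and each such operation has a polynomial inverse. I will argue this from the construction in Appendix~\ref{app:decoupling}: every row or column operation there is elementary, with inverse equal to itself or to an elementary operation with an antipodal entry. The inverses are therefore finite products of constant-degree matrices, and they can be precomputed independently of system size. Without this, the inverse could a priori be a dense $L_xL_y\times L_xL_y$ matrix on the torus, which would cost $\Theta((L_xL_y)^2)$ to apply.
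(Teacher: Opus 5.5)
Your proposal is correct and follows the same step-by-step accounting as the paper's proof. The decoupling data are precomputed independently of system size. The syndrome mapping, noise mapping, product-state correction and lifting each cost $O(L_xL_y)$ through constant-size per-cell operations. The $r$ matching calls, with $r$ constant, dominate. You add some care the paper leaves implicit: you justify that $\phi^{-1}$ is a constant-support stencil because each elementary operation has an elementary inverse, and you absorb the linear terms using $\text{MATCHING}(n)=\Omega(n)$.

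One small misattribution: the hypothesis $L_x,L_y\ge 3b$ is not what makes the finite-size maps well defined. Reduction modulo $x^{L_x}-1,\,y^{L_y}-1$ is a ring homomorphism, so $U$, $V$ and their inverses descend to the torus for any sizes divisible by $b$. In the paper that hypothesis only ensures each TC sector has distance at least $3$, and it plays no role in the complexity bound.
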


\begin{proof}
The time complexity of the layer-decoupling decoder can be analyzed by considering the different steps involved in the decoding process. Note that $b$ and $r$ are constants independent of $L_x,L_y$ for a fixed translationally-invariant code family. The main steps are as follows:
\begin{enumerate}
    \item \textbf{Decoupling process:} The decoupling process involves performing a series of column operations on the parity-check matrix $\tilde{H}$ to transform it into a block-diagonal form. This process takes $O(1)$ time since each matrix operation is performed on a constant-size unit cell of the coarse-grained lattice.
    
    \item \textbf{Mapping noise model and syndrome:} Mapping the noise model and syndrome configuration to each TC involves applying the chain isomorphisms and projections defined in Sections~\ref{sec:noise-model-mapping} and~\ref{sec:syndrome-mapping}. This step takes $O(1)$ time per unit cell and there are $\frac{L_x L_y}{b^2}$ unit cells, leading to a total time complexity of $O(L_x L_y)$ for this step. However, the calculations for each unit cell can be done in parallel, making it a $O(1)$ time operation.
    
    \item \textbf{Decoding each TC:} Each TC has a lattice size of $(L_x / b) \times (L_y / b)$. The time complexity for decoding one TC instance using a matching decoder is $\text{MATCHING}(L_x L_y)$. Since there are $r$ TC sectors to decode, the total time for this step is $O\left(\text{MATCHING}\left(L_x L_y\right)\right)$. 
    
    \item \textbf{Lifting error correction operators:} Lifting the error correction operators from each TC back to the original code involves applying the inverse of the chain isomorphisms. This step takes $O(1)$ time per unit cell, leading to a total time complexity of $O(L_x L_y)$ for this step.
\end{enumerate} 

Combining the time complexities of all steps, we find that the overall time complexity of the layer-decoupling decoder is dominated by the decoding step for each TC. Therefore, the total time complexity is given by
\begin{equation}O\left(\text{MATCHING}\left(L_x L_y\right)\right).\end{equation}
\end{proof}

\section{The cell-matching decoder for \TwoDTLTI Codes}
\label{sec:cellular_decoding}

In the previous section, we described how the decoupling formalism can relate a \TwoDTLTI code to multiple copies of the TC. This makes it possible to leverage efficient TC decoders such as MWPM.
However, the explicit decoupling transformation tends to introduce non-trivial non-local correlations in the induced noise model. The symplectic transformations involved in the decoupling process (i.e., the constant-depth CNOT circuit) can cause a single-qubit error in the original code to be mapped to errors in the decoupled TCs, which can potentially degrade the performance of the decoder. 

In this section, we present a cell-matching decoding framework that works directly with the intrinsic unit cells of the original code. The decoder first applies a local flushing procedure inside each unit cell to move syndrome information into a fixed basis subcell. After this step, there are no more excitations outside of the basis subcells in the lattice.
Each point-like excitation in the basis subcell corresponds to a violated check in some combination of TCs. Thus, we can interpret the resulting residual classes as check violation patterns on several TC-like coarse lattices.
These point-like excitations in the basis subcells exist as pairs in the lattice which makes it possible for us to decode all TC copies by matching all of these point-like excitations. The edge generators used in the matchings on the coarse lattice give us the physical Pauli correction for the original code. 

For concreteness, we focus on decoding $Z$-type errors and the $X$-type decoder is analogous.

\subsection{Unit cell construction}
\label{subsec:unit_cell_mapping}
Recall from Proposition~\ref{prop:suitable-coarse-grained-parity-check-matrix} that we can always find a smallest positive integer $b$ such that
\begin{equation}\ann_{R'}\,\coker H = \left(x^b - 1, y^b - 1\right)\end{equation}
where $R' = \F_2[x^{\pm b}, y^{\pm b}]$ is the coarse-grained base ring. In other words, by grouping the lattice sites into $b \times b$ unit cells, we can ensure that the excitations of the code are periodic with respect to these unit cells. To be more specific, consider an arbitrary point-like excitation located at a lattice site $(x,y)$ within a unit cell. Then, there exists a finite-size Pauli operator (typically supported between the unit cell and the adjacent unit cells) that can move this excitation to a lattice site $(x+b, y)$ or $(x, y+b)$ in the neighboring unit cells. This is akin to the concept of transporting excitations in the TC using string operators. 

We assume periodic boundary conditions on a torus of linear dimensions $L_x\times L_y$, and we write the physical lattice as
\begin{equation}\Lambda \coloneqq \Z_{L_x}\times \Z_{L_y}.
\end{equation}
From now on, we assume $b\mid L_x$ and $b\mid L_y$, so that the lattice decomposes into disjoint $b\times b$ unit cells, giving us a family of codes with the same number of logical qubits and distance growing with $L_x$ and $L_y$.
The associated coarse (unit-cell) lattice is the torus
\begin{equation}\Lambda_b \coloneqq \Z_{L_x/b}\times \Z_{L_y/b},
\end{equation}
whose vertices index unit cells and whose edges connect nearest-neighbor unit cells.

In addition, we can identify a sparse basis for the cokernel of the coarse-grained parity-check matrix $H$ over the base ring $R'$. The dimension of the basis is given by $\dim(\coker\,H)$. By definition of $b$, the basis elements can be chosen to be point-like excitations that are contained in some unit subcell of size $c \times c$, where $c$ is a positive integer that divides $b$. 
This means that any excitation configuration in the $b \times b$ unit cell can be expressed as a linear combination of these basis excitations. Without loss of generality, we can assume that these basis excitations are located in a single subcell of size $c \times c$ that is positioned in the top-left corner of each $b \times b$ unit cell. From here on, we will refer to this $c \times c$ subcell as the \emph{basis subcell}. Note that the basis subcell need not be a square and can be any shape that contains the basis excitations, but for simplicity we will refer to it as a $c \times c$ subcell.

In what follows, $c$ denotes the side length of the basis subcell within each unit cell, and we write
\begin{equation}r\coloneqq \dim(\coker H)\end{equation}
for the number of independent excitation types (basis elements of $\coker H$) represented inside the basis subcell.
Equivalently, the code encodes $k=2r$ logical qubits. Note that we have assumed that $r = c \times c$ to simplify our exposition but the basis subcell need not be a square in the most general case. Each point-like excitation in the basis subcell need not correspond to a single excitation in a single TC copy. It could be a single violated check in several TC copies. Nonetheless, as we will show in the subsequent sections, as long as we resolve all point-like excitations in all basis subcells, we would have return the original code to its code space.

\subsection{Flushing of excitations}
\label{subsec:flushing_of_excitations}
We now formalize the local flushing step that converts the raw syndrome inside each $b\times b$ unit cell into a canonical representative supported on the $c\times c$ basis subcell.

Fix, once and for all, a set of local transport operators supported within a constant-radius neighborhood of a $c\times c$ subcell that transports an arbitrary excitation within a $c \times c$ subcell to an adjacent $c \times c$ subcell in the same unit cell. By repeatedly applying these transport operators, any excitation configuration supported within the $b \times b$ unit cell can be moved into the fixed $c \times c$ basis subcell at the top-left corner of the unit cell.
The flushing procedure is performed independently in each unit cell, and only depends on this fixed choice of transport operators.

\begin{definition}[Flushing map]
\label{def:flushing-map}
Let $u\in V(\Lambda_b)$ be a unit cell.
Given any excitation configuration $s_u$ supported in $u$, define $\mathsf{Flush}(s_u)$ to be the excitation configuration supported on the basis subcell of $u$ obtained by repeatedly applying the fixed transport operators to move excitations to the basis subcell. In addition, define $\Psi(s_u)$ to be the Pauli transport operator supported in $u$ that implements the flushing procedure, so that $\Psi(s_u)$ maps $s_u$ to $\mathsf{Flush}(s_u)$. We denote $\Psi \coloneqq \prod_{u \in V(\Lambda_b)} \Psi(s_u)$ as the Pauli \emph{flushing operator} associated with the entire flushing procedure on the original code.

Since the basis subcell carries a fixed $\F_2$-basis of $\coker H$, we define
\begin{equation}\mathsf{Coeff}(s_u)\in\F_2^{r}\end{equation}
to be the coefficient vector of the flushed configuration $\mathsf{Flush}(s_u)$ in this basis.
\end{definition}

Intuitively, $\mathsf{Flush}$ annihilates any locally physical syndrome inside $u$ (i.e., any element of $\im H$ supported in $u$), and returns the residual non-physical class (an element of $\coker H$) as an $r$-bit vector.

We now state two lemmas that describe the properties of the flushing procedure.

\begin{lemma}
\label{lem:flushing_procedure_physical}
  If the excitations within the $b \times b$ unit cell lie in $\im H$ (i.e., it is a physical syndrome that can be created by some local Pauli operator), then the flushing procedure reduces it to an empty set of excitations in the basis subcell.
\end{lemma}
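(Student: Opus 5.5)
The plan is an invariant argument. Flushing never changes the class of a configuration in $\coker H$, and the basis subcell carries an honest basis of $\coker H$; so a configuration that starts in $\im H$ must end as the empty configuration in the basis subcell.

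\textbf{Step 1: flushing preserves the cokernel class.}
Each fixed transport operator is a finite-support Pauli operator $P$. Applying it changes the syndrome by $HP$, which lies in $\im H$. Since $\Psi(s_u)$ is a product of such operators, linearity over $\F_2$ gives
\begin{equation}
\mathsf{Flush}(s_u) = s_u + H\,\Psi(s_u).
\end{equation}
Hence $\mathsf{Flush}(s_u)$ and $s_u$ represent the same element of $\coker H$. Under the hypothesis $s_u\in\im H$, this gives $\mathsf{Flush}(s_u)\in\im H$, i.e.\ the flushed configuration is zero in $\coker H$.

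\textbf{Step 2: the basis subcell has no nonzero locally trivial configuration.}
By the construction in Section~\ref{subsec:unit_cell_mapping} (via Proposition~\ref{prop:suitable-coarse-grained-parity-check-matrix}), the $r=\dim(\coker H)$ point-like excitations in the basis subcell form a basis of $\coker H$ over $\F_2$. Their images in $\coker H$ are therefore linearly independent. Consequently, any configuration supported on the basis subcell is a sum of these basis excitations, and it maps to zero in $\coker H$ only if every coefficient vanishes. In the notation of Definition~\ref{def:flushing-map}, $\mathsf{Coeff}(s_u)=0$ forces $\mathsf{Flush}(s_u)=\emptyset$.

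\textbf{Step 3: conclude.}
Combining Steps 1 and 2: for $s_u\in\im H$ we have $\mathsf{Flush}(s_u)\in\im H$, so $\mathsf{Coeff}(s_u)=0$, so $\mathsf{Flush}(s_u)$ is empty. Moreover, $\Psi(s_u)$ is itself a local correction for $s_u$.

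\textbf{Main obstacle.}
The delicate point is making Step 2 precise. The cokernel is a module over $R'$ describing the infinite lattice, whereas the basis subcell carries only $r$ check positions, and "linearly independent in $\coker H$" must mean independent as elements of the $\F_2$-vector space $\coker H$ (which has finite $\F_2$-dimension $r$). One also has to rule out that a nonzero pattern in the subcell is locally trivial using operators that reach outside $u$.

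To handle this, I would use that $\ann_{R'}\coker H=(x^b-1,y^b-1)$, so $\coker H$ is a finite-dimensional $\F_2$-space with every $R'$-translate acting trivially. The $r$ subcell positions are chosen as an $\F_2$-basis of this space. If some subset of them summed to $H P$ for a finite-support $P$, that subset would be zero in $\coker H$, contradicting linear independence.

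If the paper's definition of the basis subcell only ensures that these positions span $\coker H$, the dimension count (exactly $r$ positions spanning an $r$-dimensional space) still forces independence.
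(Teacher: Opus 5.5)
Your proof is correct and is the same argument as the paper's: $s_u$ represents the zero class in $\coker H$, flushing produces a representative of that same class on the basis subcell, and the only such representative is the empty configuration. You make explicit the two facts that the paper's one-line proof leaves implicit. The first is that $\mathsf{Flush}(s_u)=s_u+H\Psi(s_u)$ stays in the same class. The second is that the basis-subcell excitations are independent in $\coker H$; like the paper, this rests on the simplifying assumption that the subcell carries exactly the $r$ basis excitations.
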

\begin{proof}
Let $s_u\in\im H$ be a syndrome supported inside a unit cell $u$.
Then $s_u$ represents the zero class in $\coker H$.
By construction, $\mathsf{Flush}(s_u)$ is a representative of this same class supported in the basis subcell, hence must be the empty configuration.
Equivalently, $\mathsf{Coeff}(s_u)=0\in\F_2^{r}$.
\end{proof}

The above lemma implies that performing the flushing procedure across all the unit cells will not introduce any spurious excitations in the basis subcells if the original syndrome in each unit cell is physical. This allows us a local way to return to the codespace. 

However, if the original syndrome in a unit cell is not physical (i.e., it does not lie in $\im H$), then the flushing procedure may result in a non-trivial excitation configuration in the basis subcell. This can happen when the physical syndrome resulting from some Pauli error is split across multiple unit cells. In other words, the restriction of the syndrome to a particular unit cell may not correspond to any local Pauli operator within that unit cell. We formalize this observation in the following lemma.

\begin{lemma}
\label{lem:flushing_procedure_nonphysical}
  If the excitations within the $b \times b$ unit cell does not lie in $\im H$ (i.e., it is a non-physical syndrome that cannot be created by any local Pauli operator), then the flushing procedure will map it to a non-trivial excitation configuration in the basis subcell that corresponds to a non-trivial element in $\coker\,H$.
\end{lemma}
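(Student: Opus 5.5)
The plan is to mirror the proof of Lemma~\ref{lem:flushing_procedure_physical}, using the fact that flushing never changes the class of a configuration in $\coker H$. First I would show that each fixed transport operator adds an element of $\im H$ to the syndrome. A transport operator is a finite-support Pauli operator, and its syndrome is $H$ applied to its support vector. So each elementary flushing move sends $s_u$ to $s_u + H e$ for some finite-support $e$. The total flushing operator $\Psi(s_u)$ is a product of such moves, so by linearity
\begin{equation}
\mathsf{Flush}(s_u) = s_u + H\,\Psi(s_u).
\end{equation}
Hence $[\mathsf{Flush}(s_u)] = [s_u]$ in $\coker H$.

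Next I would use the hypothesis $s_u\notin\im H$. This means $[s_u]\neq 0$ in $\coker H$, so $[\mathsf{Flush}(s_u)]\neq 0$ as well. In particular $\mathsf{Flush}(s_u)$ is not the empty configuration, since the empty configuration represents the zero class. The basis subcell carries a fixed $\F_2$-basis of $\coker H$, and $\mathsf{Flush}(s_u)$ is supported there. So $\mathsf{Flush}(s_u)$ expands in that basis as a representative of the same nonzero class, which gives $\mathsf{Coeff}(s_u)\neq 0\in\F_2^r$. This is exactly the non-trivial element of $\coker H$ claimed in the statement.

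The main obstacle is making precise which $\coker H$ is meant. There are two readings:
\begin{itemize}
\item \emph{Local reading:} ``cannot be created by any local Pauli operator'' refers to operators supported in $u$. Then the hypothesis does not match nontriviality in the global cokernel over $R'$. A configuration might fail to be creatable inside $u$ yet be creatable by an operator spilling into neighbouring cells.
\item \emph{Global reading:} the statement's parenthetical identifies $\im H$ with all finite-support creatable syndromes. I would adopt this reading, so $s_u\notin\im H$ means a nonzero class in $\coker H$ over $R'$.
\end{itemize}

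Under the global reading, the remaining point to justify is that the basis configurations in the $c\times c$ subcell really form a basis of the relevant $\F_2$-vector space. I would take that space to be $\mathcal{T}(\coker H)$, of dimension $r$, as fixed in Section~\ref{subsec:unit_cell_mapping}. Then distinct coefficient vectors give distinct classes, and a nonzero class has a nonzero coefficient vector. This uses the choice of $b$ from Proposition~\ref{prop:suitable-coarse-grained-parity-check-matrix}, which ensures that every class has a representative in the basis subcell. I would take this as given by the construction of $c$ and the basis.
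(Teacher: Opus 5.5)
Your proposal is correct and follows the same route as the paper. In both, $s_u\notin\im H$ gives a nonzero class in $\coker H$, and flushing yields a representative of that same class supported in the basis subcell, so the result is nonempty and $\mathsf{Coeff}(s_u)\neq 0$. You justify the class-preservation step explicitly via $\mathsf{Flush}(s_u)=s_u+H\Psi(s_u)$, where the paper just says ``by construction''; your global reading of ``local'' is also the one the paper's proof implicitly uses.
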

\begin{proof}
If $s_u\notin\im H$, then it represents a non-zero class in $\coker H$.
Flushing produces a representative of this class supported in the basis subcell, so the result cannot be empty.
Equivalently, $\mathsf{Coeff}(s_u)\neq 0\in\F_2^{r}$.
\end{proof}

Although the restriction $s_u$ of a global physical syndrome to a single unit cell $u$ need not lie in $\im H$, the \emph{global} syndrome does lie in $\im H$.
Since the flushing/coefficient extraction map is $\F_2$-linear, this implies a global parity constraint on the extracted excitation patterns:
for each excitation type $\epsilon^{(i)}$, the number of unit cells with $a_i(u)=1$ is even.
Equivalently, each excitation set $D_i$ used by Algorithm~\ref{alg:cellular-decoder} has even cardinality, as required for matching on a torus.

\subsection{Mapping to TC copies}
\label{subsec:mapping_to_ktc}
Fix a basis of $\coker H$ represented by point-like excitations supported in the basis subcell, and denote the corresponding excitation types by
\begin{equation}\epsilon^{(1)},\dots,\epsilon^{(r)}.
\end{equation}

After flushing, each unit cell $u\in V(\Lambda_b)$ yields a coefficient vector
\begin{equation}a(u)\in\F_2^{r}\end{equation}
obtained by expressing the flushed basis-subcell configuration in the fixed basis of $\coker H$.
We interpret the $i$th component $a_i(u)\in\F_2$ as the presence/absence of an excitation of \emph{excitation type} $\epsilon^{(i)}$ in unit cell $u$.
Therefore, for each $i\in[r]$ we obtain an excitation set
\begin{equation}D_i\coloneqq \{u\in V(\Lambda_b): a_i(u)=1\}\subseteq V(\Lambda_b),
\end{equation}
which we decode on the coarse torus $\Lambda_b$ using a TC matching routine. Again, we emphasize that this coarse torus need not actually correspond to a single TC copy; it could be a combination of TC copies. To avoid any potential confusion, we shall refer to the matching graphs for each coarse torus as coarse toric graphs.
Note that $\Psi$ from Definition~\ref{def:flushing-map} composes with the error $E$ on the code to produce these excitation sets. Thus, one can denote the edge-error pattern on the coarse lattice $\Lambda_b$ for excitation type $\epsilon^{(i)}$ as $\gamma^{(i)}$.

To lift a coarse correction back to a physical Pauli correction, we assume access to local edge generators that create pairs of a given excitation type across a unit-cell boundary.

\begin{definition}[Edge generators]
\label{def:edge-generators}
For each excitation type $\epsilon^{(i)}$ and each coarse edge $e=(u, v)\in E(\Lambda_b)$, an \emph{edge generator} is a $Z$-type Pauli operator $P_e^{(i)}$ supported in the unit cells $u$ and $v$ such that applying $P_e^{(i)}$ toggles the type-$\epsilon^{(i)}$ excitation parity in exactly the two endpoint unit cells $u$ and $v$, and does not affect any other excitation type.

We assume there is a uniform constant $w_{\mathrm{edge}}\in\N$ (independent of $L_x,L_y$) such that $\mathrm{wt}(P_e^{(i)})\le w_{\mathrm{edge}}$ for all $i\in[r]$ and all coarse edges $e$.
\end{definition}

We can now state the cell-matching decoding procedure as an explicit algorithm.

\begin{algorithm}[H]
\caption{Cell-matching decoder for $Z$-type errors}
\label{alg:cellular-decoder}
\begin{algorithmic}[1]
  \Require{Measured $Z$-syndrome $s$; coarse-graining parameter $b$ and basis subcell; flushing map $\mathsf{Flush}$ and coefficient extraction $\mathsf{Coeff}$ (Definition~\ref{def:flushing-map}); edge generators $\{P_e^{(i)}\}$ (Definition~\ref{def:edge-generators}); matching decoder $\textsc{Match}(\cdot)$.}
  \Ensure{$Z$-type Pauli correction $\widehat{E}$.}
  \ForAll{unit cells $u\in V(\Lambda_b)$}
    \State Restrict the syndrome to $u$ to obtain $s_u$.
    \State Flush with $\Psi(s_u)$ and extract coefficients: $a(u)\coloneqq \mathsf{Coeff}(s_u)\in\F_2^{r}$.
  \EndFor
  \For{$i\gets 1$ \textbf{to} $r$}
    \State Form excitations $D_i\coloneqq\{u\in V(\Lambda_b): a_i(u)=1\}$.
    \State Compute a minimum-weight 1-chain $\widehat\gamma^{(i)}\in \F_2^{E(\Lambda_b)}$ with boundary $\partial_1^{(i)}\widehat\gamma^{(i)}=\mathbf{1}_{D_i}$ using $\textsc{Match}$.
  \EndFor
  \State Lift and combine: $\widehat{M}\coloneqq \prod_{i=1}^{r}\ \prod_{e\in E(\Lambda_b)} (P_e^{(i)})^{\widehat\gamma^{(i)}_e}$.
  \State \Return $\widehat{E} \coloneqq \widehat{M} \cdot \prod_{u\in V(\Lambda_b)} \Psi(s_u)$.
\end{algorithmic}
\end{algorithm}

The toric graphs used by Algorithm~\ref{alg:cellular-decoder} are extracted from the unit-cell structure and need not coincide with globally disentangled TC copies.
Nevertheless, the only Pauli operators the decoder applies are products of the edge generators $P_e^{(i)}$ and the flushing operators $\Psi(s_u)$.
The next lemma formalizes the fact that any decoder-induced logical operator must come from a non-trivial cycle on at least one of the coarse toric graphs.

\begin{definition}[Coarse toric graph complex]
\label{def:coarse-ktc-graph-complex}
For each $i\in[r]$, let
\begin{equation}\calG^{(i)}:\quad C^{(i)}_2 \xrightarrow{\partial^{(i)}_2} C^{(i)}_1 \xrightarrow{\partial^{(i)}_1} C^{(i)}_0\end{equation}
be the graph chain complex of the coarse torus $\Lambda_b$ over $\F_2$, where $C^{(i)}_2\cong \F_2^{F(\Lambda_b)}$, $C^{(i)}_1\cong \F_2^{E(\Lambda_b)}$, $C^{(i)}_0\cong \F_2^{V(\Lambda_b)}$, and $\partial^{(i)}_j$ is the incidence map for $j \in \{1, 2\}$.
\end{definition}

\begin{definition}[Lifting of coarse 1-chains]
\label{def:coarse-chain-lifting}
Fix $i\in[r]$.
For any 1-chain $\gamma\in C^{(i)}_1\cong\F_2^{E(\Lambda_b)}$, define the associated $Z$-type Pauli operator
\begin{equation}\mathsf{P}^{(i)}(\gamma) \coloneqq \prod_{e\in E(\Lambda_b)} (P^{(i)}_e)^{\gamma_e}.
\end{equation}
\end{definition}

\begin{definition}[Contractible-cycle stabilizer property]
\label{def:contractible-cycle-stabilizer-property}
We say the cellular construction satisfies the \emph{contractible-cycle stabilizer property} if, for every $i\in[r]$ and every contractible 1-cycle $\gamma\in\im(\partial^{(i)}_2)$ on the coarse torus $\Lambda_b$, the lifted operator $\mathsf{P}^{(i)}(\gamma)$ lies in $\im(\partial_2)$.
\end{definition}

\begin{lemma}[Contractible coarse cycles lift to $Z$-stabilizers]
\label{lem:contractible-cycle-stabilizer-property}
Let $d_Z(\calC)$ denote the $Z$-distance of $\calC$, i.e., the minimum weight of a $Z$-type Pauli operator in $\ker(\partial_1)\setminus \im(\partial_2)$.
Assume that each edge generator $P_e^{(i)}$ has support size at most $w_{\mathrm{edge}}$ (independent of $L_x,L_y$).
Then for every $i\in[r]$ and every contractible 1-cycle $\gamma\in\im(\partial^{(i)}_2)$ on $\Lambda_b$, we have
\begin{equation}\mathsf{P}^{(i)}(\gamma)\in\im(\partial_2)\end{equation}
provided $d_Z(\calC) > 4w_{\mathrm{edge}}$.
In particular, for fixed $b$ and fixed local choices of the generators, this holds for all sufficiently large system sizes whenever $d_Z(\calC)$ grows with $\min\{L_x,L_y\}$.
\end{lemma}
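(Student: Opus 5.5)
By linearity, $\mathsf{P}^{(i)}$ is a group homomorphism from $(C^{(i)}_1,+)$ to $Z$-type Paulis: $\mathsf{P}^{(i)}(\gamma+\gamma')=\mathsf{P}^{(i)}(\gamma)\mathsf{P}^{(i)}(\gamma')$. Since $\im(\partial_2)$ is a subspace and $\im(\partial^{(i)}_2)$ is spanned by the boundaries $\partial^{(i)}_2 f$ of individual coarse plaquettes $f\in F(\Lambda_b)$, it suffices to show that $\mathsf{P}^{(i)}(\partial^{(i)}_2 f)\in\im(\partial_2)$ for every plaquette $f$. Each such boundary consists of four coarse edges, so $\mathsf{P}^{(i)}(\partial^{(i)}_2 f)$ is a product of four edge generators. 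Its weight is therefore at most $4w_{\mathrm{edge}}$, and it is supported within the four unit cells around $f$, a region of constant size $2b\times 2b$.

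The first step is to show that $O_f\coloneqq\mathsf{P}^{(i)}(\partial^{(i)}_2 f)$ lies in $\ker(\partial_1)$, i.e.\ it creates no syndrome. By Definition~\ref{def:edge-generators}, each $P^{(i)}_e$ toggles the type-$\epsilon^{(i)}$ class in exactly its two endpoint cells and leaves all other types unchanged. Around the boundary of a plaquette, every cell is an endpoint of exactly two of the four edges, so the $\mathsf{Coeff}$ class of the total syndrome of $O_f$ vanishes in every unit cell.

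This step is the main obstacle. Vanishing classes only show that the restriction of the syndrome to each unit cell lies in $\im H$ locally (by the converse of Lemma~\ref{lem:flushing_procedure_nonphysical}), not that the syndrome itself is zero. I would handle this in one of two ways. Option (a): read Definition~\ref{def:edge-generators} strictly, as saying that $P^{(i)}_e$ creates exactly the fixed basis-subcell representatives of $\epsilon^{(i)}$ in $u$ and $v$ and nothing else. Then the four contributions at each cell cancel exactly, and $\partial_1 O_f=0$. Option (b): if the edge generators leave locally trivial residues, multiply $O_f$ by the local flushing operators $\Psi$ in each of the four cells. This removes the residues by Lemma~\ref{lem:flushing_procedure_physical}, at the cost of a constant-weight correction, and the argument below still applies with a larger constant. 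I would commit to option (a) and remark on (b).

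Once $O_f\in\ker(\partial_1)$, the conclusion follows from the distance. $O_f$ is a $Z$-type operator with $|O_f|\le 4w_{\mathrm{edge}}<d_Z(\calC)$. By definition of $d_Z(\calC)$, every element of $\ker(\partial_1)\setminus\im(\partial_2)$ has weight at least $d_Z(\calC)$, so $O_f\in\im(\partial_2)$. Summing over the plaquettes in a chain $\beta$ with $\gamma=\partial^{(i)}_2\beta$ gives $\mathsf{P}^{(i)}(\gamma)=\prod_{f\in\beta}O_f\in\im(\partial_2)$.

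For the final clause, $w_{\mathrm{edge}}$ depends only on $b$ and the fixed local generators, whereas $d_Z(\calC)$ grows with $\min\{L_x,L_y\}$ by the topological-order assumption. Hence $d_Z(\calC)>4w_{\mathrm{edge}}$ holds for all sufficiently large lattices.
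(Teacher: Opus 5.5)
Your proposal is correct and follows essentially the same route as the paper's proof: you write the contractible cycle as a sum of coarse plaquette boundaries, note that each lifted plaquette is a product of four edge generators with trivial syndrome and weight at most $4w_{\mathrm{edge}}<d_Z(\calC)$, and conclude that it must be a stabilizer. Your option (a) is the strict reading of Definition~\ref{def:edge-generators} that the paper uses implicitly. Option (b) would not rescue the lemma as stated: an operator with nonzero syndrome can never lie in $\im(\partial_2)$, so (b) would only show that a locally corrected lift is a stabilizer.
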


\begin{proof}
Fix $i\in[r]$.
Let $\gamma\in\im(\partial^{(i)}_2)$ be a contractible coarse 1-cycle.
View $\Lambda_b$ as a square cell complex with face set $F(\Lambda_b)$.
Since $\gamma$ is contractible, it is a boundary in the coarse complex: there exists a 2-chain $\sigma\in \F_2^{F(\Lambda_b)}$ such that
\begin{equation}\gamma = \partial_2^{(i)}\sigma = \sum_{p\in F(\Lambda_b)} \sigma_p\,\partial p,\end{equation}
where $\partial p\in C_1^{(i)}$ denotes the 4 edges that form the boundary cycle of a coarse plaquette $p \in F(\Lambda_b)$.

By Definition~\ref{def:coarse-chain-lifting},
\begin{equation}\mathsf{P}^{(i)}(\gamma) = \prod_{p\in F(\Lambda_b)} \bigl(\mathsf{P}^{(i)}(\partial p)\bigr)^{\sigma_p}.
\end{equation}
Thus it suffices to show that $\mathsf{P}^{(i)}(\partial p)\in\im(\partial_2)$ for every coarse plaquette $p$.

Fix a coarse plaquette $p$.
The operator $\mathsf{P}^{(i)}(\partial p)$ is a product of four edge generators, hence has weight at most $4w_{\mathrm{edge}}$.
Moreover, $\partial p\in\ker(\partial_1^{(i)})$ implies that each coarse vertex of $p$ is incident to an even number of selected edges, so the type-$\epsilon^{(i)}$ excitations cancel.
Since each $P_e^{(i)}$ affects only excitation type $\epsilon^{(i)}$, the total operator $\mathsf{P}^{(i)}(\partial p)$ has trivial syndrome, i.e., lies in $\ker(\partial_1)$.

If $\mathsf{P}^{(i)}(\partial p)\notin\im(\partial_2)$, then it would represent a non-trivial $Z$-type logical operator on some combination or TCs and hence have weight at least $d_Z(\calC)$ by definition of $d_Z(\calC)$.
But its weight is at most $4w_{\mathrm{edge}}$, which contradicts the assumption $d_Z(\calC)>4w_{\mathrm{edge}}$.
Therefore $\mathsf{P}^{(i)}(\partial p)\in\im(\partial_2)$.

Taking the product over all plaquettes with $\sigma_p=1$ shows $\mathsf{P}^{(i)}(\gamma)\in\im(\partial_2)$.
\end{proof}

\begin{lemma}[Decoder-induced logical operators arise from some coarse toric sector]
  \label{lem:logical_relationship}
Assume $b\mid L_x,L_y$, and let
\begin{equation}\calC:\quad C_2 \xrightarrow{\partial_2} C_1 \xrightarrow{\partial_1} C_0\end{equation}
be the chain complex for the \TwoDTLTI code $Q$ so that the $Z$ logical operators are represented by
\begin{equation}H_1(\calC)=\ker(\partial_1)/\im(\partial_2).\end{equation}

Fix edge generators $\{P_e^{(i)}\}$ as in Definition~\ref{def:edge-generators} and coarse graph complexes $\calG^{(i)}$ as in Definition~\ref{def:coarse-ktc-graph-complex}.
Then the contractible-cycle stabilizer property (Definition~\ref{def:contractible-cycle-stabilizer-property}) holds whenever $d_Z(\calC)>4w_{\mathrm{edge}}$, by Lemma~\ref{lem:contractible-cycle-stabilizer-property}.

Then for each $i\in[r]$ the lifting map (Definition~\ref{def:coarse-chain-lifting}) induces a well-defined homomorphism on homology
\begin{equation}j^{(i)}:H_1(\calG^{(i)})\to H_1(\calC),\qquad j^{(i)}([\gamma])\coloneqq [\mathsf{P}^{(i)}(\gamma)].\end{equation}

In particular, if Algorithm~\ref{alg:cellular-decoder} outputs a correction $\widehat{\gamma}^{(i)}$ for an error $\gamma^{(i)}$ in sector $\calG^{(i)}$ such that the residual error $\widehat{\gamma}^{(i)} \cdot \gamma^{(i)}$ is contractible for all $i \in [r]$, then the residual error in $\calC$ is a $Z$-type stabilizer. Equivalently, any non-trivial logical operator introduced by the decoder must arise from a non-trivial homology class in at least one sector $H_1(\calG^{(i)})$.
\end{lemma}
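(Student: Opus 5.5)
The plan has three parts: show that lifting sends cycles to cycles and boundaries to boundaries, conclude that $j^{(i)}$ is well defined on homology, and then use $j^{(i)}$ to handle the residual error.

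\textbf{Step 1: cycles lift to cycles.} The map $\mathsf{P}^{(i)}$ of Definition~\ref{def:coarse-chain-lifting} is $\F_2$-linear from $C_1^{(i)}$ to $C_1$, since $Z$-type Paulis compose by addition of their supports. Take $\gamma\in\ker\partial_1^{(i)}$. Every coarse vertex is then incident to an even number of edges in $\gamma$. By Definition~\ref{def:edge-generators}, each $P^{(i)}_e$ toggles only the type-$\epsilon^{(i)}$ excitation class, and only at the two endpoints of $e$. Summing over $e\in\gamma$, the resulting syndrome has trivial class in every unit cell.

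To upgrade ``trivial class in every unit cell'' to ``zero syndrome'', I will use a stronger reading of ``does not affect any other excitation type'' than the definition literally states. I take the edge generators to be fixed so that the physical syndrome of $P_e^{(i)}$ is exactly the pair of basis-subcell representatives of $\epsilon^{(i)}$ in $u$ and $v$. This is the same reading already used in the proof of Lemma~\ref{lem:contractible-cycle-stabilizer-property}. Under it, the syndromes cancel pointwise in pairs, so $\partial_1\mathsf{P}^{(i)}(\gamma)=0$. This step is where the main subtlety sits: if the edge generators only preserved classes, one would instead need a local correction term. I will state the pointwise-cancellation convention explicitly.

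\textbf{Step 2: $j^{(i)}$ is well defined.} Given Step 1, $\mathsf{P}^{(i)}$ restricts to a map $\ker\partial_1^{(i)}\to\ker\partial_1$. By Lemma~\ref{lem:contractible-cycle-stabilizer-property}, which applies under the hypothesis $d_Z(\calC)>4w_{\mathrm{edge}}$, it sends $\im\partial_2^{(i)}$ into $\im\partial_2$. Linearity then shows that $[\gamma]\mapsto[\mathsf{P}^{(i)}(\gamma)]$ is independent of the representative chosen, so it is a well-defined group homomorphism $j^{(i)}$.

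\textbf{Step 3: the residual error is a stabilizer.} Write the true error as $E$. Let $\gamma^{(i)}$ be the coarse error chains, meaning $E\cdot\Psi$ is equivalent, modulo stabilizers, to $\prod_i\mathsf{P}^{(i)}(\gamma^{(i)})$ times a locally trivial remainder. I will argue this decomposition by flushing: after applying $\Psi$, what remains of the error produces exactly the basis-subcell excitations recorded by the sets $D_i$, and it can be rewritten as edge-generator strings up to stabilizers.

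The decoder outputs $\widehat E=\widehat M\cdot\Psi$ with $\widehat M=\prod_i\mathsf{P}^{(i)}(\widehat\gamma^{(i)})$. The $\Psi$ factors cancel, so the residual is $\prod_i\mathsf{P}^{(i)}(\gamma^{(i)}+\widehat\gamma^{(i)})$ up to stabilizers. Each $\gamma^{(i)}+\widehat\gamma^{(i)}$ is a coarse cycle, because $\gamma^{(i)}$ and $\widehat\gamma^{(i)}$ have the same boundary $\mathbf 1_{D_i}$. By assumption each of these cycles is contractible, so Step 2 places every factor in $\im\partial_2$, and hence the residual is a $Z$-stabilizer.

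The ``equivalently'' clause is the contrapositive: if the residual represents a nonzero class in $H_1(\calC)$, then some $j^{(i)}([\gamma^{(i)}+\widehat\gamma^{(i)}])\neq 0$. That forces $[\gamma^{(i)}+\widehat\gamma^{(i)}]\neq 0$ in $H_1(\calG^{(i)})$ for at least one $i$.
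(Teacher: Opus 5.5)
Your proposal is correct and follows the paper's proof essentially step for step. Lifted coarse cycles have trivial syndrome, $j^{(i)}$ is well defined by linearity together with Lemma~\ref{lem:contractible-cycle-stabilizer-property}, the residual is a product of lifts of contractible coarse cycles, and the final clause follows by contraposition; your explicit convention that $P_e^{(i)}$ has syndrome exactly the two basis-subcell representatives of $\epsilon^{(i)}$ is precisely what the paper's ``has trivial syndrome'' tacitly relies on. One caution for Step 3: take $E\Psi\equiv\prod_i\mathsf{P}^{(i)}(\gamma^{(i)})$ modulo $\im(\partial_2)$ as the meaning of ``the error $\gamma^{(i)}$ in sector $\calG^{(i)}$'', as the paper implicitly does, rather than promising to derive it from flushing, because matching syndromes only places $E\Psi\prod_i\mathsf{P}^{(i)}(\gamma^{(i)})$ in $\ker(\partial_1)$, and removing its homology class would need that class to lie in the image of $\bigoplus_i j^{(i)}$ (e.g.\ surjectivity onto $H_1(\calC)$), which neither you nor the paper establishes.
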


\begin{proof}
Fix $i\in[r]$.
Let $\gamma\in\ker(\partial^{(i)}_1)$.
At each coarse vertex, an even number of incident edges satisfy $\gamma_e=1$.
Since each edge generator $P_e^{(i)}$ toggles type-$\epsilon^{(i)}$ excitation parity at exactly the two endpoints of $e$ and does not affect other excitation types, the total product $\mathsf{P}^{(i)}(\gamma)$ has trivial syndrome.
Hence $\mathsf{P}^{(i)}(\gamma)\in\ker(\partial_1)$.

If $\gamma$ and $\gamma'$ represent the same class in $H_1(\calG^{(i)})$, then $\gamma+\gamma'$ is a sum of contractible cycles.
By Lemma~\ref{lem:contractible-cycle-stabilizer-property}, $\mathsf{P}^{(i)}(\gamma+\gamma')\in\im(\partial_2)$.
Since $\mathsf{P}^{(i)}(\gamma+\gamma')=\mathsf{P}^{(i)}(\gamma)\,\mathsf{P}^{(i)}(\gamma')$, we conclude $[\mathsf{P}^{(i)}(\gamma)]=[\mathsf{P}^{(i)}(\gamma')]$ in $H_1(\calC)$.
Therefore $j^{(i)}$ is well-defined.

Algorithm~\ref{alg:cellular-decoder} outputs a product of lifted 1-chains across the sectors $i$.
If, for every $i$, the corresponding residual cycle on $\Lambda_b$ is contractible, then each sector's contribution is a $Z$-type stabilizer across combinations of TC copies and so the total correction is a stabilizer.
Taking the contrapositive yields the final statement.
\end{proof}

\subsection{Correctness and complexity analysis of the cell-matching decoder}
\label{subsec:cellular-decoder-analysis}

In this subsection, we analyze the correctness and complexity of the cell-matching decoding framework. We show that the cell-matching decoder correctly recovers the logical information when the error rate is below a certain threshold, and we analyze its computational complexity.

We analyze Algorithm~\ref{alg:cellular-decoder}: for each unit cell we flush the restricted syndrome into the basis subcell (Definition~\ref{def:flushing-map}), interpret the resulting coefficients as excitation patterns $D_i\subseteq V(\Lambda_b)$ for each excitation type $\epsilon^{(i)}$, decode each excitation pattern independently by matching on $\Lambda_b$, and lift the coarse corrections using the edge generators $P_e^{(i)}$.

We refine the correctness argument into several lemmas corresponding to the two principal ways a logical error may be introduced during decoding:
(i)~the \emph{flushing stage} may choose a correction representative that, when combined with the pre-existing physical error, has non-trivial homology, and
(ii)~the \emph{matching stage} on one (or more) of the induced coarse toric instances may return a correction in the wrong homology class.

We now isolate the only geometric input needed to relate physical error weight to the induced coarse edge-error patterns.

\begin{lemma}[Constant spread bound for induced coarse edge errors]
\label{lem:constant-spread-cellular}
Fix the coarse-graining parameter $b$ and the flushing procedure (Definition~\ref{def:flushing-map}).
For each excitation type $\epsilon^{(i)}$ define the induced edge-error extraction map
\begin{equation}\Gamma^{(i)}:\{\text{$Z$-type Pauli errors on }\calC\}\to C^{(i)}_1\cong \F_2^{E(\Lambda_b)}\end{equation}
as follows: given a physical $Z$ error pattern $E$, run the flushing step on the restricted syndrome in each unit cell and obtain excitation sets $D_i\subseteq V(\Lambda_b)$.
Let $\gamma^{(i)} \coloneqq \Gamma^{(i)}(E)$ be any minimum-weight 1-chain on $\Lambda_b$ with boundary $\partial_1^{(i)}\Gamma^{(i)}(E)=\mathbf{1}_{D_i}$.
Then, for every $Z$-type Pauli error pattern $E$ (of Hamming weight $|E|$) and every $i\in[r]$,
    \begin{equation}\bigl|\gamma^{(i)}\bigr| \le 2|E|.\end{equation}
    In addition, $\gamma^{(i)} \in E(\Lambda_b)$ can have at most $|E|$ horizontal coarse edges and at most $|E|$ vertical coarse edges.
\end{lemma}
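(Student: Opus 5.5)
The plan is to exhibit an explicit (not necessarily minimal) 1-chain $\eta^{(i)}$ on $\Lambda_b$ with $\partial_1^{(i)}\eta^{(i)}=\mathbf{1}_{D_i}$ and $|\eta^{(i)}|\le 2|E|$, with at most $|E|$ horizontal and at most $|E|$ vertical edges. The bound on $|\gamma^{(i)}|$ then follows from minimality. For the refined horizontal/vertical count, I will choose $\gamma^{(i)}$ among minimum-weight chains to be the one compatible with this construction; alternatively, one can note that a minimum-weight chain of a given boundary on the torus can be taken monotone (staircase) between matched pairs, so its horizontal and vertical counts are each bounded by the corresponding counts of $\eta^{(i)}$.

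\textbf{Step 1 (linearity and decomposition).} The map $E\mapsto D_i$ is $\F_2$-linear. Syndrome restriction to each cell is linear, and $\mathsf{Coeff}$ is linear because it computes a class in $\coker H$. Writing $E=\sum_{q\in\mathrm{supp}(E)} Z_q$, we therefore get $\mathbf{1}_{D_i}=\sum_q \mathbf{1}_{D_i(Z_q)}$. If each single-qubit error $Z_q$ admits a chain $\eta_q$ with boundary $\mathbf{1}_{D_i(Z_q)}$ using at most one horizontal and at most one vertical coarse edge, then $\eta^{(i)}=\sum_q \eta_q$ has the required boundary. It also has at most $|E|$ horizontal edges, at most $|E|$ vertical edges, and hence at most $2|E|$ edges in total (cancellations mod 2 only help).

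\textbf{Step 2 (single-qubit analysis).} A single $Z_q$ has syndrome $H Z_q$ supported on the checks within the polynomial stencil around $q$. The global syndrome lies in $\im H$, so the coefficient vectors of its restrictions sum to zero over all cells. The key geometric input is that the stencil range is at most $b$, which holds after the coarse-graining of Proposition~\ref{prop:suitable-coarse-grained-parity-check-matrix}, enlarging $b$ if needed. Under this assumption the syndrome of $Z_q$ meets only the cell $u$ containing $q$ and at most its right, upper, and upper-right neighbours (with the appropriate orientation conventions), i.e. a $2\times 2$ block of cells. Then $D_i(Z_q)$ is an even subset of the four corners of a single coarse plaquette. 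Any even subset of a plaquette's corners is the boundary of a chain using at most one horizontal and one vertical edge of that plaquette: for two adjacent corners take one edge, for two diagonal corners take an L-shaped path, and for all four corners take the two parallel edges. The last case breaks the ``one horizontal, one vertical'' count, but it still uses only two edges, which is enough for the $2|E|$ bound.

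\textbf{Main obstacle.} The hard part is ensuring the refined horizontal/vertical count. This requires ruling out the four-corner case, or rather handling it by charging. The approach I would try first is to use the freedom to place $q$'s cell in the corner of the $2\times 2$ block so that the syndrome of $Z_q$ meets at most three cells in an L-shape. This is achievable if the stencil spans fewer than $b$ sites in each direction, which can be arranged by taking $b$ strictly larger than the polynomial degrees. Failing that, I would weaken the refined statement to at most $2|E|$ edges of each orientation, which suffices for the downstream threshold arguments. A secondary subtlety is that $\mathsf{Coeff}$ must be linear: the text defines $\mathsf{Flush}$ procedurally, so I will note that it factors through the quotient $\coker H$, as used in Lemmas~\ref{lem:flushing_procedure_physical} and~\ref{lem:flushing_procedure_nonphysical}.
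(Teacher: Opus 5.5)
Your proof of $|\gamma^{(i)}|\le 2|E|$ is correct and sounder than the paper's. The paper does the same single-qubit case analysis. It then asserts that $\Gamma^{(i)}$ is $\F_2$-linear at the level of 1-chains and applies the triangle inequality to $\sum_j\Gamma^{(i)}(E_j)$. A minimum-weight selection is not additive, so that step is unjustified. Your repair is right: build $\eta^{(i)}=\sum_q\eta_q$ using only the linearity of $E\mapsto\mathbf{1}_{D_i}$ (valid because $\mathsf{Coeff}$ factors through $\coker H$), then invoke minimality. You also handle the four-corner pattern, which is missing from the paper's case list (empty, one edge, diagonal pair). Finally, you make explicit the stencil-width hypothesis on $b$ that the paper hides in ``by construction of the unit cells''.

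\textbf{The gap is in the refined count} (at most $|E|$ horizontal and $|E|$ vertical edges), and neither of your routes establishes it. The staircase argument compares $\gamma^{(i)}$ with $\eta^{(i)}$ only when both pair up $D_i$ the same way, but a minimum-weight chain may re-pair.

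Concretely, suppose some $Z_q$ gives $D_i(Z_q)=\{u,u+(1,1)\}$, a diagonal pair that the paper explicitly allows. Let $E=\prod_{j=0}^{n-1}Z_{q+(2jb,0)}$ with $n$ even and $L_x/b$ large. Translation invariance gives $D_i=\{u+(2j,0)\}_{j<n}\cup\{u+(2j+1,1)\}_{j<n}$. All pairwise coarse distances are at least $2$, so the minimum weight is $2n$.
\begin{itemize}
\item $\eta^{(i)}$ attains it, with $n$ edges of each orientation.
\item So do the straight segments joining $u+(4k,0)$ to $u+(4k+2,0)$ and $u+(4k+1,1)$ to $u+(4k+3,1)$. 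These use $2n=2|E|$ horizontal edges.
\end{itemize}
So for ``any'' minimum-weight chain the refined claim is false, and no argument can prove it in that form. This is exactly where the paper's chain-linearity step breaks. Your other option, picking $\gamma^{(i)}$ compatible with $\eta^{(i)}$, changes the quantifier. It also presupposes that $\eta^{(i)}$ is minimum-weight, which it need not be.

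\textbf{Your removal of the four-corner case also fails.} The cell partition is fixed, and for any $b$ some translate of $q$ has its checks straddling a cell corner. If a qubit's checks occupy all four quadrants around some point (which needs column weight at least four), four cells are hit however large $b$ is. Such patterns are excluded by structure such as column weight at most three (as in BB codes), not by enlarging $b$. If the type-$i$ coefficient is $1$ in all four cells, the refined count fails already for $|E|=1$: no chain with at most one edge of each orientation has the four corners of a plaquette as its boundary.

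What your argument does prove is $|\gamma^{(i)}|\le 2|E|$, hence at most $2|E|$ edges of each orientation. It also proves the per-orientation bound $|E|$ for the explicit chain $\eta^{(i)}$ when four-corner events are absent. That suffices for constant-factor threshold estimates. The lemma's second claim, however, must be weakened, or restated for $\eta^{(i)}$ rather than an arbitrary minimum-weight $\gamma^{(i)}$.
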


\begin{proof}
Let $E$ be a single-qubit $Z$ error.
If the excitations created by $E$ lie entirely within a single unit cell $u$, then the restricted syndrome $s_u$ is physical within $u$ and by Lemma~\ref{lem:flushing_procedure_physical} the flushed coefficient vector is $0$.
Hence $D_i=\emptyset$ for all $i$ and $\Gamma^{(i)}(E)=0$.

Otherwise, $E$ lies within constant distance of a unit-cell boundary.
Geometrically, a single physical qubit intersects at most the $2\times 2$ block of unit cells meeting at a corner, so the syndrome of $E$ can be nontrivial in at most four unit cells by construction of the unit cells.
After flushing, excitations of a fixed excitation type $\epsilon^{(i)}$ can therefore appear in at most these four unit cells. Any such excitation configuration arising from a single-qubit error can be represented, for each fixed $i$, either as the empty set or as a pair of adjacent excitations across a single coarse edge of $\Lambda_b$, or a pair of excitations across two adjacent coarse edges (in two unit cells that are diagonally adjacent).
In the latter case, a minimum-weight 1-chain with that boundary is exactly that edge, so $|\gamma^{(i)}|\le 2$. Notably, in all cases $|\gamma^{(i)}|\le 1$ coarse horizontal edge and $|\gamma^{(i)}|\le 1$ coarse vertical edge.

Write a general $Z$ error pattern as $E=\prod_{j=1}^{|E|} E_j$ where each $E_j$ is a single-qubit $Z$ error.
The induced excitation patterns (and hence the chosen minimum-weight representatives) are computed over $\F_2$ and depend only on the syndrome, so the map $\Gamma^{(i)}$ is $\F_2$-linear at the level of 1-chains.
Using subadditivity of Hamming weight on $\F_2^{E(\Lambda_b)}$ (triangle inequality),
\begin{equation}|\gamma^{(i)}| = \Bigl|\sum_{j=1}^{|E|} \Gamma^{(i)}(E_j)\Bigr| \le \sum_{j=1}^{|E|} |\Gamma^{(i)}(E_j)| \le 2|E|,\end{equation}
where the last inequality uses the single-qubit bound. Similarly, the number of horizontal and vertical coarse edges in $\Gamma^{(i)}(E)$ is at most $|E|$ each.
\end{proof}

\begin{lemma}[Matching-decoder correctness on each coarse toric graph instance]
\label{lem:cellular-ktc-matching-correctness}
Fix $i\in[r]$. Let $\Lambda_b$ have size $(L_x/b)\times(L_y/b)$, and let
\begin{equation}d_{TC} \coloneqq \min\{L_x/b,\,L_y/b\}
\end{equation}
be the distance of the corresponding coarse toric decoding graph.
Let $\gamma\in C^{(i)}_1$ be an adversarial edge-error pattern with $|\gamma|\le \left\lfloor \frac{d_{TC}-1}{2}\right\rfloor$, and let $\hat\gamma$ be any minimum-weight 1-chain satisfying $\partial^{(i)}_1\hat\gamma = \partial^{(i)}_1\gamma$ (e.g., produced by MWPM).
Then $\gamma+\hat\gamma$ is homologically trivial on the torus, i.e., it differs from $0$ by a sum of contractible cycles.
\end{lemma}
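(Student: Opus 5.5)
The argument is the standard one for minimum-weight decoding on the toric graph; no structure beyond the coarse torus $\Lambda_b$ is needed. Set $\rho \coloneqq \gamma + \hat\gamma \in C^{(i)}_1$.

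First, since $\partial^{(i)}_1\hat\gamma = \partial^{(i)}_1\gamma$, we have $\partial^{(i)}_1 \rho = 0$, so $\rho$ is a 1-cycle on $\Lambda_b$. Next, $\gamma$ itself has the required boundary, so minimality of $\hat\gamma$ gives $|\hat\gamma| \le |\gamma|$. Subadditivity of Hamming weight over $\F_2$ then yields
\begin{equation}
|\rho| \le |\gamma| + |\hat\gamma| \le 2|\gamma| \le 2\left\lfloor \tfrac{d_{TC}-1}{2}\right\rfloor \le d_{TC}-1 < d_{TC}.
\end{equation}

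The remaining step is a topological fact about the coarse torus. Any 1-cycle $\rho \in \ker(\partial^{(i)}_1)\setminus \im(\partial^{(i)}_2)$ on the $(L_x/b)\times(L_y/b)$ square torus has weight at least $\min\{L_x/b, L_y/b\} = d_{TC}$. In other words, the 1-systolic distance $d_1(\calG^{(i)})$ equals $d_{TC}$.

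To show this, I would use the $\F_2$ cohomology pairing. Let $\omega_x$ be the cocycle that selects the $L_y/b$ horizontal edges crossing a fixed vertical cut of the torus, and define $\omega_y$ analogously. Both are cocycles, and a 1-cycle is homologically trivial if and only if its intersection with both $\omega_x$ and $\omega_y$ is even, since $H_1(T^2;\F_2)\cong\F_2^2$ and the pairing is nondegenerate.

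Suppose the intersection with $\omega_x$ is odd. There are $L_x/b$ disjoint vertical cuts, and they are pairwise cohomologous, because any two differ by the coboundary of a strip of vertices. So $\rho$ has odd, and hence nonzero, intersection with every one of them. This forces $\rho$ to contain at least one horizontal edge in each of the $L_x/b$ disjoint cut sets, so $|\rho| \ge L_x/b \ge d_{TC}$. The case where the intersection with $\omega_y$ is odd is symmetric and gives $|\rho| \ge L_y/b \ge d_{TC}$.

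Combining this with $|\rho| < d_{TC}$ shows that $\rho \in \im(\partial^{(i)}_2)$, so $\rho$ is a sum of plaquette boundaries, i.e., contractible. The main obstacle is stating the systolic bound cleanly. I would present it through the disjoint-cuts argument above, or cite the standard distance of the toric code on an $(L_x/b)\times(L_y/b)$ lattice, which is exactly $d_{TC}$.
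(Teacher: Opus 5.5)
Your proof is correct and follows the same route as the paper: $\gamma+\hat\gamma$ is a cycle, minimality of $\hat\gamma$ plus subadditivity gives $|\gamma+\hat\gamma| \le 2\lfloor (d_{TC}-1)/2\rfloor < d_{TC}$, and a homologically non-trivial cycle on the coarse torus must have weight at least $d_{TC}$. The one difference is that the paper decomposes the cycle into components and simply asserts this systolic bound, whereas your disjoint-cuts/cohomology-pairing argument proves it, so your version is more self-contained.
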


\begin{proof}
The 1-chain $\gamma+\hat\gamma$ has zero boundary, and thus decomposes into a disjoint union of cycles on $\Lambda_b$. If any component were non-contractible, its length would be at least $d_{TC}$. On the other hand, by minimality of $\hat\gamma$ we have $|\hat\gamma|\le |\gamma|$, so
\begin{equation}|\gamma+\hat\gamma| \le |\gamma|+|\hat\gamma| \le 2\left\lfloor \frac{d_{TC}-1}{2}\right\rfloor < d_{TC},\end{equation}
which rules out any non-contractible component. Hence all cycles are contractible.
\end{proof}

\begin{theorem}[Cell-matching decoder for \TwoDTLTI codes]
\label{thm:cellular-decoder-finite-size}
Let $\calC$ be a \TwoDTLTI code defined on a lattice of size $L_x\times L_y$ with periodic boundary conditions and coarse-graining parameter $b$. Assume that $L_x$ and $L_y$ are integer multiples of $b$.
Let $d_{min}=d_{TC}=\min\{L_x/b,\,L_y/b\}$.

In the adversarial error model, the cell-matching decoder described in this section corrects any bit-flip (or phase-flip) error of weight at most
\begin{equation}t_{\max} \;\coloneqq\; \left\lfloor \frac{d_{TC}-1}{2}\right\rfloor.
\end{equation}

In the local stochastic error model, Lemma~\ref{lem:constant-spread-cellular} implies that each physical single-qubit error induces at most a single coarse edge error in each sector $\calG^{(i)}$ (and in fact affects only $O(1)$ sectors).
Thus the effective edge error rate in each sector is upper bounded up to constant factors by the physical error rate $p$.
If this induced edge error rate is below the TC matching-decoder threshold, then the cell-matching decoder succeeds with probability at least
\begin{equation}\Pr[\text{successful decoding}] \ge 1-\exp(-\Omega(d_{min})).\end{equation}
\end{theorem}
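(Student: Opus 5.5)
The plan is to reduce both claims to the per-sector statements already in hand: Lemma~\ref{lem:constant-spread-cellular}, Lemma~\ref{lem:cellular-ktc-matching-correctness} and Lemma~\ref{lem:logical_relationship}.

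\textbf{Adversarial claim.} Fix a $Z$ error $E$ with $|E|\le t_{\max}$ and write $\Psi=\prod_u\Psi(s_u)$ for the flushing operator. The first step is to make precise what error is being matched in each sector. The syndrome of $E\Psi$ is supported only on basis subcells and is encoded by the coefficient vectors $a(u)$. I would show that $E\Psi$ agrees, up to a $Z$-stabilizer, with $\prod_i\mathsf{P}^{(i)}(\gamma^{(i)})$ for a coarse chain $\gamma^{(i)}$ with $\partial_1^{(i)}\gamma^{(i)}=\mathbf{1}_{D_i}$. I would build this decomposition qubit by qubit.
\begin{itemize}
\item For a single-qubit error $E_j$, the proof of Lemma~\ref{lem:constant-spread-cellular} produces a chain $\Gamma^{(i)}(E_j)$ with $|\Gamma^{(i)}(E_j)|\le 2$, at most one horizontal and one vertical edge.
\item I need that $E_j\cdot\Psi_j\cdot\prod_i\mathsf{P}^{(i)}(\Gamma^{(i)}(E_j))$ is a trivial-syndrome operator of constant weight, where $\Psi_j$ is the flushing correction for $E_j$. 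This product is supported within the $2\times2$ block of cells containing $E_j$. If $d_Z(\calC)$ exceeds this constant, the product must be a stabilizer, by the same argument as in Lemma~\ref{lem:contractible-cycle-stabilizer-property}.
\item Summing over $j$ uses linearity of $\mathsf{Flush}$ and $\mathsf{Coeff}$ over $\F_2$. The flushing operators differ from $\prod_j\Psi_j$ only by operators with trivial flushed syndrome supported in single cells, and the same small-weight argument makes these stabilizers.
\end{itemize}
The result is a decomposition with $|\gamma^{(i)}|\le 2|E|$.

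\textbf{The main obstacle.} The bound $2|E|$ is not good enough for Lemma~\ref{lem:cellular-ktc-matching-correctness}, which needs $|\gamma^{(i)}|\le\lfloor (d_{TC}-1)/2\rfloor$. I would use the refined part of Lemma~\ref{lem:constant-spread-cellular}: $\gamma^{(i)}$ has at most $|E|$ horizontal and at most $|E|$ vertical coarse edges. Then I would redo the homology argument separately per direction. A nontrivial class in the $x$-direction requires at least $L_x/b$ horizontal edges in $\gamma^{(i)}+\hat\gamma^{(i)}$. I would bound the horizontal edges of the minimum-weight $\hat\gamma^{(i)}$ by replacing it with a minimum-weight representative under an $\ell^1$ metric on the torus, since MWPM paths are Manhattan geodesics. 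That gives fewer than $2t_{\max}+1\le d_{TC}$ horizontal edges in total, so no nontrivial class in either direction. If this per-direction comparison turns out to fail for $\hat\gamma^{(i)}$, the fallback is to weaken $t_{\max}$ by a factor of $2$, and I would flag that possibility.

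With every residual $\gamma^{(i)}+\hat\gamma^{(i)}$ contractible, Lemma~\ref{lem:logical_relationship} says the residual $E\widehat{E}$ is a $Z$-stabilizer, since it equals $\prod_i\mathsf{P}^{(i)}(\gamma^{(i)}+\hat\gamma^{(i)})$ times stabilizers.

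\textbf{Stochastic claim.} I reuse the same decomposition. Each physical qubit contributes to a constant number of coarse edges in each sector, and each coarse edge receives contributions from $O(b^2)$ qubits. So under i.i.d.\ noise of rate $p$, the induced edge errors on $\Lambda_b$ are locally stochastic with rate $p'=O(p)$: any set of $k$ edges fails with probability at most $(c\,p)^{k}$, after counting preimages. Below the TC threshold, the standard Peierls/self-avoiding-walk counting for MWPM on a torus of linear size $d_{TC}$ then bounds the sector failure probability by $\exp(-\Omega(d_{TC}))$. A union bound over the constant number $r$ of sectors, together with Lemma~\ref{lem:logical_relationship}, gives $1-\exp(-\Omega(d_{\min}))$.
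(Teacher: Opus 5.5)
\textbf{There is a genuine gap at the step that matters.} Your decomposition of $E\Psi$ as $\prod_i\mathsf{P}^{(i)}(\gamma^{(i)})$ times stabilizers, with $\gamma^{(i)}=\sum_j\Gamma^{(i)}(E_j)$, is sound, given $d_Z(\mathcal{C})$ above a constant (the paper also needs this). It is more careful than the paper, whose Lemma~\ref{lem:constant-spread-cellular} treats a minimum-weight chain as if it were linear. You are also right that $|\gamma^{(i)}|\le 2|E|$ does not meet the hypothesis of Lemma~\ref{lem:cellular-ktc-matching-correctness}. The paper's proof does not resolve this: it records the per-direction bounds and then applies that lemma anyway.

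The problem is your repair. The claim that the minimum-weight $\hat\gamma^{(i)}$ has at most $t_{\max}$ horizontal edges is false. MWPM minimizes total $\ell^1$ length over all pairings of $D_i$. By re-pairing, it can trade vertical edges of $\gamma^{(i)}$ for extra horizontal ones. Each matched path being a Manhattan geodesic between its own endpoints does not prevent this. A concrete instance:
\begin{itemize}
\item Take $L_x/b=L_y/b=13$, so $t_{\max}=6$. Suppose single-qubit errors can create type-$i$ pairs across either diagonal, which Lemma~\ref{lem:constant-spread-cellular} explicitly allows.
\item Six such errors form the staircases $(0,0)\to(1,1)\to(2,2)\to(3,3)$ and $(6,3)\to(7,2)\to(8,1)\to(9,0)$. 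Then $\gamma^{(i)}$ has $6$ horizontal and $6$ vertical edges, so it satisfies both per-direction bounds.
\item With $D_i=\{(0,0),(3,3),(6,3),(9,0)\}$, the true pairing costs $6+6=12$ and the crossed pairing costs $9+9=18$. Pairing $(3,3)$--$(6,3)$ and $(9,0)$--$(0,0)$, the latter wrapping through $x=13\equiv 0$, costs only $3+4=7$.
\item MWPM therefore returns a chain with $7>t_{\max}$ horizontal edges. The residual $\gamma^{(i)}+\hat\gamma^{(i)}$ has lifted displacement $(13,0)$, so it is non-contractible, which is exactly what your argument needed to exclude.
\end{itemize}

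So the per-direction information from Lemma~\ref{lem:constant-spread-cellular} cannot give the radius $\lfloor(d_{TC}-1)/2\rfloor$ for Manhattan-metric matching on $\Lambda_b$. What your argument does prove is the fallback you flagged. For $|E|\le\lfloor(d_{TC}-1)/4\rfloor$ you have $|\gamma^{(i)}|\le 2|E|\le\lfloor(d_{TC}-1)/2\rfloor$, so Lemma~\ref{lem:cellular-ktc-matching-correctness} applies directly, and your decomposition plus Lemma~\ref{lem:logical_relationship} finishes. Unless such configurations can be ruled out for a specific code, the full radius would need a different matching metric in which one physical error costs one unit (e.g.\ unit-weight diagonal coarse edges), not a sharper analysis of the present decoder.

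A smaller point in the stochastic part: one qubit can place two coarse edges, so you only get $\Pr[S\subseteq\gamma^{(i)}]\le (c\sqrt{p})^{|S|}$, not $(cp)^{|S|}$. Your Peierls count still yields a threshold and $\exp(-\Omega(d_{TC}))$ decay, just with a smaller threshold constant.
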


\begin{proof}
We prove the adversarial statement; the stochastic statement follows by the same reduction together with the known threshold behavior of matching decoders on the TC.

Let $E$ be a $Z$-type Pauli error on $\calC$ of weight $|E|\le t_{\max}$. For each excitation type $\epsilon^{(i)}$, Lemma~\ref{lem:constant-spread-cellular} gives
\begin{equation}|\Gamma^{(i)}(E)| \le 2|E|,
\end{equation}
where the first inequality is Lemma~\ref{lem:constant-spread-cellular}. Notably, $\Gamma^{(i)}(E)$ has at most $|E|$ horizontal coarse edges and at most $|E|$ vertical coarse edges. Recall that any non-trivial $Z$ logical operator on $\calC$ must have at least $d_{TC}$ horizontal coarse edges or at least $d_{TC}$ vertical coarse edges. We can rewrite $|\Gamma^{(i)}(E)|$ as $|\Gamma^{(i)}(E)|_{hor}$ and $|\Gamma^{(i)}(E)|_{ver}$ to denote the number of horizontal and vertical coarse edges, respectively. Then we have both
\begin{equation}|\Gamma^{(i)}(E)|_{hor},\,\, |\Gamma^{(i)}(E)|_{ver} \le |E| \le t_{\max} < \left\lfloor \frac{d_{TC}-1}{2}\right\rfloor.
\end{equation}
Run MWPM independently on each induced cellular coarse toric graph instance $\calG^{(i)}$ to obtain a correction 1-chain $\hat\gamma^{(i)}$ with matching syndrome. By Lemma~\ref{lem:cellular-ktc-matching-correctness}, the residual cycle $\Gamma^{(i)}(E)+\hat\gamma^{(i)}$ is contractible for each $i$.
Lift each $\hat\gamma^{(i)}$ to a Pauli correction on $\calC$ using the edge generators $P^{(i)}_e$ (equivalently, via the lifting map $j^{(i)}$ from Lemma~\ref{lem:logical_relationship}). After composing them with the flushing operator $\Psi$, the resulting total correction differs from $E$ by a product of contractible cycles in each $\calG^{(i)}$, which correspond to $Z$-type stabilizers in $\calC$.
Therefore the composed operator $E$ times the lifted cellular correction lies in $\im(\partial_2)$, i.e., the decoder succeeds. Finally, Lemma~\ref{lem:logical_relationship} ensures that any logical failure would require a non-trivial homology class in at least one excitation-type sector, which is ruled out by the above argument.

As for the local stochastic error model, the induced edge error rate in each sector is upper bounded up to constant factors by the physical error rate $p$ by Lemma~\ref{lem:constant-spread-cellular}. If this induced edge error rate is below the TC matching-decoder threshold, then the probability of a logical failure in each sector is at most $\exp(-\Omega(d_{min}))$ by the known threshold behavior of matching decoders on the TC. Taking a union bound over the $r$ (constant) sectors preserves this exponential scaling.
\end{proof}

One can see that the number of adversarial faults that the cell-matching decoder can correct is highly dependent on the coarse-graining parameter. If the coarse-graining parameter is large, $d_{TC}$ may be small, resulting in relatively poor practical performance. In certain cases, instead of letting the $r$-dimensional basis excitations be $r$ different point-like excitations in the basis subcell, we can identify a different basis for the anyons so that the $d_{TC}$ can be larger for some of the TC copies. The length between these newly defined anyons is shorter and it divides $b$, implying that we can accommodate more instances of these anyons and their short strings before we make a homologically non-trivial loop around the torus. Using this new basis allows for intra-cell matching for these anyons within each unit-cell, effectively increasing the TC distance for their corresponding logical qubits and improving the practical error correction performance. We explore this optimization in Section~\ref{sec:gross}.

We now proceed to state the time complexity of the cell-matching decoder for finite-size \TwoDTLTI codes. Note that $r$ and $b$ are constants.

\begin{theorem}[Time complexity of the cell-matching decoder]
\label{thm:time-complexity-cellular-decoder-finite-size}
Let $\calC$ be a \TwoDTLTI code defined on a lattice of size $L_x\times L_y$ with periodic boundary conditions and coarse-graining parameter $b$. Assume that $L_x$ and $L_y$ are integer multiples of $b$. Let $r\coloneqq\dim(\coker H)$ be the number of excitation types (equivalently, the number of underlying TC instances used by the cell-matching decoder).
Then the time complexity of the cell-matching decoder is
\begin{equation}O\left(\text{MATCHING}\left(L_xL_y\right)\right),\end{equation}
where $\text{MATCHING}(\cdot)$ is the time complexity of the chosen matching decoder (MWPM or Union-Find) on a graph of the corresponding size.
\end{theorem}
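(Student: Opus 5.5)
The plan is to bound the running time of each line of Algorithm~\ref{alg:cellular-decoder} separately, mirroring the proof of Theorem~\ref{thm:time-complexity-decoupling-decoder-finite-size}, and to treat $b$, $c$, $r$, $w_{\mathrm{edge}}$ and the check weight $w$ as constants fixed by the polynomials. We will also use that $\text{MATCHING}(n)=\Omega(n)$, since any matching decoder must at least read its input. The constant-size lookup tables are computed once and reused for all system sizes. These are the flushing table $s_u\mapsto(\Psi(s_u),\mathsf{Coeff}(s_u))$ over all $2^{O(b^2)}$ restricted syndromes, and the $r$ edge-generator templates $P^{(i)}_e$ for a horizontal and a vertical coarse edge. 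Building them costs $O(1)$, independent of $L_x,L_y$. This preprocessing therefore does not affect the asymptotics.

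The first step is the flushing stage. There are $L_xL_y/b^2$ unit cells. For each one, restricting the syndrome takes $O(b^2)$ time, and a table lookup returns $\Psi(s_u)$, of weight $O(b^2)$, together with $a(u)\in\F_2^r$. The stage therefore costs $O(L_xL_y)$ in total, and it is embarrassingly parallel. Forming the sets $D_1,\dots,D_r$ costs a further $O(rL_xL_y/b^2)=O(L_xL_y)$.

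The second step is matching. Each $D_i$ lives on the coarse torus $\Lambda_b$, which has $L_xL_y/b^2$ vertices and $2L_xL_y/b^2$ edges. One call of $\textsc{Match}$ costs $\text{MATCHING}(L_xL_y/b^2)\le\text{MATCHING}(L_xL_y)$, where we assume $\text{MATCHING}$ is monotone. With $r$ calls, the total is $O(\text{MATCHING}(L_xL_y))$. The evenness of $|D_i|$ established after Lemma~\ref{lem:flushing_procedure_nonphysical} ensures that every instance is well-posed.

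The third step is lifting. Each returned chain $\widehat\gamma^{(i)}$ has at most $|E(\Lambda_b)|=O(L_xL_y)$ edges. Each edge contributes a generator of weight at most $w_{\mathrm{edge}}$, and these are accumulated into an $\F_2$ vector of length $O(L_xL_y)$. Multiplying by $\prod_u\Psi(s_u)$ is another $O(L_xL_y)$ XOR operations. Summing the three steps gives $O(L_xL_y)+O(\text{MATCHING}(L_xL_y))$, which equals $O(\text{MATCHING}(L_xL_y))$ because of the linear lower bound on matching.

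The only delicate point is justifying that the per-cell flushing is truly $O(1)$ and independent of system size. This requires that the transport operators defining $\mathsf{Flush}$ are a fixed finite family whose supports stay inside the cell, or within a constant-radius neighborhood of it. I expect to handle this by noting that the flushing table depends only on the coarse-grained ring $R'$ and the basis choice, both of which are determined by the polynomials, as guaranteed by Proposition~\ref{prop:suitable-coarse-grained-parity-check-matrix}.
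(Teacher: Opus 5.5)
Your proposal is correct and uses the same step-by-step accounting as the paper: flushing per unit cell, forming the sets $D_i$, $r$ matching calls on the coarse torus, and lifting via edge generators, with the total dominated by the matching step. You are slightly more careful than the paper, because you state explicitly that $\text{MATCHING}$ is monotone and grows at least linearly; the paper relies on both of these implicitly when it absorbs the $O(L_xL_y)$ terms into $O(\text{MATCHING}(L_xL_y))$.
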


\begin{proof}
The time complexity of the cell-matching decoder can be analyzed by considering the different steps involved in the decoding process:
\begin{enumerate}
  \item \textbf{Flushing within each unit cell:} Flushing uses only transport operators supported within a constant-radius neighborhood of a $b\times b$ unit cell. Hence it takes $O(1)$ time per unit cell. There are $\frac{L_xL_y}{b^2}$ unit cells, so the total flushing time is $O\left(L_xL_y\right)$. Because each unit cell can be flushed independently, the process can be parallelized and completed in constant time.

  \item \textbf{Excitation calculation for each excitation type:} For each $i\in[r]$, calculating the induced syndrome/excitation pattern on $\Lambda_b$ from the flushed basis-subcell data takes $O(1)$ per unit cell and hence $O\left(L_xL_y\right)$ overall. Again, with parallelization, it can be done in constant time.

  \item \textbf{Decoding each cell-matching TC instance:} For each $i\in[r]$, the induced decoding graph has $O\left(L_xL_y\right)$ vertices/edges, and the matching-decoder runtime is $\text{MATCHING}\left(L_xL_y\right)$. Summed over $r$ instances, this step costs $O\left(\text{MATCHING}\left(L_xL_y\right)\right)$
  \item \textbf{Lifting and applying corrections:} Each selected matching edge corresponds to applying a fixed local operator $P^{(i)}_e$ supported near a unit-cell boundary. Thus lifting/applying corrections costs $O(1)$ per matched edge, and hence at most $O\left(L_xL_y\right)$ per instance in the worst case.
\end{enumerate}

Combining these costs yields the stated time complexity. The runtime is dominated by the matching step which takes $O(\text{MATCHING}(L_xL_y)$ time.
\end{proof}

\subsection{Cell-matching decoder threshold for the color code}
As proof of principle, we numerically demonstrate that a threshold exists for our cell-matching decoder on the 6.6.6 hexagonal color code~\cite{bombin2006topological,bombin2007exact},
which is described by the polynomial $(1 + x + xy, 1 + y + xy)$.
We only consider bit-flip $X$ errors in the code capacity setting and flush on $3 \times 1$ unit cells containing
a $Z$ check of each color: red, green, and blue.
The local flushing on each blue-type check applies up to two nearby $X$ operators,
which are chosen to guarantee that non-trivial syndromes caused by any one- or two-qubit error would be removed.
The matching graphs for the remaining red-type and blue-type syndromes correspond to TCs on triangular lattices.
These are decoded independently by MWPM and lifted back to the color code to yield two-qubit short strings connecting checks of the same color.
As shown in Fig.~\ref{fig:color_threshold}, the resulting threshold is $p_{\mathrm{th}}=7.31\pm 0.05\%$,
which is lower than but comparable to the 8-9\% thresholds of preexisting optimized matching-based decoders~\cite{lee2025color,delfosse2014decoding,sahay2022decoder} or the $10\%$ thresholds of matching-based decoders on 4.8.8 square-octagonal color codes~\cite{kubica2023efficient,benhemou2025minimising,liu2025correlated} which share the same optimal threshold of 10.9\%~\cite{katzgraber2009error}.
The good performance of the basic cell-matching decoder on this simple example encourages an optimized adaptation for BB codes in the next section.
\begin{figure}
    \centering
    \includegraphics[width=0.49\linewidth]{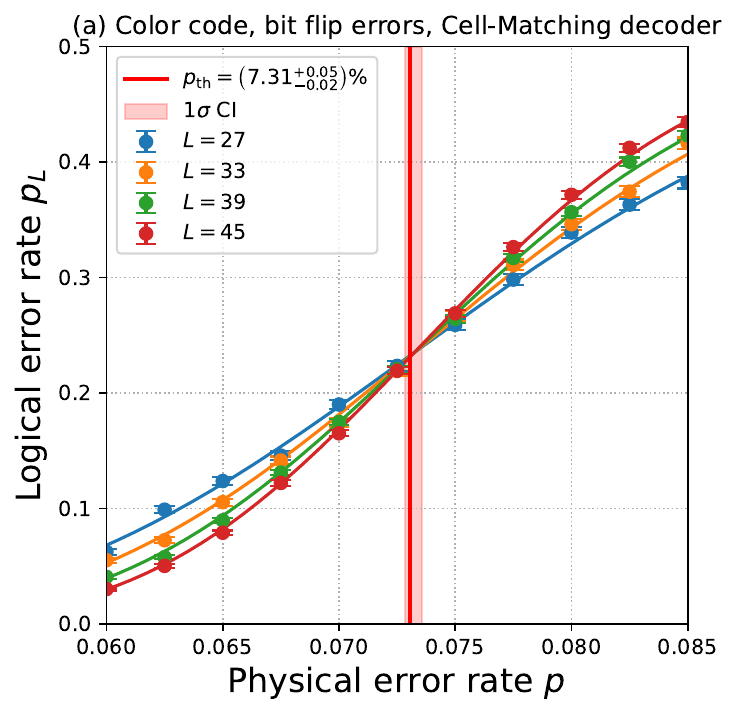}
    \includegraphics[width=0.49\linewidth]{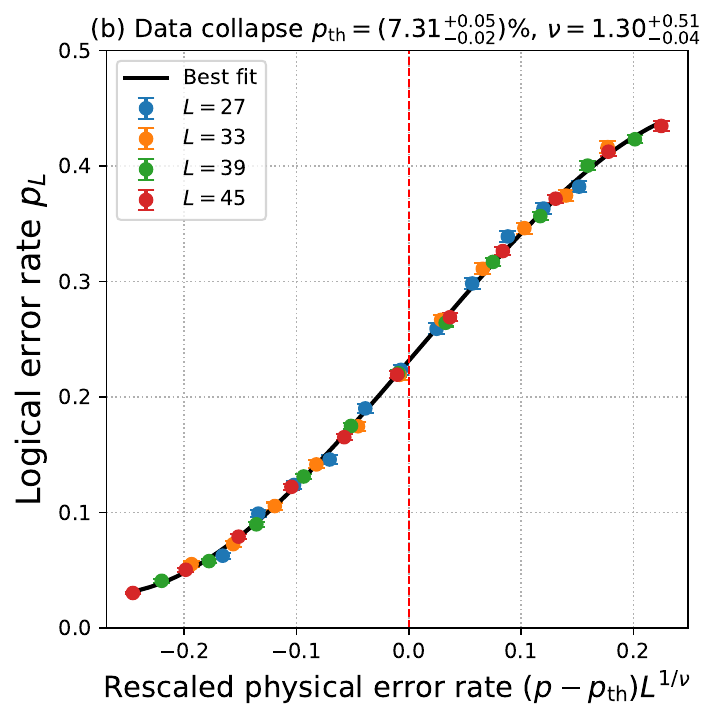}
    \caption{(a) The threshold of the hexagonal color code under i.i.d. bit flip errors using the cell-matching decoder is $p_{\mathrm{th}}=7.31\pm 0.05\%$. (b) Data collapse plot confirming the logical error rate fits the finite-size scaling ansatz $p_L(p, L) = F[(p - p_{\mathrm{th}}) L^{1/\nu}]$ for a cubic $F(x) = a_0 + a_1 x + a_2 x^2 + a_3 x^3$, which was used to estimate the threshold $p_{\mathrm{th}}$ with bootstrapped $1\sigma$ credible intervals.}
    \label{fig:color_threshold}
\end{figure}

\makeatletter
\newenvironment{breakablealgorithm}
  {%
    \begin{center}
    \refstepcounter{algorithm}%
    \hrule height .8pt depth 0pt \kern 2pt%
    \renewcommand{\caption}[2][\relax]{%
      {\raggedright\textbf{\ALG@name~\thealgorithm} ##2\par}%
      \ifx\relax##1\relax
        \addcontentsline{loa}{algorithm}{\protect\numberline{\thealgorithm}##2}%
      \else
        \addcontentsline{loa}{algorithm}{\protect\numberline{\thealgorithm}##1}%
      \fi
      \kern 2pt\hrule\kern 2pt
    }%
  }
  {%
    \kern 2pt\hrule\relax%
    \end{center}
  }
\makeatother

\section{Adapting the cell-matching decoder for small and intermediate code sizes}
\label{sec:gross}
In the previous section, we developed the theory behind the cell-matching decoder. In this section we develop and implement matching decoders using the same principles which are specialized to small/intermediate code sizes. We first provide exposition for why such specialized decoders are necessary using BB codes as an example. As discussed in Ref. \cite{chen2025anyon}, the family of BB codes with the gross code polynomial has coarse-graining parameter equal to 12. This causes two problems:

\begin{itemize}
    \item \emph{Problem A (small lattice).} For the standard gross code, the lattice is not meaningfully larger than the coarse-graining scale (one dimension equals $b$), so the unit-cell construction underlying cell-matching degenerates.  \\ 
    \item \emph{Problem B (intermediate lattice).} For the $24\times 24$ gross code built from the same polynomial, coarse-graining yields an auxiliary toric-code matching instance with distance $d_{\mathrm{TC}}=\frac{d}{b}=2$, where $d_{\mathrm{TC}}$ is the distance of the induced toric-code matching graph; this is insufficient to correct single-qubit errors. 
\end{itemize}

To address these problems, we develop two related adaptations of the cell-matching framework, one designed for small lattices (gross/two-gross) and one for intermediate lattices ($24\times 24$ gross). Although we present them in the context of these BB code examples, the construction applies to general small and intermediate \TwoDTLTI codes.  The BB codes we consider and their associated polynomials are given in Tab. \ref{tab:Bu_codes_considered}. \\ \indent We do not provide rigorous guarantees about the performance of these algorithms, as many elements in their construction are based on heuristics. The goal of this section is to provide evidence it is possible to adapt cell-matching framework for practically relevant TTI codes rather than to present a provably optimized method of performing this adaptation. In the following analysis, we will assume that all errors are purely $X$ type and all checks are purely $Z$ type, but the exact same structure works for $Z$ type errors and $X$ type checks. 

\begin{table}[ht]
\centering
\begin{tabular}{lccc}
\hline
Code & Parameters & $(l,m)$ & Polynomial \\
\hline
Standard Gross Code
& $[[144,12,12]]$
& $(12,6)$
& $\left(x^3+y+y^2,\; y^3+x+x^2\right)$ \\

two-gross Code
& $[[288,12,18]]$
& $(12,12)$
& $\left(x^3+y^2+y^7,\; y^3+x+x^2\right)$ \\

$24\times 24$ Gross Code
& $[[1152,16,24]]$
& $(24,24)$
& $\left(x^3+y+y^2,\; y^3+x+x^2\right)$ \\
\hline
\end{tabular}
\caption{The code parameters, polynomials, and lattice sizes of the BB codes considered in this section. These codes are presented in  \cite{bravyi2024high}.}
\label{tab:Bu_codes_considered}
\end{table}

\subsection{Shared machinery for small and intermediate matching decoders }
We first provide the shared machinery common to both decoders presented.  
\subsubsection{Identifying equivalence class of syndromes}
In the cell-matching decoder, the flushing map is used to obtain the equivalence class within each unit cell. In the small code setting, we will be interested in identifying the equivalence class in $\coker{H}$, with base ring $R=\mathbb{F}_2[x^\pm,y^\pm]/((x^\ell-1),(x^m-1))$, of each of the individual violated checks which together compose the observed syndrome. To achieve this, we first choose a basis for $\coker{H}$. The key observation to be made is that while the coarse-graining parameter $b$ provides the minimum translation size to preserve all elements of $\coker{H}$,  it does not forbid the existence of smaller translations which preserve \emph{some} elements of $\coker{H}$. We can express this concisely using the polynomial representation: for some syndrome $s_i$ and basis element $u_i$ of $\coker{H}$ satisfying $\overline{s_i}=u_i$, it may be possible that there exists some $b'_i<b$ such that $\overline{x^{b'_i}s_i}=\overline{s_i}=u_i$ and $\overline{y^{b'_i}s_i}=\overline{s_i}=u_i$, where the overline denotes quotient by $\text{im}(H)$. The interpretation of these equalities is that there exists error patterns which produce two copies of the syndrome pattern $s_i$ a distance $b'$ apart horizontally or vertically respectively on the lattice. In particular, 
\begin{definition}
We define $||u_i||_T$ to be the minimum nonzero natural number $b'$ such that 
\begin{equation}
\overline{x^{b'}s_i}=u_i \text{ and }\overline{y^{b'}s_i}=u_i.
\label{eq:shortstringequiv}
\end{equation}
where $\overline{s_i}=u_i$. We call horizontal $(e_{u_i})_H$ and vertical $(e_{u_i})_V$ error patterns which realize Eq. \ref{eq:shortstringequiv} \emph{short strings} associated to $u_i$. 
\label{def:translation_size}
\end{definition}
\indent We will want to choose the basis for  $\coker{H}$ such that as many basis elements as possible satisfy $||u_i||_T < b$, as this will lead to a higher effective distance for the matching graphs which we will ultimately construct. The canonical syndrome representation of the basis vectors used for the decoders presented in this section and their corresponding $||u_i||_T$ is given in Tab. \ref{tab:bb_basis_annihilators}. In the following discussion, we assume for notational convenience that there exists some $i$ such that $||u_i||_T$  divides $ ||u_j||_T$ for all $j \in [r]$ and the lattice dimensions $L_x,L_y$. This holds for the codes under consideration as seen by observing values in the column labeled $||u_i||_T$ in Tab.~\ref{tab:bb_basis_annihilators}. For the weight-3 basis elements of $\coker{H}$ given in Tab. \ref{tab:bb_basis_annihilators}, explicit short strings that transport these syndrome patterns by $x^3$ are shown in Fig. \ref{fig:3x3_shortstrings}. A procedure for choosing a desirable basis and for finding associated short strings is given in Appendix~\ref{app:computing-short-strings}. 

With the basis now fixed, the following lemma tells us that one can compute the equivalence class of each single check on the lattice.

\begin{lemma}[Basis decomposition]
Let $r\coloneqq \dim(\coker\,H)$ and $n_X$ be the number of $X$ checks. Define $N$ as an $r \times n_X$ matrix with entries in $\mathbb{F}_2$ and rows equal to a given basis for $\ker(H^T)=\im(H)^\perp$ and $B$ as the $n_X \times r $ matrix which has columns of our chosen basis for $\coker\,H$. Finding the equivalence class $u_s \in \mathbb{F}^r_2 \simeq \coker{H}$ of a given syndrome $s \in \mathbb{F}^{n_X}_2$ in our basis is equivalent to solving the linear system 
\begin{equation}
    NB(u_s)=Ns,
\end{equation}
and such a solution $u_s$ always exists and is unique. 
\end{lemma}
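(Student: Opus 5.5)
The argument is pure finite-dimensional linear algebra over $\F_2$. Throughout, I view $H$ as the $n_X \times n$ matrix $H_X$ whose image $\im H\subseteq \F_2^{n_X}$ is the space of physical syndromes. The plan is to identify $\coker H = \F_2^{n_X}/\im H$ with the dual picture via the annihilator. The key fact I will use is that $(\im H)^{\perp} = \ker(H^T)$, so a vector $v$ lies in $\im H$ if and only if $Nv = 0$. Here the rows of $N$ form a basis of $\ker(H^T)$; for an $\F_2$-subspace $W$ we have $(W^\perp)^\perp = W$, even though $W\cap W^\perp$ may be nonzero.

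This gives the map $\bar N:\coker H\to \F_2^{\dim\ker H^T}$, $[v]\mapsto Nv$. It is well defined, since $N$ kills $\im H$, and injective by the fact above. Comparing dimensions,
\[
\dim\coker H = n_X - \operatorname{rank} H = \dim\ker H^T,
\]
so $\bar N$ is an isomorphism. In particular the number of rows of $N$ equals $r$, which justifies the stated size $r\times n_X$.

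Next I check that $NB$ is invertible. The columns $b_1,\dots,b_r$ of $B$ are representatives of a basis of $\coker H$, so the classes $[b_j]$ are linearly independent. Suppose $NBu = 0$. Then $N(Bu)=0$, so $Bu\in\im H$, which means $\sum_j u_j[b_j]=0$ in $\coker H$. Independence forces $u=0$. Hence the $r\times r$ matrix $NB$ is injective, and therefore invertible.

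Existence and uniqueness now follow. The unique solution is $u_s = (NB)^{-1}Ns$. It remains to show this solution is the coordinate vector of $[s]$ in the chosen basis. Write $[s]=\sum_j c_j[b_j]$; then $s - Bc\in\im H$, so $Ns = NBc$. By uniqueness, $c = u_s$. Conversely, any solution $u$ gives $N(s-Bu)=0$, hence $s-Bu\in\im H$, which says exactly that $[s]=\sum_j u_j[b_j]$. So solving the linear system is equivalent to finding the equivalence class.

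I expect no real obstacle in this argument. The one point to state carefully is the double-annihilator identity $\ker(H^T)^\perp=\im H$ over $\F_2$. I would justify it by dimension count: $\im H \subseteq \ker(H^T)^\perp$ holds trivially, and both spaces have dimension $\operatorname{rank} H$, so they are equal. This works without any appeal to inner-product positivity, which fails over $\F_2$. I would also note that $n_X$ and $H$ refer to the finite quotient ring $R$, so every space involved is finite-dimensional.
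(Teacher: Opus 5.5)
Your proof is correct and follows essentially the same route as the paper's: the dimension count $\dim\ker H^T = n_X-\operatorname{rank}H = r$, injectivity of $[s]\mapsto Ns$ on $\coker H$ via $(\ker H^T)^\perp=\im H$, and invertibility of $NB$ from the linear independence of the basis classes. You are more careful than the paper, which uses the double-annihilator identity over $\F_2$ without justification and does not spell out the converse direction of the equivalence.
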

\begin{proof}
First we show why an  r-dimensional basis in $F_2^{n_X}$ exists for $\ker(H^T)=\im(H)^\perp$. This follows from $H$ being a $n  \times n_X$ matrix such that
\begin{equation}
    \text{dim}(\text{ker}(H^T))=\text{dim}(\mathbb{F}_2^{n_X})- \text{rank}(H^T)=\text{dim}(\mathbb{F}_2^{n_X})-\text{rank}(H)=\dim(\coker\,H)=r.
\end{equation}
Now we prove the actual statement. Assume that $Ns_1=Ns_2$ but $\overline{s_1}\neq \overline{s_2}$. then $N(s_1-s_2)=0$. This implies that $s_1-s_2$ is orthogonal to the rows of $N$ which is the basis for $\im(H)^\perp$. Thus, $s_1-s_2\in \text{im}(H)$, so there must exist some error $e$ such that $He=(s_1-s_2)$, so $\overline{s_1-s_2}=\overline{s_1}-\overline{s_2}=0$, which is a contradiction. Thus, if $Ns=NB(u_s)$, $\overline{s}=\overline{B(u_s)}=u_s$  by construction of $B$. Furthermore, $B$ is full rank, and the kernel of $N$ is disjoint from the image of $B$, so $NB$ is invertible and the linear system is solvable. Uniqueness is manifest.
\end{proof}
 
\begin{table}[H]
\centering
\renewcommand{\arraystretch}{1.2}
\begin{tabular}{|c|c|c|c|c|}
\hline
Code & Basis label $i$ & Basis element & $\lVert u_i\rVert_T$ & Annihilating elements \\ \hline

\multirow{8}{*}{$[[1152,16,24]]$}
& $1$ & $x$ & $12$ & $x^{12}-1,\; y^{12}-1$ \\ \cline{2-5}
& $2$ & $x^2$ & $12$ & $x^{12}-1,\; y^{12}-1$ \\ \cline{2-5}
& $3$ & $x + x^2y$ & $6$ & $x^{6}-1,\; y^{6}-1$ \\ \cline{2-5}
& $4$ & $xy + x^2y^2$ & $6$ & $x^{6}-1,\; y^{6}-1$ \\ \cline{2-5}
& $5$ & $x + y + xy$ & $3$ & $x^{3}-1,\; y^{3}-1$ \\ \cline{2-5}
& $6$ & $x^2 + xy + x^2y$ & $3$ & $x^{3}-1,\; y^{3}-1$ \\ \cline{2-5}
& $7$ & $y^2 + xy + xy^2$ & $3$ & $x^{3}-1,\; y^{3}-1$ \\ \cline{2-5}
& $8$ & $x^2y + xy^2 + x^2y^2$ & $3$ & $x^{3}-1,\; y^{3}-1$ \\ \hline

\multirow{6}{*}{$[[144,12,12]], [[288,12,18]]$}
& $1$ & $x + x^2y$ & $6$ & $x^{6}-1,\; y^{6}-1$ \\ \cline{2-5}
& $2$ & $x^2y + x^2y^2$ & $6$ & $x^{6}-1,\; y^{6}-1$ \\ \cline{2-5}
& $3$ & $xy + x^2y^2$ & $6$ & $x^{6}-1,\; y^{6}-1$ \\ \cline{2-5}
& $4$ & $xy + xy^2$ & $6$ & $x^{6}-1,\; y^{6}-1$ \\ \cline{2-5}
& $5$ & $x + y + xy$ & $3$ & $x^{3}-1,\; y^{3}-1$ \\ \cline{2-5}
& $6$ & $x^2y + xy^2 + x^2y^2$ & $3$ & $x^{3}-1,\; y^{3}-1$ \\ \hline

\end{tabular}
\caption{Basis representatives, translation scales $\lVert u_i\rVert_T$, and annihilating elements in $\operatorname{coker}(H)$ for  BB codes under consideration. While the gross and two-gross code have the same basis for $\coker{H}$, the errors which realize the short strings are different.}
\label{tab:bb_basis_annihilators}
\end{table}

\begin{figure}[h]
\centering
\definecolor{siteblue}{RGB}{52,120,190}
\definecolor{stencilred}{RGB}{210,55,55}
\definecolor{stencilgreen}{RGB}{40,160,80}
\definecolor{stencilpurple}{RGB}{130,60,180}
\definecolor{coarsegold}{RGB}{200,150,20}
\definecolor{gridgray}{RGB}{190,200,210}
\definecolor{panelbg}{RGB}{248,250,253}
\definecolor{headerbg}{RGB}{230,238,250}
\definecolor{cellbg}{RGB}{235,245,255}

\tikzset{
  panelbox/.style={rectangle, rounded corners=5pt, draw=siteblue!40, fill=panelbg, thick},
  headerbox/.style={rectangle, rounded corners=3pt, fill=headerbg, draw=siteblue!60, thick, inner sep=4pt},
  syndromeq/.style={circle, draw=stencilred, fill=stencilred!30, thick, minimum size=7pt, inner sep=0pt},
  qerror/.style={siteblue, line width=1.2pt},
}

\providecommand{\Graydt}[2]{}
\providecommand{\qHxAt}[2]{%
  \draw[qerror] ({#1-0.18*\s},{#2-0.5*\s-0.18*\s}) -- ({#1+0.18*\s},{#2-0.5*\s+0.18*\s});%
  \draw[qerror] ({#1-0.18*\s},{#2-0.5*\s+0.18*\s}) -- ({#1+0.18*\s},{#2-0.5*\s-0.18*\s});%
}
\providecommand{\qVxAt}[2]{%
  \draw[qerror] ({#1-0.5*\s-0.18*\s},{#2-0.18*\s}) -- ({#1-0.5*\s+0.18*\s},{#2+0.18*\s});%
  \draw[qerror] ({#1-0.5*\s-0.18*\s},{#2+0.18*\s}) -- ({#1-0.5*\s+0.18*\s},{#2-0.18*\s});%
}
\providecommand{\ERR}[2]{\qHxAt{#1}{#2}}
\providecommand{\BOTH}[2]{\qHxAt{#1}{#2} \qVxAt{#1}{#2}}
\providecommand{\SYN}[2]{\node[syndromeq] at (#1,#2) {};}
\providecommand{\SYNERR}[2]{\SYN{#1}{#2} \qHxAt{#1}{#2}}
\providecommand{\SYNBOTH}[2]{\SYN{#1}{#2} \qHxAt{#1}{#2} \qVxAt{#1}{#2}}

\begin{tikzpicture}[font=\small,scale=1.5, transform shape]

\def\s{0.54}      
\def\hg{\s}      
\def\vg{\s}       
\def\lm{2.05}     
\def\bm{0.82}     
\def\pd{0.00}     

\pgfmathsetmacro\PW{2*\s + 2*\pd}
\pgfmathsetmacro\PH{2*\s + 2*\pd}
\pgfmathsetmacro\colstep{\PW + \hg}
\pgfmathsetmacro\rowstep{\PH + \vg}
\pgfmathsetmacro\gridW{4*\colstep - \hg}
\pgfmathsetmacro\gridH{2*\rowstep - \vg}

\pgfmathsetmacro\padX{0.4}
\pgfmathsetmacro\padY{0.4}

\pgfmathsetmacro\gridCx{\lm + 0.5*\gridW}
\pgfmathsetmacro\gridCy{\bm + 0.5*\gridH}

\pgfmathsetlengthmacro\panelWlen{(\gridW + 2*\padX)*1cm}
\pgfmathsetlengthmacro\panelHlen{(\gridH + 2*\padY)*1cm}

\node[panelbox, minimum width=\panelWlen, minimum height=\panelHlen]
  at (\gridCx,\gridCy) {};

\pgfmathsetmacro\xL{\lm - 0.5*\s}
\pgfmathsetmacro\yB{\bm - 0.5*\s}
\pgfmathsetmacro\xR{\xL + 12*\s}
\pgfmathsetmacro\yT{\yB + 6*\s}

\fill[cellbg, opacity=0.30] (\xL,\yB) rectangle (\xR,\yT);

\foreach \i in {0,...,12} {
  \draw[gridgray, very thin] ({\xL+\i*\s},\yB) -- ({\xL+\i*\s},\yT);
}
\foreach \j in {0,...,6} {
  \draw[gridgray, very thin] (\xL,{\yB+\j*\s}) -- (\xR,{\yB+\j*\s});
}

\foreach \cp in {0,1} {
  \foreach \rp in {0,1} {
    \pgfmathsetmacro\bx{\lm + 2*\cp*\colstep - 0.5*\s}
    \pgfmathsetmacro\by{\bm + \rp*\rowstep - 0.5*\s}
    \pgfmathsetmacro\bxr{\bx + 6*\s}
    \pgfmathsetmacro\byt{\by + 3*\s}
    \draw[gridgray, dashed] (\bx, \by) rectangle (\bxr, \byt);
  }
}


\pgfmathsetmacro\sxAA{\lm + 0*\colstep + \pd + 0*\s}
\pgfmathsetmacro\sxAB{\lm + 0*\colstep + \pd + 1*\s}
\pgfmathsetmacro\sxAC{\lm + 0*\colstep + \pd + 2*\s}
\pgfmathsetmacro\sxBA{\lm + 1*\colstep + \pd + 0*\s}
\pgfmathsetmacro\sxBB{\lm + 1*\colstep + \pd + 1*\s}
\pgfmathsetmacro\sxBC{\lm + 1*\colstep + \pd + 2*\s}
\pgfmathsetmacro\sxCA{\lm + 2*\colstep + \pd + 0*\s}
\pgfmathsetmacro\sxCB{\lm + 2*\colstep + \pd + 1*\s}
\pgfmathsetmacro\sxCC{\lm + 2*\colstep + \pd + 2*\s}
\pgfmathsetmacro\sxDA{\lm + 3*\colstep + \pd + 0*\s}
\pgfmathsetmacro\sxDB{\lm + 3*\colstep + \pd + 1*\s}
\pgfmathsetmacro\sxDC{\lm + 3*\colstep + \pd + 2*\s}

\pgfmathsetmacro\syT{\bm + 1*\rowstep + \pd + 2*\s}
\pgfmathsetmacro\syM{\bm + 1*\rowstep + \pd + 1*\s}
\pgfmathsetmacro\syB{\bm + 1*\rowstep + \pd + 0*\s}
\pgfmathsetmacro\syLT{\bm + 0*\rowstep + \pd + 2*\s}
\pgfmathsetmacro\syLM{\bm + 0*\rowstep + \pd + 1*\s}
\pgfmathsetmacro\syLB{\bm + 0*\rowstep + \pd + 0*\s}


\SYNERR{\sxAA}{\syT}    \SYNERR{\sxAB}{\syT}      \Graydt{\sxAC}{\syT}
\BOTH{\sxAA}{\syM}      \SYNBOTH{\sxAB}{\syM}     \Graydt{\sxAC}{\syM}
\Graydt{\sxAA}{\syB}    \Graydt{\sxAB}{\syB}      \Graydt{\sxAC}{\syB}

\SYN{\sxBA}{\syT}    \SYN{\sxBB}{\syT}         \Graydt{\sxBC}{\syT}
\Graydt{\sxBA}{\syM}    \SYN{\sxBB}{\syM}      \Graydt{\sxBC}{\syM}
\Graydt{\sxBA}{\syB}    \Graydt{\sxBB}{\syB}      \Graydt{\sxBC}{\syB}

\begin{scope}[xshift={-\s cm}]

\Graydt{\sxCA}{\syT}    \SYNERR{\sxCB}{\syT}      \SYNERR{\sxCC}{\syT}
\Graydt{\sxCA}{\syM}    \BOTH{\sxCB}{\syM}        \SYNBOTH{\sxCC}{\syM}
\Graydt{\sxCA}{\syB}    \Graydt{\sxCB}{\syB}      \Graydt{\sxCC}{\syB}

\Graydt{\sxDA}{\syT}    \SYN{\sxDB}{\syT}         \SYN{\sxDC}{\syT}
\Graydt{\sxDA}{\syM}    \Graydt{\sxDB}{\syM}      \SYN{\sxDC}{\syM}
\Graydt{\sxDA}{\syB}    \Graydt{\sxDB}{\syB}      \Graydt{\sxDC}{\syB}

\end{scope}


\Graydt{\sxAA}{\syLT}   \Graydt{\sxAB}{\syLT}     \Graydt{\sxAC}{\syLT}
\SYNERR{\sxAA}{\syLM}   \SYNERR{\sxAB}{\syLM}     \Graydt{\sxAC}{\syLM}
\BOTH{\sxAA}{\syLB}     \SYNBOTH{\sxAB}{\syLB}    \Graydt{\sxAC}{\syLB}

\Graydt{\sxBA}{\syLT}   \Graydt{\sxBB}{\syLT}     \Graydt{\sxBC}{\syLT}
\SYN{\sxBA}{\syLM}      \SYN{\sxBB}{\syLM}        \Graydt{\sxBC}{\syLM}
\Graydt{\sxBA}{\syLB}   \SYN{\sxBB}{\syLB}        \Graydt{\sxBC}{\syLB}

\begin{scope}[xshift={-\s cm}]

\Graydt{\sxCA}{\syLT}   \Graydt{\sxCB}{\syLT}     \Graydt{\sxCC}{\syLT}
\Graydt{\sxCA}{\syLM}   \SYNERR{\sxCB}{\syLM}     \SYNERR{\sxCC}{\syLM}
\Graydt{\sxCA}{\syLB}     \BOTH{\sxCB}{\syLB}    \SYNBOTH{\sxCC}{\syLB}

\Graydt{\sxDA}{\syLT}   \Graydt{\sxDB}{\syLT}     \Graydt{\sxDC}{\syLT}
\Graydt{\sxDA}{\syLM}      \SYN{\sxDB}{\syLM}        \SYN{\sxDC}{\syLM}
\Graydt{\sxDA}{\syLB}   \Graydt{\sxDB}{\syLB}        \SYN{\sxDC}{\syLB}

\end{scope}

\pgfmathsetmacro\ly{\yB - 0.45}
\pgfmathsetmacro\lsep{2.5}

\pgfmathsetmacro\mxL{\gridCx - \lsep}
\draw[qerror] ({\mxL-0.14},{\ly-0.14}) -- ({\mxL+0.14},{\ly+0.14});
\draw[qerror] ({\mxL-0.14},{\ly+0.14}) -- ({\mxL+0.14},{\ly-0.14});

\pgfmathsetmacro\mxR{\gridCx + 0.55}
\node[syndromeq] at (\mxR,\ly) {};

\begin{scope}[font=\scriptsize]
\node[anchor=base west, inner sep=0pt, outer sep=0pt, text depth=0pt] 
  at ({\mxL+0.32},\ly - 0.1) {$X$ errors};
\node[anchor=base west, inner sep=0pt, outer sep=0pt, text depth=0pt] 
  at ({\mxR+0.32},\ly - 0.1) {violated checks};
\end{scope}

\pgfresetboundingbox
\path[use as bounding box] (\xL,\ly-0.18) rectangle (\xR,\yT);

\end{tikzpicture}
\vspace{-20 pt}
\caption{Examples of four short string  $X$ error patterns (blue X markers) for the gross code polynomial which respectively translate each of the weight-3 check-violation type patterns (red circles) presented in Tab.~\ref{tab:bb_basis_annihilators} horizontally by distance three, shown on a $12 \times 6$ lattice with origin in the bottom left corner. Each error pattern occupies one $6\times 3$ quadrant of the $12 \times 6$ lattice presented.}
\label{fig:3x3_shortstrings}
\end{figure}

As a consequence of the lemma, for each check site $c$ on the lattice, we can assign a vector $u_c$ in $\mathbb{F}^r_2$ corresponding to the equivalence class of that check in $\coker\,H$. To reiterate the meaning of $u_s$,  for arbitrary syndrome $s$ understood as an element of $R$, letting $(u_s)_i$ give the $i$th component of $u_s$, we have that in $\coker{H}$:
\begin{equation}
    \overline{s}+\sum_{i=1}^r (u_s)_i u_i=0.
    \label{eq:equiv_coker}
\end{equation}
 Before any decoding begins, for each check $c$ on the lattice, we compute a dictionary which gives $u_c$ for each check. Next, we partition the checks in the lattice into unit cells of size $\text{min}_i\{||u_i||_T \}\times \text{min}_i\{||u_i||_T\}$, which induces a partitioning of the lattice into cells of size $||u_j||_T\times ||u_j||_T$ for $j \in \{1,\dots,r\}$  by the division property. 

\begin{definition}
Let $a \in \{0,\dots,\frac{\ell}{||u_i||_T}-1\}$ and $d \in \{0,\dots,\frac{m}{||u_i||_T}-1\}$ be the coarse grained coordinates.  For a given basis element $u_i$ and single check $c$ on the lattice understood as a polynomial, let  $C_{||u_i||_T}(c)=x^{a||u_i||_T}y^{d||u_i||_T}$ be the monomial which specifies the coordinates of the corner closest to the origin of the $||u_i||_T \times ||u_i||_T$ unit cell that this check occupies.   
\end{definition}
Using Definition \ref{def:translation_size} and Eq. \ref{eq:equiv_coker}, we have that \begin{equation}
   \overline{c}+\sum_{i=1}^r (u_c)_i \overline{C_{||u_i||_T}(c)}u_i=\overline{c}+\sum_{i=1}^r (u_c)_i \overline{x^{a||u_i||_T}y^{d||u_i||_T}}u_i=\overline{c}+\sum_{i=1}^r (u_c)_i u_i=0.
    \label{eq:decomp}
\end{equation}
holds as an equality in $\coker{H}$. As such, there exists an error $e_c$ such that 
\begin{equation}
    He_c=c+\sum_{i=1}^r (u_c)_i C_{||u_i||_T}(c)s_i, 
    \label{eq:rewrite}
\end{equation}
where $s_i$ is the canonical syndrome pattern of $u_i$. As a result of this equation, we can interpret $e_c$ to be a ``rewrite" operator which decomposes a check $c$ into $\coker{H}$ basis syndrome representative which are most local to $c$.   An example of a single check being decomposed into its translated canonical syndrome representatives of the basis elements of $\coker{H}$ is given in Fig.~\ref{fig:basis_decomp} for the $24 \times 24$ gross code.

\begin{figure}[h]

\centering
\definecolor{siteblue}{RGB}{52,120,190}
\definecolor{stencilred}{RGB}{210,55,55}
\definecolor{stencilgreen}{RGB}{130,60,180}
\definecolor{stencilpurple}{RGB}{40,160,80}
\definecolor{coarsegold}{RGB}{200,150,20}
\definecolor{gridgray}{RGB}{190,200,210}
\definecolor{panelbg}{RGB}{248,250,253}
\definecolor{headerbg}{RGB}{230,238,250}
\definecolor{cellbg}{RGB}{235,245,255}

\tikzset{
  panelbox/.style={rectangle, rounded corners=5pt, draw=siteblue!40, fill=panelbg, thick},
  syndromeq/.style={star, star points=5, star point ratio=2.25, draw=stencilred, fill=stencilred!30, thick, minimum size=8pt, inner sep=0pt},
  basis1q/.style={circle, draw=siteblue, fill=siteblue!25, thick, minimum size=7pt, inner sep=0pt},
  basis4q/.style={circle, draw=coarsegold!85!black, fill=coarsegold!25, thick, minimum size=7pt, inner sep=0pt},
  basis7q/.style={circle, draw=stencilgreen, fill=stencilgreen!25, thick, minimum size=7pt, inner sep=0pt},
  qerror/.style={siteblue, line width=1.2pt},
}

\newcommand{\qHx}[2]{%
  \draw[qerror] (#1, {#2+0.5}) -- ({#1+1}, {#2+0.5});%
  \draw[qerror] ({#1+0.5-0.18}, {#2+0.5-0.18}) -- ({#1+0.5+0.18}, {#2+0.5+0.18});%
  \draw[qerror] ({#1+0.5-0.18}, {#2+0.5+0.18}) -- ({#1+0.5+0.18}, {#2+0.5-0.18});%
}
\newcommand{\qVx}[2]{%
  \draw[qerror] ({#1+0.5}, #2) -- ({#1+0.5}, {#2+1});%
  \draw[qerror] ({#1+0.5-0.18}, {#2+0.5-0.18}) -- ({#1+0.5+0.18}, {#2+0.5+0.18});%
  \draw[qerror] ({#1+0.5-0.18}, {#2+0.5+0.18}) -- ({#1+0.5+0.18}, {#2+0.5-0.18});%
}

\resizebox{0.70\linewidth}{!}{%
\begin{tikzpicture}[x=0.35cm, y=0.35cm, yscale=-1, font=\small]


  \fill[panelbg] (-1.0,-1.0) rectangle (25.0,27.7);
  \draw[siteblue!40, line width=1.0pt, rounded corners=5pt] (-1.0,-1.0) rectangle (25.0,27.7);

  \foreach \gc in {0,...,7} {
    \foreach \gr in {0,...,7} {
      \fill[cellbg, opacity=0.30] (3*\gc,3*\gr) rectangle (3*\gc+3,3*\gr+3);
      \draw[siteblue!18, line width=0.35pt] (3*\gc,3*\gr) rectangle (3*\gc+3,3*\gr+3);
    }
  }

  \foreach \i in {0,...,24} {
    \draw[gridgray, very thin] (\i,0) -- (\i,24);
    \draw[gridgray, very thin] (0,\i) -- (24,\i);
  }

  \foreach \GC in {0,...,3} {
    \foreach \GR in {0,...,3} {
      \draw[coarsegold!85!black, line width=0.9pt]
        (6*\GC,6*\GR) rectangle (6*\GC+6,6*\GR+6);
    }
  }


  \draw[siteblue!55, line width=1.1pt] (0,0) rectangle (24,24);
  \draw[siteblue!55, line width=1.1pt] (12,0) -- (12,24);
  \draw[siteblue!55, line width=1.1pt] (0,12) -- (24,12);
  \draw[siteblue!75!black, line width=1.4pt] (0,12) rectangle (12,24);

  \draw[coarsegold!85!black, line width=1.4pt] (6,12) rectangle (12,18);

  \draw[stencilgreen, line width=1.4pt] (6,12) rectangle (9,15);
  \node[basis1q] at (1.5,23.5) {};

  \node[basis4q] at (7.5,16.5) {};
  \node[basis4q] at (8.5,15.5) {};

  \node[basis7q] at (7.5,13.5) {};
  \node[basis7q] at (6.5,12.5) {};
  \node[basis7q] at (7.5,12.5) {};

  \node[syndromeq] at (7.5,14.5) {};

  \node[basis1q, minimum size=6pt] at (0.6,25.3) {};
  \node[anchor=west, font=\normalsize] at (1.2,25.3) {basis 1: $x$};

  \node[basis7q, minimum size=6pt] at (12.0,25.3) {};
  \node[anchor=west, font=\normalsize] at (12.6,25.3) {basis 7: $y^2+yx+xy^2$};

  \node[basis4q, minimum size=6pt] at (0.6,26.6) {};
  \node[anchor=west, font=\normalsize] at (1.2,26.6) {basis 4: $xy+x^2y^2$};

  \node[syndromeq, minimum size=6pt] at (12.0,26.6) {};
  \node[anchor=west, font=\normalsize] at (12.6,26.6) {single syndrome $s$};

\end{tikzpicture}%
}
\caption{%
  Basis decomposition of a single violated check $s$ (red star).
  The full $24\times 24$ grid of check sites is shown; orange-brown borders
  outline $6\times 6$ coarse cells.
  The colored circles indicate the translated canonical syndrome
  representatives used to decompose~$s$: blue (basis~1~$=x$),
  golden (basis~4~$=xy+x^2y^2$), and purple (basis~7~$=y^2+yx+xy^2$).
  The bold golden and blue rectangles highlight
  the $6\!\times\!6$ and $12\!\times\!12$  coarse cells respectively containing~$s$.%
}
\label{fig:basis_decomp}
\end{figure}

\subsubsection{Defining the matching graphs}
Once we have constructed the dictionary which maps a single check $c$ to the basis decomposition vector $u_c$, we can now define the graphs we will use for the matching decoding process given an observed syndrome $s_{obs}$. For $i \in \{1,\dots, r\}$, define the graph $\mathcal{G}_i=(V_i,E_i)$ to be the complete graph with vertex set $V_i=\{c \ | c \in \text{supp}(s_{obs}) \text{ and } (u_c)_i=1 \}$. An illustration of the matching graphs is given in Fig. \ref{fig:matching_graphs}. 
\begin{lemma}
$|V_i| \text{ mod } 2=0$ for all $i \in \{1,\dots, r\}$  meaning it is possible to obtain a perfect matching on each $\mathcal{G}_i$. 
\end{lemma}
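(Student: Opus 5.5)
We need to show that, for each $i$, the number of violated checks $c$ with $(u_c)_i=1$ is even. The plan is to use linearity of the basis-decomposition map together with the fact that the observed syndrome is physical, i.e.\ lies in $\im H$ (we are in the code-capacity setting with no measurement errors).

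First, note that the map $s\mapsto u_s$ from the basis decomposition lemma is $\F_2$-linear: $u_s$ is the unique solution of $NB\,u_s=Ns$, so $u_s=(NB)^{-1}Ns$. Decompose the observed syndrome into single checks, $s_{obs}=\sum_{c\in\mathrm{supp}(s_{obs})}c$. Linearity then gives
\begin{equation}
u_{s_{obs}}=\sum_{c\in\mathrm{supp}(s_{obs})}u_c .
\end{equation}
Taking the $i$th component,
\begin{equation}
(u_{s_{obs}})_i=\sum_{c\in\mathrm{supp}(s_{obs})}(u_c)_i=|V_i|\bmod 2 .
\end{equation}

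Second, since $s_{obs}=He$ for the actual error $e$, we have $\overline{s_{obs}}=0$ in $\coker H$. Equivalently, $Ns_{obs}=0$, because the rows of $N$ span $\ker(H^T)=\im(H)^\perp$. Uniqueness in the basis decomposition lemma then forces $u_{s_{obs}}=0$. Hence every component vanishes, so $|V_i|$ is even for all $i\in[r]$.

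Finally, an even number of vertices in a complete graph always admits a perfect matching, and the empty graph is trivially perfectly matched. The only point that needs care is linearity, and it follows directly from the explicit linear-system characterisation.

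One could instead try to argue through the translated representatives $C_{\|u_i\|_T}(c)s_i$ of Eq.~\eqref{eq:rewrite}, but that is unnecessary. The parity statement concerns only the coefficient vectors $u_c$, not the cell positions.
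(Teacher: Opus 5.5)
Your proof is correct and is essentially the paper's argument. The paper sums the identities $\overline{c}=\sum_{i}(u_c)_i u_i$ over $c\in\mathrm{supp}(s_{obs})$, uses $\overline{s_{obs}}=0$, and concludes by linear independence of the $u_i$ in $\coker H$; this is the same linearity-plus-uniqueness argument as yours, phrased abstractly rather than through the explicit map $s\mapsto (NB)^{-1}Ns$, and as you note, the translated representatives play no role in it.
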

\begin{proof}
    First, observe that  $\overline{s_{obs}}=0$ because $s_{obs}$ is a physically achievable syndrome. Using this fact and Eq. \ref{eq:decomp} we have that 
\begin{equation}
\sum_{i=1}^r\left(\sum_{c \in \text{supp}(s_{obs})} (u_c)_i\right) u_i=\sum_{c \in \text{supp}(s_{obs})}\sum_{i=1}^r (u_c)_i u_i=\sum_{c \in \text{supp}(s_{obs})} \overline{c}=\overline{\sum_{c \in \text{supp}(s_{obs})} c}=\overline{s_{obs}}=0,
\end{equation}
which by the independence of the $u_i$ implies that for all $i \in \{1,\dots, r\}$, we have that \begin{equation} 
\sum_{c \in s_{obs}} (u_c)_i=0.
\end{equation}
This completes the proof. 
\end{proof}
Now, we must define the weights for the edges on each matching graph. To understand the intuition behind the construction of the weights, we need the following definition,
\begin{definition}
Assume that the observed syndrome $s_{obs}$ was produced by an error pattern $e$, and let
$c_1, c_2 \in \operatorname{supp}(s_{obs}).$ We say that $c_1$ and $c_2$ are \emph{error-cluster-equivalent}, and write
$c_1 \sim_e c_2$,
if there does not exist an error pattern $e'$ with
\begin{equation}
\operatorname{supp}(e') \subset \operatorname{supp}(e)
\end{equation}
such that
\begin{equation}
\operatorname{supp}(He') \subseteq \operatorname{supp}(s_{obs}),
\end{equation}
and exactly one of the two statements
\begin{equation}
c_1 \in \operatorname{supp}(He'), \qquad c_2 \in \operatorname{supp}(He')
\end{equation}
holds. Equivalently, $c_1 \sim_e c_2$ means that no sub-error of $e$, whose syndrome remains entirely inside the observed syndrome, can separate $c_1$ from $c_2$. This defines an equivalence relation on the individual checks in $\operatorname{supp}(s_{obs})$. Now suppose that $c_1 \sim_e c_2$. An \emph{error cluster} associated with $c_1$ and $c_2$ is an error pattern $e'$ such that
\begin{equation}
c_1, c_2 \in \operatorname{supp}(He'), \qquad c,c' \in \operatorname{supp}(He') \implies c \sim_e c',
\end{equation}
and $e'$ contains no proper subset whose syndrome is zero. Since error-cluster-equivalence is defined only for checks in $\operatorname{supp}(s_{obs})$, every error cluster satisfies
\begin{equation}
\operatorname{supp}(He') \subseteq \operatorname{supp}(s_{obs}).
\end{equation}
\end{definition}

\begin{figure}[H]
\centering
\definecolor{siteblue}{RGB}{52,120,190}
\definecolor{stencilred}{RGB}{210,55,55}
\definecolor{coarsegold}{RGB}{200,150,20}
\definecolor{gridgray}{RGB}{190,200,210}
\definecolor{panelbg}{RGB}{248,250,253}
\definecolor{cellbg}{RGB}{235,245,255}
\definecolor{blackdot}{RGB}{25,25,25}

\tikzset{
  checkobs/.style={circle, draw=blackdot, fill=blackdot!20, line width=0.9pt, minimum size=7.5pt, inner sep=0pt},
  checkactive/.style={circle, draw=stencilred, fill=stencilred!25, line width=0.9pt, minimum size=7.0pt, inner sep=0pt},
  checkinactive/.style={circle, draw=gridgray!90!black, fill=gridgray!25, line width=0.9pt, minimum size=6.2pt, inner sep=0pt},
}

\newcommand{\DrawLatticeTwelveBySix}{%
  \fill[panelbg] (-0.85,-0.55) rectangle (12.85,6.55);
  \draw[siteblue!40, line width=1.0pt, rounded corners=4pt] (-0.85,-0.55) rectangle (12.85,6.55);

  \fill[cellbg, opacity=0.30] (0,0) rectangle (12,6);

  \foreach \i in {0,...,12} {
    \draw[gridgray, very thin] (\i,0) -- (\i,6);
  }
  \foreach \j in {0,...,6} {
    \draw[gridgray, very thin] (0,\j) -- (12,\j);
  }

  \foreach \i in {0,...,12} {
    \foreach \j in {0,...,6} {
      \fill[gridgray!65] (\i,\j) circle (0.55pt);
    }
  }

  \draw[coarsegold!85!black, line width=0.9pt] (0,0) rectangle (6,6);
  \draw[coarsegold!85!black, line width=0.9pt] (6,0) rectangle (12,6);
}

\newcommand{\Rdot}[2]{\node[checkactive] at (#1,#2) {};}
\newcommand{\Gdot}[2]{\node[checkinactive] at (#1,#2) {};}
\newcommand{\Bdot}[2]{\node[checkobs] at (#1,#2) {};}
\newcommand{\Edge}[4]{\draw[siteblue, line width=1.2pt] (#1,#2) -- (#3,#4);}

\resizebox{\linewidth}{!}{%
\begin{tikzpicture}[x=0.40cm,y=0.40cm,font=\small]

\begin{scope}[shift={(0,8)}]
  \DrawLatticeTwelveBySix
  \node[font=\footnotesize] at (6.0,6.95) {{Observed Syndrome}};
  \begin{scope}[shift={(0.5,0.5)}]
    \Bdot{3}{1}
    \Bdot{6}{1}
    \Bdot{4}{2}
    \Bdot{5}{3}
    \Bdot{4}{5}
  \end{scope}
\end{scope}

\begin{scope}[shift={(15,8)}]
  \DrawLatticeTwelveBySix
  \node[font=\footnotesize] at (6.0,6.95) {{Basis $k=1$}};
  \begin{scope}[shift={(0.5,0.5)}]
    \Edge{3}{1}{6}{1}
    \Edge{3}{1}{4}{2}
    \Edge{3}{1}{5}{3}
    \Edge{6}{1}{4}{2}
    \Edge{6}{1}{5}{3}
    \Edge{4}{2}{5}{3}
    \Gdot{4}{5}
    \Rdot{3}{1}
    \Rdot{6}{1}
    \Rdot{4}{2}
    \Rdot{5}{3}
  \end{scope}
\end{scope}

\begin{scope}[shift={(0,0)}]
  \DrawLatticeTwelveBySix
  \node[font=\footnotesize] at (6.0,6.95) {{Basis $k=2$}};
  \begin{scope}[shift={(0.5,0.5)}]
    \Edge{6}{1}{5}{3}
    \Gdot{3}{1}
    \Gdot{4}{2}
    \Gdot{4}{5}
    \Rdot{6}{1}
    \Rdot{5}{3}
  \end{scope}
\end{scope}

\begin{scope}[shift={(15,0)}]
  \DrawLatticeTwelveBySix
  \node[font=\footnotesize] at (6.0,6.95) {{Basis $k=3$}};
  \begin{scope}[shift={(0.5,0.5)}]
    \Edge{6}{1}{5}{3}
    \Gdot{3}{1}
    \Gdot{4}{2}
    \Gdot{4}{5}
    \Rdot{6}{1}
    \Rdot{5}{3}
  \end{scope}
\end{scope}

\Bdot{2.0}{-2.0}
\node[anchor=west, font=\footnotesize] at (2.7,-2.0) {violated check};

\Rdot{14.0}{-2.0}
\node[anchor=west, font=\footnotesize] at (14.7,-2.0) {violated check in $V_k$};

\Gdot{2.0}{-3.5}
\node[anchor=west, font=\footnotesize] at (2.7,-3.5) {violated check not in $V_k$};

\Edge{13.6}{-3.5}{14.6}{-3.5}
\node[anchor=west, font=\footnotesize] at (14.7,-3.5) {edge in complete graph on $V_k$};

\end{tikzpicture}%
}
\caption{Complete matching graphs $\mathcal{G}_k$ for $k=1,2,3$, shown on the $12 \times 6$ gross code lattice, 
  constructed from an observed syndrome consisting of five violated checks (black circles,
  top left panel). Note that 3 other complete matching graphs exist for this code which are not shown. 
  Red circles are violated checks containing basis~$k$ in the basis decomposition of their equivalence class in $\coker{H}$; gray circles are violated checks
  assigned to other bases.
  Cyan edges indicate all possible pairings in the complete graph on the red vertices; the minimum-weight perfect matching selects the subset of edges
  used for decoding.}
\label{fig:matching_graphs}
\end{figure}

The objective of the matching procedure is for the pairs of vertices returned by the matching on each graph to lie in the same error cluster. The reason for this will relate to how the final correction is obtained from the perfect matching on $\mathcal{G}_i$, which has not yet been discussed as the method differs between the small and intermediate decoders. However, it should appear at least reasonable as an objective since it reduces to the standard one on the TC. Accepting this as our objective, the idea is then for the weight of the edge between $c_1,c_2$ to estimate the probability that $c_1 \sim_e c_2$. A proxy for this probability is the minimum possible weight of the error cluster associated to $c_1,c_2$. However, this metric is itself not easily computable in general. In the TC, it is computable by determining the minimum-length path of errors connecting two endpoints. The approach we employ to estimate the minimum possible weight of the error cluster associated to $c_1,c_2$, uses the following generalization of the familiar TC error path to the case where errors can activate more than two checks, 
\begin{definition}
 A \emph{path} $P$ of length $N$ between two checks $c_1,c_2 \in s_{obs}$ is defined as an ordered sequence of single qubit errors $P=(e_1,\dots,e_N)$ such that for each $j \in \{2,\dots, N\}$, we have that 
\begin{equation}
    \text{supp}\left(He_{j-1}\right) \:\bigcap\:  \text{supp} (He_j) \neq \emptyset,
    \label{eq:path_condition_1}
\end{equation}
and 
\begin{equation}
    c_1,c_2 \in \text{supp}\left(H\left(\sum_{i=1}^{N} e_i \right)\right) \text{ and } c_1 \in \text{supp}(He_1) \text{ and } c_2 \in \text{supp}(He_N).
    \label{eq:path_condition_2}
\end{equation}
\end{definition}
\begin{lemma}
    If the shortest path $P$ between activated checks $c_1,c_2$ is of length $N$ and $c_1,c_2$ are error-cluster-equivalent, then $|\operatorname{supp}(e')| \geq N$, where $e'$ is the corresponding error cluster containing $c_1,c_2$.  
\end{lemma}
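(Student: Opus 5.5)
The plan is to show that the error cluster $e'$ containing $c_1,c_2$ can itself be ordered into a path from $c_1$ to $c_2$ whose length is at most $|\operatorname{supp}(e')|$. Minimality of $N$ then gives $N \le |\operatorname{supp}(e')|$. Concretely, write $e'=\sum_{q\in S} e_q$ with $S=\operatorname{supp}(e')$ a set of single-qubit errors. Define the \emph{interaction graph} $G_{e'}$ on vertex set $S$, joining $q,q'$ whenever $\operatorname{supp}(He_q)\cap\operatorname{supp}(He_{q'})\neq\emptyset$. This is exactly the adjacency relation in condition~\eqref{eq:path_condition_1}. A walk in $G_{e'}$ that visits each vertex at most once is an ordered sequence of distinct single-qubit errors satisfying~\eqref{eq:path_condition_1}, and its length is bounded by $|S|$.

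The key steps, in order, are as follows.

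First, since $c_1\in\operatorname{supp}(He')$, some $q_1\in S$ has $c_1\in\operatorname{supp}(He_{q_1})$. Similarly, some $q_2\in S$ has $c_2\in\operatorname{supp}(He_{q_2})$.

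Second, show that $q_1$ and $q_2$ lie in the same connected component of $G_{e'}$. Suppose not, and let $K$ be the component containing $q_1$. Set $e''=\sum_{q\in K}e_q$, a sub-error of $e'$ and hence of $e$. Qubits in different components share no checks, so $He'$ splits as $He''+H(e'-e'')$ with disjoint supports. It follows that $\operatorname{supp}(He'')\subseteq\operatorname{supp}(He')\subseteq\operatorname{supp}(s_{obs})$, that $c_1\in\operatorname{supp}(He'')$, and that $c_2\notin\operatorname{supp}(He'')$. If $\operatorname{supp}(e'')\subsetneq\operatorname{supp}(e)$, this contradicts $c_1\sim_e c_2$. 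The degenerate case $e''=e$ cannot occur, because $e''\subsetneq e'\subseteq e$.

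Third, take a shortest $q_1$–$q_2$ walk in $G_{e'}$. It has no repeated vertices, so it gives an ordered sequence $(e_{q_1},\dots,e_{q_2})$ of length $M\le|S|$ that satisfies~\eqref{eq:path_condition_1} and the endpoint conditions in~\eqref{eq:path_condition_2}.

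The main obstacle is the remaining clause of~\eqref{eq:path_condition_2}, namely that $c_1,c_2\in\operatorname{supp}(H\sum_i e_i)$ for the path's own sum. Intermediate qubits on the path might cancel $c_1$ or $c_2$ modulo $2$. I would handle this by a local modification of the walk. If $c_1$ is cancelled, some later path qubit $e_{q_j}$ also touches $c_1$. That qubit is then adjacent to $e_{q_1}$ in $G_{e'}$, so the path can be shortcut to start at $e_{q_j}$, or more precisely at the last path qubit touching $c_1$. The same shortcut applies at the $c_2$ end. Each shortcut only shortens the path and preserves~\eqref{eq:path_condition_1}. After both shortcuts, $c_1$ is touched only by the first qubit and $c_2$ only by the last qubit, so neither can be cancelled.

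One degenerate case needs care: a single qubit might touch both $c_1$ and $c_2$. Then the length-$1$ path works and $N\le 1\le|S|$ trivially, since $S\neq\emptyset$.

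This yields a valid path of length at most $|S|$. Hence $N\le|\operatorname{supp}(e')|$.
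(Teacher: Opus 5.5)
Your proof is correct and follows essentially the same route as the paper. Both arguments rule out a disconnected overlap structure on the cluster's qubits because the disjoint syndromes would contradict error-cluster-equivalence, then take a shortest overlap chain from a qubit touching $c_1$ to one touching $c_2$, and use minimality (your truncation) to show $c_1$ and $c_2$ are not cancelled. The differences are minor. You get connectivity of just the two relevant qubits directly from $c_1\sim_e c_2$, whereas the paper proves full connectivity of the cluster using its minimality and pairwise-equivalence conditions. Like the paper, you tacitly use $\operatorname{supp}(e')\subseteq\operatorname{supp}(e)$, which the formal definition of an error cluster does not state but clearly intends.
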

\begin{proof}
If $N=1$ the claim is trivial, and if $N=2$, then a cluster of size 1 implies that there exists an error which has $c_1,c_2$ in its support, which will also define a path, so the claim holds. Assume that the error cluster corresponding to $c_1,c_2$ has size $N'<N$. For the above reason we consider $N' \geq 2$. Writing $e'=\sum_{i=1}^{N'} e_i$ where the $e_i$ are single qubit errors, we claim that for any two errors $e_{j_1},e_{j_2}$ in $\{e_1,\dots,e_N\}$, there exists a sequence of errors from this subset such that $e_{j_1},e_{j_2}$ are the endpoints and the sequence satisfies Eq. \ref{eq:path_condition_1}. Assume otherwise. Since the condition just stated is transitive, this implies we can partition $\{e_1,\dots,e_N\}$ into two nonempty subsets $A,B$ such that 
\begin{equation}
    \text{supp}(Ha) \cap \text{supp}(Hb)=\emptyset \quad \text{ for all } a \in A, b \in B,
\end{equation}
and thus 
\begin{equation}
\text{supp}\left(H\left(\sum_{a \in A} a\right)\right) \cap \text{supp}\left(H\left(\sum_{b \in B}\right) b\right)=\emptyset.
\end{equation}

Since both $\text{supp}(H\left(\sum_{b \in B} b\right) )$ and $\text{supp}(H\left(\sum_{a \in A} a\right))$ are nonempty by the error cluster definition, we have that the disjointness condition implies that the checks in $\text{supp}(H\left(\sum_{a \in A} a\right))$  are not error-cluster-equivalent to the checks in $\text{supp}(H\left(\sum_{b \in B}\right) b)$, which is a contradiction.  Thus, we can consider the shortest chain of errors which have pairwise overlap $(\hat{e}_1,\dots,\hat{e}_k)$, where $k \leq N'$ such that $c_1 \in \text{supp}(\hat{e_1})$ and $c_2 \in \text{supp}(\hat{e_k})$.  Assume that $c_1 \notin \text{supp}(H\left( \sum_{i=1}^k \hat{e}_i \right))$. Then it must appear in the support of at least one other $\hat{e}_j \in (\hat{e_1},\dots,\hat{e_k})$. But then $(\hat{e_j},\dots,\hat{e_k})$ is a strictly shorter chain of errors which have pairwise overlap and have $c_1$ supported in the first error and $c_2$ supported in the last error, which is a contradiction of our initial choice being minimal. The same logic shows that $c_2 \in \text{supp}(H\left( \sum_{i=1}^k \hat{e}_i \right))$. This proposed shortest chain of errors satisfies both conditions of a path and has length smaller than $N$, which is a contradiction, completing the proof. 

\end{proof}
While the shortest path between two activated checks lower bounds the weight of the associated error cluster when the checks are error-cluster-equivalent, this lower bound can be a significant underestimate. For a path $P=(e_1,\dots,e_N)$ to define a valid error cluster by taking a sum of the $e_i$, a necessary condition is \begin{equation} \text{supp}\left(H\left(\sum_{i=1}^{N} e_i \right) \right) \subseteq \text{supp}(s_{obs}).
\end{equation}
This condition is not in general satisfied by a given path. In particular, defining 
\begin{equation}
    \text{supp}(s_{remain})=\text{supp}\left(H\left(\sum_{i=1}^{N} e_i \right)\right)\setminus \text{supp}(s_{obs}),
\end{equation} 
an extension of $P$ to a potential error cluster would require at least $|P|+\frac{|s_{remain}|}{w}$ qubits, where $w$ is the maximum number of stabilizers that can be activated by a single qubit $X$ error. The quantity we would want to minimize across paths is then $|P|+\frac{|(s_{remain})|}{w}$ rather than $|P|$. However, this quantity can only be evaluated after the full path is constructed, making it impractical for efficient shortest-path algorithms. Instead, we compute a quantity which takes $|P|$ and $|s_{\text{remain}}|$ into account while each move along the path has fixed cost independent of previous or future steps. To do this, we need to consider the set $M$ of polynomial indices of checks which can be activated by single qubit errors which also activates the check indexed at 1. If we consider an arbitrary check $c$, using translation invariance all checks for which there exists a length-one path with one endpoint at $c$ can be written as $cm$ for $m \in M$. 

Thus, we can think of the set $M$ as enumerating the ``moves" one can chain together to construct a path between two checks. In the BB codes under consideration, there is at most one length one path between two specified checks, so each move between two checks uniquely specifies an error and thus a third activated check $c_{other}$. In light of this discussion, we present an algorithm to determine a particular choice of weight for each edge in $E_i$ which accounts for the fact that we seek paths with smaller $|s_{\text{remain}}|$.
\begin{breakablealgorithm}
\caption{Weight determination for edges in $E_i$}
\label{alg:weight_determination_Ei_plain}
\begin{algorithmic}[1]
  \Require{Observed syndrome $s_{\mathrm{obs}}$; check lattice $\{x^iy^j | 0 \leq i \leq \ell, 0 \leq j \leq m \} $; fixed basis index $i$; vertex set
  $V_i=\{c\in \text{supp}(s_{\mathrm{obs}}):(u_c)_i=1\}$; background set
  $u_i=\{c\in \text{supp}(s_{\mathrm{obs}}):(u_c)_i=0\}$; move set $M$, weighting parameter $\lambda$.}
  \Ensure{A weight $w_i(c_a,c_b)$ for every edge $\{c_a,c_b\}\in E_i$ of the complete graph on $V_i$.}

  \ForAll{checks $c$ and moves $(c\to cm)$ for $m \in M$}
    \State Let $c_{\mathrm{other}}$ be the other check associated to that move.
    \If{$c_{\mathrm{other}}\in u_i$}
      \State $\mathrm{penalty}(c,cm)\gets 0$
    \Else
      \State $\mathrm{penalty}(c,cm)\gets 1$
    \EndIf
    \State Set move cost $\mathrm{cost}(c,cm)\gets 1+\lambda\cdot \mathrm{penalty}(c,cm)$.
  \EndFor

  \ForAll{start points $u\in V_i$}
    \State Initialize unresolved targets $T_u \gets V_i\setminus\{u\}$.
    \State Run a shortest-path search from $u$ over the check lattice using $\mathrm{cost}(\cdot,\cdot)$ with Dijkstra's algorithm (general nonnegative weights) or Dial's bucketed variant (nonnegative integer weights).
    \State Update the best known route to each reached check.
    \State If two routes to a given check have the same total cost, keep the one with the shorter path length.
    \State Whenever a vertex $v\in V_i$ is settled with final best value, remove $v$ from $T_u$.
    \If{$T_u=\emptyset$}
      \State Stop the shortest path search.
    \EndIf
  \EndFor

  \ForAll{unordered pairs $\{c_a,c_b\}\subset V_i$}
    \State Set $w_i(c_a,c_b)$ to the best route value from $c_a$ to $c_b$.
  \EndFor

  \State \Return $w_i$.
\end{algorithmic}
\end{breakablealgorithm}

Assuming that $\lambda$ is chosen so that the cost of each move is an integer, the shortest path search used during weight determination can be accomplished using Dial's algorithm which is $O(\ell m)$ per start point. We also note that the presented algorithm associates each edge on $\mathcal{G}_i$ with a path. While the weight of the edge as determined here will not exactly lower bound the size of the error cluster  containing the endpoints of a given edge, it does account for both $|P|$ and $|s_{remain}|$ and as such serves as a reasonable computable heuristic for this lower bound. The tunable parameter $\lambda$ allows one to control the relative importance of minimizing  $|P|$ and minimizing $|s_{remain}|$. A visualization of the move set $M$, a shortest path between two checks, and a minimal weight path between two checks for $\lambda=1$ is given in Fig. ~\ref{fig:move_set_paths}.

\begin{figure}[H]
\centering
\centering

\definecolor{siteblue}{RGB}{52,120,190}
\definecolor{stencilred}{RGB}{210,55,55}
\definecolor{coarsegold}{RGB}{200,150,20}
\definecolor{gridgray}{RGB}{190,200,210}
\definecolor{panelbg}{RGB}{248,250,253}
\definecolor{cellbg}{RGB}{235,245,255}
\definecolor{blackdot}{RGB}{25,25,25}
\definecolor{stencilpurple}{RGB}{130,60,180}

\definecolor{msa_basisorange}{RGB}{200,150,20}
\definecolor{msa_latticegray}{RGB}{190,200,210}
\definecolor{msa_endpoint}{RGB}{210,55,55}
\definecolor{msa_active}{RGB}{200,150,20}
\definecolor{msa_remain}{RGB}{52,120,190}
\definecolor{msa_cancel}{RGB}{130,60,180}
\definecolor{msa_path}{RGB}{185,30,30}
\definecolor{msa_origin}{RGB}{25,25,25}
\definecolor{msa_helper}{RGB}{130,140,155}

\tikzset{
  msaendpoint/.style={circle, draw=msa_endpoint, fill=msa_endpoint!25, line width=0.9pt, minimum size=15.0pt, inner sep=0pt},
  msaactive/.style={circle, draw=coarsegold!85!black, fill=coarsegold!25, line width=0.9pt, minimum size=15.0pt, inner sep=0pt},
  msaremain/.style={circle, draw=siteblue, fill=siteblue!22, line width=0.9pt, minimum size=15.0pt, inner sep=0pt},
  msaorigin/.style={circle, draw=blackdot, fill=blackdot!20, line width=0.9pt, minimum size=15.0pt, inner sep=0pt},
}

\newcommand{\MSAMapX}[1]{2*((#1)-3)}
\newcommand{\MSAMapY}[1]{2*((#1)-3)}
\newcommand{\MSAPt}[2]{({\MSAMapX{#1}},{\MSAMapY{#2}})}

\newcommand{\MSALattice}{%
  \fill[panelbg] (-1.85,-1.55) rectangle (13.85,14.55);
  \draw[siteblue!40, line width=1.0pt, rounded corners=4pt] (-1.85,-1.55) rectangle (13.85,14.55);


  \fill[cellbg, opacity=0.30] (-1,-1) rectangle (13,13);

  \foreach \i in {-1,1,...,13} {
    \draw[gridgray, very thin] (\i,-1) -- (\i,13);
  }
  \foreach \j in {-1,1,...,13} {
    \draw[gridgray, very thin] (-1,\j) -- (13,\j);
  }

  \foreach \i in {-1,1,...,13} {
    \foreach \j in {-1,1,...,13} {
      \fill[gridgray!65] (\i,\j) circle (0.55pt);
    }
  }

  \draw[coarsegold!85!black, line width=0.9pt] (-1,-1) rectangle (13,13);
}

\newcommand{\MSAEnd}[2]{\node[msaendpoint] at \MSAPt{#1}{#2} {};}
\newcommand{\MSAActive}[2]{\node[msaactive] at \MSAPt{#1}{#2} {};}
\newcommand{\MSARemain}[2]{\node[msaremain] at \MSAPt{#1}{#2} {};}
\newcommand{\MSACancel}[2]{%
  \draw[msa_cancel, line width=2.7pt]
    ({\MSAMapX{#1}-0.55},{\MSAMapY{#2}-0.55}) -- ({\MSAMapX{#1}+0.55},{\MSAMapY{#2}+0.55});
  \draw[msa_cancel, line width=2.7pt]
    ({\MSAMapX{#1}+0.55},{\MSAMapY{#2}-0.55}) -- ({\MSAMapX{#1}-0.55},{\MSAMapY{#2}+0.55});
}
\newcommand{\MSAPath}[4]{%
  \draw[msa_path, line width=2.2pt, -{Stealth[length=14pt,width=12pt]}]
    \MSAPt{#1}{#2} -- \MSAPt{#3}{#4};
}

\newcommand{\MSAHelper}[6]{%
  \draw[gridgray!55!black, line width=1.4pt, densely dashed] \MSAPt{#1}{#2} -- \MSAPt{#3}{#4};
  \draw[gridgray!55!black, line width=1.4pt, densely dashed] \MSAPt{#3}{#4} -- \MSAPt{#5}{#6};
}

\newcommand{\MSALEnd}[2]{\node[msaendpoint] at (#1,#2) {};}
\newcommand{\MSALActive}[2]{\node[msaactive] at (#1,#2) {};}
\newcommand{\MSALRemain}[2]{\node[msaremain] at (#1,#2) {};}
\newcommand{\MSALCancel}[2]{%
  \draw[msa_cancel, line width=2.7pt] ({#1-0.52},{#2-0.52}) -- ({#1+0.52},{#2+0.52});
  \draw[msa_cancel, line width=2.7pt] ({#1+0.52},{#2-0.52}) -- ({#1-0.52},{#2+0.52});
}

\resizebox{0.65\linewidth}{!}{%
\begin{tikzpicture}[x=0.46cm,y=0.46cm,font=\small]

\begin{scope}[local bounding box=panelblock]

\begin{scope}[shift={(8.5,16.2)}]
  \MSALattice
  \node[font=\Large] at (6,13.6) {{A.\ Move Set $M$}};

  \foreach \dx/\dy in {
    0/-1, 0/1, 1/0, -1/0,
    3/-1, 3/-2, 2/-3, 1/-3,
    -3/1, -3/2, -2/3, -1/3} {
    \pgfmathsetmacro{\mx}{6+\dx}
    \pgfmathsetmacro{\my}{6+\dy}
    \draw[-{Stealth[length=8pt,width=7pt]}, msa_path, line width=1.9pt]
      \MSAPt{6}{6} -- \MSAPt{\mx}{\my};
  }
  \node[msaorigin] at \MSAPt{6}{6} {};
\end{scope}

\begin{scope}[shift={(0,0)}]
  \MSALattice
  \node[font=\Large] at (6,13.6) {{B.\ Shortest-Path Criterion}};

  \MSAPath{5}{5}{8}{3}
  \MSAPath{8}{3}{7}{6}
  \MSAHelper{5}{5}{5}{4}{8}{3}
  \MSAHelper{8}{3}{9}{3}{7}{6}

  \node[font=\LARGE\bfseries, text=msa_path] at \MSAPt{6.20}{4.95} {1};
  \node[font=\LARGE\bfseries, text=msa_path] at \MSAPt{8.15}{5.20} {2};

  \MSAActive{4}{8}
  \MSAActive{6}{4}
  \MSAActive{8}{3}

  \MSARemain{5}{4}
  \MSARemain{9}{3}

  \MSAEnd{5}{5}
  \MSAEnd{7}{6}
\end{scope}

\begin{scope}[shift={(17,0)}]
  \MSALattice
  \node[font=\Large] at (6,13.6) {{C.\ $\lambda$-Weighted Criterion}};

  \MSAPath{5}{5}{6}{5}
  \MSAPath{6}{5}{9}{3}
  \MSAPath{9}{3}{7}{6}

  \MSAHelper{5}{5}{4}{8}{6}{5} 
  \MSAHelper{6}{5}{6}{4}{9}{3} 
  \MSAHelper{9}{3}{8}{3}{7}{6} 

  \node[font=\LARGE\bfseries, text=msa_path] at \MSAPt{5.10}{4.20} {1};
  \node[font=\LARGE\bfseries, text=msa_path] at \MSAPt{7.10}{3.10} {2};
  \node[font=\LARGE\bfseries, text=msa_path] at \MSAPt{8.80}{5.20} {3};

  \MSACancel{4}{8}
  \MSACancel{6}{4}
  \MSACancel{8}{3}

  \MSAEnd{5}{5}
  \MSAEnd{7}{6}
\end{scope}
\end{scope}

\begin{scope}[overlay]
  \MSALEnd{-2.5}{22.9}
  \node[anchor=west, font=\Large] at (-1.5,22.9) {path endpoints};

  \MSALCancel{-2.5}{17.9}
  \node[anchor=west, font=\Large] at (-1.5,17.9) {cancellation};

  \MSALActive{24.8}{22.9}
  \node[anchor=west, font=\Large] at (26.2,22.9) {violated checks};

  \MSALRemain{24.8}{20.4}
  \node[anchor=west, align=left, font=\Large] at (26.2,20.4) {remaining checks\\from path};

  \draw[msa_path, line width=2.6pt, -{Stealth[length=14pt,width=12pt]}] (24.0,17.9) -- (26.8,17.9);
  \node[anchor=west, font=\Large] at (27.3,17.9) {move from $M$};
\end{scope}

\end{tikzpicture}%
}
\caption{Illustrations of the move set $M$ used to construct paths on the gross code (A), an example of a shortest path between two endpoints (B), an example of a minimal path according to the $\lambda=1$ criterion (C). Endpoints of arrows in (A) represent checks which can be violated by errors which also activate the check at the origin. The monomial representations of these checks collectively form the move set $M$. The elements of this set can be thought of as the building blocks of a path. Triangles in (B) and (C) represent single qubit errors which form the path between the two endpoints. Note that the path in panel $B$ is shorter than the path in panel $C$, but leaves a greater amount of remaining syndrome.  }
\label{fig:move_set_paths}
\end{figure}

\subsubsection{Re-matching and syndrome shifts}
Consider an error cluster which contains multiple pairs of activated checks on the same matching graph $\mathcal{G}_i$. The desirable outcome is for the matching procedure to pair up these checks internally within the cluster. The issue is that the edge weights are pairwise path costs: each matched pair is scored independently. As a result, when several internal pairs are chosen, the corresponding paths can overlap substantially, and the same physical error support is effectively counted multiple times.
So even if each individual edge weight is close to a lower bound on cluster size, the sum used by MWPM can overestimate the true joint cost of explaining all checks in that cluster. The optimizations discussed in this section are aimed to overcome this issue. In what follows we assume that MWPM has been run on the graphs $\mathcal{G}_i$. 
\begin{definition}
For each basis index $i$, let $E_{\mathrm{MWPM},i}$ be the edges returned by MWPM on $\mathcal{G}_i$, and then define
$E_{\mathrm{MWPM}}=\bigcup_{i=1}^r E_{\mathrm{MWPM},i}.$
\end{definition}
For each edge $(c_a,c_b) \in E_{\mathrm{MWPM}}$, we want to find a candidate error cluster which contains these endpoints. This is accomplished by a depth-limited depth-first-search (DFS) over candidate error cluster supports up to some fixed weight $w_{\text{fixed}}$, where $w_{\text{fixed}}$ is a tunable parameter of the decoding procedure. In the BB codes we consider, there are 6 qubits which touch each check. A node at depth $t$ of the DFS is an error pattern $e$ such that $|e|=t$ and $c_a \in \text{supp}(He)$. A node is expanded by adding a single qubit error which activates a check in $\text{supp}(H(e)))\setminus \text{supp}(s_{obs})$. The DFS terminates when $c_a,c_b \in \text{supp}(H(e))$ and $\text{supp}(H(e))) \subseteq \text{supp}(s_{obs}).$ or when $t$ reaches a fixed weight $n$. This procedure is $O(|E_{MWPM}|)$ with a constant associated cost of at most $6^{w_{\text{fixed}}}$ operations per edge, but in general it will cost far fewer operations per edge since smaller clusters can usually be found. This procedure then produces a set of subsets of checks contained in $\text{supp}(s_{obs})$, each corresponding to a potential error cluster containing two checks which were matched on some graph $\mathcal{G}_i$.  For each subset $S$, we can then compute 
\begin{equation}
    F_S=\frac{|\{(c_a,c_b) \in E_{MWPM} | c_a, c_b \in S \}|}{|\{(c_a,c_b) \in E_{MWPM} | c_a \text{ or }c_b \in S \}|}
\end{equation}
if $F_S$ is close to one, then the matching across all the subsets are in agreement that $S$ is likely a true error cluster, while if it is close to zero, only a few matching are consistent with this being a true error cluster. The idea then is if $F_S>F_{limit}$, where $F_{limit}$ is some tunable parameter,we alter all edge weights on all matching graphs for edges between checks in $S$ by some negative tunable parameter $\delta_-$ and alter all edge weights on all matching graphs for edges between checks with exactly one endpoint in $S$ by some positive tunable parameter $\delta_+$, and then repeat the MWPM procedure on all graphs. This process can then be iterated some set number of times or until the matching outcomes stop changing. \\ 
\indent The next tool we can use to our advantage is that we are free to shift the observed syndrome by any monomial, obtain a correction for the shifted syndrome, and then shift this correction back. This is notable because the equivalence class decomposition of each syndrome depends on its exact location, so different shifts will produce different matching graphs and thus different matching outcomes.  The exact utilization of this shifting freedom will be different for the small and intermediate code size decoders. Note that for a given code the number of unique shifts which have to be considered is $b\times b$, where $b$ is the coarse-graining parameter of the code. To see why this is the case, fixing one $b\times b$ cell, note that any check outside the cell has its equivalence class identified with a check inside a cell via a string operator with point-like endpoints by definition of $b$. We now present psuedocode which synthesizes the procedures described in this section to run the matching procedure for a given syndrome shift. 

\begin{breakablealgorithm}
\caption{Matching procedure for a fixed shift}
\label{alg:matching_pipeline_fixed_shift}
\begin{algorithmic}[1]
\Require{Shifted syndrome $s^{(\tau)}$; basis data $\{u_i,\|u_i\|_T,s_i\}_{i=1}^r$; dictionary $c\mapsto u_c$; weight inputs $(M,\lambda)$; DFS depth $w_{\text{fixed}}$; threshold $F_{\mathrm{limit}}$; weight updates $(\delta_-,\delta_+)$; maximum re-matching rounds $R_{\mathrm{match}}$.}
\Ensure{Matched-edge set $E_{\mathrm{MWPM}}^{(\tau)}$, stored paths $P_i^{(\tau)}$, and candidate syndrome clusters $S$ and errors $e_S$ associated to each edge.}

\For{$i\gets 1$ \textbf{to} $r$}
  \State Build $V_i^{(\tau)}=\{c\in \mathrm{supp}(s^{(\tau)}):(u_c)_i=1\}$ and $u_i^{(\tau)}=\{c\in \mathrm{supp}(s^{(\tau)}):(u_c)_i=0\}$.
  \State Compute all edge weights on $\mathcal{G}_i^{(\tau)}$ using Algorithm~\ref{alg:weight_determination_Ei_plain}.
  \State Run MWPM on $\mathcal{G}_i^{(\tau)}$ to obtain $E_{\mathrm{MWPM},i}^{(\tau)}$ and store selected paths $P_i^{(\tau)}$.
\EndFor
\State Set $E_{\mathrm{MWPM}}^{(\tau)}\gets\bigcup_{i=1}^r E_{\mathrm{MWPM},i}^{(\tau)}$.
\For{$t\gets 1$ \textbf{to} $R_{\mathrm{match}}$}
  \State For each $(c_a,c_b)\in E_{\mathrm{MWPM}}^{(\tau)}$, run depth-limited DFS to depth $w_{\text{fixed}}$ to obtain a candidate error cluster $e_S$ containing this edge and corresponding subset $S\subseteq \mathrm{supp}(s^{(\tau)})$.
  \State If a candidate error cluster is found for this edge,  compute $F_S=\frac{\#\{\text{matched edges internal to }S\}}{\#\{\text{matched edges touching }S\}}$.
  \State For all $S$ with $F_S>F_{\mathrm{limit}}$, add $\delta_-$ to edges internal to $S$ and add $\delta_+$ to edges with exactly one endpoint in $S$.
  \State Re-run MWPM on all $\mathcal{G}_i^{(\tau)}$ with updated weights and update $E_{\mathrm{MWPM}}^{(\tau)}$ and $P_i^{(\tau)}$.
\EndFor
\end{algorithmic}
\end{breakablealgorithm}

\subsection{Intermediate code size machinery}
\begin{definition}
Let $V_C=\{c \mid c\in \text{supp}(s_{\mathrm{obs}})\}$ 
For each $(c_a,c_d)\in E_{\mathrm{MWPM},i}$, let $P_i(c_a,c_d)$ be the path found using the edge weight finding algorithm in the previous section corresponding to the edge $(c_a,c_d)$ on $\mathcal{G}_i$. 
Define
\begin{equation}
  O_i(c_a,c_d)=\{c\in V_C \mid c \text{ appears as } c_{\mathrm{other}} \text{ along } P_i(c_a,c_d),\ c_{\mathrm{other}}\in u_i\}.  
\end{equation}
This can be understood as the set of originally violated checks other than the endpoints which cancel with the syndrome produced by $P_i(c_a,c_d)$. Now define
\begin{equation}
E_{i}
=
\{(c_a,c_b)\mid \exists c_d,\ (c_a,c_d)\in E_{\mathrm{MWPM},i},\ c_b\in O_i(c_a,c_d)\}  
\end{equation}
\[
\cup
\{(c_a,c_b)\mid \exists c_d,\ (c_b,c_d)\in E_{\mathrm{MWPM},i},\ c_a\in O_i(c_b,c_d)\}.
\]
The $E_i$ can be understood as an additional edge set between the starting and ending points of selected paths from $\mathcal{G}_i$ and the additional canceled checks along those paths. Finally set
\begin{equation}
  E_C=E_{\mathrm{MWPM}}\cup\bigcup_{i=1}^r E_{i},
\end{equation}
and define the compatibility graph as $
G_C=(V_C,E_C).$
\end{definition}
This graph can be thought of as follows: if every matched edge pairs checks within the same error cluster, and if every “background” syndrome check used in weighting that edge’s path also lies in that same cluster, then the graph’s connected components coincide exactly with the true error clusters. In that ideal case, the number of connected components equals the number of error clusters, and the size of each component equals the size of the support of the syndrome for that cluster.  As we expect that for reasonable physical error rates the true error clusters will not be very large, a compatibility graph produced from MWPM on each $\mathcal{G}_i$ whose largest connected component contains many checks is likely to have resulted from matchings which are not internal to true error clusters.\\ 
\indent Consequently, we determine the final applied correction for the shifted syndrome for which the matching procedure minimized the largest connected component of the compatibility graph across all shifts. A further optimization which can be employed is to feed the observed syndrome restricted only to the largest connected component of the compatibility graph in the best shift back into the matching pipeline. This refining procedure disentangles the part of the syndrome that we expect the matching algorithm struggled with from the component of the syndrome we believe matching handled correctly. As such, this procedure can break up this connected component of the compatibility graph into smaller clusters, which we expect to produce a more accurate matching. \\ 
\indent Once all matchings have been finalized for each activated check in the observed syndrome, the error $e_c$ assured to exist from Eq. \ref{eq:rewrite} is applied to rewrite each single syndrome in terms of its associated basis patterns. Note that $e_c$ is not unique and there are in principle multiple ways of choosing such an error. In the current implementation, BP-OSD is used once prior to observing any syndrome and the resulting $e_c$ is stored in a dictionary for all subsequent decoding runs. The method should in principle not be of great importance since the rewritten basis patterns are geometrically local to the original check.\\ \indent To furnish a final correction, one must then resolve all the syndrome corresponding to the active basis elements for each observed check. To accomplish this, for each matched pair on each $\mathcal{G}_i$, if the corresponding basis pattern occupy the same $||u_i||_T \times ||u_i||_T $ unit cell, then nothing needs to be done as the rewrite operator $e_{c_1},e_{c_1}$ associated to each check in the matched pair ensures that these basis patterns cancel.  If they do not occupy the same $||u_i||_T \times ||u_i||_T $ unit cell, then a chain of short strings for this basis element is applied between these two cells to annihilate both patterns, and the winding directionality of this chain is determined by the winding directionality of the path $P_i(c_a,c_b)$ obtained from Dial's algorithm during the matching stage. An example of using a chain of short strings to annihilate a pair of weight 3 basis syndrome patterns in different cells is shown in Fig.  \ref{fig:short_string_chain_full_lattice}. The reason why we obtain a final correction using short strings rather than the candidate error clusters obtained from the DFS is that for the intermediate code size, there may be a large number of candidate error clusters, and in general choosing one incorrectly will result in a failed correction. The short string method negates the requirement to obtain a covering set of candidate error clusters.  \\
\indent A full algorithmic summary of the intermediate code size decoding process is given below.  

\begin{breakablealgorithm}
\caption{Intermediate-size matching decoder}
\label{alg:intermediate_decoder_summary}
\begin{algorithmic}[1]
\Require{Observed syndrome $s_{\mathrm{obs}}$; basis data $\{u_i,\|u_i\|_T,s_i\}_{i=1}^r$; dictionaries $c\mapsto u_c$ and $c\mapsto e_c$; shift set $\mathcal{T}=\{x^a y^d:0\le a,d<b\}$; weight inputs $(M,\lambda)$; DFS depth $w_{\text{fixed}}$; threshold $F_{\mathrm{limit}}$; weight updates $(\delta_-,\delta_+)$; maximum re-matching rounds $R_{\mathrm{match}}$; maximum refinement rounds $R_{\mathrm{refine}}$.}
\Ensure{Correction $\widehat e$ for the unshifted syndrome.}

\ForAll{shifts $\tau\in\mathcal{T}$}  \Comment{Beginning of matching pipeline}
  \State Set $s^{(\tau)}\gets \tau s_{\mathrm{obs}}$.
  \State Compute matched-edge set $E_{\mathrm{MWPM}}^{(\tau)}$ and stored paths $P_i^{(\tau)}$ using Algorithm \ref{alg:matching_pipeline_fixed_shift}.
  \State Build $G_C^{(\tau)}$ from $E_{\mathrm{MWPM}}^{(\tau)}$ and from checks that appear as $c_{\mathrm{other}}$ along stored paths.
  \State Record score $m_{\max}^{(\tau)}\gets$ largest connected-component size of $G_C^{(\tau)}$. 
\EndFor

\State Choose $\tau^\star\in\arg\min_{\tau\in\mathcal{T}} m_{\max}^{(\tau)}$. \Comment{End of matching pipeline}
\For{$t\gets 1$ \textbf{to} $R_{\mathrm{refine}}$}
  \State Restrict to the largest connected component(s) of $G_C^{(\tau^\star)}$, rerun the matching pipeline on the syndrome restricted to this component, and update matchings.
  \State Stop if the largest-component score does not decrease.
\EndFor

\State Initialize shifted correction $\widehat e^{(\tau^\star)}\gets 0$. 
\ForAll{checks $c\in \mathrm{supp}(s^{(\tau^\star)})$}
  \State Add $e_c$ to total shifted correction $e^{(\tau^\star)}$,  rewriting $c$ into translated canonical basis representatives.
\EndFor
\For{$i\gets 1$ \textbf{to} $r$}
  \ForAll{$(c_a,c_b)\in E_{\mathrm{MWPM},i}^{(\tau^\star)}$}
    \If{the corresponding basis patterns are in different $\|u_i\|_T\times\|u_i\|_T$ cells}
      \State Add a chain of short strings for basis $i$ between these cells, with winding direction taken from $P_i^{(\tau^\star)}(c_a,c_b)$ to the total shifted correction $e^{(\tau^\star)}$.
    \EndIf
  \EndFor
\EndFor

\State Return $\widehat e\gets (\tau^\star)^{-1}\widehat e^{(\tau^\star)}$.
\end{algorithmic}
\end{breakablealgorithm}
\begin{figure}[H]
\centering
\definecolor{siteblue}{RGB}{52,120,190}
\definecolor{stencilred}{RGB}{210,55,55}
\definecolor{coarsegold}{RGB}{200,150,20}
\definecolor{gridgray}{RGB}{190,200,210}
\definecolor{panelbg}{RGB}{248,250,253}
\definecolor{cellbg}{RGB}{235,245,255}
\definecolor{redoutline}{RGB}{250,0,0}

\tikzset{
  syndromeq/.style={circle, draw=stencilred, fill=stencilred!30, line width=1.0pt, minimum size=5.6pt, inner sep=0pt},
  qerror/.style={siteblue, line width=1.2pt},
}

\newcommand{\CTqH}[2]{%
  \draw[qerror] ({#1-0.18},{#2+0.5-0.18}) -- ({#1+0.18},{#2+0.5+0.18});%
  \draw[qerror] ({#1-0.18},{#2+0.5+0.18}) -- ({#1+0.18},{#2+0.5-0.18});%
}
\newcommand{\CTqV}[2]{%
  \draw[qerror] ({#1-0.5-0.18},{#2-0.18}) -- ({#1-0.5+0.18},{#2+0.18});%
  \draw[qerror] ({#1-0.5-0.18},{#2+0.18}) -- ({#1-0.5+0.18},{#2-0.18});%
}
\newcommand{\CTqHV}[2]{\CTqH{#1}{#2}\CTqV{#1}{#2}}
\newcommand{\CTEnd}[2]{\node[syndromeq] at (#1,#2) {};}

\newcommand{\CTHstep}[2]{%
  \draw[redoutline!55, dashed, line width=1.20pt] ({#1-0.5},{#2-0.5}) rectangle ({#1+5.5},{#2+2.5});
  \CTqH{#1+0}{#2+0}
  \CTqH{#1+1}{#2+0}
  \CTqHV{#1+0}{#2+1}
  \CTqHV{#1+1}{#2+1}
}
\newcommand{\CTVstep}[2]{%
  \draw[redoutline!55, dashed, line width=1.20pt]
    ({#1-0.5},{#2-3.5}) rectangle ({#1+2.5},{#2+2.5});
  \CTqHV{#1}{#2}
  \CTqHV{#1}{#2+1}
  \CTqV{#1+1}{#2}
  \CTqV{#1+1}{#2+1}
}

\resizebox{0.40\linewidth}{!}{%
\begin{tikzpicture}[x=0.35cm,y=0.35cm,yscale=-1,font=\small]
  \path[use as bounding box] (-0.95,-0.95) rectangle (11.95,11.95);

  \fill[panelbg] (-0.90,-0.90) rectangle (11.90,11.90);
  \draw[siteblue!40, line width=1.0pt, rounded corners=4pt] (-0.90,-0.90) rectangle (11.90,11.90);

  \foreach \gc in {0,...,3} {
    \foreach \gr in {0,...,3} {
      \fill[cellbg, opacity=0.30] ({3*\gc-0.5},{3*\gr-0.5}) rectangle ({3*\gc+2.5},{3*\gr+2.5});
      \draw[siteblue!18, line width=0.35pt] ({3*\gc-0.5},{3*\gr-0.5}) rectangle ({3*\gc+2.5},{3*\gr+2.5});
    }
  }

  \foreach \i in {0,...,12} {
    \draw[gridgray, very thin] ({\i-0.5},-0.5) -- ({\i-0.5},11.5);
    \draw[gridgray, very thin] (-0.5,{\i-0.5}) -- (11.5,{\i-0.5});
  }
  
  \draw[coarsegold!85!black, line width=0.9pt] (5.5,-0.5) -- (5.5,11.5);
  \draw[coarsegold!85!black, line width=0.9pt] (-0.5,5.5) -- (11.5,5.5);
  \fill[cellbg, opacity=0.30] (-0.5,-0.5) rectangle (11.5,11.5);

  \foreach \i in {0,...,12} {
    \foreach \j in {0,...,12} {
      \fill[gridgray!65] ({\i-0.5},{\j-0.5}) circle (0.55pt);
    }
  }

  \draw[siteblue!55, line width=1.1pt] (-0.5,-0.5) rectangle (11.5,11.5);

  \CTHstep{0}{6}
  \CTHstep{3}{6}
  \CTHstep{6}{6}
  \CTVstep{9}{3}
  \CTVstep{9}{6}

  \CTEnd{0}{6}
  \CTEnd{1}{6}
  \CTEnd{1}{7}
  \CTEnd{9}{0}
  \CTEnd{10}{0}
  \CTEnd{10}{1}

  \begin{scope}[font=\scriptsize]
    \draw[qerror, line width=1.5pt] (0.2,12.58) -- (0.7,13.08);
    \draw[qerror, line width=1.5pt] (0.2,13.08) -- (0.7,12.58);
    \node[anchor=west] at (1.0,12.83) {$X$ errors};

    \node[syndromeq] at (6.05,12.83) {};
    \node[anchor=west] at (6.4,12.83) {violated checks};
    \path[use as bounding box] (-0.95,-0.95) rectangle (11.95,13.15);
  \end{scope}
\end{tikzpicture}%
}
\vspace{10 pt}
\caption{Short string chain of errors (blue crosses) for the gross code polynomial. Two identical weight-3 copies of basis-7 $(y^2+yx+xy^2)$ for the $24\times24$ gross code are produced as violated checks (red circles) in distinct $3 \times 3$ cells at the endpoint of this short string chain. Chains such as these are implemented based on matching outcomes on each basis graph to arrive at a total correction for the intermediate-size decoder.}
\label{fig:short_string_chain_full_lattice}
\end{figure}

\subsection{Small code size machinery}
The key reason why the method discussed above to obtain a final correction from a matching does not work in the small code size regime, is that the maximum lattice distance between errors in the short strings for certain basis elements is comparable to the dimensions of the polynomial lattice. In other words, the short strings may span the full dimensions of the small code size qubit lattice, and therefore may have unintended, nontrivial logical action on the code when applied. Thus, instead of using the short strings to obtain a proposed correction, we use the subsets obtained by the depth-limited DFS. In particular, after running the matching procedure on all shifts of the observed syndrome, including the within-shift redecodes with the applied $\delta_+$ and $\delta_-$ modifiers, we collect all distinct subsets $S$ obtained from the DFS such that $F_S=1$. This condition means that all matching are internal to that subset. After collecting these subsets, we order them by their multiplicity across shifts. In descending order, we then select disjoint subsets $S$ from this list. If the union of the subsets covers the observed syndrome and the combined weight of errors associated to each subset is less than some input value $e_{\text{limit}}$, the union of the errors associated to each subset is returned as a correction. This process is stage 1. \\ 
\indent Otherwise, if stage 1 does not produce a covering correction of weight below $e_{\text{limit}}$,  the observed syndrome in the complement of the union of the selected subsets is fed back into the decoding pipeline, producing a new collection of subsets of the observed syndrome and their associated errors. The same multiplicity selection logic is used to obtain a disjoint collection of these subsets. If combining the subsets from stage 2 with the subsets selected in stage 1 still does not produce a covering correction with weight below $e_{\text{limit}}$, we rerun the ``subset-redecode" procedure. This involves rerunning stage 2 with one stage 1 subset removed from the collection of accepted error cluster at a time and reintroduced into stage 2. In this stage 2 rerun, internal matching within the reintroduced subset is  forbidden, matching must pair these checks externally with the rest of the syndrome. If no covering correction which is below the weight limit is obtained from a union of the subsets for any single subset removal procedure as above, we attempt re-injecting all possible \emph{pairs} of selected subsets with the associated internal weight penalty into stage 2. \\
\indent The intuition behind this step is that if the matching process is unable to produce a consistent disjoint subset covering of the stage 2 inputs, it is likely that first stage  selected an error cluster which is not a true error cluster. As such, running stage 2 again with with these ``bad" error clusters forbidden from being produced by the matching has the potential to improve the final output.  If no covering set of disjoint subsets is found after the subset removal procedure, we declare a decoder failure, and if no below weight limit error is found, the lowest weight error syndrome covering error found is applied. \\
\indent We note that what makes this procedure successful is exactly what makes the short string approach fail: due to the small code size, there is a small number of total clusters, so successfully determining all of them through this procedure is reasonable. This method would fail badly in the intermediate case because the number of error clusters is sufficiently large that the matching very rarely gets them all correct. However, this is not a problem because a successful correction can still in some cases be obtained via short string pairing even if a small fraction of matchings occur between rather than within true error clusters. \\  
\indent A full algorithmic summary of the small code size decoding process is given below.  
\begin{breakablealgorithm}
\caption{Small-size matching decoder}
\label{alg:small_decoder_summary}
\begin{algorithmic}[1]
\Require{Observed syndrome $s_{\mathrm{obs}}$; basis data $\{u_i,\|u_i\|_T,s_i\}_{i=1}^r$; dictionary $c\mapsto u_c$; shift set $\mathcal{T}=\{x^a y^d:0\le b,d<b\}$; weight inputs $(M,\lambda)$; DFS depth $w_{\text{fixed}}$; Threshold $F_{\text{limit}}$; weight updates $(\delta_-,\delta_+)$; maximum re-matching rounds $R_{\mathrm{match}}$; weight limit $e_{\mathrm{limit}}$.}
\Ensure{Correction $\widehat e$ for the unshifted syndrome.}

\ForAll{shifts $\tau\in\mathcal{T}$} \Comment{Beginning of stage 1 matching pipeline}
  \State Set $s^{(\tau)}\gets \tau s_{\mathrm{obs}}$.
    \State Compute matched-edge set $E_{\mathrm{MWPM}}^{(\tau)}$ and stored paths $P_i^{(\tau)}$ using Algorithm \ref{alg:matching_pipeline_fixed_shift}
  \State Store all subsets $S$ with $F_S=1$ found for shift $\tau$, with their associated candidate errors $e_S$.
\EndFor

\State Pool all stored subsets across shifts and group identical subsets.
\State Record multiplicity $\mu(S)$ for each distinct subset $S$.
\State Sort subsets by decreasing multiplicity and select a disjoint family $\mathcal{S}_1$. 
\State Let $\Omega_1=\bigcup_{S\in\mathcal{S}_1} S$ and $\widehat e_1=\bigoplus_{S\in\mathcal{S}_1} e_S$. $\tau\in\mathcal{T}$ 
\If{$\Omega_1=\mathrm{supp}(s_{\mathrm{obs}})$ and $|\widehat e_1|\le e_{\mathrm{limit}}$}
  \State \Return $\widehat e_1$.
\EndIf \Comment{End of stage 1 matching pipeline}

\State Define residual syndrome support $\Omega_{\mathrm{res}}=\mathrm{supp}(s_{\mathrm{obs}})\setminus \Omega_1$. \Comment{Beginning of stage 2 matching pipeline}
\State Re-run the full  stage 1 matching pipeline above on the residual syndrome only, producing a second disjoint family $\mathcal{S}_2$ of complete subsets.
\State Form $\widehat e_{12}=\left(\bigoplus_{S\in\mathcal{S}_1} e_S\right)\oplus\left(\bigoplus_{S\in\mathcal{S}_2} e_S\right)$.
\If{$\left(\bigcup_{S\in\mathcal{S}_1\cup\mathcal{S}_2} S\right)=\mathrm{supp}(s_{\mathrm{obs}})$ and $|\widehat e_{12}|\le e_{\mathrm{limit}}$}
  \State \Return $\widehat e_{12}$.
\EndIf \Comment{End of stage 2 matching pipeline}

\ForAll{single subsets $S\in\mathcal{S}_1$} \Comment{Beginning of subset-redecode procedure}
  \State Remove $S$ from $\mathcal{S}_1$. Re-inject $S$ into stage-2 input and rerun the  matching pipeline, while forbidding internal matching inside $S$ via a large penalty.
  \State Form the resulting candidate correction $\hat{e}_{12}$ resulting from the altered stage 1 output and stage 2 input, and accept it if it covers all of $\mathrm{supp}(s_{\mathrm{obs}})$ with weight at most $e_{\mathrm{limit}}$. 
\EndFor

\ForAll{pairs of subsets $\{S,S'\}\subset\mathcal{S}_1$}
  \State Remove $S, S'$ from $\mathcal{S}_1$. Re-inject $S\cup S'$ into stage-2 input and rerun the matching pipeline with the same internal-match penalty on $S$ and $S'$.
  \State Form the resulting covering candidate correction $\hat{e}_{12}$ resulting from the altered stage 1 output and stage 2 input, and accept it if it covers all of $\mathrm{supp}(s_{\mathrm{obs}})$ with weight at most $e_{\mathrm{limit}}$. \Comment{End of subset-redecode procedure}
\EndFor

\State If no covering candidate has weight $\le e_{\mathrm{limit}}$, return the minimum-weight covering candidate found; if no covering candidate is found, declare decoder failure.
\end{algorithmic}
\end{breakablealgorithm}

\subsection{Numerical results for BB codes}
 The logical failure rate as a function of physical failure rate when decoding the gross code under i.i.d. bit-flip noise using BP, BP-OSD, and the presented matching based decoders is shown in Fig.~\ref{fig:d12decodercomparison}. The same plot for the two-gross code and $24 \times 24$ gross code is illustrated in Fig.~\ref{fig:matching_bp_comparison}. The gross and two-gross code use the small-size matching decoder approach while the $24 \times 24$ gross code uses the intermediate-size approach. Logical failure rate is estimated using a Monte Carlo simulation. For a data point with logical error rate of order $10^{-k}$,  $10^{k+2}$ trials were run.  The parameters used for the intermediate matching decoder are $\lambda=1,R_{\text{match}}=5,R_{\text{refine}}=2,w_{\text{fixed}}=6,F_{\text{limit}}=0.5, $ and $(\delta_-,\delta_+)=(-1,1)$, and the parameters used for the small matching decoder are $\lambda=1,R_{\text{match}}=3,w_{\text{fixed}}=6,F_{\text{limit}}=0.5, e_{\text{limit}}=10$ and $(\delta_-,\delta_+)=(-1,1)$. $R_{\text{match}}$, $R_{\text{refine}}$ and $w_{\text{fixed}}$ were chosen so that further increase did not noticeably change results on the codes considered, and the remaining parameters were chosen by iterated estimation and performance testing. Note that these choices are not necessarily optimal since total parameter space is large. BP is run using min-sum with 1000 maximum iterations, and the OSD step is order 10. \\ 
 \indent We estimate the pseudothreshold achieved for each decoder on each code by considering the two data points where the maximum (or minimum) value in the uncertainty window are on different sides of the $p=p_L$ curve, and linearly interpolate between the maximum (or minimum) values in the uncertainty window. The average crossing location with $p=p_L$ between the minimum and maximum case is taken as the pseudothreshold and the standard deviation is taken as the uncertainty. The obtained values are shown in Tab.~\ref{tab:pseudothresholds}. Based on these pseudothresholds and the data in Fig.~\ref{fig:matching_bp_comparison}, we observe that matching seems to perform best for the gross code, achieving logical failure rates relatively close to BP-OSD. The fact that the small-size oriented matching decoder performs better for the standard gross code than for the two-gross code is not unexpected; the key mechanism used in this procedure of removing up to two selected error clusters and re-decoding will become less effective as the expected number of clusters increases. Furthermore, the relative improvement, at least compared to BP, that we see on the $24 \times 24$ lattice is also not unexpected. This is because the intermediate size decoder uses the application of short strings as the mechanism for obtaining a final correction, side-stepping the above problem since all explicit error clusters need not be found. \\

\begin{figure}[H]
    \centering
    \includegraphics[width=\linewidth]{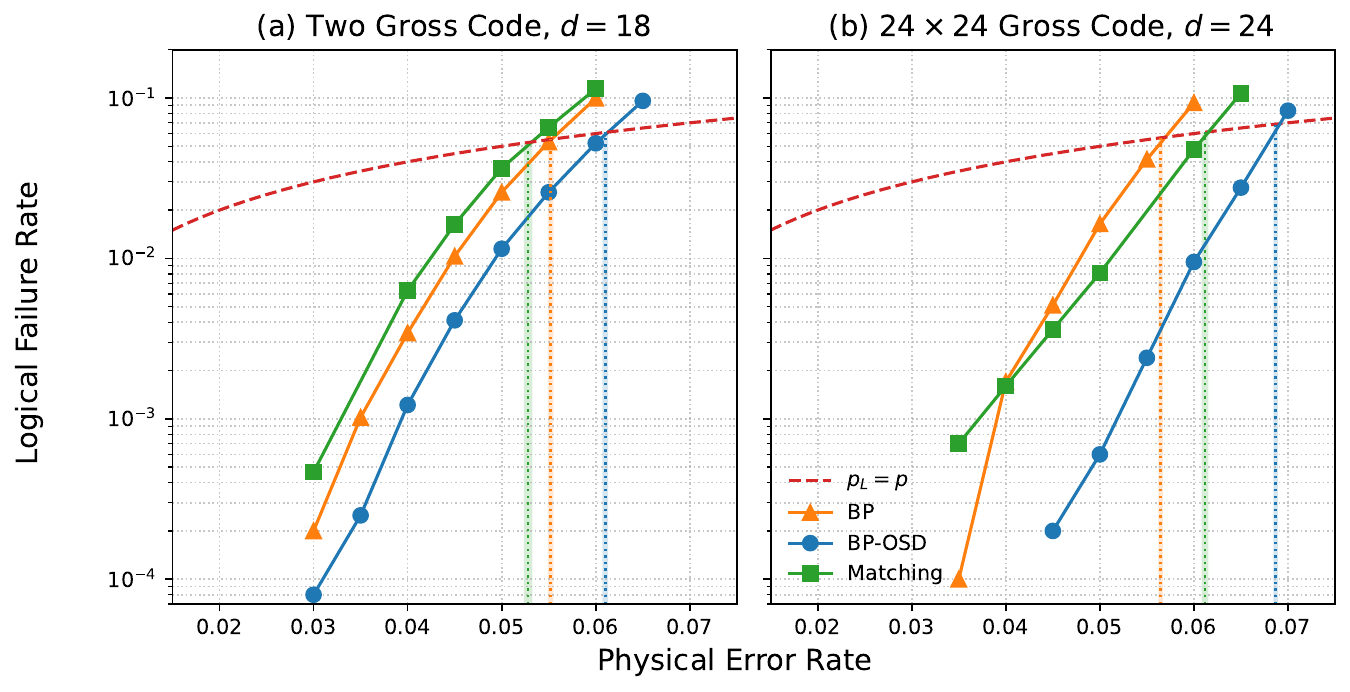}
    \caption{Decoding the two-gross code and the $24 \times 24$ gross code under i.i.d bit-flip noise using BP, BP-OSD, and the novel matching based decoder. The small size matching decoder is used for the two-gross code, while the intermediate size matching decoder is used for the  $24 \times 24$ gross code. Logical error rates are determined through Monte Carlo sampling of error configurations at the associated physical error rate followed by decoding. For a data point with logical error rate of order $10^{-k}$,  $10^{k+2}$ trials were run.  Statistical uncertainties are determined by binomial counting statistics and are contained within marker size. Vertical lines indicate psuedothresholds for each decoding method, and shaded regions indicate psuedothreshold uncertainty estimated from logical error rate uncertainty.}
    \label{fig:matching_bp_comparison}
\end{figure}

\begin{table}[ht]
\centering
\begin{tabular}{lccc}
\hline
Decoder & [[144,12,12]] & [[288,12,18]]  & [[1152,16,24]] \\
\hline
BP-OSD   & $5.47 \pm 0.06\%$ & $6.10 \pm 0.03\%$ & $6.87 \pm 0.02\%$ \\
BP       & $4.95 \pm 0.06\%$ & $5.52 \pm 0.03\%$ & $5.65 \pm 0.02\%$ \\
Matching & $5.19 \pm 0.07\%$ & $5.28 \pm 0.05\%$ & $6.12 \pm 0.04\%$ \\
\hline
\end{tabular}
\caption{Pseudothresholds under bit-flip, code capacity noise for the codes considered. Pseudothreshold is estimated by linear interpolation between data points lying above and below the $p=p_L$ curve, and uncertainties are estimated by considering the interpolation between the maximum and minimum values respectively of the uncertainty window around each data point.  }
\label{tab:pseudothresholds}
\end{table}
\indent We now briefly discuss runtime. While the MWPM procedure applied in the matching decoder is fast, there is a significant constant overhead in the current implementation of both decoders. Repeated decoding of the syndrome in all shifts of the $b\times b$ unit cell as well as repeated decoding of the subset-removal stage 2 syndromes in the small code size adaptation are costs that are very easily parallelized but can result in slowdown with a limited quantity of compute. Additionally, the potentially large constant associated with the algorithm used to compute potential error clusters has a nontrivial effect on runtime. Finally, the code used to run the matching decoder was implemented in Python, while the computational load for the BP and BP-OSD decoders is handled through generally faster C++. As a result of these factors, our matching implementation currently lags BP-OSD in speed. Our decoders serve as a proof of concept of the application of matching to TTI codes, and further optimizations needed to improve the runtime are left to future work.

\section{Conclusions and future directions}
\label{sec:conclusions}

In this work, we developed a graph-matching approach to decoding 2D TTI codes that exploits their equivalence to (multiple copies of) the TC.
Concretely, we introduced two decoders: the layer-decoupling decoder, which can be viewed as a generalization of the projection/restriction decoder for the color code~\cite{delfosse2014decoding,kubica2023efficient}, and the cell-matching decoder, which performs local flushing of excitations within coarse-grained unit cells in order to better respect the locality of the noise and avoid introducing correlated errors.
We established analytical performance guarantees, as well as numerically studied the performance of our decoders for practically relevant instances of BB codes, including the gross code.
Our results indicate that the proposed graph-matching approach can be competitive with BP-based decoders~\cite{roffe2020decoding,panteleev2021degenerate,muller2025improved}.

There are several directions for future work.
First, our decoders could potentially be improved by optimizing the edge weights used in the graph-matching subroutines, for instance by first running BP to obtain relevant soft information, such as approximate marginals or local likelihood ratios~\cite{higgott2023}.
Such hybrid BP–graph-matching decoders may better capture correlations between excitations in the different TC copies.
Second, it would be important to generalize our graph-matching approach to noise models that include measurement errors, such as phenomenological or circuit-level noise.
In particular, it would be valuable to understand whether the cell-matching decoder can be integrated with existing circuit-level decoding techniques, such as circuit-level decoders for the color code~\cite{chamberland2020,gidney2023new,lee2025color}.
We envision that a graph-matching decoder capable of handling phenomenological noise could be naturally applied in the setting of transversal algorithmic fault tolerance~\cite{Zhou2025}, where the decoding problem for logical circuits with transversal gates is solved using graph-matching techniques~\cite{cain2025fastcorrelateddecodingtransversal,SerraPeralta2026}.
Third, one could consider quantum codes defined on lattices with open boundary conditions, as many proposed constructions emphasize such layouts to improve the feasibility of experimental implementations~\cite{eberhardt2024pruning,liang2025planar,steffan2025tile}.
From the perspective of decoding, the presence of boundaries may modify global constraints on excitations and affect the local flushing of excitations near boundaries.
This, in turn, would require identification of excitations that condense at given boundaries and adaptation of matching techniques by, for instance, allowing excitations to be paired with appropriate boundaries.
Finally, it is natural to ask whether a graph-matching approach can also be applied to other quantum low-density parity-check codes, such as quantum product codes~\cite{tillich2013quantum,panteleev2021quantum}; these codes allow optimized QEC protocols~\cite{manes2025distance,tan2025effective,berthusen2025adaptive} and support many logical gates~\cite{quintavalle2023partitioning,xu2025fast,golowich2025quantum,golowich2025constant,tan2025single,li2025transversal}.
The main obstacle is that their syndrome patterns do not necessarily resemble point-like TC excitations; rather, they often lack a clear geometric, low-dimensional structure.
One possible direction would be to generalize the decoders for the higher-dimensional color code~\cite{kubica2023efficient}, although it remains unclear whether a constant-depth circuit mapping generic quantum product codes to (multiple copies of) the higher-dimensional TC should exist~\cite{haah2016algebraic}.
\section*{Acknowledgements}
\label{sec:acknowledgments}

We thank Casey Duckering and Refaat Ismail for helpful discussions.
Part of this work was done while S.J.S.T, E.H., and P.L. were interning at QuEra Computing Inc.
S.J.S.T. acknowledges funding and support from Joint Center for Quantum Information and Computer Science (QuICS) Lanczos Graduate Fellowship and the National University of Singapore (NUS) Development Grant. E.H. is supported by the Fulbright Future Scholarship.
A.K. acknowledges support from the NSF (QLCI, Award No. OMA-2120757), IARPA and the Army Research Office (ELQ Program, Cooperative Agreement No. W911NF-23-2-0219).

\emph{Note added.---}During the preparation of this manuscript, we became aware of independent work by Sahay, Williamson, and Brown that uses a similar intuition based on the equivalence of 2D TTI codes and the TC, and develops a matching decoder for BB codes~\cite{sahay2026matching}.

\bibliographystyle{alpha}
\bibliography{bib/ref}

\appendix
\section{Bivariate bicycle codes}
\label{app:summary-bb-codes}
While the polynomial formalism holds for any \TwoDTLTI code, we are mainly interested in a specific class of \TwoDTLTI codes known as bivariate bicycle (BB) codes. These codes are defined using bivariate polynomials and have been shown to have good performance in practice. A lot of the most popular \TwoDTLTI codes such as the toric code and color code can be viewed as instantiations of the BB codes. Recently, a subclass of BB codes that have interactions that go beyond nearest-neighbor qubits have been proposed in Ref.~\cite{bravyi2024high} and are known to be a good code in practice - 0.8\% circuit level error rate and a lot of transversal gates including automorphism gates~\cite{yoder2025tour}. Other research works have also studied the problem of embedding general BB codes with non-periodic boundaries as well as the logical gates that are amenable for these codes~\cite{berthusen2025toward,eberhardt2024logical,liang2025planar}. While other works including Ref.~\cite{liang2025generalized} have considered embedding BB codes in exotic tori, we are primarily interested in a restricted class of BB codes that are known to be embeddable in a 2D toric layout in this work. We reuse some of the notation from Ref.~\cite{chen2025anyon}.

\begin{definition}[Bivariate Bicycle Codes]
  \label{def:bivariate-bicycle-codes}
  The bivariate bicycle (BB) code is constructed using classical cyclic codes. To define the BB code on two $\ell \times m$ lattices, we first introduce a permutation matrix 
  \begin{equation}S_k = \begin{pmatrix}
    0 & 1 & 0 & \cdots & 0 & 0\\
    0 & 0 & 1 & \cdots & 0 & 0\\
    \vdots & \vdots & \vdots & \ddots & \vdots & \vdots\\
    0 & 0 & 0 & \cdots & 1 & 0\\
    1 & 0 & 0 & \cdots & 0 & 0
  \end{pmatrix} \in \F_2^{k \times k}.\end{equation}
  In addition, we define the following:
  \begin{equation}x\coloneqq S_\ell \otimes \ident_m,\quad y\coloneqq \ident_\ell \otimes S_m.\end{equation}
  Note that $xy = yx$ because they commute. 
  We denote the set of all bivariate monomials as $\calM$:
  \begin{equation}\calM = \{1, x, \ldots, x^{\ell - 1}, y, xy, \ldots, x^{\ell - 1}y, \ldots, y^{m-1}, xy^{m-1}, \ldots, x^{\ell - 1}y^{m-1}\}.\end{equation}
  A BB code $\BBc(\overline{\alpha}, \overline{\beta}, a, b)$ for $\overline{\alpha} \equiv -\alpha, \overline{\beta} \equiv -\beta$ and $\alpha\geq \beta \geq 0$ is then defined by two polynomials
  \begin{equation}a(x,y) = 1 + x + x^{\overline{\alpha}}y^b,\quad b(x,y) = 1 + y + y^{\overline{\beta}}x^a.\end{equation}
  The parity check matrices are given by the following:
  \begin{align}
    H_X = \left(a(x,y)\;\middle|\;b(x,y)\right)), \quad\quad H_Z = \left(b^*(x,y)\;\middle|\;a^*(x,y)\right),
  \end{align}
  The left partition of the parity-check matrices correspond to the qubits that reside on the horizontal edges of the lattice and the right partition of the parity-check matrices correspond to the qubits that reside on the vertical edges of the lattice.
\end{definition}

While Definition~\ref{def:bivariate-bicycle-codes} is a definition that has been used by Refs.~\cite{bravyi2024high,chen2025anyon} for its simplicity, we note that it is possible to state it with an algebraic formalism using bivariate polunomials over the finite field $\F_2$. We note that this particular definiion might be helpful for the reader to gain more intuition about the code structure of the BB codes.

Since the BB codes are defined on two $\ell \times m$ lattices, which can also be viewed as the $\ell m$ horizontal and $\ell m$ vertical edges in a single 2D square lattice, we can label the $(i, j)$-th element in this lattice as $x^i y^j$ for $i = 0,, \ldots, \ell - 1$ and $j = 0, \ldots m-1$. Using this polynomial labeling of qubits on the lattice, we can characterize the BB code algebraically. The benefit of the following algebraic formalism is that one can then take advantage of preexisting computational algebraic geometry tools, such as the Gr\"obner basis algorithm, to analyze the properties of concrete BB codes such as their $\llbracket n, k, d\rrbracket$ parameters.

Let $R = \F_2[x^{\pm 1}, y^{\pm 1}]/\langle x^\ell - 1, y^m - 1 \rangle$ be the ring of bivariate polynomials modulo the relations $x^\ell = 1$ and $y^m = 1$. The qubits of the code correspond to elements of $R$, and the stabilizer checks are defined by the polynomials $a(x, y)$ and $b(x, y)$ as in Definition~\ref{def:bivariate-bicycle-codes}.
\paragraph{Stabilizer Groups.}
Each stabilizer generator when viewed as a row vector of a parity check matrix can be regarded as two polynomials in in $R$. More precisely, a row of $H_X$ can be expressed as:
\begin{equation}\left\{\left(x^i y^j a(x,y), x^i y^j b(x,y)\right)\;\middle|\; 0 \leq i < \ell, 0 \leq j < m\right\},\end{equation}
and a row of $H_Z$ can be expressed as:
\begin{equation}\left\{\left(x^i y^j b(x,y), x^i y^j a(x,y)\right)\;\middle|\; 0 \leq i < \ell, 0 \leq j < m\right\}.\end{equation}

Therefore, the linear spaces generated by the $X$ and $Z$ stabilizer generators are $R$-modules:
\begin{align}
    S_X &= R \cdot \left(a(x,y), b(x,y)\right) = \left\{\left(f(x,y)a(x,y), f(x,y)b(x,y)\right)\;\middle|\;f(x,y) \in R\right\}, \\
    S_Z &= R \cdot \left(b^*(x,y), a^*(x,y)\right) = \left\{\left(f(x,y)b^*(x,y), f(x,y)a^*(x,y)\right)\;\middle|\;f(x,y) \in R\right\}.
\end{align}

\paragraph{Syndromes.}
Consider a $Z$ error configuration that flip $X$ stabilizers. Because we have two block of $\ell \times m$ qubits with polynomial labelings, we can regard the $Z$ error configuration as two polynomials $(f(x,y), g(x,y))$. The resulting $X$ syndrome for the error configuration is given by 
\begin{equation}H_X \begin{pmatrix}
    f(x,y) \\ g(x,y)
\end{pmatrix},\end{equation}
which is naturally a linear combination of the columns of the parity check matrix $H_X$. We ntoe that the columns of $H_X$ can also be regarded as polynomials:
\begin{equation}\col(H_X) = \left\{x^i y^j a^*(x,y)\right\} \cup \left\{x^i y^j b^*(x,y)\right\}.\end{equation}
Using this algebraic perspective, the $X$ syndrome of this $Z$ error configuration is an element of a polynomial ideal
\begin{equation}f(x,y)a^*(x,y) + g(x,y) b^*(x,y) \in \left\langle a^*(x,y), b^*(x,y)\right\rangle.\end{equation}
Similarly, the $Z$ syndrome for a $X$ error configuration is also an element of a polynomial ideal $\left\langle b(x,y), a(x,y)\right\rangle$.

\paragraph{Logical Operators.}
Suppose we have a $Z$ logical operator that corresponds to some configuration $(f(x,y), g(x,y))$. Naturally, it must be in the kernel of $H_X$ which implies that it has trivial $X$ syndrome, i.e.,
\begin{equation}f(x,y)a^*(x,y) + g(x,y)b^*(x,y) = 0.\end{equation}
These configurations $(f,g)$ form the \emph{syzygy} $R$-module $\syz(a^*(x,y), b^*(x,y))$. Another requirement of this arbitrary $Z$ logical operator is that it must not be in the image of $H_Z$, i.e., it cannot be written as a linear combination of the rows of $H_Z$. Therefore, we can fully characterize all $Z$ logical operators by a quotient $R$-module:
\begin{equation}\calL_Z = \syz(a^*(x,y), b^*(x,y))/S_Z = \syz(a^*(x,y), b^*(x,y))/\left\langle b(x,y), a(x,y)\right\rangle.\end{equation}  
Similarly, the $X$ logical operators can be characterized by the quotient $R$-module:
\begin{equation}\calL_X = \syz(a(x,y), b(x,y))/S_X = \syz(a(x,y), b(x,y))/\left\langle a^*(x,y), b^*(x,y)\right\rangle.\end{equation}

\paragraph{Example: Kitaev Toric Code (KTC).}
The toric code is a special case of the BB code with $\ell = m = d$ and
\begin{equation}a(x,y) = 1 + x,\quad b(x,y) = 1 + y.\end{equation}
The $X$ and $Z$ stabilizers can be expressed as $R$-modules:
\begin{align}
  S_X &= R \cdot \left(1 + x, 1 + y\right) = \left\{\left(f(x,y)(1 + x), f(x,y)(1 + y)\right)\;\middle|\;f(x,y) \in R\right\} \\
    S_Z &= R \cdot \left(1 + \bar{y}, 1 + \bar{x}\right) = \left\{\left(f(x,y)(1 + \bar{y}), f(x,y)(1 + \bar{x})\right)\;\middle|\;f(x,y) \in R\right\}.
\end{align}

The undetectable $X$ and $Z$ errors are given by the syzygies of the polynomials respectively:
\begin{align}
  \syz(b,a) &= R \cdot (1+x, 1+y) \oplus R \cdot \left(0, 1 + x + x^2 + \ldots + x^{d-1}\right) \\
  &\hspace{3em} \oplus R \cdot \left(1 + y + y^2 + \ldots + y^{d-1}, 0\right), \\
  \syz(a^*, b^*) &= R \cdot (1+\bar{y}, 1+\bar{x}) \oplus R \cdot \left(0, 1 + y + y^2 + \ldots + y^{d-1}\right) \\
  &\hspace{3em} \oplus R \cdot \left(1 + x + x^2 + \ldots + x^{d-1}, 0\right).
\end{align}

The logical operators are then given by the quotient $R$-modules:
\begin{align}
    \calL_Z &= \syz(b,a)/S_Z\\
    &= R \cdot (1+x, 1+y) \oplus R \cdot \left(0, 1 + x + x^2 + \ldots + x^{d-1}\right) \oplus R \cdot \left(1 + y + y^2 + \ldots + y^{d-1}, 0\right),\\
    \calL_X &= \syz(a^*, b^*)/S_X\\
    &= R \cdot (1+\bar{y}, 1+\bar{x}) \oplus R \cdot \left(0, 1 + y + y^2 + \ldots + y^{d-1}\right) \oplus R \cdot \left(1 + x + x^2 + \ldots + x^{d-1}, 0\right).
\end{align}
\section{An algorithmic approach to decoupling \TwoDTLTI codes}
\label{app:decoupling}
In this section, we present an algorithmic approach to decoupling \TwoDTLTI codes into independent copies of TCs and some trivial product state. This notion of decoupling was introduced by Bomb{\'\i}n~\cite{bombin2014structure} and Haah~\cite{haah2013commuting} independently, and it plays a crucial role in our understanding of 2D topological codes. It effectively allows us to treat all two-dimensional topological codes as if they are equivalent up to Clifford gates, which simplifies the analysis of these codes. In Ref,~\cite{haah2016algebraic}, Haah introduced the notion of coarse-graining and provided an algorithmic proof for how one might be able to decouple an arbitrary 2D topological CSS code that exhibits topological order into independent copies of TCs and some trivial product state. Because the language used in Ref.~\cite{haah2016algebraic} and Ref.~\cite{bombin2014structure} is quite abstract, we will present a more concrete algorithmic approach using modern quantum error correction language. During our exposition, we will also highlight the relationship between the terms we use and the terms that Haah and Bomb{\'\i}n use in their works so that the reader can easily refer to their works for a more in-depth understanding of the decoupling process. 

This section is organized in the following way. We first give a quick recap on Clifford operations before introducing the concept of coarse-graining and discussing the valid symplectic transformations that can be used to decouple the code into TCs. Subsequently, we present a slightly modified version of the decoupling algorithm stated in Ref.~\cite{haah2016algebraic}. Finally, we provide an example to illustrate the decoupling process in action.

\subsection{Clifford Operations and Symplectic Group}
\label{subsec:clifford-operations}
Clifford operations are a class of quantum gates that preserve the structure of stabilizer codes. They can be represented as elements of the symplectic group $Sp(2n, \F_2)$, which consists of $2n \times 2n$ matrices that preserve a symplectic form. The symplectic form is a bilinear form that encodes the commutation relations of the Pauli operators.

\begin{definition}[Symplectic Group]
  The symplectic group $Sp(2n, \F_2)$ is the group of $2n \times 2n$ matrices $M$ over $\F_2$ such that
  \begin{equation}M^\top \Omega M = \Omega,\end{equation}
  where $\Omega = \begin{pmatrix}
    0 & I_n\\
    I_n & 0
  \end{pmatrix}$ is the symplectic form.
  The symplectic group is a subgroup of the general linear group $GL(2n, \F_2)$.
  The symplectic group is generated by the following elementary operations:
  \begin{itemize}
    \item \emph{CNOT gates}: $CNOT_{i,j}$ for $i, j \in [n]$.
    \item \emph{Hadamard gates}: $H_i$ for $i \in [n]$.
    \item \emph{Phase gates}: $S_i$ for $i \in [n]$.
  \end{itemize}
\end{definition}

These elementary operations can be understood as column operations on the binary symplectic matrix of a CSS code. For example, a CNOT gate $CNOT_{i,j}$ corresponds to adding the $j$-th column to the $i$-th column on the left partition of the binary symplectic matrix and adding the $(i + n)$-th column to the $(j + n)$-th column on the right partition of the binary symplectic matrix. A Hadamard gate $H_i$ corresponds to swapping the $i$-th and $(i + n)$-th columns, and a Phase gate $S_i$ corresponds to adding the $(i + n)$-th column to the $i$-th column. These column transformations directly correspond to the action of the Clifford gates on the $n$-qubit Pauli operators of the CSS code. It is not too hard to then see that the Clifford gates permute the Pauli operators of the CSS code.

\subsection{Coarse-graining}
\label{subsec:coarse_graining}
Coarse-graining is a technique used to simplify the analysis of complex systems by grouping together smaller components into larger ones. In the context of the two-dimensional topological CSS codes, coarse-graining effectively allows us to group several cells in the two-dimensional lattice into a single super-cell. The rationale behind why we might want to do this is as follows: for the TC, the interactions of the code Hamiltonian can be completely captured by the interactions within a single cell; however, for codes that have local but longer-range interactions, the interactions of the code Hamiltonian can no longer be captured by the interactions within a single cell. By only considering a single cell at a time, we are no longer able to identify the pattern or the translational symmetry baked into the code. Coarse-graining allows us to group several cells together so that we can still identify the pattern of the interactions in the code Hamiltonian in a single super-cell. We now ``observe'' the same pattern whenever we translate by a single super-cell in both the horizontal and vertical directions. 

Algebraically, the act of coarse-graining is essentially the act of taking a smaller base ring for bivariate Laurent polynomials. So instead of working with $R = \F_2[x^{\pm 1}, y^{\pm 1}]$, we can work with a smaller base ring $R' = \F_2[x^{\pm b}, y^{\pm b}] = \F_2[x^{\prime\,\pm 1}, y^{\prime\,\pm 1}]$ for some integer $b \geq 1$. The choice of $b$ determines the size of the super-cell. The larger the value of $b$, the larger the super-cell, and the more coarse-grained the code is. For each site that once accommodated $q$ qubits, the new site in the coarse-grained lattice will now accommodate $bq$ qubits. Over the smallering ring $R'$, the parity-check matrix of the code will now grow by a factor of $b$ since each monomial entry in the original parity-check matrix now corresponds to a matrix over the smaller ring $R'$. It is sufficient to identify the new matrix representation of the generators $x$ and $y$ of the ring over $\F_2$ since the representation of an arbitrary polynomial in $R$ can be expressed as products, sums, and inverses of these generator matrices. The column basis for these matrices is given by the following:
\begin{equation}\calB = \set{1, x, x^2, \ldots, x^{b-1}, y, xy, x^2y, \ldots, x^{b-1}y, \ldots, y^{b-1}, xy^{b-1}, \ldots, x^{b-1}y^{b-1}}.\end{equation}
Thus, we can express the generator matrices of the coarse-grained code as follows:
\begin{align}
    &\left(x\,:\,R^{\prime\,b} \to R^{\prime\,b}\right) = \mathrm{blockdiag}^b\begin{pmatrix} 
        0 & 0 & \cdots & 0 & x' \\
        1 & 0 & \cdots & 0 & 0 \\
        0 & 1 & \cdots & 0 & 0 \\
        \vdots & \vdots & \ddots & \vdots & \vdots \\
        0 & 0 & \cdots & 1 & 0
    \end{pmatrix} \in R'^{b^2 \times b^2},\\
    &\left(y\,:\,R^{\prime\,b} \to R^{\prime\,b}\right) = \begin{pmatrix} 
        0 & 0 & \cdots & 0 & y' & 0 & 0 & \cdots & 0 & 0 \\
        0 & 0 & \cdots & 0 & 0 & y' & 0 & \cdots & 0 & 0 \\
        0 & 0 & \cdots & 0 & 0 & 0 & y' & \cdots & 0 & 0 \\
        \vdots & \vdots & \vdots & \vdots & \vdots & \vdots & \vdots & \ddots & \vdots & \vdots \\
        0 & 0 & \cdots & 0 & 0 & 0 & 0 & \cdots & y' & 0 \\
        0 & 0 & \cdots & 0 & 0 & 0 & 0 & \cdots & 0 & y' \\
        1 & 0 & \cdots & 0 & 0 & 0 & 0 & \cdots & 0 & 0  \\
        0 & 1 & \cdots & 0 & 0 & 0 & 0 & \cdots & 0 & 0 \\
        \vdots & \vdots & \ddots & \vdots & \vdots & \vdots & \vdots & \ddots & \vdots & \vdots \\
        0 & 0 & \cdots & 1 & 0 & 0 & 0 & \cdots & 0 & 0 
        \end{pmatrix} \in R'^{b^2 \times b^2}
\end{align}
where the blockdiag operator constructs a block diagonal matrix with $b$ copies of the given matrix on its diagonal. The matrix representation for the generator $y$ contains some values of $y'$ at the top $b$ rows and the last $b$ columns as shown in the matrix above. The representation also has $b^2 - b$ 1s on the diagonal in the $b^2 - b \times b^2 - b$ submatrix in the bottom left corner of the matrix. These generator matrices effectively perform the following ordered shifts on the basis $\calB$:
\begin{align}
    x\,:\, \calB &= \set{1, x, x^2, \ldots, x^{b-1}, y, xy, x^2y, \ldots, x^{b-1}y, \ldots, y^{b-1}, xy^{b-1}, \ldots, x^{b-1}y^{b-1}} \\
    &\mapsto \set{x, x^2, x^3, \ldots, x^b = x', xy, x^2y, x^3y, \ldots, x^by = x'y, \ldots, y^b = y', xy', \ldots, x'y^{b-1}},\\
    y\,:\, \calB &= \set{1, x, x^2, \ldots, x^{b-1}, y, xy, x^2y, \ldots, x^{b-1}y, \ldots, y^{b-1}, xy^{b-1}, \ldots, x^{b-1}y^{b-1}} \\
    &\mapsto \set{y, xy, x^2y, \ldots, x^{b-1}y, y^2, xy^2, x^2y^2, \ldots, x^{b-1}y^2, \ldots, y^b = y', xy', \ldots, x^{b-1}y'}.
\end{align}

With these generator matrices that we have constructed, we can now construct a matrix for any polynomial. For example, if we have a polynomial $1 + y + y^2$, we can construct the matrix representation of this polynomial for $b = 2$ as follows:
\begin{align}
    R \ni 1 + y + y^2  &\mapsto \begin{pmatrix} 
        1 & 0 & 0 & 0 \\
        0 & 1 & 0 & 0 \\
        0 & 0 & 1 & 0 \\
        0 & 0 & 0 & 1  
        \end{pmatrix} + \begin{pmatrix} 
        0 & 0 & y' & 0 \\
        0 & 0 & 0 & y' \\
        1 & 0 & 0 & 0 \\
        0 & 1 & 0 & 0
        \end{pmatrix}  + \begin{pmatrix} 
        y' & 0 & 0 & 0 \\
        0 & y' & 0 & 0 \\
        0 & 0 & y' & 0 \\
        0 & 0 & 0 & y'
        \end{pmatrix} \\
        &= \begin{pmatrix} 
        1 + y' & 0 & y' & 0 \\
        0 & 1 + y' & 0 & y' \\
        1 & 0 & 1 + y' & 0 \\
        0 & 1 & 0 & 1 + y'
        \end{pmatrix}  \in R'^{2^2 \times 2^2}.
\end{align}
With this faithful representation of the polynomials in the ring $R'$, we can now observe the interactions of the code Hamiltonian in the coarse-grained lattice with greater clarity.

\subsection{Local Disentanglement and Stabilizer Relabeling via Symplectic Transformations}
\label{subsec:decoupling_symplectic}
In this subsection, we discuss the valid symplectic transformations that can be applied to the coarse-grained parity-check matrix of the code to perform local disentanglement and stabilizer relabeling on the super-cells of the coarse-grained lattice. The goal of these transformations is to decouple the code into independent copies of TCs and some trivial product state.

As discussed in Section~\ref{sec:preliminaries}, the elementary column operations on the parity-check matrix of the code correspond to local Clifford transformations on the physical qubits. Letting $n_c$ denote the number of columns in the coarse-grained symplectic parity-check matrix, these operations include the following:
\begin{itemize}
    \item \textbf{Swapping columns.} Because we are working with the symplectic representation of the parity-check matrix, we always have to swap two pairs of columns at a time. In other words, suppose we want to swap columns $i$ and $j$ for $i, j <= n_c$. Then, we also have to swap columns $i + n_c$ and $j + n_c$ at the same time. This corresponds to physically swapping the qubits $i$ and $j$ in each super-cell within the coarse-grained lattice.
    \item \textbf{Multiplying a column by a nonzero monomial in $R'$.} Suppose we are interested in scaling the $i$-th column by a monomial $m = x^{\prime\,j}y^{\prime\,k}$ for $j, k \in \Z$ in the ring $R'$. Then, we have to also scale the $(i + n_c)$-th column by $m^* = x^{\prime\,-j}y^{\prime\,-k}$ to ensure that the symplectic form is preserved. This operation corresponds to shifting each qubit $i$ in every super-cell by $-j$ horizontal (super-cell) steps and $-k$ vertical (super-cell) steps in the coarse-grained lattice. This can be viewed as a translation of all qubits $i$ in the different super-cells by the vector $(-j, -k)$ in the coarse-grained lattice.
    \item \textbf{Adding one column to another.} Suppose we are interested in adding the $i$-th column to the $j$-th column for $i < j \leq n_c$ without loss of generality. Then, we have to add the $(j + n_c)$-th column to the $(i + n_c)$-th column as well in order to preserve the symplectic form. The change in the order of the addition corresponds to the different way $X$ and $Z$ errors propagate along the physical CNOT gate. This operation corresponds to physically performing a CNOT gate from qubit $i$ to qubit $j$ in each super-cell within the coarse-grained lattice. For the case where we are interested in adding a monomial-multiple of the $i$-th column to the $j$-th column, we can simply decompose it into a sequence of scaling, adding, and then re-scaling where the physical operations would be the same as described above.
\end{itemize}

The operations discussed above are valid column operations that only involve physical CNOT gates that either induce translations or perform disentanglement operations on the qubits in the coarse-grained lattice. We now proceed to discuss the valid row operations that can be performed on the coarse-grained parity-check matrix. These operations are essentially stabilizer check relabelings. The valid row operations include the following:
\begin{itemize}
    \item \textbf{Swapping rows.} This operation corresponds to swapping the labels of the stabilizer checks in each super-cell within the coarse-grained lattice. It is a valid operation since it does not change the stabilizer group of the code.
    \item \textbf{Multiplying a row by a nonzero monomial in $R'$.} This operation corresponds to scaling the stabilizer check by a nonzero monomial in $R'$. It effectively induces a relabeling of the same stabilizer check in each super-cell in the coarse-grained lattice. This is akin to the case of multiplying a column by a nonzero monomial in $R'$, where the same qubits are translated by the same vector in the coarse-grained lattice. 
    \item \textbf{Adding one row to another.} This operation corresponds to adding one stabilizer check to another. It effectively induces a relabeling of the stabilizer check that changes as a result of the addition within each super-cell in the coarse-grained lattice. This is a valid operation since it does not change the stabilizer group of the code.
\end{itemize}

\subsection{Setting up for the Decoupling Algorithm}
\label{subsec:decoupling_setup}
In this subsection, we lay down the sufficient conditions for the decoupling algorithm to work. These conditions are detailed in the works done by Haah~\cite{haah2016algebraic} and Bomb{\'\i}n~\cite{bombin2014structure} in different abstract languages. We will use the standard quantum error correction language to describe these conditions and relate them to the terminology used in the works of Haah and Bomb{\'\i}n in case the reader is interested in referring to their works for a more in-depth understanding of the decoupling process.

\subsubsection{Topological Order Condition}
\label{subsec:topological_order_condition}
One of the properties that we require of the parity-check matrix of the code is that it must exhibit topological order. Formally, it means that in the infinite limit, we have
\begin{equation}\ker H = \im H^\dagger\end{equation}
where $H$ is the parity-check matrix of the code defined over the ring $R = \F_2[x^{\pm 1}, y^{\pm 1}]$.

To make our discussion more concrete, let $t$ be the number of LTI stabilizer check descriptions in the code and $q$ be the number of qubits in each site in the lattice. Then, let $H_X,H_Z \in R^{t\times q}$ be LTI parity–check maps that send Pauli errors on qubits to violated checks.  It is convenient to bundle the two syndrome maps into the parity-check matrix, also known as the \emph{excitation map}:
\begin{equation}
\label{eq:excitation-map}
H \;=\;
\begin{pmatrix}
H_X & 0\\
0   & H_Z
\end{pmatrix}
: R^{q}\oplus R^{q} \longrightarrow R^{t}\oplus R^{t},
\end{equation}
so that two excitation patterns are physically equivalent iff they differ by an element of $\im H$.

Recall that the two-dimensional topological CSS code defined by $H$ can be viewed as the following 3-term chain complex:
\begin{equation}C_2 = R^t \xrightarrow[]{\partial_2 = \left(\begin{array}{c|c}0 & H_Z\end{array}\right)^\dagger} C_{1} = R^{2q} \xrightarrow[]{\partial_{1} = \left(\begin{array}{c|c}H_X & 0\end{array}\right)}  C_0 = R^t\end{equation}
where we associate $C_0, C_1, C_2$ with the $X$ stabilizer checks, qubits, and $Z$ stabilizer checks respectively. The condition that $\ker H = \im H^\dagger$ is equivalent to the condition that the first homology group of the chain complex $C_2 \xrightarrow[]{\partial_2} C_1 \xrightarrow[]{\partial_1} C_0$ is trivial, i.e., $H_1(C) = 0$. In fact, all homology groups are trivial which makes the chain complex an \emph{exact sequence}. This means that the code has no finite product of Pauli operators that can form a homologically non-trivial cycle (logical operator) in the infinite lattice. In other words, this topological order condition ensures that the code distance is macroscopically large. Because coarse-graining simply replaces our base ring with a smaller base ring, $R' = \F_2[x^{\pm b}, y^{\pm b}]$, it should not be too hard to see that the faithful representation of the parity-check matrix $H$ over the ring $R'$ will not change the topological order condition. Note that some of the versatile self-dual BB codes that are recently constructed can still achieve amazing finite-size parameters and performance even if they do not satisfy this topological order condition~\cite{xu2025batched}.

\subsubsection{Torsion of the Cokernel}
\label{subsec:torsion_cokernel}
Next, we introduce another object called the cokernel of the parity-check matrix $H$. The cokernel is defined as the quotient space 
\begin{equation}\coker\, H = \frac{R^{t}\oplus R^{t}}{\im H}.\end{equation} 
One can interpret $R^t$ as the vector space of all $\pm 1$-configurations of the $X$-type (or equivalently $Z$-type) stabilizer checks in the code, and $\im H$ as the vector space of all $\pm 1$-configurations of the $X$-type (or equivalently $Z$-type) stabilizer checks that are actually attainable from some Pauli error configuration on the code. In other words, we are interested in the configurations of the $X$-type (or equivalently $Z$-type) stabilizer checks that are not attainable from any Pauli error configuration on the code. Thus, $\coker \,H$ is the module of excitation classes. 
One can also view two configurations of the $X$-type (or equivalently $Z$-type) stabilizer checks as belonging to the same class in the cokernel if their difference can be created by some finite-support Pauli error. These excitations would be treated as \emph{equivalent} in the anyonic picture since they can be fused to vacuum with some finite-support Pauli operator.

For the TC, it is not too hard to see that 
\begin{equation}\coker\,H_{TC} \cong \F_2 \oplus \F_2\end{equation}
since we can have have both even and odd number of excitations for the $X$ and $Z$ stabilizer checks.
Next, let us define the torsion submodule for some $R$-module $M$.
\begin{definition}[Torsion]\label{def:torsion}
For an $R$–module $M$, the torsion submodule is
\begin{equation}\calT(M):=\{\,m\in M \;|\; \exists\,0\neq r(x,y)\in R \text{ with } r m = 0 \,\}.\end{equation}
\end{definition}
The torsion submodule of the cokernel $\calT(\coker\,H)$ directly corresponds to the classes of all point-like topological excitations in the code. Letting a class of point-like excitation be denoted as $[e] \in \coker\,H$, we can say that $[e]$ is a point-like excitation if there exists a nonzero polynomial $r(x,y) \in R$ such that $r[e] \in \im\,H$. In other words, a finite superposition of translations of the excitation pattern $e$ can be created by some finite-support Pauli error, and hence the defect can be moved by a finite-support Pauli operator (referred to as \emph{string(s)} in Ref.~\cite{bombin2014structure}) and fused to vacuum with another defect. Let us construct an example using the TC. Consider a single point-like excitation $e$ on an $X$-type stabilizer check labeled by the monomial $x^i y^j$ that belongs to the odd parity class of the possible $X$-type excitation classes.
By choosing $r = (1 + x) \in R$, we can see that 
\begin{equation}r\cdot x^i y^j = x^{i+1} y^j + x^i y^j = H \left(\begin{array}{cc|cc}x^iy^j & 0 & 0 & 0\end{array}\right)^\top \in \im H.\end{equation}
In other words, the single-qubit Pauli $Z$ on the horizontal qubit labeled by $x^i y^j$ can move the excitation $e$ to the $X$-type stabilizer check labeled by $x^{i+1} y^j$. Suppose there is some other defect at $x^{i'} y^{j'}$. We can always find a \emph{string} of Pauli operators that moves the excitation $e$ to the desired location to annihilate the defect. In terms of the anyonic language, the finite set of anyon types directly coincides with the finite abelian group of torsion classes:
\begin{equation*}\label{eq:torsioncount}
\#\{\text{anyons}\} \;=\; \dim_{\F_2} \calT(\coker\,H) 
\end{equation*}

Thus, the torsion submodule of the cokernel $\calT(\coker\,H)$, also known as the \emph{charge group} of the stabilizer group of the code in Ref.~\cite{bombin2014structure}, is extremely powerful. In fact, these topological excitations is known to be invariant under Clifford transformations, stabilizer relabelings, and even translations.

\subsubsection{Annihilator and Mobility}
\label{subsec:annihilator_mobility}
The annihilator ideal of the cokernel $\coker\,H$ is defined as
\begin{definition}[Annihilator Ideal]\label{def:annihilator_ideal}
    For any $R$–module $M$, the \emph{annihilator ideal}
\begin{equation}\ann_R(M):=\{\,r\in R \;|\; rM=0\,\}.\end{equation}
\end{definition}
We sometimes drop the subscript $R$ if it is clear which ring we are working over.
The annihilator ideal $\ann(\coker\,H)$ is the ideal of all polynomials that annihilate every excitation class in the cokernel. In other words, it is the ideal of all polynomials that kill every class of point-like topological excitation in the code. The annihilator ideal is a very important object in the study of topological codes because it encodes the mobility of the excitations in the code, i.e., which translation polynomials kill every torsion class. Similar to the torsion submodule, let us use the TC as an example to illustrate the annihilator ideal. For the TC, we can see that the annihilator ideal is generated by the polynomials $x - 1$ and $y - 1$. In other words, the annihilator ideal is given by
\begin{equation}
\label{eq:ann-inclusion}
(x-1,\;y-1)=\ann (\calT(\coker\, H)).
\end{equation}
Equation~\eqref{eq:ann-inclusion} reinforces the notion of the existence of $1$D string operators that translate any point charge by one step along the $x$ or $y$ direction. In Bomb\'{\i}n’s language, these are the strings whose commutation with local checks is confined to their endpoints and which transport a given charge.

\subsubsection{Sufficient Conditions for Decoupling}
Now, we are ready to state the sufficient conditions for the decoupling algorithm to work. The most important ingredient is the preparation of a a suitably coarse-grained parity-check matrix $H$ that satisfies the following conditions:

\begin{proposition}[Suitable Coarse-Grained Parity-Check Matrix~{\cite[Restatement of Proposition V.11]{haah2016algebraic}}]
\label{prop:suitable-coarse-grained-parity-check-matrix}
    For any two-dimensional topological LTI CSS code, if the parity-check matrix $H$ satisfies $\ker H\Omega = \im H^\dagger$ over $R = \F_2[x^{\pm 1}, y^{\pm 1}]$, then there is a choice of another parity-check matrix $H'$ such that
    \begin{equation}\im H^\dagger = \im H^{\prime\,\dagger} = \ker H'\Omega,\quad\ker H^{\prime \dagger} = 0,\end{equation}
    Moreover, any such $H'$ has size $2t \times 4t$ for some $t$, and there exists a positive integer $b$ such that 
    \begin{equation}\ann_{R'}\, \coker\,H^\dagger  = \left(x^b - 1,\;y^b - 1\right)\end{equation}
    where $R' = \F_2[x^{\pm b}, y^{\pm b}]$ is the coarse-grained base ring.
\end{proposition}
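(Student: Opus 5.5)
The proof splits into two independent parts: constructing a redundancy-free parity-check matrix $H'$, and constructing the period $b$.

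\textbf{Constructing $H'$.} Work over $R=\F_2[x^{\pm1},y^{\pm1}]$, a regular ring of global dimension $2$.
\begin{itemize}
\item The hypothesis $\ker H\Omega=\im H^\dagger$ says that $P\coloneqq\im H^\dagger$ is the kernel of a map between free modules.
\item Indeed $P=\ker(H\Omega)$, and the image of $H\Omega$ sits inside a free module. So $P$ is a second syzygy module.
\item Over a ring of global dimension $2$, a second syzygy is projective. By the Laurent-polynomial version of Quillen--Suslin (Swan), $P$ is therefore free.
\item Take a basis of $P$ and let the columns of $H'^\dagger$ be these basis vectors. Then $\im H'^\dagger=\im H^\dagger=\ker H'\Omega$, and $\ker H'^\dagger=0$ because the columns form a basis.
\end{itemize}
The shape of $H'$ comes from a rank count over the fraction field $K$ of $R$.
\begin{itemize}
\item Exactness gives $\operatorname{rank}P=2q-\operatorname{rank}(H\Omega)$, and $\operatorname{rank}(H\Omega)=\operatorname{rank}P$. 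Hence $\operatorname{rank}P=q$, so $P$ is a Lagrangian-type submodule of half rank.
\item In the CSS block form, the $X$ and $Z$ blocks each contribute a free module; call their common rank $t$, so $H'$ has $2t$ rows.
\item If necessary, adjoin trivial ancilla qubits (each carrying a single-qubit check). This preserves exactness and freeness and makes the qubit count per site $2t$, i.e.\ $H'$ has $4t$ columns.
\item Since $\im H'^\dagger=\im H^\dagger$ and any two bases of a free module differ by an invertible matrix, any other choice of $H'$ has the same size.
\end{itemize}

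\textbf{Constructing $b$.} Let $M=\coker H^\dagger$, read in the appropriate dual sense (the module of charge classes).
\begin{itemize}
\item The first goal is to show that $M$ is a finite-dimensional $\F_2$-vector space. The plan is to prove that the support of $M$ is a finite set of maximal ideals.
\item Localize at a height-one prime $\mathfrak p$, or pass to $K$. Exactness of the complex is preserved there, and the half-rank count above forces $H'_{\mathfrak p}$ to be split. Hence $M_{\mathfrak p}=0$.
\item So $M$ is supported only at maximal ideals and is finitely generated, hence of finite length. Since each residue field $R/\mathfrak m$ is a finite extension of $\F_2$, $M$ is finite-dimensional over $\F_2$. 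This is the algebraic form of the standard fact (cf.\ \cite{bravyi2010tradeoffs,haah2013commuting}) that topological order gives finitely many anyon types.
\item I expect this support argument to be the main obstacle. It is where the topological-order hypothesis is genuinely used, and the dualization via $\dagger$ and $\Omega$ must be handled carefully.
\end{itemize}
Once $M$ is finite-dimensional, the period follows by linear algebra.
\begin{itemize}
\item Multiplication by $x$ and by $y$ are commuting invertible $\F_2$-linear maps on the finite set $M$. Each therefore has finite order: $x^{b_1}=1$ and $y^{b_2}=1$ on $M$.
\item Alternatively, $x^N$ is unipotent for suitable $N$, and then $x^{N2^k}-1=(x^N-1)^{2^k}=0$ for large $k$.
\item Set $b=\operatorname{lcm}(b_1,b_2)$. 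Then $x^b-1$ and $y^b-1$ both annihilate $M$.
\end{itemize}

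\textbf{Identifying the annihilator.} View $M$ as a module over $R'=\F_2[x'^{\pm1},y'^{\pm1}]$ with $x'=x^b$ and $y'=y^b$.
\begin{itemize}
\item By the previous step, $(x'-1,\,y'-1)\subseteq\ann_{R'}M$.
\item The ideal $(x'-1,\,y'-1)$ is maximal in $R'$, since the quotient is $\F_2$.
\item If $M\neq0$, the annihilator is a proper ideal containing this maximal ideal, so $\ann_{R'}M=(x^b-1,\,y^b-1)$.
\item If $M=0$ (no anyons), the statement holds vacuously, or we simply take $b=1$.
\end{itemize}
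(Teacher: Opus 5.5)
Your outline follows the standard algebraic route; the paper itself gives no proof, only the citation to Haah. Most steps are sound:
\begin{itemize}
\item $\operatorname{im}H^\dagger=\ker H\Omega$ is a second syzygy, hence projective over the global-dimension-two ring $R$, hence free by Quillen--Suslin--Swan.
\item The rank count gives $\operatorname{rank}\operatorname{im}H^\dagger=q$.
\item Once $M$ is finite-dimensional, the finite order of $x,y$ on $M$ and the maximality of $(x'-1,y'-1)$ in $R'$ finish the argument.
\end{itemize}

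\textbf{The gap.} It lies in the step that carries the topological content: $M_{\mathfrak p}=0$ for height-one primes $\mathfrak p$. You leave this as an expected obstacle and justify it by the half-rank count. That count only gives $M\otimes_R K=0$, i.e.\ $M$ is torsion; it cannot exclude support along a curve. For example, take one qubit per site, a non-unit $f$, and $H'^\dagger=(f,0)^{T}$. The rank count holds and $H'\Omega H'^\dagger=0$, yet $M=\coker H'=R/(\bar f)$ is infinite-dimensional. Here $\ker H'\Omega=R\oplus 0\neq fR\oplus 0=\operatorname{im}H'^\dagger$, so it is exactness, not rank, that must do the work.

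The missing mechanism is as follows.
\begin{itemize}
\item Exactness gives $\coker H'^\dagger\cong\operatorname{im}H'\Omega\subseteq R^{q}$, which is torsion-free.
\item It is therefore free over the DVR $R_{\mathfrak p}$, so $H'^\dagger$ (not $H'$) has a left inverse $\tau$ over $R_{\mathfrak p}$.
\item Apply $\dagger$, with $\iota$ the antipode. This gives $H'\,\iota(\tau)^{T}=I$ over $R_{\iota(\mathfrak p)}$, so $M_{\iota(\mathfrak p)}=0$. Note this is at $\bar{\mathfrak p}$, not at $\mathfrak p$ as you wrote.
\item Since $\iota$ permutes the height-one primes, $M$ vanishes at every non-maximal prime, and your finite-length argument then goes through.
\end{itemize}
Equivalently, $\operatorname{Supp}M=V(I_q(H'))$ with $I_q(H')=\iota(I_q(H'^\dagger))$, and Buchsbaum--Eisenbud turns exactness into $\operatorname{grade}I_q(H'^\dagger)\ge 2$. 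This step is exactly where the exactness hypothesis and the adjoint relation between stabilizer and excitation maps enter, so it must be written out.

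\textbf{Three smaller points.}
\begin{enumerate}
\item $M$ must be $\coker H'$, equivalently $\calT(\coker H)$, since $\coker H\cong\coker H'\oplus(\text{free})$ when $H$ has redundant rows. Taken literally, $\coker H^\dagger$ has rank $q$ and zero annihilator, so your reading ``in the appropriate dual sense'' should be made explicit.
\item The rank count gives $t_X+t_Z=q$, not a common rank for the $X$ and $Z$ blocks. For instance, the toric code plus an extra qubit carrying a single-qubit $Z$ check is exact and non-redundant with $t_X=1$, $t_Z=2$, so $H'$ is $3\times 6$. Adjoining ancillas changes the code, so it cannot show that \emph{any} such $H'$ is $2t\times 4t$. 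What you can prove is that $H'$ has half as many rows as symplectic columns; the $2t\times 4t$ form needs the tacit assumption $t_X=t_Z$ built into the paper's conventions.
\item If $M=0$ then $\ann_{R'}M=R'$, so the asserted equality fails rather than holding vacuously. The correct conclusion is $(x^b-1,y^b-1)\subseteq\ann_{R'}M$, with equality exactly when $M\neq 0$.
\end{enumerate}
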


The intuition behind this proposition is that the parity-check matrix $H'$ can be viewed as a coarse-grained version of the original parity-check matrix $H$ that has been transformed to exhibit a simpler structure. The annihilator ideal of the cokernel of $H'$ is generated by the polynomials $x^b - 1$ and $y^b - 1$, which correspond to the super-cell translations in the coarse-grained lattice. In other words, a single point-like excitation in a single super-cell can be moved to the corresponding point in the adjacent super-cell using a finite-support Pauli operator. This suggests that the TC excitation map (up to Clifford equivalence) is hidden within the excitation map that describes the super-cell. Given the torsion submodule is invariant under Clifford transformations, it is perhaps not too hard to imagine that the code can be decoupled into independent copies of TCs (that scales with the ``size'' of the torsion submodule) and some trivial product state.

Thus, after obtaining the suitably coarse-grained parity-check matrix $H'$, we can analyze the structure of the code more easily. Haah shows that the excitation map can then be transformed into that of copies of TCs and some trivial product state using the valid transformations discussed in Section~\ref{subsec:decoupling_symplectic} as stated in the following theorem. 

\begin{theorem}[Number of Independent Copies of TCs~{\cite[Theorem V.13]{haah2016algebraic}}]
    \label{thm:decoupling_count}
    For any two-dimensional LTI CSS code, if the parity-check matrix $H$ satisfies 
    \begin{equation}\ker H\Omega = \im H^\dagger \text{ over } R = \F_2[x^{\pm 1}, y^{\pm 1}],\end{equation} 
    then the code becomes a tensor product of finitely many copies of the toric code and a product state by (a finite number of layers of) Clifford operations. The number of copies of the toric code in the CSS code is equal to $\frac{1}{2}\dim_{\F_2} \calT(\coker\,H).$
\end{theorem}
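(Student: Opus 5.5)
The plan is to follow Haah's algebraic route in three stages. First, invoke Proposition~\ref{prop:suitable-coarse-grained-parity-check-matrix}. This replaces $H$ by a coarse-grained $H'$ over $R'=\F_2[x^{\pm b},y^{\pm b}]$ with
\begin{equation}
\ker H'\Omega=\im H'^\dagger,\qquad \ker H'^\dagger=0,\qquad \ann_{R'}\coker H'^\dagger=(x^b-1,\,y^b-1).
\end{equation}
Renaming $x^b,y^b$ as $x,y$, we may therefore assume the exact sequence
\begin{equation}
0\to R^{2t}\xrightarrow{H^\dagger}R^{4t}\xrightarrow{H\Omega}R^{2t}\to\coker H\to 0,
\end{equation}
and that the annihilator of the excitation module is the augmentation ideal $\mathfrak m=(x-1,y-1)$. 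This has two consequences. The torsion module $\calT(\coker H)$ is finite-dimensional, because a module killed by $\mathfrak m$ is an $R/\mathfrak m=\F_2$-vector space. Also, $\coker H$ is torsion: the exact sequence is self-dual, so the ranks force $\coker H$ to have rank $0$. So $\coker H\cong\F_2^{m}$ as a module with trivial translation action, and $m=\dim_{\F_2}\calT(\coker H)$.

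Second, I would compare with the model. For $r$ toric codes plus ancillas, the excitation module is $\F_2^{2r}$ with trivial action, since each TC contributes $\coker H_{TC}\cong\F_2\oplus\F_2$ as computed in Section~\ref{subsec:torsion_cokernel}. Hence, once decoupling is established, counting gives $r=m/2$; the parity of $m$ comes from the nondegenerate symplectic braiding pairing on $\calT(\coker H)$. The substantive claim is the existence of the decoupling itself. For this I would build a symplectic basis of anyons $e_1,\dots,e_r,m_1,\dots,m_r$ and, for each, choose short string operators that transport the anyon by one coarse step in $x$ and in $y$. These strings exist exactly because $x-1$ and $y-1$ annihilate every class. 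Using them, I would write down an explicit chain map from $\bigoplus_i \calC_{TC}$ into $\calC$ and check that it is an isomorphism on homology and cokernel.

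Third, I would upgrade this to an equivalence of stabilizer modules realized by elementary symplectic row and column operations over $R$ (Section~\ref{subsec:decoupling_symplectic}), i.e.\ a finite-depth Clifford circuit. The natural tool is that finitely generated projective modules over a Laurent polynomial ring are free (Quillen--Suslin/Swan), so the stabilizer module decomposes as the TC part plus a free complement. The complement has trivial cokernel, and is therefore a product state after a symplectic change of basis. A further obstacle is that $GL_n(R)$ may not be generated by elementary matrices for small $n$. I would handle this by stabilizing with ancillas, appealing to Suslin's stability theorem that $SL_n=E_n$ over Laurent polynomial rings for $n\ge3$.

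The main obstacle is this third step. One must show that the symplectic form restricted to the complement is standard, so that a symplectic (not merely linear) basis change exists. I would try first to reduce it to Haah's lemma that a hyperbolic form over $R$ with unit discriminant is equivalent to the standard one after stabilization, and otherwise cite \cite[Theorem V.13]{haah2016algebraic} for that step.
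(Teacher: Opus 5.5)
Your Step~1 and your counting argument agree with the paper. The paper invokes Proposition~\ref{prop:suitable-coarse-grained-parity-check-matrix}, notes that each TC copy contributes $\F_2\oplus\F_2$ to the excitation module, and otherwise defers to \cite{haah2016algebraic}, sketching a recursive extraction of TC blocks from $H'$ by elementary row and column operations. However, your proposal does not establish the existence of the decoupling, which is the real content of the theorem.

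In Step~3 you say that, by Quillen--Suslin, the stabilizer module ``decomposes as the TC part plus a free complement''. Quillen--Suslin only says that projective (equivalently, stably free) $R$-modules are free; it does not produce a complement. Nothing in your construction makes the string-built chain map $\bigoplus_i\tilde{\calC}^{(i)}\to\calC$ degreewise split injective: there is no reason the string vectors for the $r$ copies extend jointly to an $R$-basis of $C_1$. Even where a module complement exists, it need not be a subcomplex. For the step you yourself flag as the main one, your fallback is to cite \cite[Theorem V.13]{haah2016algebraic}, which is the statement being proved. So the argument is circular exactly where it carries weight.

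The obstacle you single out is also not the relevant one. For a CSS code, a chain isomorphism $(\Phi_2,\Phi_1,\Phi_0)$ of $C_2\to C_1\to C_0$ acts by $\Phi_1$ on $Z$-type Paulis and by $\Phi_1^{-\dagger}$ on $X$-type Paulis. This preserves the symplectic form automatically, so no hyperbolic-form lemma is needed. What is missing is a homological step, which goes as follows.
\begin{itemize}
\item The sequence $0\to C_2\to C_1\to C_0\to\coker H_X\to 0$ is a free resolution of $\coker H_X\cong (R/(x-1,y-1))^{r}$. Exactness comes from topological order together with $\ker H'^{\dagger}=0$.
\item Each TC complex is, up to monomial units, the Koszul resolution of $R/(x-1,y-1)$. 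Your string map is therefore a comparison map inducing an isomorphism on $\coker$, hence a quasi-isomorphism.
\item Set $D\coloneqq\bigoplus_i\tilde{\calC}^{(i)}$. Add to $D$ the elementary complexes $C_2\xrightarrow{1}C_2$ in degrees $(2,1)$ and $C_1\xrightarrow{1}C_1$ in degrees $(1,0)$. The resulting map onto $\calC$ is degreewise surjective.
\item Its kernel is a bounded exact complex of projectives. It splits into elementary pieces in degrees $(2,1)$ and $(1,0)$, which are free by Quillen--Suslin--Swan. The resulting short exact sequence of complexes splits.
\item This gives $D\oplus(\text{ancillas})\cong\calC\oplus(\text{ancillas})$, where the ancillas are single-qubit $Z$- or $X$-stabilized states.
\end{itemize}
Only at this point does Suslin's theorem $\mathrm{SL}_n(R)=\mathrm{E}_n(R)$ for $n\ge 3$, together with $\det\Phi_1$ being a monomial, turn $\Phi_1$ into finitely many translation-invariant CNOT layers and qubit translations. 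Without this argument, or the explicit recursive elimination the paper points to, your proposal proves the count only conditionally on the decoupling.
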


The proof for this theorem is constructive and can be found in Ref.~\cite{haah2016algebraic}. The key idea is to use the valid symplectic transformations discussed in Section~\ref{subsec:decoupling_symplectic} to perform local disentanglement and stabilizer relabeling on the coarse-grained parity-check matrix $H'$ to recursively obtain block matrices that correspond to the TCs. The number of independent copies of TCs is then determined by the dimension of the torsion submodule of the cokernel of the parity-check matrix $H'$, which is given by $\frac{1}{2}\dim_{\F_2} \calT(\coker\,H')$ since the TC copies have two types of point-like excitations (the $e$ and $m$ anyons). In other words, we pick $\frac{1}{2}\dim_{\F_2} \calT(\coker\,H')$ different qubits that are translationally-invariant across the super-cells in the coarse-grained lattice and perform local disentanglement and stabilizer relabeling to extract out the TC terms that act on these qubits. Each set of these qubits corresponds to an independent copy of the TC. After extracting out all the TC copies, the remaining stabilizer checks and qubits will correspond to a trivial product state.
\section{Layer-decoupling decoder lemmas and proofs}
\label{app:layer-decoupling-decoder-lemmas}

\begin{lemma}[Chain Isomorphism]\label{lem:chain-isomorphism}
The triple $\phi=(\phi_2,\phi_1,\phi_0)$ defines a chain map $\calC\to \tilde{\calC}$, i.e., the following equalities hold:
\begin{equation}\phi_0\circ \partial_1 = \tilde{\partial}_1\circ \phi_1,\qquad \phi_1\circ \partial_2 = \tilde{\partial}_2\circ \phi_2.\end{equation}
\end{lemma}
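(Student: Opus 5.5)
The plan is to reduce both squares to the single matrix identity $\tilde H = U H' V$ from Section~\ref{sec:decoding-framework}, read blockwise as $\tilde H_X = U_X H'_X V_X$ and $\tilde H_Z = U_Z H'_Z V_Z$. Everything is computed over the final coarse-grained ring $\tilde R$, and all transposes are the dagger (transpose plus antipode). I will use the boundary maps $\partial_1 = (H_X \mid 0)$ and $\partial_2 = (0 \mid H_Z)^\dagger$. On the decoupled side, $\tilde\partial_1 = (\tilde H_X \mid 0)$ and $\tilde\partial_2 = (0\mid \tilde H_Z)^\dagger$, where $\tilde H_X$ and $\tilde H_Z$ are the block-diagonal direct sums of the $r$ toric-code check matrices and the product-state checks. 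This presentation of the direct sum $\tilde{\calC}$ is the content of the decoupling theorem (Theorem~\ref{thm:decoupling_count}).

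\textbf{First square.} Both $\partial_1$ and $\tilde\partial_1$ vanish on the $Z$-block of $C_1$, so it suffices to check the identity on the first block. There we need
\[
U_X H'_X = \tilde H_X V_X^{-1},
\]
which is the relation $\tilde H_X = U_X H'_X V_X$ multiplied on the right by $V_X^{-1}$. This uses invertibility of $V_X$, which holds because it is a product of elementary symplectic column operations.

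\textbf{Second square.} For $\phi_1\circ\partial_2 = \tilde\partial_2\circ\phi_2$, only the second block is nonzero, and the identity reduces to
\[
V_Z^\dagger H_Z'^{\dagger} = \tilde H_Z^\dagger U_Z^{-\dagger}.
\]
Take the dagger of $\tilde H_Z = U_Z H'_Z V_Z$ to get $\tilde H_Z^\dagger = V_Z^\dagger H_Z'^\dagger U_Z^\dagger$. This step uses that the dagger is an anti-multiplicative involution, since the antipode is a ring automorphism of the commutative ring $\tilde R$. Then multiply on the right by $U_Z^{-\dagger} = (U_Z^\dagger)^{-1}$.

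\textbf{Main obstacle.} The algebra above is routine. The delicate point is bookkeeping: making sure the $Z$-type data transforms correctly under the symplectic constraint. A column operation $c_j$ acting on the $X$-block as some matrix acts on the $Z$-block as its inverse dagger. This is exactly the pairing of columns $i$ and $i+n_c$ described in Section~\ref{subsec:decoupling_symplectic}, and it is what makes $\phi_1 = \mathrm{diag}(V_X^{-1}, V_Z^\dagger)$ consistent. I would therefore first verify that each elementary operation (swap, monomial scaling, addition) satisfies $V_Z = V_X^{-\dagger}$, which then holds for the product $V$. Consequently $\phi_1$ is a genuine qubit relabeling by a CNOT circuit, and the second block of $\phi_1$ is the correct $Z$-side map.

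I would also check that the identification $C_i \cong$ the coarse-grained module over $\tilde R$ (via the generator matrices of Appendix~\ref{subsec:coarse_graining}) intertwines $\partial_i$ with $H'$. Replacing $H$ by $H'$ from Proposition~\ref{prop:suitable-coarse-grained-parity-check-matrix} changes only the row presentation of the stabilizer group. Any such row change is absorbed into $U$, so the squares still commute. Finally, since all of $\phi_0,\phi_1,\phi_2$ are invertible, the chain map is a chain isomorphism.
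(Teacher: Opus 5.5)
Your proof is correct and is the paper's argument written out: the first square is $\tilde H_X=U_XH'_XV_X$ right-multiplied by $V_X^{-1}$, and the second is the dagger of $\tilde H_Z=U_ZH'_ZV_Z$ right-multiplied by $U_Z^{-\dagger}$, which is exactly what the paper's one-line proof appeals to. The check $V_Z=V_X^{-\dagger}$ in your last paragraph is not needed for the stated equalities, because the two squares use the two diagonal blocks of $\phi_1$ independently; you also should not take it from Section~\ref{subsec:decoupling_symplectic}, since the rule there pairs a scaling by a monomial $m$ on column $i$ with $m^*$ on column $i+n_c$, which puts $m^*$ in $V_Z$ where $V_X^{-\dagger}$ has $m$.
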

\begin{proof}
Both equalities follow directly from the defining relations of the decoupling outputs $U,V$.
Indeed, by construction we have $U_X H'_X V_X=\tilde{H}_X$ and $U_Z H'_Z V_Z=\tilde{H}_Z$, and by the definitions of $\phi_0,\phi_1,\phi_2$ this is equivalent to the commutativity relations
\begin{equation}\tilde{\partial}_1\circ \phi_1 = \phi_0\circ \partial_1,\qquad \tilde{\partial}_2\circ \phi_2 = \phi_1\circ \partial_2.\end{equation}
\end{proof}

\begin{lemma}[Projection Chain Maps]
\label{lem:projection-chain-maps}
The projections $\pi^{(i)}$ and $\pi^A$ are chain maps, i.e., they satisfy the following properties:
\begin{align}
    \pi^{(i)}_0 \circ \tilde{\partial}_1 &= \tilde{\partial}_1^{KTC} \circ \pi^{(i)}_1,\\
    \pi^{(i)}_1 \circ \tilde{\partial}_2 &= \tilde{\partial}_2^{KTC} \circ \pi^{(i)}_2,\\
    \pi^A_0 \circ \tilde{\partial}_1 &= \tilde{\partial}_1^{A} \circ \pi^A_1,\\
    \pi^A_1 \circ \tilde{\partial}_2 &= \tilde{\partial}_2^{A} \circ \pi^A_2.
\end{align}
\end{lemma}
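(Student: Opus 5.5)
The plan is to observe that the complex $\tilde{\calC}$ is by definition the direct sum $\bigoplus_{i=1}^r \tilde{\calC}^{(i)} \oplus \tilde{\calA}$, with boundary maps $\tilde{\partial}_j = \left(\bigoplus_{i=1}^r \tilde{\partial}_j^{TC}\right) \oplus \tilde{\partial}_j^A$ for $j\in\{1,2\}$. In a fixed block decomposition of each $\tilde{C}_j$, the boundary maps are therefore block-diagonal. Each projection $\pi^{(i)}_j$ (resp.\ $\pi^A_j$) is the coordinate projection onto the $i$-th block (resp.\ the $A$ block). So I will verify the four identities by a direct block-matrix computation, treating the TC sectors and the product-state sector uniformly.

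\textbf{Step 1: write the maps as block matrices.} I will fix the ordering of summands $(\tilde{C}^{(1)}_j,\dots,\tilde{C}^{(r)}_j,\tilde{A}_j)$ for $j=0,1,2$. With this ordering, $\tilde{\partial}_j = \mathrm{diag}(\tilde{\partial}^{TC}_j,\dots,\tilde{\partial}^{TC}_j,\tilde{\partial}^A_j)$, and $\pi^{(i)}_j = (0,\dots,0,\mathrm{id},0,\dots,0)$ is a row block whose identity sits in position $i$; $\pi^A_j$ is the same with the identity in the last position.

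\textbf{Step 2: compute both sides.} Take an arbitrary element $\tilde{c} = \bigoplus_k \tilde{c}^{(k)} \oplus \tilde{c}^A \in \tilde{C}_j$. Then
\[
\pi^{(i)}_{j-1}\tilde{\partial}_j \tilde{c} = \pi^{(i)}_{j-1}\Bigl(\bigoplus_k \tilde{\partial}^{TC}_j\tilde{c}^{(k)} \oplus \tilde{\partial}^A_j \tilde{c}^A\Bigr) = \tilde{\partial}^{TC}_j \tilde{c}^{(i)} = \tilde{\partial}^{TC}_j \pi^{(i)}_j \tilde{c}.
\]
This gives the first two identities, for $j=1$ and $j=2$. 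The identical computation with the $A$ summand gives the last two.

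\textbf{Step 3: address notation and scope.} The superscript $KTC$ in the statement is the same as $TC$ in Section~\ref{sec:morphism-chain-complexes}, and I will note that explicitly. All maps are $\tilde{R}$-linear, and hence $\F_2$-linear, so checking the identities on arbitrary elements suffices. There is no genuine obstacle here; the only care needed is to make sure the projections are taken with respect to the same direct-sum decomposition used to define $\tilde{\partial}_j$. No property of the decoupling maps $U,V$ is needed, unlike in Lemma~\ref{lem:chain-isomorphism}.
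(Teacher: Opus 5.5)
Your proposal is correct and takes the same route as the paper. The paper's proof is the one-line observation that the projections discard the other summands of a direct sum whose differentials are block-diagonal; your block-matrix computation, together with your remark that the $KTC$ superscript means the same as $TC$, simply makes that observation explicit.
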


\begin{proof}
    The proof is trivial since we are effectively projecting out the other summands in the direct sum of chain complexes. Thus, the chain map properties follow directly from the definitions of the projections and the differentials.
\end{proof}

\begin{corollary}[Chain Isomorphism and Projection Chain Maps]
\label{cor:chain-isomorphism-projection}
Let $\phi$ be the vector-space isomorphism defined in Section~\ref{sec:morphism-chain-complexes} and let $\pi^{(i)}$ and $\pi^A$ be the projections defined in Lemma~\ref{lem:projection-chain-maps}. Then, we have the following chain maps:
\begin{align}
    \pi_0^{(i)}\circ \phi_0 \circ \partial_1 &= \tilde{\partial}_1^{KTC}\circ \pi_1^{(i)} \circ \phi_1 = \pi_0^{(i)} \circ \tilde{\partial}_1 \circ \phi_1 \\
    \pi_1^{(i)}\circ \phi_1 \circ \partial_2 &= \tilde{\partial}_2^{KTC}\circ \pi_2^{(i)} \circ \phi_2 = \pi_1^{(i)} \circ \tilde{\partial}_2 \circ \phi_2, \\
    \pi_0^{A}\circ \phi_0 \circ \partial_1 &= \tilde{\partial}_1^{A}\circ \pi_1^{A} \circ \phi_1 = \pi_0^{A} \circ \tilde{\partial}_1 \circ \phi_1 \\
    \pi_1^{A}\circ \phi_1 \circ \partial_2 &= \tilde{\partial}_2^{A}\circ \pi_2^{A} \circ \phi_2 = \pi_1^{A} \circ \tilde{\partial}_2 \circ \phi_2.
\end{align}
\end{corollary}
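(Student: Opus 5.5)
The corollary follows by composing the two chain-map identities already established: Lemma~\ref{lem:chain-isomorphism} for $\phi$ and Lemma~\ref{lem:projection-chain-maps} for the projections $\pi^{(i)}$ and $\pi^A$. Compositions of chain maps are chain maps, so each displayed line is a two-step rewriting. I would do the computation once for a generic projection $\pi\in\{\pi^{(i)},\pi^A\}$, with target differentials $\partial^{\pi}\in\{\tilde{\partial}^{TC},\tilde{\partial}^A\}$. The statement writes these as $\tilde{\partial}^{KTC}$, which is the same TC differential, so no case split is needed.

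For the degree-one identity, start from $\pi_0\circ\phi_0\circ\partial_1$. By the first relation of Lemma~\ref{lem:chain-isomorphism}, $\phi_0\circ\partial_1=\tilde{\partial}_1\circ\phi_1$. Post-composing with $\pi_0$ gives
\begin{equation}
\pi_0\circ\phi_0\circ\partial_1=\pi_0\circ\tilde{\partial}_1\circ\phi_1,
\end{equation}
which is the rightmost expression. Next, apply the first relation of Lemma~\ref{lem:projection-chain-maps}, $\pi_0\circ\tilde{\partial}_1=\partial^{\pi}_1\circ\pi_1$, and pre-compose with $\phi_1$. This yields $\partial^{\pi}_1\circ\pi_1\circ\phi_1$, the middle expression.

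The degree-two identity follows identically from $\phi_1\circ\partial_2=\tilde{\partial}_2\circ\phi_2$ and $\pi_1\circ\tilde{\partial}_2=\partial^{\pi}_2\circ\pi_2$. It only uses associativity of composition of $\F_2$-linear maps.

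There is no real obstacle, since everything reduces to the two lemmas. The one point needing care is that domains and codomains match: $\phi_j$ must land in the full direct sum $\tilde{C}_j$ on which $\pi_j$ is defined, and $\tilde{\partial}$ must be the block-diagonal direct-sum differential. Both hold by the definitions in Section~\ref{sec:morphism-chain-complexes}.
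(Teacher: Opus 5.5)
Your proof is correct and takes the paper's approach. The paper gives no separate proof of this corollary; it treats the result as immediate from Lemma~\ref{lem:chain-isomorphism} and Lemma~\ref{lem:projection-chain-maps}, combined exactly as in your two-step rewriting (post-compose with $\pi_j$, then pre-compose with $\phi_j$), and you correctly read $\tilde{\partial}^{KTC}$ as the TC differential $\tilde{\partial}^{TC}$ of the main text.
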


\section{Computing short strings}
\label{app:computing-short-strings}
We provide more detail on how the short strings were computed for the $24 \times 24$ gross code, although the method itself is more general. We begin by defining a rectangular partition $P$ of monomial labels in $\{x^iy^j | 0 \leq i  \leq \ell,0 \leq j  \leq m \}$ into disjoint rectangles $P_{i,j}$ each of size $\ell' \times m'$, where $i,j$ give the coordinates of each rectangle on the coarse lattice induced by the partition. We assign to $P_i$ all horizontal and vertical qubits and $X$ stabilizers whose monomial indices lie in that rectangle. If $R_x$ is the maximum horizontal distance on the lattice between $X$ stabilizers which are triggered by the same qubit and $R_y$ is the maximum vertical distance on the lattice between $X$ stabilizers which are triggered by the same qubit, we choose $\ell'$ to be the largest integer less than $R_x$ which divides $\ell$ and $m'$ to be the largest integer less than $R_x$ which divides $m$. For the gross code, this results in $\ell'=3=m'$.   \\ 
\indent Consider horizontally adjacent cells $P_{i,j}$ and $P_{i+1,j}$.   Fix an ordering of the $9$ monomials in each cell, which identifies syndrome patterns supported in this cell  with vectors in $\mathbb{F}_2^9$, and let $\hat{e}_{qp}$ be the single monomial basis for this vector space. let  $s |_{P_{i,j}}$ denote the restriction of a given syndrome pattern to $P_{i,j}$, represented as a vector in $\mathbb{F}_2^9$ under the canonical ordering.  For each single monomial excitation in $\hat{e}_{qp}$,  search for an error $e^{(qp)}$ supported on qubits in $P_{i,j}$ which produces syndrome $s^{(qp)}$ satisfying
\begin{equation}
  s^{(qp)}|_{P_{i,j}} =\hat{e}_{qp},
\qquad
s^{(qp)}|_{P_{i',j'}} =a_{qp}\delta_{i',i+1}\delta_{j,j}
\end{equation}
for some element $a_{qp}\in\mathbb{F}_2^9$.  Assuming such representatives exist for all $\hat{e}_{qp}$, this defines a linear map, which we will call the transfer matrix, $A_x\in \mathrm{Mat}_{9\times 9}(\mathbb{F}_2)$ by
\begin{equation}
    A_x \hat{e}_{qp}= a_{ij}.
\end{equation}
In particular, if $v$ lies in the fixed subspace $\ker(A_x-I)$, then $A_x v = v$ and hence $\sum_{q,p} e^{(qp)}(v)_{qp}$ defines a short string between the syndrome pattern corresponding to $v$, and the length this short string will be $\ell'$. The vertical transfer matrix $A_y$ is defined by the same procedure using vertically adjacent rectangles, and in general $A_x\neq A_y$. Remaining short strings can be found by considering the eigenvectors of $A_x^{b}$, as such eigenvectors correspond to identical patterns reproduced between rectangles separated by a lattice distance of $b\ell'$.  The same idea applies for the vertical transfer matrix.  \\ 
\indent The equivalence class of these eigenvectors of $A^b_x$ serve as proposed basis elements for $\coker{H}$ with base ring $R = \mathbb{F}_2[x,y]/(x^\ell-1,\; y^m-1)$. We now describe how to determine when the above process has produced a complete basis. 
One identifies \(R\) with the \(\ell m\)-dimensional \(\mathbb{F}_2\)-vector space with basis given by the monomials \(x^i y^j\) for \(0 \le i < \ell\) and \(0 \le j < m\). A polynomial is then represented by its coefficient vector in this basis, with exponents reduced modulo \(\ell\) and \(m\). The ideal \((f,g)\) is generated, as a vector subspace, by all monomial translates \(x^a y^b f\) and \(x^a y^b g\), since any polynomial multiple of \(f\) or \(g\) is an \(\mathbb{F}_2\)-linear combination of such translates. Forming a matrix \(I\) whose columns are the coefficient vectors of all these translates, one obtains
\begin{equation}
\dim (f,g) = \operatorname{rank}(I),
\end{equation}
and hence
\begin{equation}
\dim\!\bigl(R/(f,g)\bigr) = \ell m - \operatorname{rank}(I).
\end{equation}
Given candidate basis elements \(b_1,\dots,b_k\), one similarly forms their coefficient vectors and appends them as columns to \(I\), producing a matrix \([I \mid B]\). Then the images of the \(b_i\) span the quotient if and only if
\begin{equation}
\operatorname{rank}([I \mid B]) = \ell m,
\end{equation}
and they are linearly independent modulo the ideal if and only if
\begin{equation}
\operatorname{rank}([I \mid B]) - \operatorname{rank}(I) = k.
\end{equation}
Thus they form a basis of precisely when both conditions hold.

\end{document}